\documentclass[11pt,british]{article}
\usepackage{libertineRoman}
\usepackage{biolinum}
\usepackage{libertineMono}
\usepackage[T1]{fontenc}
\usepackage[utf8]{inputenc}
\usepackage{color}
\usepackage{verbatim}
\usepackage{mathtools}
\usepackage{amsmath}
\usepackage{amsthm}
\usepackage{amssymb}
\usepackage{cancel}
\usepackage{graphicx}
\usepackage{geometry}
\usepackage{xargs}[2008/03/08]

\makeatletter
\usepackage{hyperref}
\usepackage{titling}

\hypersetup{
    colorlinks=true,
    linkcolor=blue,
    filecolor=magenta,      
    urlcolor=cyan,
	citecolor=blue,
    }

\hypersetup{hidelinks,
backref=true,
pagebackref=true,
hyperindex=true,
breaklinks=true,
colorlinks=true,%
urlcolor=blue,
bookmarks=true,
bookmarksopen=false}

\usepackage{color}
\definecolor{purple}{RGB}{0,0,0} %
\newcommand\branchcolor[2]{{\color{#1} #2}}

\usepackage{amssymb}
\usepackage{amsfonts}

\usepackage{tikzsymbols}
\usepackage[bottom]{footmisc}
\usepackage{color,graphicx}
\usepackage[ampersand]{easylist}

\makeatother

\providecommand\questionname{Question}
\theoremstyle{plain}
\newtheorem{question}{\protect\questionname}

\providecommand\theoremname{Theorem}
\theoremstyle{plain}
\newtheorem{thm}{\protect\theoremname}

\providecommand\notationname{Notation}
\theoremstyle{remark}
\newtheorem{notation}[thm]{\protect\notationname}
\providecommand\definitionname{Definition}
\theoremstyle{definition}
\newtheorem{defn}[thm]{\protect\definitionname}
\providecommand\remarkname{Remark}
\theoremstyle{remark}
\newtheorem{rem}[thm]{\protect\remarkname}
\providecommand\factname{Fact}
\theoremstyle{plain}
\newtheorem{fact}[thm]{\protect\factname}
\providecommand\lemmaname{Lemma}
\newtheorem{lem}[thm]{\protect\lemmaname}
\providecommand\corollaryname{Corollary}
\newtheorem{cor}[thm]{\protect\corollaryname}
\providecommand\examplename{Example}
\theoremstyle{definition}
\newtheorem{example}[thm]{\protect\examplename}
\providecommand\claimname{Claim}
\theoremstyle{remark}
\newtheorem{claim}[thm]{\protect\claimname}
\providecommand\problemname{Problem}
\theoremstyle{definition}
\newtheorem{problem}[thm]{\protect\problemname}
\theoremstyle{remark}
\newtheorem*{claim*}{\protect\claimname}
\usepackage{babel}
\usepackage{cleveref}
\AddToHook{env/notation/begin}{\crefalias{thm}{notation}}
\AddToHook{env/defn/begin}{\crefalias{thm}{defn}}
\AddToHook{env/rem/begin}{\crefalias{thm}{rem}}
\AddToHook{env/fact/begin}{\crefalias{thm}{fact}}
\AddToHook{env/lem/begin}{\crefalias{thm}{lem}}
\AddToHook{env/cor/begin}{\crefalias{thm}{cor}}
\AddToHook{env/example/begin}{\crefalias{thm}{example}}
\AddToHook{env/claim/begin}{\crefalias{thm}{claim}}
\AddToHook{env/problem/begin}{\crefalias{thm}{problem}}
\crefname{claim}{claim}{claims}
\Crefname{claim}{Claim}{Claims}
\crefname{cor}{corollary}{corollaries}
\Crefname{cor}{Corollary}{Corollaries}
\crefname{defn}{definition}{definitions}
\Crefname{defn}{Definition}{Definitions}
\crefname{example}{example}{examples}
\Crefname{example}{Example}{Examples}
\crefname{fact}{fact}{facts}
\Crefname{fact}{Fact}{Facts}
\crefname{lem}{lemma}{lemmas}
\Crefname{lem}{Lemma}{Lemmas}
\crefname{notation}{notation}{notations}
\Crefname{notation}{Notation}{Notations}
\crefname{problem}{problem}{problems}
\Crefname{problem}{Problem}{Problems}
\crefname{rem}{remark}{remarks}
\Crefname{rem}{Remark}{Remarks}
\crefname{thm}{theorem}{theorems}
\Crefname{thm}{Theorem}{Theorems}
\Crefname{que}{Question}{Questions}

\begin{document}
\pagenumbering{roman}
\title{Computational Work Extraction: The Complexity of Catalysts}
\date{}
\thanksmarkseries{arabic}

\author{Atul Singh \textsc{Arora},\thanks{CQST, IIIT Hyderabad}~~Shantanav
\textsc{Chakraborty},\thanks{CQST and CSTAR, IIIT Hyderabad}~~Alexandru \textsc{Cojocaru},\thanks{QSL, University of Edinburgh}\\
Sreyas \textsc{Saminathan},\thanks{CQST, IIIT Hyderabad}~~and Uttam
\textsc{Singh}\thanks{CQST, IIIT Hyderabad}\thanks{Authors are listed alphabetically.}}
\maketitle
\begin{abstract}
Thermodynamics has been repeatedly reshaped by improving how one models the capabilities of agents that extract work. 
For an isolated quantum system, the maximal extractable work, the \emph{ergotropy}, assumes that the agent can apply any unitary. However, achieving it can be computationally intractable. In this work we introduce and study \emph{computational ergotropy}, restricting extraction to \emph{polynomially-sized} uniform unitary circuits acting on the system alone. 

We prove \emph{maximal separations}: $n$-qubit systems can have $\Theta(n)$ ergotropy, while every efficient process extracts negligible work, even for Hamiltonians consisting of single-qubit terms. We establish an \emph{unconditional} existential separation and give an explicit construction in the \emph{random oracle model}. Assuming the existence of quantum-secure pseudorandom functions, this separation extends to the \emph{plain model}.

This work uncovers an important connection between ergotropy and the \emph{complexity of catalytic computation}---computation where auxiliary qubits must be finally restored to their initial state. 
Relative to a random oracle, we establish relational and decision problems that:
(i) can be solved efficiently with $\lambda$ catalysts; but
(ii) cannot be solved by any algorithm with $c\lambda$ catalysts, for any $c<1$.
We show this by proving query lower bounds for \emph{quantum-space bounded} algorithms. 
    
As a consequence, for computational ergotropy, catalysts prove to be surprisingly powerful---there is a family of Hamiltonians and states for which catalysts enable efficient extraction of the full $\Theta(n)$ ergotropy, while every efficient non-catalytic process extracts negligible work. Furthermore, catalysts also allow us to introduce and instantiate the notion of \emph{pseudoergotropy}---analogous to pseudorandomness. On the other hand, we show catalysts do not change (information-theoretic) ergotropy. 

Finally, our work also sheds light on the classical aspect of the problem. First, most of our constructions rely on classical states and Hamiltonians and therefore imply analogous results for \emph{classical ergotropy}. Second, we show that certain \emph{proof of quantumness} protocols can be used to generically  \emph{separate classical and quantum} catalytic ergotropy.

Overall, these results point towards a theory of thermodynamics where computational complexity plays a fundamental role. 

\vspace{10cm}

\end{abstract}
\global\long\def\Enc{\mathsf{Enc}}%
\global\long\def\Dec{\mathsf{Dec}}%
\global\long\def\ct{\mathrm{ct}}%

\global\long\def\dens{\mathsf{Den}}%
\global\long\def\hams{\mathsf{Ham}}%
\global\long\def\uni{\mathcal{U}}%

\global\long\def\erg{\mathsf{erg}}%
\global\long\def\erghat{\mathsf{\widehat{\erg}}}%
\global\long\def\cerg{{\cal C}\text{-}\mathsf{erg}}%

\global\long\def\caterg{\mathsf{cat\text{-}erg}}%
\global\long\def\caterghat{\mathsf{cat\text{-}\widehat{\erg}}}%
\global\long\def\ccaterg{{\cal C}\text{-}\mathsf{cat\text{-}erg}}%

\global\long\def\StateGen{\mathsf{StateGen}}%
\global\long\def\desc{\mathsf{desc}}%
\global\long\def\hw{\mathsf{hw}}%

\global\long\def\poly{\mathsf{\mathsf{poly}}}%
\global\long\def\negl{\mathsf{negl}}%
\global\long\def\tr{\mathsf{\mathsf{\mathsf{tr}}}}%
\global\long\def\nonnegl{\mathsf{non\text{-}negl}}%

\global\long\def\ket#1{\left|#1\right\rangle }%
\global\long\def\bra#1{\left\langle #1\right|}%
\global\long\def\norm#1{\left\lVert #1\right\rVert }%
\global\long\def\braket#1#2{\left\langle #1\mid#2\right\rangle }%

\global\long\def\den#1{\left|#1\right\rangle \bra{#1}}%
\global\long\def\cats{\den 0^{\otimes\ell}_{L}}%

\global\long\def\gen{\mathsf{Gen}}%
\global\long\def\key{\mathsf{k}}%
\global\long\def\pk{\mathsf{pk}}%
\global\long\def\sk{\mathsf{sk}}%

\global\long\def\finv{\mathsf{FInv}}%
\global\long\def\feval{\mathsf{F}}%
\global\long\def\secp{\lambda}%

\global\long\def\binset{\left\{  0,1\right\}  }%
\global\long\def\cA{\mathcal{A}}%
\global\long\def\N{\mathbb{\mathbb{N}}}%

\global\long\def\catgen#1{(#1)\text{-}\mathsf{\widehat{cat}Gens}}%

\global\long\def\cat#1{(#1)\text{-}\widehat{\mathsf{cat}}}%
\global\long\def\icat#1{(#1)\text{-}\mathsf{cat}}%

\global\long\def\Fcat#1{(#1)\text{-}\mathsf{F}\widehat{\mathsf{cat}}}%
\global\long\def\Dcat#1{(#1)\text{-}\mathsf{D}\widehat{\mathsf{cat}}}%

\global\long\def\IO{\mathsf{IO}}%

\global\long\def\catalyst{\mathsf{catalyst}}%

\global\long\def\rel{\mathsf{rel}}%

\global\long\def\fat#1{\boldsymbol{#1}}%

\global\long\def\hamdesc{\mathsf{H}\text{-}\mathsf{desc}}%

\global\long\def\Exp{\mathsf{Exp}}%
\global\long\def\bs#1{\boldsymbol{#1}}%

\global\long\def\H{\mathcal{H}}%

\global\long\def\Den{\mathsf{Den}}%
\global\long\def\Ham{\mathsf{Herm}}%
\global\long\def\uni{\mathcal{U}}%
\global\long\def\cC{\mathcal{C}}%

\global\long\def\QPT{\mathsf{QPT}}%
\global\long\def\PPT{\mathsf{PPT}}%

\global\long\def\catalyst{\mathsf{catalyst}}%
\global\long\def\IO{\mathsf{IO}}%

\global\long\def\Enc{\mathsf{Enc}}%
\global\long\def\Dec{\mathsf{Dec}}%
\global\long\def\ct{\mathrm{ct}}%
\global\long\def\RO{\mathsf{RO}}%
\global\long\def\k{\mathsf{k}}%

\global\long\def\Exp{\mathsf{Exp}}%
\global\long\def\hw{\mathsf{hw}}%
\global\long\def\TD#1#2{\mathsf{TD}\left(#1,#2\right)}%

\global\long\def\H{\mathcal{H}}%
\global\long\def\C{\mathbb{C}}%
\global\long\def\O{\mathcal{O}}%
\global\long\def\I{\mathbb{I}}%
\global\long\def\E{\mathop{\mathbb{E}}}%
\global\long\def\N{\mathbb{\mathbb{N}}}%
\global\long\def\A{\mathcal{A}}%
\global\long\def\W{\mathcal{W}}%
\global\long\def\P{\mathcal{P}}%

\global\long\def\hamdesc{\mathsf{H}\text{-}\mathsf{desc}}%
\global\long\def\circdesc{\mathsf{desc}}%

\global\long\def\fnleq{\leq_{\infty}}%
\global\long\def\fngeq{\geq_{\infty}}%
\global\long\def\fng{>_{\infty}}%
\global\long\def\fnl{<_{\infty}}%
\global\long\def\fneq{=_{\infty}}%

\global\long\def\lb{\mathsf{lb}}%
\global\long\def\ub{\mathsf{ub}}%
\global\long\def\argmax{\operatorname*{arg\,max}}%

\global\long\def\fatrhoH{\fat{\rho},\fat H}%

\newcommandx\catU[3][usedefault, addprefix=\global, 1=, 2=, 3=]{\mathsf{cat}\text{-}\mathcal{U}_{#3}\left(#1,#2\right)}%
\newcommandx\CcatU[3][usedefault, addprefix=\global, 1=, 2=, 3=]{\mathcal{C}\text{-}\mathsf{cat}\text{-}\mathcal{U}^{#3}\left(#1,#2\right)}%
\newcommandx\PPTcatU[3][usedefault, addprefix=\global, 1=, 2=, 3=]{\mathsf{PPT}\text{-}\mathsf{cat}\text{-}\mathcal{U}^{#3}\left(#1,#2\right)}%

\newcommandx\Cuni[1][usedefault, addprefix=\global, 1=]{\mathcal{C}\text{-}\mathcal{U}\left(#1\right)}%
\global\long\def\CU{\mathcal{C}\text{-}\mathcal{U}}%
\newcommandx\CFcat[3][usedefault, addprefix=\global, 1=, 2=, 3=]{\mathcal{C}\text{-}\mathsf{Fcat}^{#3}\left(#1,#2\right)}%
\newcommandx\PPTFcat[3][usedefault, addprefix=\global, 1=, 2=, 3=]{\left(#1,#2\right)\text{-}\mathsf{F\widehat{cat}}^{#3}}%

\global\long\def\sdec{\mathsf{succ}_{\mathsf{dec}}}%

\global\long\def\ssearch{\mathsf{succ}_{\mathsf{search}}}%

\global\long\def\sinv{\mathsf{succ}_{\mathsf{inv}}}%

\global\long\def\dec{\mathsf{dec}}%

\global\long\def\Pisearch{\Pi^{\mathsf{search}}}%

\global\long\def\Pidec{\Pi^{\mathsf{dec}}}%

\global\long\def\caterghatQ{\mathsf{cat}\text{-}\mathsf{\widehat{ergQ}}_{{\cal D}}}%
\global\long\def\caterghatC{\mathsf{cat}\text{-}\mathsf{\widehat{ergC}}_{{\cal D}}}%

\global\long\def\bad{\mathsf{bad}}%

\global\long\def\CNOT{\mathsf{CNOT}}%

\global\long\def\SWAP{\mathsf{SWAP}}%

\global\long\def\pred{\mathsf{pred}}%

\global\long\def\Proj{\mathsf{Proj}}%

\global\long\def\WC{\W\mathsf{C}}%

\global\long\def\catPerm{\mathsf{cat\text{-}Perm}}%

\global\long\def\Perm{\mathsf{Perm}}%

\global\long\def\Cl{\mathsf{Cl}}%

\tableofcontents{}

\newpage{}
\clearpage
\pagenumbering{arabic}
\section{Introduction}

\noindent
Thermodynamics is the theory of how energy can be transformed, and at its heart lies the distinction between energy and useful energy. Energy is conserved, yet only part of it can be harnessed. Some of it can be extracted as work, while the rest cannot. This raises a basic question, namely \emph{how much work can be extracted from a given system?} Its answer has direct consequences for quantum technologies. It determines how much energy a quantum battery can deliver, how much work a quantum engine can produce in each cycle, and what it costs to power quantum devices as they grow in size~\cite{Campaioli2024,Myers2022}. Remarkable experimental advances have brought these questions within reach, including heat engines whose working medium is a single trapped ion~\cite{Rosnagel2016}, quantum batteries charged collectively in organic microcavities and superconducting circuits~\cite{Quach2022,Hu2022}, and direct measurements of the extractable work stored in individual quantum systems~\cite{Niu2024}.

\emph{Ergotropy.} How much work a system yields is determined not by its state alone but also by the process used to extract it. The answer changes, for instance, depending on whether the system may exchange heat with a reservoir or interact with other systems along the way. The cleanest setting is an isolated system taken through a cyclic process, that is, one that returns its Hamiltonian $H$ to its initial form. Quantum mechanics then dictates that every such process acts on the state $\rho$ of the system as a unitary, $\rho \mapsto U\rho U^\dagger$, and the work it extracts is the resulting decrease in energy. Remarkably, this minimal setting already reproduces the structure of thermodynamics! States from which no unitary can extract work are called \emph{passive}, and those that remain passive even when arbitrarily many copies are processed jointly are called \emph{completely passive}. Completely passive states turn out to precisely be thermal states~\cite{Pusz1978,Lenard:1978thm}. The Kelvin--Planck form of the second law, which forbids any cyclic process from extracting work from a system in thermal equilibrium, thus follows from unitary work extraction alone. For a general state $\rho$ on an $n$-qubit Hilbert space, the maximal work that unitaries can extract is its \emph{ergotropy}~\cite{Allahverdyan2004}, defined as:
\begin{align}
\label{eq:intro-erg}
    \erg_{\rho,H}:=\max_{U\in\uni(n)}\tr\left[H\left(\rho-U\rho U^{\dagger}\right)\right],
\end{align}
where $\uni(n)$ is the group of unitaries acting on the Hilbert space of $n$ qubits.

\emph{Efficiency.} %
While the unitary setting already produces a rich thermodynamic structure, it 
fails to capture various important aspects of the physical world. For instance, since the maximum in Eq.~\eqref{eq:intro-erg} ranges over every unitary on the system, this implicitly assumes optimal unitaries can be physically realised. 
An optimal unitary rotates the state into the eigenbasis of $H$ and rearranges its populations so that the largest occupy the lowest energies. For a pure state it maps the state to the ground state, which amounts to running a preparation of the state in reverse. In any physical realisation, whether by control fields, a sequence of quenches or a circuit of elementary gates, the cost of a unitary grows with its complexity, and almost all unitaries on $n$ qubits require a number of elementary operations exponential in $n$~\cite{knill1995approxcircuits}. The complexity of preparing quantum states and implementing unitaries has become a subject of its own~\cite{aaronson2016states,bostanci2025unitary}. It reveals that the two tasks differ sharply. Every quantum state can be prepared by an efficient algorithm with a single query to a suitable classical oracle~\cite{Rosenthal2024}, whereas some unitaries cannot be implemented in this way even approximately~\cite{Aaronson2007, Lombardi2024}. Work extraction fits neither framework. It asks neither for a particular final state nor for a particular unitary, since any process that lowers the energy extracts work and many processes may come close to the optimum. Therefore, the overarching question in this work is the following: %
\begin{center}
\emph{How does our understanding of work extraction from an isolated system change\\ when extraction is limited by computational efficiency?
}
\end{center}

To make progress towards obtaining concrete answers in this direction, one could consider the simplest setting captured by ergotropy. 
\begin{question}
How does one define efficiency in the context of ergotropy? Does it result in a meaningfully different notion from information-theoretic ergotropy?
\end{question}

\textbf{Computational Ergotropy.} We first develop a framework that allows formally posing and rigorously studying such questions. A prerequisite to this, is a definition of efficiency, and in particular, efficient work extraction. 
Such a notion is not well-defined for a single state, since for any fixed system the optimal unitary is some fixed circuit, however large. Efficiency is meaningful only asymptotically, and we therefore consider families of states and Hamiltonians of growing size. The definition must also specify what the extraction process has access to. We give it polynomially many copies of the state and an efficient description of the Hamiltonian, from which it produces a circuit that is then applied to the system. The \emph{computational ergotropy} of a family is the largest amount of work that efficient processes of this kind extract, asymptotically in the size of the system. 

With this definition in place, our first result is that the computational ergotropy can differ from the ergotropy as much as possible. There are pure states, the ideal batteries of the information-theoretic theory, whose ergotropy is essentially all of their energy, yet from which no efficient process extracts more than a negligible amount of work. This happens even though the Hamiltonian is local and consists of single-qubit terms, so the entire difficulty lies in the state. Since states obtained by an existence argument may themselves be impossible to prepare, we then show that the gap persists relative to a random oracle, the least structured oracle there is, and in the plain model for explicit, efficiently preparable states under standard cryptographic assumptions.

\vspace{1cm}

One could justifiably ask whether the separation between computational ergotropy and ergotropy really tells us something fundamental about nature? Or if it is an artefact of considering work extraction using a very restricted class of operations? For instance, in various places, \emph{catalysts} have been used to show surprising strengthening of operations one can apply, without changing the underlying physical justification for choosing the permitted operation.

\emph{Catalysts.} Registers that are borrowed and returned in their original state, called \emph{catalysts}, are a recurring theme in quantum information and computation~\cite{LipkaBartosik2021,LipkaBartosik2024}. In entanglement theory, an entangled catalyst enables transformations that are impossible under local operations and classical communication~\cite{Jonathan1999,Turgut2007,klimesh2007inequalities}, while a catalyst that is returned only approximately makes essentially any transformation possible~\cite{vanDam2003}. In complexity theory, a memory that must be restored to its initial contents still increases the computational power \cite{Buhrman2014, Cook2025, cook2025structureinplacespaceboundedcomputation}, a phenomenon that has recently been extended to quantum computation \cite{buhrman2025quantumcatalyticspace}\footnote{In catalytic space the borrowed memory starts in arbitrary contents rather than in the all-zero state, which is what makes restoring it a genuine constraint.}. In quantum thermodynamics, the catalysts change the laws themselves. Allowing them relaxes the thermo-majorization criterion for state transitions~\cite{Horodecki2013} to a weaker family of second laws~\cite{Brandao2015}, and this family collapses to a single law once the catalyst is returned exactly but only marginally, that is, once it is allowed to remain correlated with the system~\cite{Muller2018}. 

\emph{Some catalysts are too permissive.} The choice of which borrowed registers to allow is a substantive part of the model, and for our purpose a register that may remain correlated with the system is too permissive (See bottom row of \Cref{fig:models-catalysts}). A unitary acting on the system alone preserves its spectrum, whereas such a register lifts that constraint entirely and makes every state of at least the same entropy reachable to arbitrary accuracy~\cite{Boes2019, Wilming2021}. The extractable work would then be governed by entropy alone, and the question we are asking would change character before any computational restriction enters. Physically, such a register looks untouched when examined on its own, yet the joint state has changed. The work extracted can be traced to the correlations built along the way, and creating correlation is known to carry a thermodynamic cost~\cite{Huber2015}.

\begin{figure}[h!]
    \centering
    \includegraphics[width=0.7\linewidth]{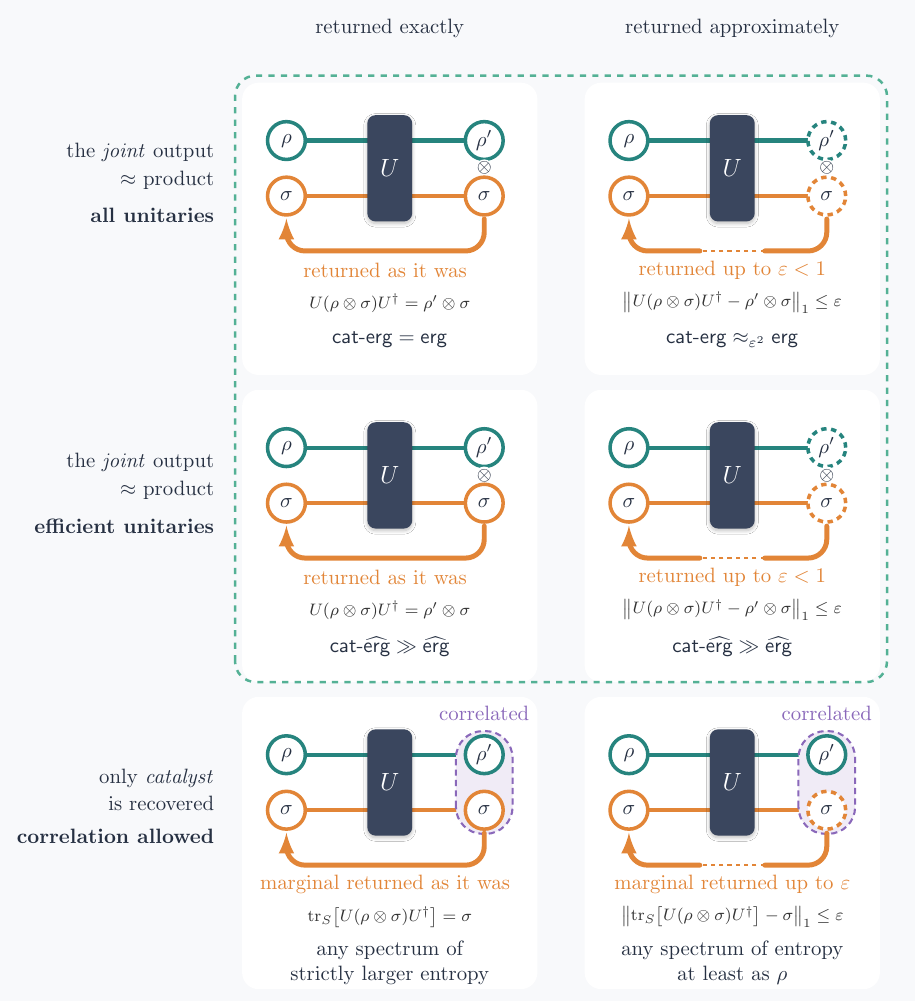}
    \caption{\emph{The role of catalysts in information-theoretic and computational work extraction.~} The system and catalyst initially have states $\rho$ and $\sigma$. The left column requires exact restoration, while the right column allows a trace-norm error of at most $\varepsilon$. In the upper two rows, the final joint state must equal or approximate a product containing the original catalyst state; these rows allow arbitrary unitaries and efficient circuits, respectively. The dashed outline encloses the settings studied in this work. (Nearly) exact product restoration leaves (information-theoretic) ergotropy unchanged, yet can increase computational ergotropy. The bottom row is an instance of a catalyst that is \emph{too permissive}. Here, restoration only of the catalyst marginal, permitting correlations with the system and changes in the system's spectrum.}
    \label{fig:models-catalysts}
\end{figure}

\begin{question}
    How does the notion of ergotropy change when one allows the use of catalysts? What happens if one also restricts to efficient operations? 
\end{question}

\textbf{Understanding Catalytic Computation.} To answer this question, we had to first define and study the \emph{complexity} of the appropriate \emph{catalytic model of computation}. While similar models have been considered in the literature, none of those quite matched our requirements here. We begin with the information-theoretic setting and investigate whether using catalyst qubits, permits one to apply operations beyond what is possible by unitary evolution of the system. We show that even if one is required to return a fixed catalyst to its original state, there is no advantage to using catalysts. In fact, even in the most lenient setting---i.e. even if the joint system (after the operation) is brought $\varepsilon$ close (in trace distance, say) to being a state where the catalyst is restored to being a tensor product---one does not gain much. 

We then turn our attention to considering the \emph{complexity} of \emph{catalysts in the efficient computation setting}. Here, we allow the model to access catalysts initialised to the all-zero state, in addition to the input state, on which it can apply a unitary. We require this unitary to be produced efficiently by an efficient classical machine. The catalytic constraint is that the catalyst must be restored to its initial state exactly. In this \emph{efficient setting}, we show that there is a problem that one can solve with catalysts but cannot solve without catalysts. In fact, we show something stronger---with insufficient quantum space (i.e. even if the catalyst is arbitrary correlated and not required to be restored in any way), the problem cannot be solved. These results are also unconditional, in the random oracle model. More fundamentally, this establishes exponential query-complexity separations for space bounded unitary quantum algorithms, which may be of independent interest. For random functions on $\lambda$-bit strings, the problem admits a classical reversible algorithm using three queries and $\lambda$ auxiliary qubits returned exactly to zero. For every fixed $0<c<1$, quantum unitary algorithms with at most $c\lambda$ auxiliary qubits require $2^{\Omega(\lambda)}$ queries for constant success probability, even when these qubits need not be restored. The lower bound imposes no restriction on the number of oracle-independent gates.

\vspace{0.5cm}

\textbf{Catalytic Ergotropy.} Using our understanding of the complexity of catalytic computation, we study catalytic ergotropy. Starting with (information-theoretic) catalytic ergotropy, we show that catalysts do not help---even under a very lenient requirement on catalytic restoration. We then define computational catalytic ergotropy (i.e. the setting where efficiency is enforced). We show that there is a family of states and Hamiltonians for which one can obtain near-maximal separation between \emph{computational ergotropy} and \emph{computational catalytic ergotropy}. As above, the results are stronger---in the sense that with $c\lambda$ quantum space (for $c<1$), no non-negligible work can be extracted efficiently while with $\lambda$ catalysts, catalytic ergotropy is maximal.

\vspace{0.5cm}

\textbf{Catalytic work-potential.} We define catalytic work-potential as the difference between the work one can efficiently extract with catalysts and without. We prove a sufficient criterion (as a consequence of the preceding analysis) that allows one to identify whether a Hamiltonian has (maximal) catalytic work potential. Informally, it says that any efficiently describable Hamiltonian whose ground state is easy to prepare, can be shown to have maximal catalytic work potential. 

\vspace{0.5cm}

\Cref{fig:models-catalysts} summarizes the contrast between information-theoretic and computational catalytic work extraction across the different catalyst restoration requirements. These results show that in the presence of catalysts, the computational theory changes dramatically while the information-theoretic theory remains unchanged. It may be worth noting that this is different from showing one quantity in the information-theoretic setting is larger than the analogous quantity in the computational setting---here, the comparison is being made between objects of the same theory. The relation between these objects is what has changed. This suggests that computational considerations may result in fundamental new insights in our understanding of nature. One way of probing this, is to ask if something concrete becomes possible using catalysts that was not possible without it.

\vspace{0.5cm}

\textbf{Pseudoergotropy.} It turns out that without catalysts, it is unclear if even simple efficient operations such as applying a keyed permutation are possible. On the other hand, we show that using catalysts, one can construct a family of keyed states and Hamiltonians that exhibit \emph{pseudoergotropy}. Intuitively, this shows that there are two such families of keyed states and Hamiltonians where (i) without the key, they are indistinguishable (computationally), (ii) the first family has low catalytic ergotropy, even when the key is given and (iii) the second family has high catalytic ergotropy, when the key is given. These properties are akin to what one expects from \emph{pseudorandomness}---without the key, the strings are indistinguishable but given the key, the first string has low entropy while the second has high entropy. Here, the property being captured is ``extractable work'' (or useful energy), instead of entropy.

\vspace{1cm}

\emph{Classical ergotropy.} Unlike purely quantum phenomena, such as entanglement \cite{arnon2023CET}, ergotropy remains well defined and meaningful even when one restricts to classical reversible operations. Furthermore, computational bounds apply to the classical world as well. It is therefore perhaps more natural to first ask questions about efficiency and catalysts in the classical setting. Indeed, when the state and the Hamiltonian are diagonal in the computational basis, classical reversible operations permute the populations among the energy levels. For a quantum unitary $U$, the transition probabilities $B_{ij}=|U_{ij}|^2$ form a doubly stochastic matrix, hence a convex combination of permutation matrices by the Birkhoff--von Neumann theorem. The final mean energy is therefore a weighted average of the energies obtained by those permutations. Some permutation extracts at least as much work as $U$, so information-theoretically, classical and quantum ergotropy coincide~\cite{Allahverdyan2004}.

\begin{question}
    How does efficiency affect ergotropy in the classical setting? Are there separations between quantum and classical ergotropy that are absent information-theoretically but emerge in the bounded setting? 
\end{question}

\textbf{Computational Classical Ergotropy.} Besides the existential separation between ergotropy and catalytic ergotropy, all of our aforementioned results were established using \emph{classical states} and Hamiltonians diagonal in the computational basis. Furthermore, the efficient processes we use to  achieve maximal work extraction, are nothing but permutations. Therefore, our preceding results directly translate to analogous classical statements. 

\textbf{Classical-Quantum Ergotropy Separation.} We show that there is a family of states and (extensive) Hamiltonians such that all of the following hold (under cryptographic assumptions): \\
\indent (i) information-theoretically, quantum ergotropy and classical ergotropy for this family is identical. \\
\indent (ii) Classical catalytic ergotropy is negligible.\\
\indent (iii) Quantum catalytic ergotropy is constant. \\
This is another feature of thermodynamics that seems to only appear in the computational theory---and it is again aided by catalysts.

\subsection{Main results}

We now state our results a bit more formally. See \Cref{fig:overview-of-results} for a schematic sketch of our main separations. 

\vspace{0.2cm}
\noindent\textbf{Computational Ergotropy.} We start with our separations between computational ergotropy and (information-theoretic) ergotropy.

\vspace{0.2cm}

\noindent\emph{Existential Separation.} We consider families of states $\fat{\rho}=\{\rho_\lambda\}_\lambda$ and Hamiltonians $\fat H=\{H_\lambda\}_\lambda$ on $n(\lambda)$-qubits, where members are parametrised by $\lambda$.  We use $\erg_{\fatrhoH}$ to denote the (information-theoretic) ergotropy and $\erghat_{\fatrhoH}$ to denote the computational ergotropy. Informally, one may treat both these as functions of $\lambda$. We show that there is an unconditional separation between these two notions.

\begin{thm}[Unconditional existential separation (informal version of \Cref{thm:unconditional-existential-separation})]\label{thm:introUnconditional}
There is a family of states and Hamiltonians $\fatrhoH$ such that $\poly(n)=(1-o(1))n$. Then, there exists a family of pure states $\fat{\rho}$ and a negligible function $\negl$ such that 
\begin{align*}
\erg_{\fat{\rho},\fat H} \geq_{\infty}\poly,\quad \text{and}  \quad\erghat_{\fat{\rho},\fat H} & \leq_{\infty}\negl.
\end{align*}
\end{thm}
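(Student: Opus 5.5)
Reading the statement with the standard normalisation $H_\lambda=\sum_{i=1}^{n} Z_i$ (single-qubit terms, spectrum $[-n,n]$, ground energy $-n$), I would take $\rho_\lambda=\den{\psi_\lambda}$ for a suitably chosen Haar-typical pure state $\ket{\psi_\lambda}$ on $n=n(\lambda)$ qubits. The normalisation is essential. With $H=\sum_i(\I-Z_i)/2$, the product unitary $X^{\otimes n}$ maps energy $E$ to $n-E$. Any state with non-negligible computational ergotropy $0$ would then need $E\le n/2+\negl$, so $(1-o(1))n$ would be impossible. With $\sum_i Z_i$ a Haar state has energy about $0$, which is exactly the midpoint of the spectrum. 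For a pure state the ergotropy is its energy minus the ground energy, so $\erg_{\rho_\lambda,H_\lambda}=\bra{\psi_\lambda}H_\lambda\ket{\psi_\lambda}+n$. It therefore suffices to find $\ket{\psi_\lambda}$ with $\bigl|\bra{\psi_\lambda}U^\dagger H_\lambda U\ket{\psi_\lambda}\bigr|\le\varepsilon_\lambda$ simultaneously for $U=\I$ and for every circuit $U$ of size at most $s(\lambda)$. Here $\varepsilon_\lambda=2^{-n/4}$ and $s(\lambda)$ is superpolynomial.

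\textbf{Step 1 (concentration for a fixed unitary).} For fixed $U$, the map $\ket{\psi}\mapsto\bra{\psi}U^\dagger HU\ket{\psi}$ on the unit sphere of $\C^{2^n}$ has mean $\tr(H)/2^n=0$ and Lipschitz constant at most $2\norm{H}=2n$. Lévy's lemma then gives
\begin{align*}
\Pr_{\psi\sim\mathrm{Haar}}\left[\left|\bra{\psi}U^\dagger HU\ket{\psi}\right|>\varepsilon\right]\le 2\exp\left(-c\,2^{n}\varepsilon^{2}/n^{2}\right)
\end{align*}
for a universal constant $c>0$. With $\varepsilon=2^{-n/4}$ the right-hand side is $\exp(-\Omega(2^{n/2}/n^2))$.

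\textbf{Step 2 (union bound over circuits and choice of the state).} Uniform circuits use a fixed finite gate set, so the number of circuits on $n$ qubits of size at most $s$ is $2^{O(s\log(sn))}$. Take $s(\lambda)=2^{n/3}$, which is superpolynomial. Then this count is $\exp(o(2^{n/2}/n^2))$, and a union bound over these circuits together with $U=\I$ shows that some $\ket{\psi_\lambda}$ is $\varepsilon_\lambda$-good for all of them. Fix one such state for each $\lambda$; this gives the existential (non-uniform) family. The ergotropy bound follows immediately: $\erg\ge n-2^{-n/4}$, which is at least $\poly(n)=(1-o(1))n$.

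\textbf{Step 3 (upper bound on computational ergotropy).} Fix any efficient extraction process. By the definition of $\erghat$, it receives polynomially many copies of $\rho_\lambda$ and $\hamdesc$, and outputs (possibly at random, possibly adaptively after measuring the copies) a circuit $C$ of size $\le p(\lambda)$ that is then applied to the system alone. For all large $\lambda$ we have $p(\lambda)<s(\lambda)$, so every circuit in the support of the output distribution is covered by Step 2. The work extracted is therefore at most $\bra{\psi}H\ket{\psi}-\bra{\psi}C^\dagger HC\ket{\psi}\le 2\varepsilon_\lambda$ for each outcome, and hence also in expectation. Since $2\cdot2^{-n/4}$ is negligible, $\erghat\le_\infty\negl$.

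I expect the main obstacle to be this last step, namely matching the formal definition of $\erghat$. Two points need care. The first is that the extractor's access to copies and to the Hamiltonian description cannot help; this holds because the bound is uniform over all small circuits, so adaptivity does not matter. The second concerns how the definition treats approximate gate sets or ancilla-free circuits. If it allows arbitrary gates from a continuous set, I would first replace each circuit by a net element at distance $2^{-n}$, which costs only an additional negligible energy error and keeps the count at $2^{\poly(s)}$.
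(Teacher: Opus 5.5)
Your proposal is correct and is essentially the paper's own proof: a Haar-random pure state, L\'evy concentration of $\bra{\psi}U^\dagger H U\ket{\psi}$ around $\tr(H)/2^{n}=0$ for each fixed unitary, a union bound over all circuits of size $2^{\Theta(n)}$ from a finite gate set, and the observation that every QPT extractor eventually outputs a circuit inside that covered set, however it uses the copies. The differences are cosmetic. Your $\sum_i Z_i$ is unitarily equivalent to the paper's $\sum_i X_i$. Your parameters (additive error $2^{-n/4}$, circuit size $2^{n/3}$) replace the paper's relative error $\delta(n)\,n$ with $\delta(n)=2^{-n/3}$ and circuit size $2^{n/4}$.
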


To make comparison between functions, let's say $a,b:\N \to \N$ we use $a \fnleq b$ to mean that $a(\lambda) \ge b(\lambda)$ for all large $\lambda$s.

The theorem establishes unconditionally that the work stored in a quantum state can be almost entirely inaccessible to computationally bounded agents. The proof relies on Haar-random states, concentration inequalities and a counting argument. 

\vspace{0.2cm}

\noindent\emph{Constructive separation.}
For constructive separations, it helps to consider \emph{distributions} $\cal D$ over families of states and Hamiltonians $\fatrhoH$. We often write $(\fatrhoH,\fat{O})\leftarrow \cal D$ to denote that $(\fatrhoH,\fat{O})$ is sampled from $\cal D$, where $\fat{O}$ is an oracle (a random oracle in our case). For such distributions, we use $\erg_{\cal D}$ and $\erghat_{\cal D}$ to denote ergotropy and its computational variant.

We show that for a Hamming weight Hamiltonian and state of the form $$\rho_{\secp}:=\frac{1}{2^{n(\lambda)}}\sum_{r\in\binset^{n(\secp)}}\den{f_{\secp}(r)}$$ where $f$ is a length doubling function encoded in the oracle, the following holds. 

\begin{thm}[Separation in the random oracle model (informal version of \Cref{thm:explicit-sep-erg-erghat})]
Let $\poly(\lambda)=(1/2-o(1))n(\lambda)$. In the random oracle model, it holds that
\begin{align*}
\erg_{\cal D} \fngeq \poly,\quad \text{and}  
\quad
\nexists\quad \nonnegl \text{ s.t. } \erghat_{\cal D}  \fngeq\nonnegl.
\end{align*}
where $\nonnegl$ is any non-negligible function.
\end{thm}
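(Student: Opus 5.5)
We work with the Hamming-weight Hamiltonian $H_\lambda=\sum_{i=1}^{n}\den 1_i$ on $n=2m$ qubits, where $m=m(\lambda)$. The state is $\rho_\lambda=2^{-m}\sum_{r\in\binset^{m}}\den{f(r)}$, with $f:\binset^m\to\binset^{2m}$ a random function read from the oracle. The plan treats the two inequalities separately.

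\textbf{Ergotropy lower bound.} The state $\rho_\lambda$ is diagonal. With overwhelming probability over $f$ it is injective, so it has $2^m$ equal eigenvalues. The optimal unitary is a permutation that sends the support of $\rho_\lambda$ to the $2^m$ lowest-weight strings. We bound the two terms of the work separately.

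First, the initial energy is $\tr[H\rho]=2^{-m}\sum_r \hw(f(r))$. Each $\hw(f(r))$ is Binomial$(2m,1/2)$, so by Hoeffding plus a union bound the average is $m-o(m)$ with overwhelming probability.

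Second, the $2^m$ lowest-weight strings of length $2m$ all have weight at most $w^\ast$, where $\sum_{w\le w^\ast}\binom{2m}{w}\ge 2^m$. This gives $w^\ast\approx H_2^{-1}(1/2)\cdot 2m\approx 0.22m$.

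The difference is therefore at least $m-0.22m-o(m)$, which exceeds $(1/2-o(1))n$ only if we count more carefully. Concretely, we should check how $\poly$ is normalised: the target is $(1/2-o(1))n=(1-o(1))m$. If that is the normalisation, we instead map the support into strings with the first $m$ bits zero, losing only the weight of the last $m$ bits. We also choose the parameters so that the claimed $\poly$ holds, possibly by taking $f$ with a longer stretch; with stretch $\ell$, the final energy is about $H_2^{-1}(m/\ell)\ell$, which is $o(\ell)$ as $\ell/m\to\infty$. I would fix the stretch accordingly.

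\textbf{Computational ergotropy upper bound.} Suppose an efficient extractor, with $\poly$ copies of $\rho$ and oracle access, outputs a circuit $C$ with $\tr[H(\rho-C\rho C^\dagger)]\ge\nonnegl$ on infinitely many $\lambda$. The key steps are as follows.

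(a) Replace $\rho$ by the maximally mixed state $\I/2^{n}$. By one-wayness/pseudorandomness of a random length-increasing function against polynomial-query quantum algorithms (the O2H lemma / standard random-oracle indistinguishability), the average energy change of $C^{f}$ on $\rho$ versus on the uniform distribution differs by only $\negl\cdot n$. To make this rigorous, we build a distinguisher that measures $H$ after applying $C$, rescaled by $1/n$. Note that the copies given to the extractor are also correlated with $f$; we handle this by simulating them from oracle queries on random $r$, which keeps the hybrid consistent.

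(b) For the maximally mixed state, every unitary is passive in the sense that $\tr[H(\I-C\I C^\dagger)]=0$. So the work in the hybrid world is exactly $0$, and hence the real work is negligible.

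\textbf{Main obstacle.} The delicate step is (a). The circuit $C$ queries $f$ in superposition and was chosen adaptively after seeing samples of $f(r)$. The observable is also not bounded by $1$ but by $n$, so the distinguishing advantage must be negligible even after multiplying by $n$, which is fine since $n=\poly$.

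I would first formalise a security game: given oracle access to $f$ and polynomially many samples $f(r_i)$, distinguish a fresh $f(r)$ from a uniform $y$. This is bounded by $O(q^2 2^{-m}+q2^{-(\ell-m)})$ via a one-way-to-hiding argument, because a uniform $y$ lies outside the image of $f$ except with probability $2^{m-\ell}$, and querying $r$ requires guessing it. A hybrid over the single system copy on which $C$ acts then completes the argument.
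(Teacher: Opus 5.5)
Your upper bound on $\erghat_{\mathcal{D}}$ is fine; the gap is in the lower bound on $\erg_{\mathcal{D}}$. That half does not close as written, because you misread the normalisation.

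In the statement, $n$ is the input length of the length-doubling $f$, i.e.\ your $m$. So the system has $2m$ qubits and the target is $(1/2-o(1))m$, not $(1-o(1))m$. The paper also uses the padded Hamiltonian $H_\lambda=\I_{2^{m}}\otimes H_{\hw(m)}$, which charges only the last $m$ qubits. Its ground space $\mathrm{span}\{\ket{x,0^{m}}\}$ has dimension $2^{m}\geq\operatorname{rank}\rho_\lambda$. Hence $f(r)\mapsto(r^{*},0^{m})$, with $r^{*}$ the smallest preimage, extends to a permutation that sends the whole support to energy $0$. The ergotropy is therefore $\tr[H_\lambda\rho_\lambda]$, whose mean over $f$ is exactly $m/2$.

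Even with your full Hamming-weight Hamiltonian, your first estimate $m-w^{*}-o(m)\approx 0.78m$ already meets the correct target. The repairs you propose instead do not work:
\begin{itemize}
\item Sending the support to strings whose first $m$ bits are zero leaves average weight about $m/2$ on the other bits. Under your Hamiltonian this extracts only about $m/2$, less than the passive arrangement you had already analysed, so it cannot fix the shortfall you perceived. Zeroing bits only helps when the Hamiltonian charges exactly those bits, which is the paper's padded construction.
\item Enlarging the stretch changes $\mathcal{D}$, so it proves a different statement from the one about the length-doubling construction.
\end{itemize}
Also, a random length-doubling $f$ is \emph{not} injective with overwhelming probability: by the birthday bound it has a collision with probability about $1-e^{-1/2}$. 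This is harmless here, since only $|\mathrm{Im} f|\leq 2^{m}$ is needed, but the claim as stated is false.

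Your upper bound on $\erghat_{\mathcal{D}}$ follows essentially the paper's route:
\begin{itemize}
\item simulate the extractor's copies of $\rho_\lambda$ by forward queries at fresh random $r_i$;
\item reprogram $f$ at the hidden point $r$ and apply O2H to replace $f(r)$ by a uniform string;
\item turn an energy gap into a distinguisher by measuring $H/\norm{H}_\infty$ (equivalently, a random term of $H$);
\item conclude from the unitary invariance of the maximally mixed state.
\end{itemize}
Two minor points there. The image-avoidance term $q2^{-(\ell-m)}$ plays no role in that argument. O2H as stated yields an advantage of order $q\,2^{-m/2}$ rather than $q^{2}2^{-m}$, though either is negligible.
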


Due to maximisation over efficient algorithms in the definition of $\erghat$, it is not exactly correct to say $\erghat \fnleq \negl$ here but we will often say in words that $\erghat$ is negligible if no $\nonnegl$ work can be extracted.

Returning to the construction, the intuition here is that to a computationally bounded algorithm, the state looks maximally mixed, even given access to the random oracle. However, an unbounded algorithm can essentially invert the function and return to a state of the form $\sum_{r}\den{r,0}$. 

While this is not true in general, in our construction, we can instantiate the random oracle using a (post-quantum) pseudorandom function to obtain the following. 

\begin{thm}[Separating $\erg_{{\cal D}}$ from $\erghat_{{\cal D}}$ in the plain
model (informal version of \Cref{thm:explicit-sep-erg-erghat-prf})]
Suppose pseudorandom functions exist. Then, there is an efficiently samplable distribution $\cal D$ over a family of states and Hamiltonians (see \Cref{def:explicit-fam-state-Ham-PRF}) such that 
\begin{align*}
\nexists\ \ \nonnegl\ \ \text{s.t.}\ \ \erghat_{{\cal D}} & \fngeq\nonnegl,\text{ while}\\
\erg_{{\cal D}} & \fngeq\frac{n}{2}
\end{align*}
where $\nonnegl$ is any non-negligible function.
\end{thm}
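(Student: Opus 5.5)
We take $\mathcal D$ to sample a PRF key $k\leftarrow\binset^{\lambda}$ and set $f_k(r):=r\,\|\,F_k(r)$ for $r\in\binset^{n/2}$. Here $F_k:\binset^{n/2}\to\binset^{n/2}$ is the post-quantum PRF. The state is $\rho_k=2^{-n/2}\sum_r\den{f_k(r)}$ and the Hamiltonian is the Hamming weight Hamiltonian on $n$ qubits, $H=\sum_i \den 1_i$. This mirrors the random-oracle construction of \Cref{thm:explicit-sep-erg-erghat}, with the oracle replaced by $F_k$. Since $k$ is kept hidden, the extractor receives only copies of $\rho_k$ and the description of $H$. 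The proof then has two parts: a lower bound on $\erg_{\mathcal D}$ and a negligible upper bound on $\erghat_{\mathcal D}$.

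\emph{Ergotropy lower bound.} The map $r\|F_k(r)\mapsto r\|0^{n/2}$ extends to a permutation of $\binset^n$, because $f_k$ is injective through its first half. Applying this permutation maps $\rho_k$ to $2^{-n/2}\sum_r\den{r,0}$, whose energy is $n/4$. The initial energy is $n/4+\E_r[\hw(F_k(r))]$. So the extracted work is $\E_r[\hw(F_k(r))]$, which we need to show is essentially $n/4$.

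To get $n/2$ instead, we change the second half. We either use $F_k$ with output length $n$ and take $f_k(r)=F_k(r)$ on an $n$-bit register, or we use a permutation whose image set is sparse. The cleaner choice is $\rho_k=2^{-m}\sum_{r\in\binset^m}\den{G_k(r)}$ with $G_k(r)=F_k(r)\in\binset^n$ and $m=o(n)$, for instance $m=\lambda$ with $n=\omega(\lambda)$. Its energy concentrates at $n/2$. The image has at most $2^m$ points, so sorting the populations onto the $2^m$ lowest-weight strings gives final energy $O(m)=o(n)$. Hence $\erg\geq n/2-o(n)$ for all large $\lambda$. Getting exactly $\fngeq n/2$ would need slightly more care, for example an additive correction or padding with one extra excited qubit, and I would tune the construction to match.

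\emph{Computational upper bound via a hybrid.} Consider any QPT extractor $\A$. Given $\poly$ copies of $\rho_k$, it outputs a circuit $C$, and its work is $W=\tr[H(\rho_k-C\rho_kC^\dagger)]$. We build a distinguisher $\mathcal B$ that gets oracle access to $\mathcal O\in\{F_k,\text{truly random}\}$. $\mathcal B$ prepares copies of $\rho_{\mathcal O}$ by sampling $r$ and querying $\mathcal O$, runs $\A$, applies $C$ to a fresh copy, and measures $H$. This yields an unbiased estimator of $W$ with values in $[-n,n]$. Repeating $\poly$ times and thresholding at a non-negligible $\varepsilon/2$ distinguishes the two worlds with non-negligible advantage, provided the work $\varepsilon$ in the PRF world and the work in the random world differ by a non-negligible amount.

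It therefore suffices to bound the work in the random world. When the state is built from a truly random function, it is $2^{-m}$-mixture over $2^m$ uniform strings. For polynomially many copies this is statistically close, by a collision bound on $\poly$ samples, to i.i.d. copies of the maximally mixed state $\I/2^n$, averaged over the function. Any unitary leaves $\I/2^n$ unchanged, and the remaining discrepancy is controlled by the trace distance between the fresh copy and $\I/2^n$ conditioned on the other copies. This gives expected work $\le n\cdot\negl$. So no non-negligible work can be extracted, which is the $\nexists\,\nonnegl$ statement.

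\emph{Main obstacle.} The hardest step is making the random-world bound rigorous when $\A$ may entangle its handling of the copies with the fresh copy. I would treat all $t+1$ copies symmetrically. The joint state of $t+1$ samples from a random $2^m$-set is $O(t^2/2^m)$-close to $(\I/2^n)^{\otimes t+1}$ restricted to distinct strings. The bound then follows from unitary invariance of the maximally mixed state.
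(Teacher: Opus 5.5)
Your argument is correct, but it takes a different route from the paper, mainly in the security half.

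\textbf{The paper's route.} It keeps the length-doubling construction of \Cref{def:explicit-fam-state-Ham-PRF}: seed $r\in\binset^{n}$, state on $2n$ qubits, Hamming weight only on the second half. An unbounded extractor sends $F(\k,r)\mapsto(r,0^{n})$ and reaches zero energy. For the upper bound, the paper \emph{strengthens} the adversary by giving it oracle access to $F(\k,\cdot)$ and replaces the PRF by a random oracle. It then uses the O2H-based \Cref{lem:expanding-RO-uniformity} to show that $f(r)$ at a hidden $r$ looks uniform even to a distinguisher that queries the oracle. Finally, it converts state indistinguishability into energy closeness through the projector decomposition of \Cref{lem:trHrho=00003DtrHsigma}, with a simulator for $O_{\fat{\rho}}$.

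\textbf{Your route.} You never give the extractor the PRF oracle, so after the single PRF-to-random-function hybrid the random world is purely statistical. Polynomially many samples of a random function on a $2^{m}$-point domain are $O(t^{2}/2^{m})$-close to i.i.d.\ uniform strings. The fresh copy is then independent of $C$ and maximally mixed. Because $H$ is diagonal, energies can be read off by a computational-basis measurement. This is more elementary: no O2H and no pigeonhole over projectors. With $m=o(n)$ and an unpadded Hamiltonian it also extracts about half the total qubit count rather than a quarter. What the paper's route buys is security against extractors that may query $F(\k,\cdot)$, and an energy-transfer lemma that covers non-diagonal local Hamiltonians.

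\textbf{Two small corrections.}

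First, the cleanest reduction is linear, not thresholded. Apply $C$ to a fresh sample, obtain $s$, and output $1$ with probability $\hw(s)/n$. The work gap between the two worlds is then at most $n$ times the sum of two PRF advantages: this one and the analogous one for initial energies. You never need to control how $W_{k}$ is distributed over keys. Thresholding also works, but then you must run the birthday argument on the whole transcript of all repetitions. The bound $\E_{R}[W_{R}]\leq\negl$ alone does not bound $\Pr_{R}[W_{R}\geq\varepsilon/2]$.

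Second, do not pad with excited qubits to reach exactly $n/2$. Qubits in a known state $\ket{1}$ can be flipped by an efficient extractor, which destroys the upper bound. No fix is needed: in \Cref{def:explicit-fam-state-Ham-PRF} the system has $2n$ qubits, so your construction on $N=2n$ qubits already gives $n-o(n)\geq n/2$. Adopting the paper's padded Hamiltonian with $m=n$ also works with your birthday argument unchanged. Like the paper's own plain-model bound, though, it gives $n/2$ only up to a negligible term.
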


\paragraph{Understanding Catalytic Computation}

We start with the information theoretic setting. 

\emph{Catalytic Computation (unbounded/information theoretic).}
We denote by $\catU[n][\ell,\varepsilon]$ unitaries that act on an $n$-dimensional $\IO$ register and an $\ell$-dimensional $\catalyst$ register such that the resulting state is $\varepsilon$ close in trace distance to a product state. We call these \emph{approximate catalytic unitaries}. Let us denote by $C^{\eta}_U(\cdot)$ the channel implemented by such unitaries $U$ where the catalyst is $\ket{\eta}$ and one only keeps the $\IO$ register (the $\catalyst$ register is traced out). Then, we show that all such channels are essentially just unitaries acting on the $\IO$ register alone. 

\begin{thm}
Let $0\leq\varepsilon<1$, and let
$U\in\catU[n][\ell,\varepsilon]$.
For every pure catalyst state $\eta=\den{\eta}\in\Den(\ell)$,
there exists a unitary $V_{\eta}\in\uni(n)$ such that
\begin{equation}
\max_{\rho\in\Den(n)}
\TD{C^{\eta}_{U}(\rho)}
{C_{V_{\eta}}(\rho)}
\leq \frac{3\varepsilon^{2}}{2}.
\end{equation}
\end{thm}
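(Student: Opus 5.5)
The plan is to compress $U$ onto the catalyst state and show that the compression is almost a unitary on the $\IO$ register. Define the operator $M_\eta:=(\I\otimes\bra{\eta})\,U\,(\I\otimes\ket{\eta})$ on the $\IO$ register. Define also the ``leakage'' operator $R_\eta:=(\I\otimes(\I-\eta))\,U\,(\I\otimes\ket{\eta})$, which maps the $\IO$ register into the joint space. Then, for every pure input $\ket{\psi}$,
\[
U(\ket{\psi}\otimes\ket{\eta})=M_\eta\ket{\psi}\otimes\ket{\eta}+R_\eta\ket{\psi},
\qquad \norm{M_\eta\psi}^2+\norm{R_\eta\psi}^2=1 .
\]
The range of $R_\eta$ is orthogonal to $\ket{\eta}$ on the catalyst. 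Hence, after tracing out the catalyst, the cross terms between the two pieces vanish, and
\[
C^{\eta}_U(\rho)=M_\eta\rho M_\eta^\dagger+\tr_{\catalyst}\!\left[R_\eta\rho R_\eta^\dagger\right].
\]

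\emph{Step 1: near-isometry from the catalytic condition.} I will convert the defining property of $\catU[n][\ell,\varepsilon]$ into a uniform bound $\norm{R_\eta\psi}^2\le\delta$ for all unit $\ket\psi$. The property says that the output on $\ket\psi\otimes\ket\eta$ is $\varepsilon$-close to some product $\sigma\otimes\eta$. Projecting the catalyst onto $\eta$ and using the closeness gives the bound, which in operator form reads $M_\eta^\dagger M_\eta\ge(1-\delta)\I$. The target is $\delta\le\varepsilon^2$. This is the natural scaling when the closeness is measured on the pure output vector: the weight outside $\I\otimes\eta$ is the squared norm of the orthogonal component. Since $\varepsilon<1$, $M_\eta$ is invertible. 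I take the polar decomposition $M_\eta=V_\eta|M_\eta|$, which gives a genuine unitary $V_\eta\in\uni(n)$ with $\sqrt{1-\delta}\,\I\le|M_\eta|\le\I$.

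\emph{Step 2: bounding the channel difference.} The map $\rho\mapsto C^\eta_U(\rho)-V_\eta\rho V_\eta^\dagger$ is linear and the trace norm is convex, so it suffices to bound the difference on pure $\rho=\den{\psi}$. I split it into two terms. The first is $\tr_{\catalyst}[R_\eta\psi R_\eta^\dagger]$, a positive operator of trace $\norm{R_\eta\psi}^2\le\delta$, which contributes at most $\delta$ to the trace norm. The second is $V_\eta\bigl(|M_\eta|\psi|M_\eta|-\psi\bigr)V_\eta^\dagger$, and it is unitarily equivalent to the difference between $\den{a}$ and $\den{\psi}$ with $a=|M_\eta|\psi$. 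Because $|M_\eta|$ is positive with spectrum in $[\sqrt{1-\delta},1]$, $a$ has large overlap with $\psi$: $\langle\psi|a\rangle\ge\sqrt{1-\delta}$ and $\norm{a}\le1$. The standard formula for the trace norm of a difference of rank-one operators then bounds this term by roughly $2\delta$ at first order. Halving the sum of the two contributions gives a trace distance of at most $\tfrac{3}{2}\delta$, which is $\tfrac{3\varepsilon^2}{2}$ once $\delta\le\varepsilon^2$.

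\emph{Main obstacle.} The delicate point is Step 1, namely extracting the quadratic $\delta\le\varepsilon^2$ rather than a linear bound from the exact form of $\varepsilon$-closeness used in $\catU[n][\ell,\varepsilon]$. A naive Fuchs--van de Graaf argument, $\tr[(\I\otimes\eta)\Psi]\ge1-\varepsilon$, only yields $\delta\le\varepsilon$. I will first try to exploit purity of the joint output. Closeness of a pure state to a product with a pure catalyst factor forces the reduced catalyst state to be close to $\eta$. Optimising over $\sigma$, the optimal $\sigma$ is the normalised $M_\eta\psi$, so the distance is governed by $\norm{R_\eta\psi}$, which gives the squared scaling. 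If the definition only yields a linear relation, the same argument still goes through with $\delta$ in place of $\varepsilon^2$. A secondary check is constant tracking in Step 2, to make sure the rank-one bound really gives the factor $\tfrac32$.
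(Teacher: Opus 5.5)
Your proposal is correct and is essentially the paper's proof. Your $M_\eta$ and $R_\eta$ are the paper's $\mathcal{L}$ and $K$, and the cross-term cancellation and the split $C^\eta_U(\rho)=M_\eta\rho M_\eta^\dagger+\tr_{\catalyst}[R_\eta\rho R_\eta^\dagger]$ are the same. Step 1 is exactly the paper's use of $\TD{\den{a}}{\den{b}}=\sqrt{1-|\braket{a}{b}|^{2}}$ for the two pure states, so $\delta=\varepsilon^{2}$ comes out directly and the linear fallback is never needed. Your polar unitary is the paper's $V_S$.

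One justification in Step 2 needs repair, because as written it would lose the quadratic scaling. You cite two facts for $a=|M_\eta|\ket{\psi}$, namely $\braket{\psi}{a}\geq\sqrt{1-\delta}$ and $\norm{a}\leq1$. These are also satisfied by a unit vector at overlap exactly $\sqrt{1-\delta}$, for which $\norm{\den{a}-\den{\psi}}_{1}=2\sqrt{\delta}=2\varepsilon$, i.e.\ only linear in $\varepsilon$. What you actually need is the operator-norm statement you already have: $\norm{|M_\eta|-\mathbb{I}}_{\infty}\leq1-\sqrt{1-\delta}\leq\delta$. This gives $\norm{a-\psi}\leq\delta$, and hence $\norm{\den{a}-\den{\psi}}_{1}\leq\norm{a-\psi}\,(\norm{a}+1)\leq2\delta$, which recovers your $\tfrac32\delta$. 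This is precisely the paper's H\"older step $\tfrac12\norm{\mathcal{L}\rho\mathcal{L}^{\dagger}-V_S\rho V_S^{\dagger}}_{1}\leq\norm{\mathcal{L}-V_S}_{\infty}$. That step works for mixed $\rho$ directly, so your convexity reduction to pure inputs is harmless but unnecessary.
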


\emph{Efficient Catalytic Computation.} Since nothing changes substantially when catalysts are introduced in the information theoretic setting, we now turn to the computationally bounded setting. To consider the efficient catalytic setting, we define $\catU[n][\ell][\left|\eta\right\rangle]$ just as we defined $\catU[n][\ell,\varepsilon]$ but with the added restriction that the catalyst $\ket{\eta}$ must be restored exactly. 

With this, we can describe the class $\catgen{n,\ell}$: A collection of efficient classical algorithms $\cal A$ that on input $1^\lambda$, produce a description of a unitary $U_\lambda \in \catU[n][\ell][\left|\eta\right\rangle]$ and a projector $\Pi_\lambda$ that acts on the $\IO$ register. We define corresponding class of decision problems (i.e. decision problems that algorithms in $\catgen{n,\ell}$ can solve) to be $\cat{n,\ell}$.

\begin{thm}[Decision separation between catalysts (informal version of \Cref{thm:sep-cat-vs-nocat-dec})]
Let $n,\ell,\ell':\N\to\N$ be the polynomials $n(\lambda)=2\lambda$, $\ell'(\lambda)=\lambda$
and $\ell(\lambda)=c\cdot\secp$ for $c<1$. In the random oracle
model, it holds that 
\[
\cat{n,\ell}\subsetneq\cat{n,\ell'}.
\]
\end{thm}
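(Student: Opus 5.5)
Two things need proving: the inclusion $\cat{n,\ell}\subseteq\cat{n,\ell'}$, which is easy, and a decision problem in $\cat{n,\ell'}\setminus\cat{n,\ell}$, which is the real work.

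\emph{The inclusion.} Any algorithm in $\catgen{n,\ell}$ with $\ell=c\lambda<\lambda=\ell'$ catalyst qubits becomes one with $\ell'$ qubits by padding. The extra $\ell'-\ell$ qubits are initialised to $0$ and never touched. Exact restoration of the catalyst and the output projector on $\IO$ are both preserved.

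\emph{The separating problem.} I would use the random oracle $\fat O$ to encode, for each $\lambda$, random functions on $\lambda$-bit strings. The natural choice is a random function $f:\binset^{\lambda}\to\binset^{\lambda}$, together with a predicate $g:\binset^{\lambda}\to\binset$. The $\IO$ register holds $2\lambda$ qubits $\ket{x}\ket{y}$.

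For the upper bound, use a classical reversible circuit with three queries and $\lambda$ catalyst qubits in $\ket{0^\lambda}$:
\begin{itemize}
\item compute $f(x)$ into the catalyst;
\item apply a query depending on $f(x)$ and $y$, namely XOR $g(f(x)\oplus y)$ or $f(f(x))$ into an $\IO$ position;
\item uncompute $f(x)$ with the third query.
\end{itemize}
This restores the catalyst exactly. Measuring the designated qubit with $\Pi_\lambda$ decides the language $L_{\fat O}=\{(x,y): g(f(x)\oplus y)=1\}$. The relation/decision problem is set up so that it is a deterministic function of the oracle, so the algorithm lies in $\catgen{n,\ell'}$. I expect the relational version to come first and the decision version to follow by taking a hardcore-style bit.

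For the lower bound I would prove something stronger than needed. Any unitary algorithm acting on the $2\lambda$-qubit $\IO$ register plus $c\lambda$ arbitrary auxiliary qubits (with no restoration required), making $T$ oracle queries interleaved with arbitrary oracle-independent unitaries, succeeds with probability at most $1/2+T\cdot2^{-\Omega(\lambda)}$ on a random instance. So polynomial $T$ fails.

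The intuition is that the algorithm must hold the $\lambda$-bit string $f(x)$ somewhere while also keeping $x$ and $y$ coherently, since the map $(x,y)\mapsto(x,y\oplus\cdot)$ must be unitary on $\IO$. It has only $(2+c)\lambda$ qubits in total. I would formalise this with a compressed-oracle or hybrid argument:
\begin{itemize}
\item Fix a basis input and track the purified oracle database.
\item Show that any query to $g$ at the point $f(x)\oplus y$ requires the workspace state to have large overlap with a register encoding $f(x)$.
\item Because the $\IO$ register must end in a state consistent with the input (the algorithm is unitary and the output is read by a projector, so information about $x,y$ cannot be discarded), the overlap is bounded by a dimension-counting bound. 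This is something like: a $c\lambda$-qubit register can carry information about a fresh random $\lambda$-bit value only with fidelity $2^{-(1-c)\lambda}$.
\item Summing over $T$ queries in a BBBV-style hybrid gives the bound $T\cdot2^{-(1-c)\lambda/2}$.
\end{itemize}

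The main obstacle is this space-bounded quantum lower bound. In particular, I must rule out strategies that recompute $f(x)$ into the $\IO$ register itself, or that use superposition over the catalyst to store only partial information. I would first try an information-theoretic argument: the mutual information between the final workspace and $f(x)$, conditioned on $x,y$ being recoverable, is at most $c\lambda$. I would combine this with a query-by-query progress measure. Finally, a diagonalisation over the countably many uniform generators, with a union bound over $\lambda$ and Borel--Cantelli, gives that with probability $1$ over the oracle no algorithm in $\catgen{n,\ell}$ decides $L_{\fat O}$. This yields the strict inclusion.
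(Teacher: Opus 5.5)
\textbf{The inclusion is fine; the separating language is not.} Padding the $c\lambda$-catalyst algorithm with untouched $\ket{0}$ qubits gives $\cat{n,\ell}\subseteq\cat{n,\ell'}$. The trouble is the language. Your key intuition, that a solver ``must hold $f(x)$ somewhere while also keeping $x$ and $y$ coherently,'' is false for a decision problem. Nothing forces the $\IO$ register to preserve $y$. Moreover, the standard oracle computes exactly the XOR you need in place: $O_f\ket{x}\ket{y}=\ket{x}\ket{y\oplus f(x)}$, with $x$ as query register and $y$ as response register.

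Your three-query circuit writes the answer bit into an $\IO$ position, so suppose $\IO$ has a free qubit for it. Then a second query, to $O_g$ with the $y$-register as query and the free qubit as target, writes $g(f(x)\oplus y)$. Taking $\Pi=\den{1}$ on that qubit decides $L_{\fat O}$ exactly, with two queries and no catalyst. Hence $L_{\fat O}\in\cat{n,0}\subseteq\cat{n,\ell}$, and your claimed $1/2+T2^{-\Omega(\lambda)}$ bound is false.

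If instead $x,y$ fill all $2\lambda$ qubits of $\IO$, as you wrote, your upper bound breaks. XORing the bit into $y_1$ or $x_1$ spoils the uncomputation of $f(x)\oplus y$ or of $f(x)$, so the catalyst is not restored. If you first overwrite $y$ by $y\oplus f(x)$, the answer lands on an unknown input bit, which a fixed $\Pi$ cannot read. The ``$f(f(x))$'' alternative is never specified. So the obstacle you flagged, strategies that recompute into $\IO$, is fatal to this language, not a technicality.

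\textbf{What the paper does instead.} The needed intermediate value must be impossible to produce in place. The paper takes $\mathsf{YES}=\{(x,f(g(x)))\}$, with NO its complement. To query $f$ at $g(x)$, one must write $g(x)$ exactly into a fresh $\lambda$-qubit register. It cannot overwrite $x$, since $g$ is not invertible, and XORing it into $y$ destroys its value. A fresh register is exactly what $\lambda$ catalysts supply: compute $g(x)$, XOR $f(g(x))$ into $y$, uncompute, and accept iff $y=0^\lambda$.

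The lower bound is not a compressed-oracle or mutual-information argument. It is a branch decomposition $U=\sum_i A_i\Pi Z_i+Z_q$ with $\Pi=\den{g(x)}_Q\otimes\mathbb{I}$.
\begin{itemize}
\item The never-hit branch $Z_q$ has equal average acceptance on $(x,f(g(x)))$ and $(x,r)$. Both second halves are uniform given $g$, $x$, and $f$ away from $g(x)$.
\item The hit branches are charged to an inverter for $g$. It prepares $\den{g(x)}_Q\otimes\mathbb{I}_{RC}/2^{\lambda+\ell}$ and runs $Z_i^\dagger$. This loses only a factor $2^\ell$, because the $2^\lambda$ from the second register cancels on averaging over $f(g(x))$.
\end{itemize}
The $O(q_g^2/2^\lambda)$ random-function inversion bound then gives distinguishing advantage at most $4(q-1)(2q_g+1)2^{-(\lambda-\ell)/2}$, which is negligible for $\ell=c\lambda$ with $c<1$.
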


This shows that there is a decision problem that can be solved with $\lambda$ catalysts but not with $c\cdot\lambda$ for any $c<1$. In effect, even though catalysts do not affect the information theoretic picture, the computational complexity of catalysts changes when one restricts to efficient computations. 

\paragraph{Catalytic Ergotropy}

Building on our complexity theoretic results, we now look at how catalysts affect ergotropy. When discussing ergotropy, we restrict to exact catalytic restoration in our explanations. Robustness to arbitrary errors is handled by showing quantum space lower bounds. 

Let $\caterg_{\rho,H}$ denote the work one can extract from the state $\rho$ using catalytic unitaries that restore the catalyst exactly. Then, it is immediate from the preceding discussion that 
\begin{flalign*}
\erg_{\rho,H} & =\caterg_{\rho,H}
\end{flalign*}
for all states and Hamiltonians.

To define the analogous quantity in the computational setting, we denote the work extracted by an efficient algorithm $\cal A$, using $\ell$ catalysts, from a state $\rho$ and Hamiltonian $H$ as 
\begin{align*}
\W^{(\ell)}_{\fat{\rho},\fat H,\A} & :=\mathbb{\E}\Biggr[\tr\left[H\otimes\I_{2^{\ell}}\left(\rho\otimes\den{0^{\ell}}-U\left(\rho\otimes\den{0^{\ell}}\right)U^{\dagger}\right)\right]:\\
 & \qquad\qquad\qquad\qquad\qquad\qquad\qquad\qquad\qquad\begin{array}{r}
\circdesc(U)\leftarrow\A^{O_{\fat{\rho}},\hamdesc}(1^{\secp}),\\
U\in\catU[n][\ell][\left|0^{\ell}\right\rangle ]
\end{array}\Biggr]
\end{align*}
where the number of catalysts and the system size is now a function of $\lambda$. Intuitively, this is computing the work one can extract using the unitary specified by $\cal A$ using $\ell$-many catalysts---where the catalyst is restored exactly.

We can now define \emph{computational ergotropy in the catalytic setting} that we denote by $\caterghat$. We say $\caterghat_{\cal D} \fngeq \lb$ if there is some efficient algorithm $\cal A$ such that $\E_{(\fatrhoH)\leftarrow {\cal D}} {\cal W}^{(\ell)}_{\fatrhoH,\cal A} \fngeq \lb$. We show that $\erghat$ can be maximally separated from $\erghat$.

\begin{thm}[Separation between computational ergotropy and catalytic computational ergotropy (informal version of \Cref{thm:sep-cat-erg-hat-and-erg-hat})]
There is a distribution $\cal D$ over the family of states, Hamiltonians and the random oracle, $\fatrhoH,\fat O$ for which, in the random oracle model, it holds that 
\begin{align*}
\nexists\ \ \nonnegl\ \ \ \text{s.t. }\erghat_{{\cal D}} & \fngeq\nonnegl\text{, while }\\
\caterghat_{{\cal D}} & \fngeq\frac{n}{2}
\end{align*}
where $\nonnegl$ is any non-negligible function.
\end{thm}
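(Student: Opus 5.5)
The plan is to build the distribution $\cal D$ directly from the oracle problem behind the catalytic decision separation (\Cref{thm:sep-cat-vs-nocat-dec}), and to turn its query lower bound into a work lower bound. I assume that lower bound is really a statement about a relational problem, as the introduction describes. That problem is: given a random oracle $\fat O$ on $\lambda$-bit strings, implement a fixed oracle-dependent permutation $P_O$ on $2\lambda$ bits. It has two properties. First, $P_O$ is a classical reversible circuit using three queries and $\lambda$ ancillas returned exactly to $0^\lambda$. Second, every unitary algorithm with at most $c\lambda$ auxiliary qubits, for any fixed $c<1$, needs $2^{\Omega(\lambda)}$ queries to map $P_O(r,0^\lambda)$ to any string with a short second half with non-negligible probability. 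The instance is
\[
\rho_\lambda=\frac{1}{2^{\lambda}}\sum_{r\in\binset^{\lambda}}\den{P_O(r,0^{\lambda})},
\]
with $H_\lambda$ a Hamming-weight Hamiltonian of single-qubit terms. The weights should be chosen so that the ground-state region $\{(r,0^\lambda)\}$ has low energy and a random-looking string has energy about $n/2$; if needed, pad the system or rescale the terms to land on the stated bound $n/2$.

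\emph{Upper side (catalytic).} The extractor $\cal A$ outputs the three-query reversible circuit for $P_O^{-1}$, using $\ell=\lambda$ catalysts initialised to $\ket{0^\lambda}$. On every basis state in the support it maps $P_O(r,0^\lambda)\otimes\ket{0^\lambda}$ to $(r,0^\lambda)\otimes\ket{0^\lambda}$. So the circuit lies in $\catU[n][\ell][\left|0^{\ell}\right\rangle]$ with exact restoration, and it is efficient.

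The work extracted is the initial energy minus the energy of the final state, which is uniform on $(r,0^\lambda)$. For a random oracle, the initial state is close to uniform on $n$ bits with high probability, so its energy concentrates. A Chernoff bound over the oracle then gives $\E\,\W^{(\lambda)}\fngeq n/2$ after the normalisation above.

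\emph{Lower side (no catalysts).} Suppose some efficient $\cal A$ extracts non-negligible expected work $\delta(\lambda)$ without catalysts. It outputs a unitary $U$ on the $n$ system qubits alone, which is a space-bounded algorithm with $0\le c\lambda$ auxiliary qubits and polynomially many queries. I will convert the energy drop into success at the relational task:
\begin{itemize}
\item Measure $U\rho U^\dagger$ in the computational basis. Since $H$ is diagonal, this measurement preserves the expected energy.
\item Against a random oracle, $\rho$ has energy close to the uniform average up to negligible terms. An expected decrease of $\delta$ therefore forces, by Markov's inequality on the bounded energy range $[0,n]$, a non-negligible probability, at least $\delta/n$ up to negligible terms, of outputting a string whose weight is noticeably below typical.
\item A string of atypically low weight is exactly the kind of solution the relational problem forbids a low-space algorithm from producing.
\end{itemize}
This contradicts the $2^{\Omega(\lambda)}$ query lower bound for algorithms with at most $c\lambda$ auxiliary qubits, so no $\nonnegl$ function lower-bounds $\erghat_{\cal D}$.

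\emph{Main obstacle.} The hard step is this reduction: making sure the event ``energy below typical'' is genuinely hard for space-bounded unitary algorithms. The difficulty is that a low-weight output need not be $P_O^{-1}$ of the input. If the lower bound is stated only for the exact relation, I would strengthen the relation to ``output any string of weight at most $(1/2-\epsilon)n$ correlated with the input''. I would then rerun the quantum-space-bounded query argument (the progress measure on the $c\lambda$-qubit workspace) for this relaxed relation. The first thing to try is to reduce instead to the decision version: use the energy drop as a distinguisher between $\rho$ and the maximally mixed state, since $U$ extracts nothing from the latter. A non-negligible distinguishing advantage then contradicts the decision hardness behind \Cref{thm:sep-cat-vs-nocat-dec}.
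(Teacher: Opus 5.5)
\textbf{Catalytic side.} This part is the paper's argument. With $P_O(x,y)=(x,y\oplus f(g(x)))$ your state is $\frac{1}{2^\lambda}\sum_x\den{x,f(g(x))}$. The three-query circuit restores $\ket{0^\lambda}$ on every basis state, so it is an exactly catalytic unitary that moves the whole support into the ground space. You need neither a Chernoff bound nor closeness to uniform; in fact $\rho_\lambda$ has rank $2^\lambda$, so it is statistically far from uniform on $2\lambda$ bits. Since each $f(g(x))$ is uniform, $\E_{\mathcal{D}}\tr[H_\lambda\rho_\lambda]=\lambda/2$ exactly by linearity, and that expectation is all the definition of $\caterghat_{\mathcal{D}}$ uses.

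\textbf{The gap.} Your main lower-side argument, the bulleted reduction to relational hardness, does not work.
\begin{itemize}
\item \emph{No rare event.} If the expected energy drops by $\delta$, Markov only guarantees second-half weight at most $\lambda/2-\delta/2$ with probability at least $\delta/(2\lambda)$. The trivial extractor $U=\mathbb{I}$ already achieves this with probability close to $1/2$, because the second half of $\rho_\lambda$ is uniform. So there is nothing to contradict.
\item \emph{The constant-$\epsilon$ fix fails.} Requiring weight at most $(1/2-\epsilon)\lambda$ does not help. An inverse-polynomial $\delta$ can come from a slight shift of the weight distribution with negligible mass in that tail. Such a relation could rule out only extensive extraction, not every non-negligible amount.
\item \emph{Misattributed lemma.} The paper's relational bound, \Cref{lem:sep-cat-vs-lesscat}, is not the statement you cite. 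It concerns the projector $\den{x}\otimes\den{0^\lambda}$, and its proof bounds the no-hit branch by $\operatorname{rank}/2^\lambda\le 2^{\ell-\lambda}$. That step collapses for a high-rank ``low weight'' projector.
\end{itemize}

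\textbf{The fix.} What is needed is a comparative bound, and that is exactly your fallback, which is the paper's proof.
\begin{itemize}
\item Replace each query to $O_{\fat{\rho}}$ by sampling $x'$ and querying $g$ and then $f$, so the extractor only uses the oracles.
\item Observe that $U\sigma_\lambda U^\dagger=\sigma_\lambda$ for the maximally mixed state, so no work is extracted from it.
\item Take $\Pidec=\den{1}$ on a uniformly random qubit of the second half. This projector is oracle-independent, as the lemma requires. Its acceptance probabilities on $U\rho_\lambda U^\dagger$ and $U\sigma_\lambda U^\dagger$ differ by exactly $\E_{\mathcal{D}}\mathcal{W}/\lambda$.
\item Apply \Cref{lem:decision-cat-separation-robust} with $\ell=0$, which bounds this difference by $4(q-1)(2q_g+1)2^{-\lambda/2}$.
\item As in the paper's remark, note that the polynomially many oracle values the extractor learns before fixing $U$ hit $g(x)$ only with negligible probability. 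Hence the lemma's assumption that the non-oracle gates are oracle-independent costs only a negligible term.
\end{itemize}
Make this the lower-side argument and drop the relational chain.
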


\begin{figure}
    \centering
    \includegraphics[width=0.8\linewidth]{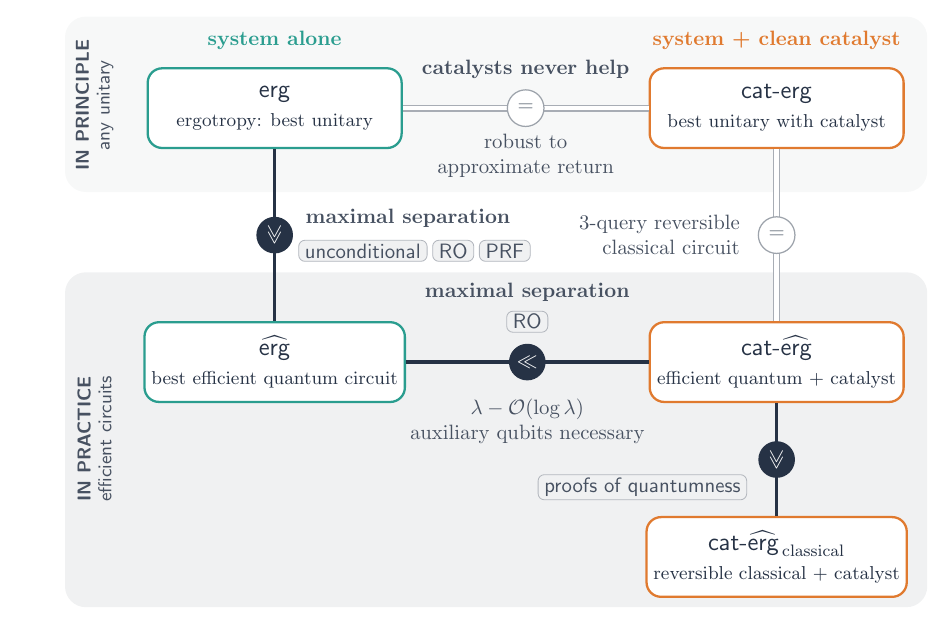}
    \caption{\emph{Overview of the separation results.}}
    \label{fig:overview-of-results}
\end{figure}

\paragraph{Catalytic Work-Potential}

We now give a sufficient condition to conclude that an efficient extensive Hamiltonian has catalytic work potential, i.e. $\Delta$ is not negligible (recall $\Delta$ is the difference between efficiently extractable work with and without catalysts). %

\begin{thm}[Criterion for catalytic-work-potentiol (informal version of \Cref{thm:cat-work-potential})]
Let $n:\N\to\N$ be the polynomial $n(\lambda)=2\lambda$. Let $\fat H=\{H_{\lambda}\}_{\lambda}$
be a family of efficient extensive Hamiltonians acting on $n(\lambda)$
qubits. Then, $\fat{H}$ has catalytic work potential if there is a polynomial $\ell:\N\to\N$ and an algorithm ${\cal A}\in\cat{n,\ell}$
such that $\tr[H_{\lambda}\sigma_{\lambda}]=\nonnegl(\lambda)$ where 
\[
\sigma_{\lambda}:=\E_{\circdesc(\Pi_{\lambda},U_{\lambda})\leftarrow{\cal A}(1^{\lambda})}U\left(\mathbb{I}_{2^\lambda}\otimes\den 0^{\lambda}\otimes\den 0^{\ell(\lambda)}\right)U^{\dagger}.
\]

\end{thm}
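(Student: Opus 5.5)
The plan is to build, from the algorithm ${\cal A}\in\cat{n,\ell}$ in the hypothesis, a family of states on which catalytic extraction is easy and non-catalytic extraction is hard. Concretely, I set $\rho_\lambda:=\tr_{\catalyst}\sigma_\lambda$. Because every $U$ output by ${\cal A}$ lies in $\catU[n][\ell][\left|0^{\ell}\right\rangle]$, the catalyst is returned exactly to $\den 0^{\ell}$. Hence $\sigma_\lambda=\rho_\lambda\otimes\den 0^{\ell}$, and $\rho_\lambda$ is efficiently preparable with $\ell$ catalysts. The work potential is then $\Delta=\caterghat-\erghat$ evaluated on the pair $(\fat{\rho},\fat H)$, and I will bound the two terms separately.

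\emph{Lower bound on $\caterghat$.} The catalytic extractor outputs $U^{\dagger}$. This is again in $\catU[n][\ell][\left|0^{\ell}\right\rangle]$, since inverting an exact catalytic unitary preserves exact restoration. It maps $\rho_\lambda\otimes\den 0^{\ell}$ back to $\frac{\mathbb{I}_{2^\lambda}}{2^{\lambda}}\otimes\den 0^{\lambda}\otimes\den 0^{\ell}$. On this state the second $\lambda$ system qubits are in a known pure state. I would use this, together with the fact that $\fat H$ is efficient and extensive, to apply a further efficient step that brings the energy down to the reference value $\tr[H\,\mathbb{I}]/2^{n}$, which I normalise to $0$. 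Under this normalisation the extracted work is at least $\tr[H_\lambda\sigma_\lambda]-0=\nonnegl(\lambda)$. If instead $\fat H$ is shifted, I would add the constant-energy correction using the ground-state/extensivity assumption. The choice of normalisation is the detail I expect to have to fix when writing the formal statement.

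\emph{Upper bound on $\erghat$.} For any unitary $V$, the maximally mixed state $\tau=\mathbb{I}/2^{n}$ satisfies $\tr[H(\tau-V\tau V^{\dagger})]=0$. It therefore suffices to show that, for every efficient non-catalytic $V$ (one using at most $c\lambda$ extra qubits, $c<1$), the work on $\rho_\lambda$ is within $\negl$ of the work on $\tau$. Since $H_\lambda$ is efficiently describable, energy estimation of $V\rho V^{\dagger}$ is efficient. A non-negligible gap in work would therefore give an efficient distinguisher between $\rho_\lambda$ and $\tau$ that has only $c\lambda$ auxiliary space. I would then invoke the quantum-space query lower bound from the catalytic separation (the result underlying \Cref{thm:sep-cat-vs-nocat-dec} and \Cref{thm:sep-cat-erg-hat-and-erg-hat}). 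That bound says such algorithms need $2^{\Omega(\lambda)}$ oracle queries to invert or distinguish the random-oracle-based encoding, and so rules out the distinguisher.

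\emph{Main obstacle.} The hardest step is this reduction. The hypothesis only guarantees some catalytic ${\cal A}$ producing a high-energy $\sigma_\lambda$. To make the indistinguishability argument go through, I first try to show that ${\cal A}$ can be replaced by, or composed with, the random-oracle permutation of the earlier construction, so that $\rho_\lambda$ is pseudo-maximally-mixed against space-bounded adversaries. I would then check that energy estimation with inverse-polynomial precision needs only polynomially many copies, so that the distinguisher stays efficient.
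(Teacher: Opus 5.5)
There is a genuine gap: the hardness sits on the wrong state. Your battery $\rho_\lambda=\tr_{\catalyst}\sigma_\lambda$ is the output of the hypothesised ${\cal A}$. That is an arbitrary efficient procedure, with no oracle and possibly not even using its catalysts, so nothing stops a non-catalytic circuit from undoing it.

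Worse, the hypothesis says $E_\lambda=\tr[H_\lambda\sigma_\lambda]$ differs non-negligibly from $\bar E_\lambda=\tr[H_\lambda]/2^{n(\lambda)}$, which is the energy of $\mathbb{I}/2^{n}$. Measuring a uniformly random local term of $H_\lambda$ (the distinguisher in \Cref{lem:trHrho=00003DtrHsigma}, with no unitary applied) therefore tells $\rho_\lambda$ apart from $\mathbb{I}/2^{n}$. So your reduction ``work on $\rho_\lambda$ is within $\negl$ of work on $\tau$'' already fails at the initial-energy term. No composition of ${\cal A}$ with a random oracle can make a state of energy $E_\lambda$ pseudo-maximally-mixed.

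A concrete instance: let $H_\lambda$ be the Hamming weight on all $2\lambda$ qubits, and let ${\cal A}$ output $U=\mathbb{I}_{2^\lambda}\otimes X^{\otimes\lambda}$, leaving the catalyst untouched. Then $E_\lambda-\bar E_\lambda=\lambda/2$ and your battery is $\frac{\mathbb{I}}{2^\lambda}\otimes\den{1^\lambda}$. The non-catalytic circuit $U^{\dagger}$ extracts $\lambda$ units of work from it, so there is no separation.

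Your direction is also reversed relative to the paper. The intended hypothesis (see the technical overview) is that ${\cal A}$ reaches a state \emph{below} the average energy, with $\bar E_\lambda-E_\lambda$ non-negligible. Under that reading your extractor ($U^\dagger$, then a step up to $\bar E_\lambda$) does negative work. In any case, the ``further efficient step'' to energy $\bar E_\lambda$ is not justified by efficiency and extensivity alone.

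The missing idea, which your last paragraph gestures at, is to keep the battery independent of ${\cal A}$ and use ${\cal A}$ only for extraction. The paper takes the random-oracle battery $\rho_\lambda=2^{-\lambda}\sum_{x}\den{x,f(g(x))}$ of \Cref{thm:sep-cat-erg-hat-and-erg-hat}.
\begin{itemize}
\item Its expected energy is exactly $\bar E_\lambda$, because $\E_{f,g}\rho_\lambda=\mathbb{I}/2^{2\lambda}$.
\item By \Cref{lem:decision-cat-separation-robust} it is indistinguishable from the maximally mixed state for algorithms with at most $c\lambda$ auxiliary qubits. So $\erghat$ is negligible by the same energy-comparison argument you sketched, which is valid here precisely because the initial energies agree.
\item For catalytic extraction, the three-query circuit of \Cref{lem:PinFCAT_3lambda} erases $f(g(x))$. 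This maps the battery to $\frac{\mathbb{I}}{2^\lambda}\otimes\den{0^\lambda}$ with the catalyst restored.
\item The hypothesised $U$ is then applied \emph{forward} to reach $\sigma_\lambda$. The composition is exactly catalytic and extracts $\bar E_\lambda-E_\lambda$ in expectation, which is the claimed non-negligible gap.
\end{itemize}
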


\paragraph{Pseudo-ergotropy} 
Here, it helps to introduce keyed distributions. Just as earlier we had distributions over families of states and Hamiltonians, now we also include families of keys. We say a keyed distribution $\cal D$ over states and Hamiltonians has \emph{$(c,d)$-catalytic-pseudoergotropy} if there is another keyed distribution $\cal D'$ such that $\caterghat_{\cal D} \fnleq c$, $\caterghat_{\cal D'} \fngeq d$ (assume the extraction algorithms see the key) and without the key, the states produced by the two distributions are computationally indistinguishable. Here, we require both 
Here, it helps to introduce keyed distributions. Just as earlier we had distributions over families of states and Hamiltonians, now we also include families of keys. We say a keyed distribution $\cal D$ over states and Hamiltonians has \emph{$(c,d)$-catalytic-pseudoergotropy} if there is another keyed distribution $\cal D'$ such that $\caterghat_{\cal D} \fnleq c$, $\caterghat_{\cal D'} \fngeq d$ and without the key, the states produced by the two distributions are computationally indistinguishable. 

\begin{thm}[Pseudoergotropy can be realised (informal version of \Cref{thm:cons1-thomas-inspired-nonlocalH})] Suppose quantum secure pseudorandom permutations exist. Then, for $d(\lambda)=4\lambda$, there is some polynomial $\ell:\N\to\N$ and an efficiently sampleable keyed distribution
${\cal D}$ that has $(0,d)$-catalytic-pseudoergotropy using $\ell$ catalysts. 
\end{thm}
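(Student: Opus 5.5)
The plan is to build two keyed families that share one Hamiltonian and whose states differ only in whether a pseudorandom permutation was applied, so that, without the key, indistinguishability follows directly from PRP security. Let $P_{\k}$ be a quantum-secure pseudorandom permutation on $m=m(\lambda)$ bits, with $m$ chosen so that the target energy scale is $d(\lambda)=4\lambda$. Take a Hamming-weight-type Hamiltonian $H_\lambda$, diagonal in the computational basis, on $n(\lambda)=O(\lambda)$ qubits, scaled so that the maximal population gap available is $4\lambda$. Both families use a classical (diagonal) state of the form
\[
\rho^{\k}_{\lambda}:=\E_{r}\den{P_{\k}(r\|0)},
\]
built from a low-energy reference register. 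For $\cal D'$, the key $\k$ is the PRP key, so the extractor knows $P_{\k}$. For $\cal D$, the state is instead $\E_r\den{R(r\|0)}$, or equivalently the maximally mixed or thermal-like passive state, and the key handed out is independent of the state. Without the key, a distinguisher sees only copies of the state, and PRP security says $P_{\k}$ is indistinguishable from a truly random permutation. The state produced by a random permutation then has to be compared with the $\cal D$ state. I would arrange the $\cal D$ state to be exactly the random-permutation image, or alternatively pick the support size so that both states are negligibly close to maximally mixed on the relevant register.

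\textbf{Step 1: high ergotropy for $\cal D'$.} Given $\k$, the extractor must implement $P_{\k}^{-1}$ reversibly as a unitary on the $\IO$ register alone. The standard trick is the catalytic Bennett construction, which is exactly what catalysts are for. With $\ell$ catalysts in $\ket{0^\ell}$, compute $x\mapsto x\,\|\,P^{-1}_{\k}(x)$ into the catalyst, swap so that the $\IO$ register holds $P_{\k}^{-1}(x)$, and then uncompute the catalyst using the forward evaluation $P_{\k}$ on the value now in $\IO$. The result is exactly $\ket{0^\ell}$, which meets the exact-restoration requirement of $\catU[n][\ell][\ket{0^\ell}]$. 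Each reversible PRP evaluation uses garbage ancillas, which are also returned to zero, so $\ell$ is a polynomial. The $\IO$ register then holds $r\|0$, whose energy is lower by $4\lambda$ under the scaled Hamming-weight Hamiltonian, giving $\caterghat_{\cal D'}\fngeq d$.

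\textbf{Step 2: zero catalytic ergotropy for $\cal D$, even given the key.} If the $\cal D$ state is passive, for example thermal or maximally mixed with respect to $H_\lambda$, then $\erg=0$. The information-theoretic result stated earlier then gives $\caterg=\erg=0$, and hence $\caterghat_{\cal D}\fnleq 0$ regardless of computational power. I therefore prefer to choose $\cal D$'s state as the maximally mixed state on the support register, which is passive for a Hamming-weight Hamiltonian. Step~1 should then be adjusted to use a state of matching spectrum: both states should be flat on a $2^{m/2}$-sized support, with the $\cal D'$ state obtained by applying the PRP to a low-energy subspace.

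\textbf{Main obstacle.} The delicate step is reconciling the two requirements. The $\cal D$ state must be exactly passive, while the $\cal D'$ state must be computationally indistinguishable from it, yet a flat distribution on a PRP-image of a low-energy subspace is not exactly maximally mixed. I would handle this with a hybrid argument. First, the PRP-image is indistinguishable from a random-permutation image. Second, a random subset of size $2^{m/2}$ cannot be distinguished from the uniform distribution, given polynomially many copies, except with probability $O(\poly/2^{m/2})$ by a birthday bound on collisions. If that distinguishability is treated instead as an overlap question, I would fall back on defining $\cal D$ as a random-subset state and bounding its ergotropy by a negligible amount via the information-theoretic catalyst result, rather than exactly $0$.
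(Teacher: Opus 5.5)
Your construction is, up to cosmetics, the paper's own. Since $h$ is balanced, the paper's $\rho_\lambda=2^{-m}\sum_x\den{g(h(x))}$ is exactly your $\E_{r}\den{P_{\mathsf{k}}(r\|0^{m-n})}$. The reference state is $\sigma_\lambda=\mathbb{I}/2^{m}$, and the Hamiltonian is the padded Hamming weight. The low side uses unitary invariance of $\sigma_\lambda$, since with exact restoration the induced channel on $\mathsf{IO}$ is unitary. The high side is the same compute--swap--uncompute catalytic inversion of the PRP.

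Where you genuinely differ is the indistinguishability step. The paper imports the Arnon et al.\ lemma that \emph{oracle access} to $g\circ h$ is indistinguishable from oracle access to a PRP $f$; that lemma is proved by permutation/function switching and Zhandry's small-domain theorem. Copies of the states are then obtained by querying at random points. You instead argue in two steps. First, you apply PRP security, with a reduction that simulates each copy by one forward query at a uniformly random $r\|0^{m-n}$. Second, for a truly random permutation, conditioned on the $r_i$ being distinct, the samples form a uniform tuple of distinct $m$-bit strings, exactly as uniform samples do conditioned on no collision. Hence $t$ copies are within $O(t^2 2^{-n})$ of $\sigma_\lambda^{\otimes t}$. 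This route is more elementary, gives an explicit bound, and shows that $h$ is superfluous. It suffices because the definition only gives the distinguisher copies from $O_{\fat\rho}$. The paper's lemma is stronger than needed here; it would matter in a model where the adversary can query the sampling map itself, so that its $2^{m-n}$-to-$1$ collisions must be hidden.

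Two asides in your proposal are wrong and should be discarded.

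\begin{itemize}
\item The ``matching spectrum / both flat on a $2^{m/2}$-sized support'' adjustment fails. A passive state of rank $2^{m/2}$ must sit in the low-energy subspace, e.g.\ $\mathbb{I}/2^{n}\otimes\den{0^{m-n}}$, and is then distinguished by measuring the padded register. Indistinguishability comes from averaging over the key, not from equal spectra; for a fixed key, $\TD{\rho_\lambda}{\sigma_\lambda}=1-2^{n-m}$.
\item The fallback of a random-subset state whose ergotropy is negligible ``by the information-theoretic catalyst result'' is false. A flat state of rank $2^{n}$ can be rotated into the $2^{n}$-dimensional ground space of $\mathbb{I}\otimes H_{\hw(m-n)}$. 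Its ergotropy is therefore its full energy, about $(m-n)/2$, and $\caterg=\erg$ does not reduce it.
\end{itemize}

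Finally, calibrate $d$ correctly. The extracted work is the expected padded Hamming weight of $P_{\mathsf{k}}(r\|0^{m-n})$, not the maximal gap. This expectation is $(m-n)/2$ for a random permutation, and within a negligible amount of that for the PRP by a direct PRP distinguisher. Taking, e.g., $m-n\geq 8\lambda+2$ gives $\caterghat_{\mathcal{D}'}\fngeq 4\lambda$.
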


The number of catalysts depends on the details of the pseudorandom permutation used. This seems to be a concrete application of catalysts in the computational setting. It is unclear whether one can realise any reasonable notion of pseudoergotropy without using catalysts.

\paragraph{Classical-Quantum Ergortopy Separation}

The fact that our separations work for classical ergotropy is a direct consequence of our constructions that we detail later. Here we present our separation between quantum and classical catalytic ergotropy, denoted by $\caterghatC(\ell)$ and $\caterghatQ(\ell)$ respectively. Deviating slightly from our previous definitions, here we allow negligible error in catalytic restoration. This is needed since we need to be able to run pseudo-deterministic circuits catalytically. 

\begin{thm}[Classical vs quantum catalytic ergotropy separation (informal version of \Cref{thm:classical-vs-quantum-erg})]
Suppose factoring is hard for efficient classical machines. Then, in the common-reference string model, there is a distribution $\cal D$ over families of states, extensive Hamiltonians and common-reference strings, such that 
\begin{align}
\nexists\quad\nonnegl\quad\text{s.t .}\caterghatC(\ell) & \fngeq\nonnegl\label{eq:soundness-cq-sep}\\
\exists\quad\negl'\quad\text{s.t. }\caterghatQ(\ell) & \fngeq1-\negl'\label{eq:completeness-cq-sep}
\end{align}
where $\ell$ is some polynomial, $\nonnegl$
is a non-negligible function while $\negl'$ is some negligible function.

\end{thm}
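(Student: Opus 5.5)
We build the separation from a proof of quantumness based on factoring, in the style of trapdoor claw-free functions (TCFs). The common reference string is a TCF public key $\pk$ (for instance Rabin/$x^2 \bmod N$ with $N$ a Blum integer, or a Kahanamoku-Meyer--Yao style function). The key property we need is the following. A quantum device can prepare $\sum_{b}\ket{b,x_b}$ over a claw, measure the image, and then answer a random challenge. It can answer either a preimage test or an equation test. An efficient classical device cannot pass both tests with probability noticeably above the classical bound. The test is a single random challenge bit, and the verifier check is a predicate that can be computed efficiently given the transcript. Following the paper's theme, we do \emph{not} want the verifier to hold the trapdoor. We therefore take the state and Hamiltonian to be diagonal.

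\emph{Encoding the test as work extraction.} The state $\rho_\lambda$ is classical and carries the randomness of the challenge, e.g. a uniformly random challenge register in $\binset$, together with many battery qubits, all excited, i.e.\ in $\ket 1$. The Hamiltonian is extensive and consists of single-qubit terms on the battery, but these are conditioned on a verifier predicate. The energy of a battery qubit can drop only if the transcript written into designated $\IO$ registers passes the check. Since a unitary cannot simply erase excitations, we give the extraction process a swap opportunity. A valid transcript $t$ and the flag position are used to coherently flip the battery to its ground state. Such a flip lowers the energy only if the check succeeds on $t$, and the check is enforced by the Hamiltonian's structure, namely by energy penalties on register configurations that fail the predicate.

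\emph{Completeness.} The quantum catalytic extractor runs the TCF prover unitarily on the catalysts: it prepares the superposition, answers the challenge read from $\rho$, and writes the answer into the scoring register. Because the prover is pseudo-deterministic only after measuring $y$, we arrange for the catalyst restoration to be approximate, up to negligible error, by running the standard rewinding/uncomputation trick. This is the point where the relaxed catalytic definition is used. The resulting energy drop is $1-\negl'$.

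\emph{Soundness.} Take a classical catalytic extractor, i.e.\ a reversible permutation with catalysts, that extracts non-negligible work on average over $\cal D$. We convert it into a classical prover that answers both challenges by running the permutation on each challenge value using the same catalyst contents. A classical reversible circuit is deterministic, so it can be rewound freely. This rewinding would break the TCF adaptive hardcore bit, and hence factoring.

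The main obstacle I expect is the design of the Hamiltonian. It must be efficiently describable and extensive, and the energy must be tied to passing the check without requiring the trapdoor. The most delicate part is to rule out classical extractors that lower the energy by permuting populations in ways unrelated to the test. To handle this, I would make every configuration reachable without a valid transcript degenerate in energy, so that only a passing transcript yields a decrease. I would then bound the classical work by the success probability via an averaging argument over the Birkhoff decomposition of the permutations.
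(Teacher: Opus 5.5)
There is a genuine gap: an interactive TCF-style proof of quantumness does not survive being folded into a single extraction unitary.

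\textbf{Soundness fails once the challenge is in the state.} Classical soundness of such protocols rests on the prover committing to $y$ \emph{before} it sees the challenge. In your construction the challenge bit sits in $\rho_\lambda$, so a classical reversible circuit can read it first. On a preimage challenge it fixes some $x$, writes $y=f(x)$ together with $x$, and passes. On an equation challenge it can still pass with probability about $1/2$ by guessing. Your rewinding reduction does not exclude this: the permutation may write a different $y$ for each challenge value, so running it on both values never yields two answers for the same $y$, and nothing about the TCF is contradicted. Even if the order were enforced, a single round separates classical and quantum success only by a constant (roughly $3/4$ versus $1$), not $\negl$ versus $1-\negl$.

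\textbf{The check cannot be put into a public Hamiltonian.} The equation test $d\cdot(x_0\oplus x_1)=c$ can only be evaluated with the trapdoor. The extractor is handed an efficient description of $H_\lambda$, so any Hamiltonian that enforces this check also gives the trapdoor to the classical extractor. In addition, the adaptive hardcore bit you invoke is established for LWE-based TCFs, not for factoring-based (Rabin-type) ones, so the final step ``hence factoring'' does not follow. The Hamiltonian itself is also left unspecified: ``single-qubit terms conditioned on a verifier predicate'' is not a local Hamiltonian.

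\textbf{What the paper uses instead.} The missing ingredient is a \emph{two-message, public-coin} proof of quantumness with three properties: its decision $b(w,c)$ is a public deterministic function, its accepting response is \emph{unique}, and its completeness and soundness are $1-\negl$ and $\negl$. Under factoring, one can take the common reference string to be a random modulus $N$ and $w$ to be its factorisation.

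The paper then builds the separation in three steps.
\begin{enumerate}
\item It encodes the verifier's reversible circuit as a 4-local Cook--Levin-style transcript Hamiltonian with integer energies, penalising inconsistent gate steps and rejection.
\item It takes $\rho$ to be the valid transcript on the all-zero response, which has energy $1-\negl$ by soundness.
\item Since every nonzero energy is at least $1$, a classical extractor's expected work is at most the probability that it outputs a ground state. An accepting $w$ can be read off any ground state, so this probability is negligible by soundness.
\end{enumerate}
This integrality argument replaces your unspecified ``degenerate configurations'' design. Birkhoff averaging is moot here, since the classical extractors are already permutations.

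On the quantum side, uniqueness of $w$ is exactly what lets compute--copy--uncompute restore the catalyst up to negligible error. With a randomly measured $y$, the uncomputation you describe leaves the catalyst entangled with $y$.
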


This result can be instantiated with any Proof of Quantumness (PoQ) protocol that satisfies certain properties---we require it to be a 2-message, public-coin protocol whose soundness and completeness can be made almost perfect.

We end by emphasising that all our separations are relative to local Hamiltonians.

\subsection{Prior Work}
Recent work has looked at the problem of qubit erasure under computational constraints. \cite{Munson2025} uses complexity entropies to bound the work cost of erasure with a bounded number of computational gates. Subsequently,~\cite{zhao2026learningerasure} obtains optimal many-copy erasure protocols when quantum learning and state preparation are efficient, and proves cryptographic obstructions to optimal erasure and work extraction. \cite{zhao2026learningerasure} also discusses encrypted batteries with restored ancillary registers, but their extraction converts purity into work using a thermal reservoir. Thus, these processes permit heat exchange with a reservoir and changes to the system's spectrum. In contrast, our work concerns ergotropy, the work extracted under unitary transformations from isolated systems. 

Exact catalysts returned uncorrelated with the system are already known to leave unrestricted ergotropy unchanged~\cite{Lipka2024catalysis}. We show that approximate restoration permits only a quadratic increase in extractable work: if a joint unitary maps every pure system state tensored with a fixed pure catalyst to within trace distance $\varepsilon$ of a product state whose catalyst factor is the initial catalyst state, then allowing such unitaries increases the maximum extractable work by at most $\varepsilon^2\|H\|_\infty$. Munson et al.~\cite{Munson2025} also discuss how pure ancillas could reduce erasure complexity. Relative to a random oracle, we show that auxiliary qubits initialized to zero and returned exactly and uncorrelated enable efficient extraction of the full $\Theta(n)$ ergotropy from systems whose computational ergotropy is negligible. 

Classical catalytic space~\cite{Buhrman2014} and its quantum analogue~\cite{buhrman2025quantumcatalyticspace} study computation with borrowed memory whose unknown initial contents must be restored. Our catalytic model instead assumes that the catalyst is in some known fixed state, uncorrelated with the system. In fact, for our algorithms, we assume that the auxiliary qubits are in fact initialized to zero. Hao, Huang, and Liu~\cite{hao2026memory} show that attaining optimal query complexity for short-answer problems can require memory exponential in the answer length. Subsequently,~\cite{gao2026queryspace} exhibits a total Boolean function whose quantum query advantage disappears when workspace is restricted. On the other hand, we establish an exponential separation within unitary computation using only a constant-factor increase in total space, in the random oracle model. This separation makes reusable workspace essential for efficient work extraction despite its leaving information-theoretic ergotropy unchanged.

Recent work studies how computational restrictions affect entanglement~\cite{aaronson2024pseudoentanglement,leone2025entanglement}, magic~\cite{gu2024pseudomagic}, nonlocality~\cite{gluch2024nonlocality}, quantum entropies~\cite{avidan2026quantumcomputationalentropies}, and deep thermalization~\cite{chakraborty2025deepthermalization}. In particular,~\cite{arnon2023CET} constructs computationally indistinguishable ensembles with low entanglement cost and high distillable entanglement, respectively, under efficient local operations and classical communication. Our construction uses their reduced states to conceal an extensive difference in ergotropy between fully separable mixed states. For a public, key-independent Hamiltonian consisting of single-qubit terms, the maximally mixed reference has zero ergotropy, whereas the other ensemble has extensive ergotropy on average over the key. The permutation key enables classical reversible extraction of the full ergotropy with exactly restored auxiliary qubits. Without it, efficient quantum protocols satisfying the same restoration requirement extract negligible expected work.

The same state ensembles, equipped with a zero Hamiltonian, have zero and $\Theta(n)$ free-energy excess above equilibrium, respectively, while remaining computationally indistinguishable given polynomially many copies. Assuming quantum-secure pseudorandom permutations, this gives a counterexample to the conjectured nonexistence of pseudo-nonequilibrium states~\cite{Watanabe2026} (See \Cref{rem:pseudoergotropy-free-energy}).

\section{Technical Overview}
\label{sec:tech-overview}
We briefly explain the key ideas and constructions we use to establish the main results.

\subsection{Separating ergotropy from computational ergotropy}
\label{subsec:overview-ergotropy}
For the unconditional existential separation, we consider the Hamiltonian $H_n=\sum_{i=1}^nX_i$ and a Haar-random pure state. Its energy concentrates near zero, while its ergotropy is nearly $n$ because the ground energy is $-n$. Every fixed unitary preserves the Haar distribution. Concentration is strong enough to take a union bound over all circuits with $2^{n/4}$ gates, simultaneously bounding their extracted work by $2n\,2^{-n/3}$. Thus, learning copies cannot help an algorithm choose a successful polynomial-size circuit (see Lemma~\ref{lem:bounding-erg-Up} and Theorem~\ref{thm:unconditional-existential-separation}).

We first prove a constructive separation in the random-oracle model. Let $n=n(\lambda)$ and let $f_\lambda:\binset^n\to\binset^{2n}$ be a uniformly random function. For our state-Hamiltonian pair, we have
\begin{equation}
\label{eq:overview-explicit-battery}
\rho_\lambda=\frac{1}{2^n}\sum_{r\in\binset^n}\den{f_\lambda(r)},
\qquad H_\lambda=\I_{2^n}\otimes H_{\hw(n)}.
\end{equation}
Sampling $r$ uniformly and making only one oracle query prepares $\rho_\lambda$, which has a rank of at most $2^n$, so its entire support can be moved into the $2^n$-dimensional ground space of $H_\lambda$. Its ergotropy therefore equals its initial energy and has expectation $n/2$ over $f_\lambda$.

Nevertheless, $\rho_\lambda$ is computationally indistinguishable from the maximally mixed state, even when the distinguisher can query $f_\lambda$. The proof compares $f_\lambda(r)$ with an independent uniform string by reprogramming the oracle at the hidden argument $r$ and applying the one-way-to-hiding lemma (see Lemma~\ref{lem:expanding-RO-uniformity}). Since the maximally mixed state is invariant under unitaries, non-negligible work extraction would distinguish the ensembles by estimating their energies. The reduction also generates any copies used to choose the extraction circuit through forward oracle queries. This establishes the random-oracle separation (see Theorem~\ref{thm:explicit-sep-erg-erghat}).

For the plain-model separation, we replace the random function by a quantum-secure PRF whose key is unavailable to the extraction algorithm. PRF security transfers the indistinguishability argument, giving efficiently preparable states with extensive ergotropy but negligible efficient work extraction, leading to a maximal separation (see Theorem~\ref{thm:explicit-sep-erg-erghat-prf}).

\vspace{0.5cm}
Next, we move on to the catalytic case. Exact restoration of a fixed pure catalyst for every system state leaves ergotropy unchanged. We show that this is robust: even if we allow for the final product state of the system and register to be $\varepsilon$-close to the ideal state (with exactly restored catalyst), we are able to show that ergotropy does not change other than by a factor of roughly $\varepsilon^2$. The difficulty in the approximate restoration case, is that we must identify one unitary for all inputs. Suppose every pure output is within trace distance $\varepsilon<1$ of a product containing the original catalyst state $\ket\eta$. Projecting onto $\ket\eta$ defines a system map $L$ whose singular values lie in $[\sqrt{1-\varepsilon^2},1]$. Its polar decomposition gives a unitary $V$ with $\norm{L-V}_\infty\leq\varepsilon^2$. The orthogonal catalyst component has weight at most $\varepsilon^2$, and its cross terms vanish under the partial trace. Thus the reduced channel is within trace distance $3\varepsilon^2/2$ of $V(\cdot)V^\dagger$ on every input. Additional work is at most this quadratic error times the spectral width of $H$ (see Lemma~\ref{lem:dirty-vs-no-cat-unbounded}).

\subsection{Query separations for catalytic computation}
\label{subsec:overview-query-separations}

We prove both relational and decision separations. Let $f,g:\binset^\lambda\to\binset^\lambda$ be independent random functions. For the relational version, we consider the mapping $\ket{x,y}\mapsto \ket{x, y\oplus f(g(x))}$. It is easy to see that with $\ell(\lambda) = \lambda$ catalyst qubits and given access to standard random oracles for $f$ and $g$, only three queries suffice to implement
$$
\ket{x,y,0^\lambda}\mapsto\ket{x,y\oplus f(g(x)),0^\lambda}.
$$ 

Our main technical contribution is to prove that no quantum algorithm making only polynomially many queries to oracles for $f$ and $g$ can solve this problem with constant probability using $\ell(\lambda)=\lfloor c\lambda\rfloor$ auxiliary qubits, for any fixed $0<c<1$. For this, we consider the inverse problem of 
$$
\ket{x,f(g(x)),0^{\ell}}\mapsto \ket{x, 0^{\lambda}, 0^{\ell}}.
$$
In fact, our lower bound proves something much more general: we do not require the $\ell(\lambda)$ auxiliary qubits to be exactly restored. Thus, this result is a query complexity separation for space bound quantum algorithms. We establish this by quantitatively relating the success probability of the relational computation to the probability of recovering $x$ from $g(x)$, where $g$ is a uniformly random function (see \Cref{lem:sep-cat-vs-lesscat}).

The challenge is to rule out quantum circuits that distribute intermediate information across the system and auxiliary registers; one cannot simply assume that they must store $g(x)$ in a separate register. We overcome this by quantitatively comparing erasure success with random-function inversion. Constant erasure success would give an inverter that recovers $x$ from $g(x)$ with probability at least $2^{-\ell}/\poly(\lambda)$. Although this probability can be exponentially small, for $\ell=c\lambda$ it exceeds the $\poly(\lambda)2^{-\lambda}$ success permitted by polynomially many quantum queries.

The decisive technical point is to obtain a loss determined only by the auxiliary register's dimension $2^\ell$. We show that successful erasure must involve components that query $f$ at $g(x)$: the component avoiding this address contributes at most $2^{\ell-\lambda}$ to average success. Such queries provide the starting point for an inverter. Given $g(x)$, it initializes the other registers uniformly at random and reverses a suitable circuit prefix to recover $x$. Randomizing these registers appears to incur a factor $2^{\lambda+\ell}$, but averaging over the uniform value $f(g(x))$ cancels the factor $2^\lambda$ associated with the second register. Apart from polynomial factors in the query counts, the reduction therefore loses only $2^\ell$. Comparing its inversion probability with the quantum query bound for random-function inversion limits average erasure success to a polynomial in $\lambda$ times $2^{-(1-c)\lambda}$. This is negligible, even without requiring the erasing circuit to restore its auxiliary register.

For the decision separation, we consider the problem of distinguishing $(x,f(g(x)))$ from $(x,r)$, where $x,r$ are independent uniform strings. A distinguisher need not erase the second register or preserve $x$. Its accepting subspace can therefore have large dimension, so the dimension bound used for relational computation does not establish the decision lower bound.

We instead bound the difference between the two average acceptance probabilities. As in the relational argument, the non-oracle gates and accepting projector are fixed independently of $f,g$. Conditional on $g,x$ and the other values of $f$, the operator representing the component that never queries $f$ at $g(x)$ is fixed, while both $f(g(x))$ and $r$ are uniform. This component consequently gives the same average acceptance probability in both cases, even if that probability is large.

Distinguishing advantage must therefore involve components that query $f$ at $g(x)$, including their interference with the component avoiding this address. We control this interference through the amplitudes of these components. The same reversed-prefix construction then relates the squared distinguishing advantage to inversion success, with a loss of $2^\ell$ and polynomial factors in the query counts. The random-function inversion bound consequently makes the advantage negligible for polynomially many queries and $\ell(\lambda)=\lfloor c\lambda\rfloor$, for any fixed $0<c<1$. No restoration of the auxiliary register is required (see Lemma \Cref{lem:decision-cat-separation-robust}).

\subsection{Catalytic Work Extraction and Catalytic Work Potential}
\label{subsec:overview-catalytic-work}

We turn the query separations into a computational work-extraction separation by assigning energy to the second register. Once again, consider
\begin{equation}
\label{eq:overview-catalytic-battery}
\rho_\lambda=\frac{1}{2^\lambda}
\sum_{x\in\binset^\lambda}\den{x,f(g(x))},
\qquad
H_\lambda=\I_{2^\lambda}\otimes H_{\hw(\lambda)}.
\end{equation}
With $\lambda$ auxiliary qubits, the three-query circuit maps each $(x,f(g(x)))$ to $(x,0^\lambda)$ and restores every auxiliary qubit exactly to zero. Clearing the second register therefore moves the entire support into the ground space, decreasing the average energy. Hence catalytic computational ergotropy is near-maximal. 

The lower bound for the decision problem immediately proves that lowering the energy is nearly impossible for any quantum algorithm. Indeed, the states $\rho_\lambda$ and $\I/2^{2\lambda}$ have the same expected initial energy, while every system unitary leaves the latter unchanged. Consequently, non-negligible expected extraction from $\rho_\lambda$ would produce a non-negligible difference in their final energies. Measuring a uniformly chosen term of $H_\lambda$ would detect this difference, contradicting the decision lower bound. Thus polynomially many queries give negligible expected extraction without auxiliary qubits, whereas the exactly restored workspace enables full extraction. Here the extraction circuit's non-oracle gates are fixed independently of $f,g$, as required by the query lower bound (see Section~\ref{sec:Computational-Catalytic-Work}). 

The same connection allows us to define a notion of catalytic work potential that works for broad classes Hamiltonians. For any public, oracle-independent local $H_\lambda$ of polynomial operator norm, $\rho_\lambda$ has expected energy $\bar E_\lambda=\tr(H_\lambda)/2^{2\lambda}$, and the decision bound again excludes non-negligible extraction without auxiliary qubits. The catalytic circuit instead makes $(\I_{2^\lambda}/2^\lambda)\otimes\den{0^\lambda}$ efficiently reachable. If an exactly catalytic circuit can transform this state into one of energy $E_\lambda<\bar E_\lambda$, their composition extracts $\bar E_\lambda-E_\lambda$ expected work. A non-negligible gap therefore gives catalytic work potential.

\subsection{Pseudoergotropy}
While for most pseudo-resources instantiating the low resource state using PRS works, it does not suffice for pseudoergotropy, since the Haar random state is a low-ergotropy state.

We leverage the ability to use keyed cryptographic primitives in the catalytic setting (see \Cref{sec:Catalytic-Computation}) to give an explicit distribution ${\mathcal{D}}$ with $(0,\O(n))$-catalytic pseudoergotropy. Taking inspiration from \cite{arnon2023CET}, we define contracting and expanding functions $h:\binset^m \to \binset^{m/2}$ and $g:\binset^{m/2} \to \binset^{m}$, by truncating the output and padding the input to the PRP respectively, and $f:\binset^m\to\binset^m$ which applies the PRP as is. One can show that  expanding$\circ$contracting$(x)$ is computationally indistinguishable from applying the PRP on $x$, when the key is not known (see \Cref{lem:fx_equals_ghx}, \Cref{lem:families-are-indistinguishable}).
Using this, one can construct the states $\sigma=\sum_{x\in\binset^{m(\secp)}}\den{f(x)}$  (maximally mixed), and state $\rho=\sum_{x\in\binset^{m(\secp)}}\den{g(h(x))}$ (relatively less mixed), that are computationally indistinguishable in the absence of the key $\k$ (see \Cref{def:explicit-fam-state-Ham-PRF-1}).
With access to the key $\k$ one can invert the function $g$, bringing the last $m(\secp)-n(\secp)$-qubits of $\rho$ back to 0 (see \Cref{lem:psi-has-high-caterghat}). For the padded Hamming weight Hamiltonian, one can therefore catalytically extract work from the state $\rho$ while no work can be extracted from the state $\sigma$.

\subsection{Separating classical and quantum ergotropy}

Our classical-quantum separation uses classical states and classical Hamiltonians: the advantage comes from the computational power of the extraction operations, rather than from coherence in the initial state. We start from an amplified, pseudo-deterministic public-coin proof of quantumness. Given a challenge $c$, a quantum prover produces its unique accepting response with overwhelming probability, whereas any efficient classical prover succeeds only with negligible probability. We take $c$ to be the common reference string and encode the verifier's computation into a $4$-local classical Hamiltonian $H_c$. The Hamiltonian essentially penalises inconsistent computation steps and rejection, thus its zero-energy basis states encode valid accepting transcripts. An accepting response can therefore be efficiently recovered from any such ground state (see \Cref{claim:CircToHam}).

For the initial state $\rho_c$, we use the valid computation transcript obtained by running the verifier on the all-zero input. This state is efficiently preparable and violates no computation constraints: its only possible penalty is for rejection. Its expected enery is consequently $1 - \negl(\lambda)$, since otherwise the all-zero response would violate classical soundness. The important feature here is that every nonzero energy of $H_c$ is at least one. Thus, a classical extractor's expected work is bounded by the probability that its output is a ground state. Any non-negligible work extraction would therefore produce an accepting response with non-negligible probability, contradicting the soundness of the proof of quantumness.

For quantum extraction, finding an accepting response is not enough: the computation must also return its auxiliary workspace. This is where the pseudo-deterministic property is used. A compute-copy-uncompute transformation turns the quantum prover into a catalytic computation with negligible restoration error (see \Cref{claim:pseudo-det-to-catalytic}). The extractor first undoes the initial transcript computation, then generates the accepting response and constructs its valid verifier transcript, uncomputing the auxiliary workspace. The ideal output is a ground state and the negligible construction error contributes only negligible expected energy as $\norm{H_c}_\infty$ is polynomially bounded. As a result, polynomially many catalyst qubits suffice for quantum extraction of $1 - \negl(\lambda)$ work, while efficient classical extraction yields no non-negligible work (\Cref{subsec:classical-quantum-sep}).

\newpage{}

\subsection*{Acknowledgements}
We thank Andrea Coladangelo and Venkata Koppula for many helpful discussions and comments. ASA, SC and US acknowledge funding from the National Quantum Mission, an initiative of the Department of Science and Technology, Govt of India. ASA, SC and US also acknowledge support provided by the Foundation for QC Innovation (FQCI), DST-NQM T-Hub at IISc Bengaluru, in facilitating this project. SC and US acknowledge support from the Ministry of Electronics and Information Technology (MeitY), Government of India, under Grant No. 4(3)/2024-ITEA. ASA acknowledges funding from ANRF, Government of India, under Grant No. ANRF/ARG/2025/010779/MS. SC thanks Fujitsu Ltd, Japan for funding. AC acknowledges support from the National Science Foundation grant CCF-1813814, from the AFOSR under Award Number FA9550-20-1-0108 and from the Quantum Advantage Pathfinder project.

\newpage{}

\section{Preliminaries and Notation}

\branchcolor{darkgray}{We introduce the relevant notation and the requisite preliminaries.}

\subsection{Notations}

\subsubsection*{Quantum Notation}

\branchcolor{darkgray}{We start by specifying standard quantum notation.}
\begin{notation}
\label{nota:basic-notations}We adopt the following notations throughout
this article.
\end{notation}

\begin{itemize}
\item Symbols for integer valued parameters:
\begin{itemize}
\item $\secp$: Security parameter,
\item $n$: Number of qubits in the system,
\item $\ell$: Number of catalytic qubits,
\item $q$: Number of queries (typical; depends on the context).
\end{itemize}
\item Registers are denoted by math sans-serif symbols (e.g. $\mathsf{IO,catalyst}$)
and the corresponding\\
Hilbert spaces are denoted by ${\cal H}$ and the register in the
subscript (e.g. ${\cal H}_{\mathsf{IO}},\mathcal{H}_{\mathsf{catalyst}}$).
\item $\Den(n)$: The set of density operators on $\left(\C^{2}\right)^{\otimes n}$.
\item $\Ham(n)$: The set of Hermitian operators on $\left(\C^{2}\right)^{\otimes n}$.
\item $\Proj(n)$: The set of all projectors on $n$ qubits, i.e. $\Pi\in\Ham(n)$
satisfying $\Pi^{2}=\Pi$.
\item $\Perm(n)$: The set of all permutations over $n$-bit strings.
\item Hamiltonians are denoted by capital $H$.\\
Random oracles (and permutations) are generally denoted by $O,O'$.\\
For the big-O notation, we use $\O(\cdot)$.
\item $\uni(n)$: The unitary group on $\left(\C^{2}\right)^{\otimes n}$.
\item $\hw(s)$: The Hamming weight of $s\in\binset^{*}$.
\item $\TD{\rho}{\sigma}:=\frac{1}{2}\norm{\rho-\sigma}_{1}$: The trace
distance between the states $\rho$ and $\sigma$. For pure states,
$\TD{\den a}{\den b}=\sqrt{1-\left|\braket ab\right|^{2}}$.
\item $\negl(\secp)$: A function that grows slower than $\frac{1}{p(\secp)}$
for every polynomial $p$, asymptotically.
\item By \emph{canonical circuit description}, we mean the collection of
gates and wires (qubits) the gate acts on, sequenced by the order
in which the gate is present in the circuit.
\item $\circdesc(U)$: A canonical description of  a circuit implementing
the unitary $U$, corresponding to a fixed constant sized universal
gate set $G$.
\item The pure states $\den{\psi},\den{\phi}$ are abbreviated as $\psi,\phi$
when no confusion arises. 
\item $X,Y,Z$ 
\begin{itemize}
\item These are used to represent the Pauli matrices, and by $X_{i}$ we
denotes the $X$ gate acting on the $i$-th qubit.
\item These may also be used to denote sets and registers. This should be
clear from the context.
\end{itemize}
\item PPT: Probabilistic Poly Time Turing Machine 
\item QPT: Quantum Poly-time Turing Machine (see \Cref{def:QPT})
\item For a set $S$, by $s\leftarrow S$ we mean $s$ is sampled uniformly
from $S$.
\item For $H=\sum_{i}\lambda_{i}\left|i\right\rangle \left\langle i\right|\in\Ham$,
$\left\Vert H\right\Vert =\left\Vert H\right\Vert _{\infty}:=\max_{i}\lambda_{i}$
and $\left\Vert H\right\Vert _{1}:=\sum_{i}|\lambda_{i}|$.
\item By $\left[n\right]$ we denote the set $\left\{ 1,2,\cdots,n\right\} $.
\item $\binom{n}{k}:=\frac{n!}{k!(n-k)!}.$
\end{itemize}

\subsubsection*{Hamiltonians}

\branchcolor{darkgray}{Most physically relevant Hamiltonians happen to be local in the following
sense.}
\begin{defn}[$k$-local Hamiltonian]
\label{def:k-local-ham}A Hamiltonian $H\in\Ham(n)$ is a\emph{ $k$-local
Hamiltonian }if it can be decomposed as follows:
\[
H=\sum^{m}_{i=1}H^{(i)}
\]
where each $H^{(i)}$ is a Hamiltonian on $n$-qubits that acts non-trivially
only on $k$-qubits, i.e. each $H^{(i)}$ corresponds to a subset
$S_{i}\subseteq\left[n\right]$ (see \Cref{nota:basic-notations})
such that $\left|S_{i}\right|\leq k$, such that
\[
H^{(i)}=h^{(i)}\otimes\I_{[n]\backslash S_{i}}
\]
for some operator $h^{(i)}\in\Ham\left(\left|S_{i}\right|\right)$.
\end{defn}

\begin{rem}
We say a family of $k$-local Hamiltonians $H_{n}\in\Ham(n)$ is simply
a \emph{local Hamiltonian} if $k$ does not depend on $n$.
\end{rem}

\branchcolor{darkgray}{An instructive instance of a local Hamiltonian is the Hamming-Weight
Hamiltonian that we use at various points in our exposition. Not only
does it serve as a simple running example throughout this article
but also models a physically relevant system---independent spins
in a fixed magnetic field.}
\begin{defn}[Hamming-Weight Hamiltonian]
\label{def:HammingWeight-Ham}Let $n$ denote the size of the system.
We define the Hamming weight Hamiltonian over $n$-qubits as follows,
\[
H_{\hw(n)}=\sum^{n}_{i\in1}\frac{(\mathbb{I}-Z_{i})}{2}=\sum_{s\in\{0,1\}^{n}}\hw(s)\left|s\right\rangle \left\langle s\right|
\]
where $\hw$ is as in \Cref{nota:basic-notations}. 
\end{defn}

\subsection{Ergotropy}

\branchcolor{darkgray}{The main quantity of interest in this work, is ergotropy. Here, we
recall how it is conventionally defined, before introducing its computational
variant.

Consider an $n$-qubit quantum system in a state $\rho$ with Hamiltonian
$H$. The extractable work is a process-dependent quantity. The work
extractable from any quantum system can be defined in multiple ways
depending upon the quantum operations used for work extraction. Ergotropy
captures the notion of maximum work extractable from a quantum system
under \emph{unitary} operation. Formally, it is defined as follows.}
\begin{defn}[Ergotropy]
\label{def:ergotropy}Given an $n$-qubit state $\rho\in\Den(n)$
and its associated Hamiltonian $H\in\Ham(n),$ we define \emph{ergotropy}
as
\[
\erg_{\rho,H}:=\max_{U\in\uni(n)}\tr\left[H\left(\rho-U\rho U^{\dagger}\right)\right]=\tr\left[H\rho\right]-\min_{U\in\uni(n)}\tr\left[HU\rho U^{\dagger}\right].
\]
\end{defn}

\branchcolor{darkgray}{To establish an unconditional separation between between the computational
notion of ergotropy (that we introduce later), and the notion above,
it is helpful to consider so-called passive states.

A state is said to be \emph{passive} with respect to a Hamiltonian
if no work can be extracted from it through unitary transformations.}
\begin{defn}[Passive state]
\label{def:passiveState} An $n$-qubit quantum state $\rho\in\Den(n)$
is passive with respect to $H\in\Ham(n)$ if $\erg_{\rho,H}=0.$ 
\end{defn}

\branchcolor{darkgray}{There is a well known characterisation of passive states that will
be helpful. One direction is immediate---the other was established
by \cite{Lenard:1978thm}.}
\begin{thm}[Characterisation of passive states \cite{Lenard:1978thm}]
\label{thm:char-passive-states} Let $H\in\Ham(n)$ have a spectral
decomposition $H=\sum_{i}e_{i}\den{e_{i}}$ with $e_{1}\leq e_{2}\leq\cdots,$
and let $\rho\in\Den(n)$. The state $\rho$ is passive (see \Cref{def:passiveState}
above) with respect to $H$ iff $\rho$ and $H$ are simultaneously
diagonalisable in some orthonormal basis $\left\{ \ket{v_{i}}\right\} _{i},$
such that $\rho=\sum_{i}\lambda_{i}\den{v_{i}}$ and $H=\sum_{i}\den{v_{i}}$,
where the eigenvalues are anti-ordered, i.e.
\[
e_{i}<e_{j}\implies\lambda_{i}\geq\lambda_{j}.
\]
\end{thm}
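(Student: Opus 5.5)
We prove the two directions separately, reading the statement with $H=\sum_i e_i\den{v_i}$, i.e. the basis $\{\ket{v_i}\}$ diagonalising both operators with $e_i$ the energy of $\ket{v_i}$. The \emph{if} direction will be a doubly-stochastic/rearrangement argument, the same Birkhoff--von Neumann idea recalled in the introduction for classical ergotropy. The \emph{only if} direction will be a variational argument: a first-order unitary perturbation forces $[\rho,H]=0$, and a two-level swap then forces anti-ordering.

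\emph{If.} Assume $\rho=\sum_j\lambda_j\den{v_j}$ and $H=\sum_i e_i\den{v_i}$ with the stated anti-ordering. For any $U\in\uni(n)$ we write
\[
\tr\left[HU\rho U^{\dagger}\right]=\sum_{i,j}e_i\lambda_j B_{ij},\qquad B_{ij}:=\left|\bra{v_i}U\ket{v_j}\right|^{2}.
\]
The matrix $B$ is doubly stochastic, so by Birkhoff--von Neumann it is a convex combination of permutation matrices. Hence $\tr[HU\rho U^\dagger]\ge\min_{\pi}\sum_i e_i\lambda_{\pi(i)}$. The rearrangement inequality says that, since the $e_i$ are sorted increasingly, this sum is minimised by pairing them with the $\lambda$'s sorted decreasingly. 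The anti-ordering hypothesis says exactly that the identity pairing is such a pairing; ties among equal $e_i$ do not matter. So the minimum equals $\sum_i e_i\lambda_i=\tr[H\rho]$, and therefore $\erg_{\rho,H}=0$.

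\emph{Only if, step 1 (commutation).} Suppose $\rho$ is passive. For a Hermitian $K$ set $f(\varepsilon):=\tr[He^{-i\varepsilon K}\rho e^{i\varepsilon K}]$. Differentiating at zero and using cyclicity of the trace gives
\[
f'(0)=-i\,\tr\left[H[K,\rho]\right]=-i\,\tr\left[K[\rho,H]\right]=-\tr[KC],\qquad C:=i[\rho,H],
\]
where $C$ is Hermitian. Choosing $K=C$ gives $f'(0)=-\tr[C^2]$. If $C\neq0$, this is strictly negative, so for small $\varepsilon>0$ we get $f(\varepsilon)<f(0)=\tr[H\rho]$. That is positive extracted work, contradicting passivity. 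Hence $[\rho,H]=0$. The two operators are then simultaneously diagonalisable: inside each eigenspace of $H$ we diagonalise the restriction of $\rho$, which yields a common orthonormal eigenbasis $\{\ket{v_i}\}$.

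\emph{Only if, step 2 (anti-ordering).} Suppose, for contradiction, that some pair has $e_i<e_j$ but $\lambda_i<\lambda_j$. Let $U$ be the unitary that swaps $\ket{v_i}\leftrightarrow\ket{v_j}$ and fixes every other basis vector. A direct computation gives
\[
\tr[HU\rho U^\dagger]-\tr[H\rho]=(e_i-e_j)(\lambda_j-\lambda_i)<0,
\]
which is positive work and again contradicts passivity. Hence $e_i<e_j$ implies $\lambda_i\ge\lambda_j$.

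I expect the only delicate point to be the commutation step: getting the sign of the derivative right, and checking that $i[\rho,H]$ is Hermitian so that it is an admissible generator. Everything else is routine.
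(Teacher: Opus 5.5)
Your proof is correct, and you read the statement the right way: the displayed $H=\sum_i\den{v_i}$ is a typo for $H=\sum_i e_i\den{v_i}$. The paper gives no proof of this theorem. It attributes the result to Lenard and only remarks that one direction is immediate, so your argument is a self-contained proof rather than an alternative to one in the text.

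Your \emph{if} direction is the same reasoning the paper sketches in the introduction to argue that classical and quantum ergotropy coincide. You write $\tr[HU\rho U^{\dagger}]$ against the doubly stochastic matrix $B_{ij}=|\bra{v_i}U\ket{v_j}|^2$, apply Birkhoff--von Neumann, and finish with the rearrangement inequality. This is equivalent to the von Neumann trace inequality $\tr[AB]\geq\sum_i a_i^{\uparrow}b_i^{\downarrow}$, which is the usual route in the literature. Your remark that ties among equal $e_i$ do not matter is justified: within a degenerate block only the sum of the corresponding $\lambda$'s enters.

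The part that is not immediate is the commutation step, and your first-order variation handles it cleanly. With the Hermitian generator $K=C=i[\rho,H]$ you get $f'(0)=-\tr[C^2]<0$ whenever $[\rho,H]\neq0$, so a small unitary rotation extracts strictly positive work. The two-level swap then forces anti-ordering in whichever common eigenbasis you chose, which is slightly stronger than the existential form of the statement.

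The only bookkeeping you leave implicit is indexing that basis so the $H$-eigenvalues appear as $e_1\leq e_2\leq\cdots$. This is harmless, because the condition $e_i<e_j\Rightarrow\lambda_i\geq\lambda_j$ is invariant under simultaneous relabelling of the triples $(e_i,\lambda_i,\ket{v_i})$.
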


\branchcolor{darkgray}{We end by stating the following fact (that can also been as a simple
application of the theorem).}
\begin{fact}[Ergotropy of pure states]
\label{fact:erg-of-pure-states}For every $n$-qubit pure state $\ket{\psi}$
and every $H\in\Ham(n)$,
\[
\erg_{\psi,H}=\bra{\psi}H\ket{\psi}-E_{\min}(H),
\]
where $\psi=\den{\psi}$ (see \Cref{nota:basic-notations}) and $E_{\min}(H)$
is the minimum eigenvalue of the Hamiltonian $H$.
\end{fact}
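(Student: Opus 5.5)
By \Cref{thm:char-passive-states}, there is a unitary $U^{\star}$ that rotates the pure state $\psi$ into a ground state of $H$, so the upper bound $\erg_{\psi,H}\le\bra{\psi}H\ket{\psi}-E_{\min}(H)$ and the matching lower bound should both follow in a few lines. I will prove the two inequalities separately, starting from \Cref{def:ergotropy}:
\[
\erg_{\psi,H}=\bra{\psi}H\ket{\psi}-\min_{U\in\uni(n)}\tr\left[HU\psi U^{\dagger}\right].
\]

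For the upper bound, I use that $U\ket{\psi}$ is a unit vector for every $U\in\uni(n)$. Hence $\tr[HU\psi U^{\dagger}]=\bra{\psi}U^{\dagger}HU\ket{\psi}$ is a Rayleigh quotient of $H$. Writing $U\ket{\psi}=\sum_i c_i\ket{e_i}$ in the eigenbasis of $H$ gives $\sum_i|c_i|^2e_i\ge e_1=E_{\min}(H)$. Therefore the minimum is at least $E_{\min}(H)$, and $\erg_{\psi,H}\le\bra{\psi}H\ket{\psi}-E_{\min}(H)$.

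For the lower bound, I exhibit a unitary achieving $E_{\min}(H)$. Choose any unitary $U^{\star}$ with $U^{\star}\ket{\psi}=\ket{e_1}$. Such a $U^{\star}$ exists because $\ket{\psi}$ and $\ket{e_1}$ can each be extended to an orthonormal basis, and the unitary mapping one basis to the other does the job. Then $\tr[HU^{\star}\psi U^{\star\dagger}]=\bra{e_1}H\ket{e_1}=e_1$, so the minimum equals $E_{\min}(H)$ and the identity holds.

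As a consistency check with \Cref{thm:char-passive-states}, note that $U^{\star}\psi U^{\star\dagger}=\den{e_1}$ has spectrum $(1,0,\dots,0)$ placed on the energies in increasing order. This is exactly the anti-ordered passive form, so it is the optimal final state. There is no real obstacle here; the only point needing care is the existence of $U^{\star}$ via basis completion, including when $E_{\min}$ is degenerate, where any ground vector works.
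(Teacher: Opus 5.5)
Your proof is correct, and it takes a slightly different route from the one the paper has in mind. The paper does not write out a proof of this fact; it only remarks that it is a simple application of the passive-state characterization (\Cref{thm:char-passive-states}).

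That route would run as follows. The minimum of $\tr[HU\psi U^{\dagger}]$ over the compact group $\uni(n)$ is attained. The minimizing state $U\psi U^{\dagger}$ must itself be passive, since otherwise a further unitary would lower its energy. Finally, a passive state with spectrum $(1,0,\dots,0)$ is, by the anti-ordering condition, a projector onto a ground vector, so the minimum is $E_{\min}(H)$.

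Your argument avoids Lenard's theorem altogether. The Rayleigh-quotient bound on the unit vector $U\ket{\psi}$ gives one inequality, and basis completion gives an explicit $U^{\star}$ attaining it. So you need neither compactness nor the characterization. Your opening sentence crediting the existence of $U^{\star}$ to \Cref{thm:char-passive-states} is therefore superfluous: you establish it directly, and the final consistency check is likewise optional.

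What the characterization-based route buys in exchange is generality. Your Rayleigh-quotient step uses that $\psi$ has rank one, whereas the passivity argument extends directly to mixed states. There it gives $\erg_{\rho,H}=\tr[H\rho]-\sum_i\lambda_i e_i$, with the eigenvalues $\lambda_i$ of $\rho$ in non-increasing order paired against the energies $e_i$ in non-decreasing order.
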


\subsection{Haar measure}

\branchcolor{darkgray}{We use the Haar measure to establish an existential separation between
ergotropy and its computational variant. To this end, we recall the
relevant definition and the key results.}
\begin{defn}[Haar measure]
 \label{def:The-Haar}The \emph{Haar measure on the unitary group
$\uni(n)$} is the unique probability measure $\mu$ that is both
left and right invariant over the group $\uni(n)$, i.e. for all integrable
functions $f$ and for all $V\in\uni(n),$ we have
\[
\int_{\uni(n)}f(U)d\mu(U)=\int_{\uni(n)}f(UV)d\mu(U)=\int_{\uni(n)}f(VU)d\mu(U).
\]
By \emph{sampling an $n$-qubit pure state from the Haar measure},
we mean sampling a unitary from the Haar measure on the unitary group
$\uni(n)$, and applying it on $\ket 0^{\otimes n}.$
\end{defn}

\begin{lem}[Levy's lemma \cite{Ledoux2001}]
 \label{thm:Levy-lemma}Consider the set $\mathbb{S}^{2d-1}:=\left\{ v\in\mathbb{C}^{d}:\norm v_{2}=1\right\} .$
Let $f:\mathbb{S}^{2d-1}\to\mathbb{R}$ be a function satisfying the
Lipschitz condition $\left|f(v)-f(w)\right|\leq L\norm{v-w}_{2}.$
For all $\varepsilon\geq0,$ we have the probability bound:
\[
\underset{\ket{\phi}\sim\mu}{\Pr}\left[\left|f\left(\phi\right)-\underset{\ket{\phi}\sim\mu}{\mathbb{E}}\left[f\left(\psi\right)\right]\right|\geq\varepsilon\right]\leq2\exp\left(-\frac{2d\varepsilon^{2}}{9\pi^{3}L^{2}}\right).
\]
\end{lem}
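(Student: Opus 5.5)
The plan is to derive the statement from Gaussian concentration via the Maurey--Pisier inequality. That inequality says that for a standard real Gaussian vector $g\in\mathbb{R}^{m}$ with per-coordinate variance $\sigma^{2}$ and a $K$-Lipschitz $F$,
\[
\Pr\bigl[|F(g)-\E F(g)|\geq t\bigr]\leq2\exp\left(-\frac{2t^{2}}{\pi^{2}\sigma^{2}K^{2}}\right).
\]
The target constant $9\pi^{3}$ suggests exactly this route: a factor $\pi^{2}$ from Maurey--Pisier, a factor $9=3^{2}$ from a Lipschitz blow-up, and a residual factor absorbed in passing from the Gaussian to the sphere.

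\emph{Step 1: reduce to the real sphere.} Identify $\mathbb{C}^{d}$ with $\mathbb{R}^{2d}$. Under this identification the Haar-random $\ket{\phi}$ is uniform on $\mathbb{S}^{2d-1}$, the Euclidean norms agree, and $f$ remains $L$-Lipschitz. Replacing $f$ by $f-\E f$ changes neither the Lipschitz constant nor the event in question, so we may assume $\E f=0$. Since $f$ takes the value $0$ somewhere on the sphere (it has mean zero) and the sphere has chordal diameter $2$, we get $|f|\leq2L$ everywhere.

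\emph{Step 2: radial extension and Gaussian concentration.} Define $F(x):=\norm x_{2}\,f(x/\norm x_{2})$ and $F(0)=0$. For $\norm x\geq\norm y$, split
\[
F(x)-F(y)=\bigl(\norm x-\norm y\bigr)f(\hat{x})+\norm y\bigl(f(\hat{x})-f(\hat{y})\bigr),
\]
and use $\norm y\,\norm{\hat{x}-\hat{y}}\leq2\norm{x-y}$ together with $|f|\leq2L$; this gives a Lipschitz constant of order $3L$ (a careful version should give exactly $3L$). Let $g$ be Gaussian in $\mathbb{R}^{2d}$ with variance $\sigma^{2}=1/(2d)$ per coordinate, so that $\E\norm g^{2}=1$. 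Then $\hat{g}=g/\norm g$ is uniform on the sphere and independent of $\norm g$, so $\E F(g)=\E\norm g\cdot\E f=0$. Maurey--Pisier then yields
\[
\Pr\bigl[|F(g)|\geq t\bigr]\leq2\exp\left(-\frac{4dt^{2}}{9\pi^{2}L^{2}}\right).
\]

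\emph{Step 3: transfer back to the sphere.} Since $F(g)=\norm g\,f(\hat{g})$ with $\norm g$ independent of $\hat{g}$, for any threshold $r$,
\[
\Pr\bigl[|F(g)|\geq r\varepsilon\bigr]\geq\Pr\bigl[|f(\phi)|\geq\varepsilon\bigr]\cdot\Pr\bigl[\norm g\geq r\bigr].
\]
Choosing $r$ slightly below $1$ makes $\Pr[\norm g\geq r]$ a constant bounded below, by concentration of the chi distribution around $1$.

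\emph{Main obstacle.} The delicate part is the constant bookkeeping in Step 3. Dividing by $\Pr[\norm g\geq r]$ introduces a multiplicative prefactor instead of the clean leading $2$, and the choice of $r$ degrades the exponent. I would handle this by using the slack between $2/(9\pi^{2})$ and the target $2/(9\pi^{3})$: the extra factor $1/\pi$ in the exponent gives room both to shrink $r$ and to absorb the prefactor. In the regime where this absorption fails, namely when $d\varepsilon^{2}/L^{2}$ is small, the claimed right-hand side already exceeds $1$ and the bound is trivial. I expect this case split to close the argument.
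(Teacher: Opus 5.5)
Your argument is correct, and it supplies something the paper does not. The paper gives no proof of this lemma; it imports it from the literature, citing Ledoux, and Mele's notes for the corollary.

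Your route combines Maurey--Pisier Gaussian concentration, the radial extension $F(x)=\norm{x}f(x/\norm{x})$, and a transfer back to the sphere using the independence of $\norm{g}$ and $g/\norm{g}$. This is the classical Gaussian-comparison derivation, as in the appendix of Milman--Schechtman, and it is where constants of this $\pi$-type come from. Ledoux's book instead goes through L\'evy's spherical isoperimetry, or through Ricci curvature and log-Sobolev inequalities. That approach gives far better constants (exponents of order $d\varepsilon^{2}/L^{2}$ with an absolute constant of order one) but needs heavier machinery. Yours needs only the short interpolation proof of Maurey--Pisier, treats every $d\geq1$ uniformly, and has ample slack for the weak constant $2/(9\pi^{3})$ the statement asks for.

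Two steps should be tightened.

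\emph{Step 2.} The right estimate is $\norm{y}\,\norm{\hat x-\hat y}\leq\norm{x-y}$ whenever $\norm{y}\leq\norm{x}$. Write $a=\norm{x}$, $b=\norm{y}$ and let $\theta$ be the angle between $x$ and $y$. Then $\norm{x-y}^{2}-\norm{b\hat x-y}^{2}=(a-b)(a+b-2b\cos\theta)\geq0$. Together with $|f|\leq2L$, the Lipschitz constant of $F$ is exactly $3L$; your factor $2$ would only give $4L$. The explicit choice in Step 3 below uses $3L$.

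\emph{Step 3.} The case split closes explicitly with threshold $r=1/2$. Since $d\norm{g}^{2}\sim\mathrm{Gamma}(d,1)$, the probability $p:=\Pr[\norm{g}\geq1/2]$ equals $e^{-1/4}$ for $d=1$. For $d\geq2$ it is at least $1-(e^{3/4}/4)^{d}\geq0.72$ by the Chernoff bound, so $p\geq0.7$ in all cases. Set $v:=2d\varepsilon^{2}/(9\pi^{3}L^{2})$. Your transfer then gives $\Pr[|f(\phi)|\geq\varepsilon]\leq2e^{-\pi v/2}/p$. This is at most $2e^{-v}$ once $v\geq\ln(1/0.7)/(\pi/2-1)\approx0.63$. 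For $v\leq\ln2\approx0.69$ the claimed bound exceeds $1$ and is trivial. The two regimes overlap, so the proof is complete.
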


\begin{cor}[Levy's lemma for expectation value of observables]
\label{lem:Levy-lemma-on-herm}Let $H$ be a Hermitian operator on
$\C^{d}$ and $\mu_{H}$ be a Haar measure on $\mathbb{C}^{d}$. For
all $\varepsilon\geq0$, we have
\[
\underset{\ket{\psi}\sim\mu}{\mathrm{Prob}}\left(\left|\bra{\psi}H\ket{\psi}-\frac{\tr(H)}{d}\right|\ge\varepsilon\right)\le2\exp\left(-\frac{d\varepsilon^{2}}{18\pi^{3}\norm H^{2}_{\infty}}\right).
\]
This is a straightforward consequence of Levy's lemma \cite{Ledoux2001}.
For a pedagogical exposition, see \cite{Mele_2024}.
\end{cor}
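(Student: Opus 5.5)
The plan is to derive the corollary from Levy's lemma (\Cref{thm:Levy-lemma}) on the sphere $\mathbb{S}^{2d-1}$ by applying it to the function $f(v):=\bra{v}H\ket{v}$. Two ingredients are needed: the value of $\E[f]$ under the Haar measure, and a Lipschitz constant for $f$.

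For the expectation, I will use unitary invariance of the Haar measure (\Cref{def:The-Haar}). Since the law of $\ket{\psi}=U\ket{0}$ is invariant under every fixed unitary $V$, the operator $\E[\den{\psi}]$ commutes with all unitaries. By Schur's lemma it is therefore proportional to the identity, and having unit trace it equals $\I/d$. Consequently $\E[\bra{\psi}H\ket{\psi}]=\tr(H)/d$.

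For the Lipschitz constant, take unit vectors $v,w$ and write
\[
\bra{v}H\ket{v}-\bra{w}H\ket{w}=\bra{v}H\left(\ket{v}-\ket{w}\right)+\left(\bra{v}-\bra{w}\right)H\ket{w}.
\]
Applying Cauchy--Schwarz to each term gives $|f(v)-f(w)|\le 2\norm{H}_\infty\norm{v-w}_2$, so we may take $L=2\norm{H}_\infty$. (Tighter constants are available, but this choice is what produces the stated $18$.)

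Substituting into Levy's lemma gives the exponent
\[
\frac{2d\varepsilon^2}{9\pi^3\cdot 4\norm{H}_\infty^2}=\frac{d\varepsilon^2}{18\pi^3\norm{H}_\infty^2},
\]
which is exactly the claimed bound. I would also note that $f$ is a function of the ray rather than the vector, so the global phase convention does not affect anything.

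The only step needing care is the identification $\E\,\den{\psi}=\I/d$, which rests on the invariance argument above. The rest is bookkeeping of constants, and I expect no real obstacle.
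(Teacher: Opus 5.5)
Your proposal is correct and follows the route the paper intends when it calls this a straightforward consequence of Levy's lemma. You apply \Cref{thm:Levy-lemma} to $f(v)=\bra{v}H\ket{v}$, obtain the mean $\tr(H)/d$ from unitary invariance, and use the Lipschitz constant $L=2\norm{H}_\infty$, which turns $2d\varepsilon^2/(9\pi^3L^2)$ into the stated exponent. Note that your Cauchy--Schwarz step reads $\norm{H}_\infty$ as the operator norm $\max_i|\lambda_i|$; this is the reading the corollary needs, since the paper's notation list literally writes $\max_i\lambda_i$, and under that reading the bound can fail.
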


\subsection{Oracles}\label{subsec:Oracles}

\branchcolor{darkgray}{Many of our results are in the random oracle model. We review and
setup the notation for algorithms having oracle access, and state
the O2H lemma.}

\paragraph{Oracles Access}

The \emph{oracle} $O$ corresponding to a function $f:\{0,1\}^{n}\to\{0,1\}^{m}$
is defined as $O\left|x\right\rangle _{\mathsf{X}}\left|y\right\rangle _{\mathsf{Y}}=\left|x\right\rangle _{\mathsf{X}}\left|y\oplus f(x)\right\rangle _{\mathsf{Y}}$
where $\mathsf{X}$ and $\mathsf{Y}$ denote registers with Hilbert
space $\H_{\mathsf{X}}=(\C^{2})^{\otimes n}$ and $\H_{\mathsf{Y}}=(\C^{2})^{\otimes m}$
respectively. 
\begin{defn}[Quantum Oracle Algorithm]
 A \emph{$q$-query quantum oracle algorithm} $\A^{O}$ that acts
on an input register $\mathsf{I}$ and produces the output in register
$\mathsf{O}$, is a quantum circuit that has at most $q$-many $O$
gates. 
\end{defn}

\begin{rem}
\label{rem:A^f=00003DA^O}When no confusion arises, we use ${\cal A}^{f}$
instead of ${\cal A}^{O}$ to refer to an algorithm ${\cal A}$ having
oracle access to $f$. 
\end{rem}

\paragraph{The O2H Lemma}

\branchcolor{darkgray}{In the random oracle model, the algorithms are given access to a (uniformly)
random function $O:\binset^{n}\to\binset^{m}$ such that, for each
input $x\in\binset^{n}$, the output $O(x)$ is independently and
uniformly distributed in $\binset^{m}.$ In cryptographic contexts,
one often considers two oracles $O,O'$ that are drawn from some joint
distribution and differ only on a small domain. Classically, the only
way to distinguish these oracles is to query them on this domain.
The quantum analogue of this statement is the so-called One-way-to-Hiding
lemma---an abstraction of the well-known BBBV theorem \cite{Bennett_1997}.
The lemma also allows the algorithm to take as input, a string that
is arbitrarily correlated with the oracles.}
\begin{lem}[O2H Lemma (rephrased) \cite{OG_O2H_14,semiclassical_O2H_19}]
\label{lem:O2H} Let $S\subseteq\binset^{n}$ be a random subset,
$O,O':\binset^{n}\to\binset^{m}$ be random functions satisfying $\forall x\notin S,O(x)=O'(x)$.
Let $z$ be a random string in $\binset^{m}$. Note that $S,O,O'$
and $z$ may be drawn from a joint distribution. Let $\A$ be a quantum
oracle algorithm making at most $q$ queries. Let $\mathcal{B}^{O}$
be a quantum oracle algorithm with oracle access to $O$, which upon
input $z$ does the following: 
\begin{itemize}
\item Picks $i\leftarrow\{1,2,\cdots,q\}$
\item Runs $\A^{O}(z)$ until just before the $i$-th query, measures all
query input registers in the computational basis, outputs the set
$T$ of measurement outcomes.
\end{itemize}
Define
\begin{align*}
P_{\mathsf{left}} & :=\Pr\left[1\leftarrow\A^{O}(z)\right],\\
P_{\mathsf{right}} & :=\Pr\left[1\leftarrow\A^{O'}(z)\right],\\
P_{\mathsf{guess}} & :=\Pr\left[S\cap T\neq\phi:T\leftarrow\mathcal{B}^{O}(z)\right].
\end{align*}
Then, it holds that 
\[
\dfrac{\left|P_{\mathsf{left}}-P_{\mathsf{right}}\right|}{2q}\leq\sqrt{P_{\mathsf{guess}}}.
\]
\end{lem}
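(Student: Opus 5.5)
This is the O2H lemma, quoted from \cite{OG_O2H_14,semiclassical_O2H_19}. My plan is to reproduce the standard hybrid argument of BBBV type \cite{Bennett_1997}, conditioning throughout on a fixed joint sample $(S,O,O',z)$ and averaging only at the end.

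First, fix a sample and write the executions of $\A$ as
\[
\ket{\psi_q}=U_qOU_{q-1}\cdots OU_0\ket{z},\qquad \ket{\psi'_q}=U_qO'U_{q-1}\cdots O'U_0\ket{z}.
\]
I will run a telescoping hybrid: in hybrid $j$, the first $j$ queries go to $O'$ and the rest to $O$. Adjacent hybrids differ only at query $j$. Let $\ket{\phi_j}$ be the state of the $O$-run just before the $j$-th query, and $P_S$ the projector onto query inputs in $S$. Since $O$ and $O'$ agree outside $S$, we have $(O-O')=(O-O')P_S$. Because unitaries preserve norms, adjacent hybrids differ by at most $2\norm{P_S\ket{\phi_j}}$. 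This gives
\[
\norm{\ket{\psi_q}-\ket{\psi'_q}}\le 2\sum_{j=1}^q\norm{P_S\ket{\phi_j}}.
\]
The ordering of the hybrids matters. It should be arranged so that every $\ket{\phi_j}$ is an $O$-run state, which is exactly the state $\mathcal{B}^O$ measures.

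Second, relate acceptance probabilities to this vector distance. For any projector $\Pi$, $|\norm{\Pi v}^2-\norm{\Pi w}^2|\le \norm{v-w}$ when $v,w$ are unit vectors. The difference of their squared projections is at most $\norm{v-w}\cdot\norm{\Pi v+\Pi w}$, and combined with the trivial bound this yields $|P_{\mathsf{left}}-P_{\mathsf{right}}|\le 2\sum_j\norm{P_S\ket{\phi_j}}$ for each fixed sample.

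Third, average and apply Cauchy--Schwarz twice, once over $j$ and once over the joint distribution, to get
\[
\E\sum_j\norm{P_S\ket{\phi_j}}\le q\sqrt{\tfrac1q\sum_j\E\norm{P_S\ket{\phi_j}}^2}.
\]
Now $\frac1q\sum_j\E\norm{P_S\ket{\phi_j}}^2$ is the probability that $\mathcal{B}$, choosing $i$ uniformly and measuring before query $i$, obtains a measured query input in $S$. That is at most $P_{\mathsf{guess}}$. Combining the three steps gives $|P_{\mathsf{left}}-P_{\mathsf{right}}|\le 2q\sqrt{P_{\mathsf{guess}}}$.

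I expect the main obstacle to be the constant-level bookkeeping. The concern is losing a factor of $2$ when converting the trace/vector distance to a probability difference, and the choice between $T$ being a full set of measured parallel-query inputs versus a single input. If the naive bound gives $4q$, I will follow the Ambainis--Hamburg--Unruh formulation, which bounds $|\sqrt{P_{\mathsf{left}}}-\sqrt{P_{\mathsf{right}}}|$ and then uses $\sqrt{a}+\sqrt{b}\le 2$ to obtain the stated constant.
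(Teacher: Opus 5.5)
Your proposal is correct and is essentially the standard BBBV-style hybrid argument followed by Cauchy--Schwarz that underlies the cited O2H results; the paper itself only quotes the lemma without proof. You do need to order the hybrids as you anticipated, with the first $j$ queries going to $O$ and the rest to $O'$, so that every $\phi_j$ is an $O$-run state.

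Your worry about the constant is unfounded. For unit vectors, $|\langle v|\Pi|v\rangle-\langle w|\Pi|w\rangle|\le\sqrt{1-|\langle v|w\rangle|^2}\le\|v-w\|$. So the per-sample bound $|p-p'|\le 2\sum_j\|P_S\phi_j\|$ holds, and averaging with Cauchy--Schwarz gives exactly $2q\sqrt{P_{\mathsf{guess}}}$, with no need to fall back on the square-root formulation.
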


\branchcolor{darkgray}{We end this discussion by considering families of oracles and QPT
machines.}

\paragraph{QPT machines with oracle access.}

Let $n,m:\mathbb{N}\to\mathbb{N}$ be functions. Let $f_{\lambda}:\{0,1\}^{n(\lambda)}\to\{0,1\}^{m(\lambda)}$.
The family of oracles $\fat O:=\{O_{\lambda}\}_{\lambda}$ corresponding
to the family of functions $\{f_{\lambda}\}_{\lambda}$, is defined
as $O_{\lambda}\left|x\right\rangle _{\mathsf{X}}\left|y\right\rangle _{\mathsf{Y}}=\left|x\right\rangle _{\mathsf{X}}\left|y\oplus f(x)\right\rangle _{\mathsf{Y}}$
where $\mathsf{X}$ and $\mathsf{Y}$ denote registers with Hilbert
space $\H_{\mathsf{X}}=(\C^{2})^{\otimes n(\lambda)}$ and $\H_{\mathsf{Y}}=(\C^{2})^{\otimes m(\lambda)}$
respectively. 

An algorithm with oracle access to $\fat O$ is denoted by ${\cal A}^{\fat O}$.
By ${\cal A}^{\fat O}$ we mean that the algorithm can use $O_{\lambda}$
(for any choice of $\lambda$) as it would any other gate. 
\begin{defn}[QPT with oracle access]
\label{def:QPT}A \emph{quantum polynomial time Turing machine} $\left(\QPT\right)$
is specified by a probabilistic polynomial time\emph{ }Turing machine
$\left(\PPT\right)$ that on input $1^{\lambda}$, produces a canonical
description (see \Cref{nota:basic-notations}) of a quantum circuit
$C_{\lambda}$ that takes an $a(\lambda)$-qubit input and produces
a $b(\lambda)$-qubit output. 

On input $\left|\psi\right\rangle _{\mathsf{A}(\lambda)}$, $\QPT(1^{\secp},\left|\psi\right\rangle _{\mathsf{A}(\lambda)})$
runs $\PPT(1^{\lambda})$ to obtain $C_{\lambda}$ and returns $C_{\lambda}(\left|\psi\right\rangle _{\mathsf{A(\lambda)}})$
where $\mathsf{A}(\lambda)$ is an $a(\lambda)$-qubit input register.

Let $\fat O:=\{O_{\lambda'}\}_{\lambda'}$ be as described above.
Then, a \emph{quantum polynomial time Turing machine with oracle access}
to $\fat O$, $\QPT^{\fat O}$ is specified by a probabilistic polynomial
time machine $\PPT$ that on input $1^{\lambda}$, produces a canonical
description of a quantum circuit $C_{\lambda}$---that can use any
of $\{O_{\lambda'}\}_{\lambda'}$ as gates---and acts on an $a(\lambda)$
qubit input and produces a $b(\lambda)$ qubit output.

On input $\left|\psi\right\rangle _{\mathsf{A}(\lambda)}$, $\QPT^{\fat O}(1^{\secp},\left|\psi\right\rangle _{\mathsf{A}(\lambda)})$
runs $\PPT^{\fat O}(1^{\lambda})$ to obtain $C_{\lambda}$ and returns
$C_{\lambda}(\left|\psi\right\rangle _{\mathsf{A}(\lambda)})$ where
$\mathsf{A}(\lambda)$ is an $a(\lambda)$-qubit input register.
\end{defn}

\paragraph{Random Oracle Model.}

As alluded above, a random function $f:\{0,1\}^{*}\to\{0,1\}$ is
one that, on each input, outputs an independent uniformly sampled
bit. In the \emph{Random Oracle Model}, each party/algorithm, is given
oracle access to $f$. 

It is clear that $f$ outputting a bit is not a limitation---it can
be used to produce arbitrary length outputs (e.g. by using $f(i|\cdot)$
where $i$ encodes the $i$-th bit of the desired output). 

In our constructions, for instance, we consider families of random
functions of the form $f_{\lambda}:\{0,1\}^{n(\lambda)}\to\{0,1\}^{m(\lambda)}$
and access to this is understood in the sense described in \Cref{def:QPT}
above.

\subsection{Cryptography}

\branchcolor{darkgray}{For our first constructive separation in the plain model, we rely
on pseudorandom functions.}
\begin{defn}[PRF]
\label{def:PRF}A Pseudorandom Function Family (PRF) is defined by
a classical deterministic polynomial-time algorithm as follows: 
\begin{itemize}
\item $\feval\left(\key,x\right)$ takes as input a key $k\in\binset^{n}$
and an input $x\in\binset^{n}$. It returns $y\in\binset^{n}$.
\end{itemize}
We note that here we chose the key and input lengths to be equal (and
implicitly both are also equal to the security parameter $\secp$)
for the sake of minimising the number of parameters. We require the
following property.
\end{defn}

\begin{itemize}
\item Security: For any polynomial-time quantum algorithm $\mathcal{A}$
with binary output, it holds that
\[
\left|\Pr_{k\gets\binset^{n}}\left[1\leftarrow\A^{\feval\left(\key,\cdot\right)}\left(1^{n}\right)\right]-\Pr_{f\leftarrow\mathcal{F}_{n}}\left[1\leftarrow\A^{f}\left(1^{n}\right)\right]\right|=\negl\left(n\right).
\]
Here, ${\cal F}_{n}$ denotes the set of all possible functions mapping
$\binset^{n}$ to $\binset^{n}$.The superscript notation $\A^{\feval\left(\key,\cdot\right)}$
means that the (quantum) algorithm $\A$ is allowed to make arbitrary
(quantum) queries to evaluate the functions $\feval\left(\key,\cdot\right)$.
\end{itemize}
\branchcolor{darkgray}{In the construction of pseudo-ergotropy (see \Cref{sec:Pseudoergotropy})
we use pseudorandom permutations. This in turn, also gives us other
plain-model separations (e.g. between ergotropy and its computational
variant, see \Cref{thm:explicit-sep-erg-erghat-prf}). The definition
below and the following result have been taken verbatim from \cite{arnon2023CET}.}
\begin{defn}[PRP]
\label{def:PRP}A \emph{Pseudorandom Permutation Family (PRP)} is
defined by a pair of classical deterministic polynomial time algorithms
as follows.
\end{defn}

\begin{itemize}
\item $\feval\left(\key,x\right)$ takes as input a key $k\in\binset^{n}$
and an input $x\in\binset^{n}$. It returns $y\in\binset^{n}$.
\item $\finv\left(\key,y\right)$ takes as input a key $k\in\binset^{n}$
and an input $y\in\binset^{n}$. It returns $x\in\binset^{n}$.
\end{itemize}
We note that here we chose the key and input lengths to be equal (and
implicitly both are also equal to the security parameter $\secp$)
for the sake of minimising the number of parameters. We require the
following properties.
\begin{enumerate}
\item Correctness: For all $n$, for all $\key\in\binset^{n}$, and for
all $x,y\in\binset^{n}$, it holds that $\feval\left(\key,x\right)=y$
if and only if $\finv\left(\key,y\right)=x$
\item Pseudorandomness:\textbf{ }For any polynomial time quantum algorithm
$\A$ with binary output it holds that
\[
\left|\Pr_{k\gets\binset^{n}}\left[1\leftarrow\A^{\feval\left(\key,\cdot\right),\finv\left(\key,\cdot\right)}\left(1^{n}\right)\right]-\Pr_{\pi\gets S_{\binset^{n}}}\left[1\leftarrow\A^{\pi,\pi^{-1}}\left(1^{n}\right)\right]\right|=\negl\left(n\right).
\]
Here, the superscript notation $\A^{\feval\left(\key,\cdot\right),\finv\left(\key,\cdot\right)}$
means that the (quantum) algorithm $\A$ is allowed to make arbitrary
(quantum) queries to evaluate the functions $\feval\left(\key,\cdot\right)$
and $\finv\left(\key,\cdot\right)$.
\end{enumerate}
\begin{lem}[Permutation $\approx$ Expansion $\circ$ Contraction; \cite{arnon2023CET}]
\label{lem:fx_equals_ghx}Consider a PRP family $(\feval,\finv)$
and let $n,m\in\mathbb{N}$ be s.t. $m>n$. We consider $\key_{f},\key_{g},\key_{h}\in\binset^{m}$,
and define the following functions: $f:\binset^{m}\to\binset^{m}$
as $f(x)=\feval\left(\key_{f},x\right)$, $g:\binset^{n}\to\binset^{m}$
as $g(x)=\feval\left(\key_{g},x\|0^{m-n}\right)$, and $h:\binset^{m}\to\binset^{n}$
as $h(x)=\feval\left(\key_{h},x\right)_{|n}$ (i.e. the $n$-bit prefix
of the outcome). Then for any polynomial function $m=m\left(n\right)$
and for any polynomial-time quantum algorithm $\A$ with binary output
it holds that 
\[
\left|\Pr_{\key_{g},\key_{h}}\left[1\leftarrow\A^{g(h(\cdot))}\left(1^{n}\right)\right]-\Pr_{k_{f}}\left[1\leftarrow\A^{f(\cdot)}\left(1^{n}\right)\right]\right|=\negl\left(n\right).
\]
\end{lem}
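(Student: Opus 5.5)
The plan is a hybrid argument. We move from the keyed composition $g\circ h$ to truly random objects, and from there to a random permutation on $\binset^{m}$, which PRP security then identifies with $f$. Throughout, $\A$ makes $q=\poly(n)$ quantum queries. Reductions simulate the random functions they need efficiently using $2q$-wise independent functions, which are perfectly indistinguishable from random for $q$-query algorithms (Zhandry).

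The hybrids are as follows.

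\textbf{H0.} $\A$ gets oracle access to $g\circ h$ with PRP keys $\key_g,\key_h$.

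\textbf{H1.} Replace $\feval(\key_h,\cdot)$ by a uniformly random permutation $\pi_h$ on $\binset^{m}$, and then $\feval(\key_g,\cdot)$ by $\pi_g$.
Each switch is justified by a reduction to PRP security. The reduction has oracle access to one keyed permutation, samples the other key itself, and answers each query of $\A$ to $g\circ h$ reversibly. It computes $h(x)$ into an ancilla, applies $g$, and uncomputes $h(x)$. This costs $O(1)$ oracle calls per query.

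\textbf{H2.} Replace $\pi_h$ by a random function $\rho:\binset^{m}\to\binset^{m}$. The random permutation/random function indistinguishability bound $O(q^{3}/2^{m})$ applies. Now $h(x)=\rho(x)_{|n}$ is exactly a uniformly random function $R:\binset^{m}\to\binset^{n}$.

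\textbf{H3.} Replace the random injection $y\mapsto\pi_g(y\|0^{m-n})$ by a random function $G:\binset^{n}\to\binset^{m}$. This is again permutation-versus-function, or injection-versus-function, on a domain of size $2^{n}$, with loss $O(q^{3}/2^{n})$. The reduction simulates $R$ itself.

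\textbf{H4.} The oracle is now $G\circ R$. This is precisely Zhandry's small-range distribution: each output is drawn from a random list of $r=2^{n}$ uniform strings via a random index. By his theorem it is $O(q^{3}/2^{n})$-indistinguishable from a uniformly random function $\binset^{m}\to\binset^{m}$.

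\textbf{H5--H6.} Replace the random function by a random permutation $\pi_f$, losing $O(q^{3}/2^{m})$. Then replace $\pi_f$ by $\feval(\key_f,\cdot)$ using PRP security.

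Summing the losses gives $\negl(n)+O(q^{3}/2^{n})$, which is negligible since $q$ is polynomial and $m>n$.

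The main obstacle is the middle of the chain, namely justifying that composing a many-to-one shrinking map with an expanding injection looks like a full-range function. The step that handles this is the small-range distribution bound. Its loss is governed by the intermediate range size $2^{n}$, not $2^{m}$, which is why the argument needs $n$ to be the security parameter.

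A point to check carefully is that H2 really yields a uniformly random function $R$. Truncating a random permutation gives a $2^{m-n}$-regular function, not a uniform one. This is why the plan passes through a random function on $\binset^{m}$ before truncating, rather than reasoning about regular functions directly.
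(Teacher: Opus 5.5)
Your hybrid chain is essentially the paper's: PRP security to pass to random permutations, then permutation-versus-function bounds, then a Zhandry-type argument that a shrinking random function composed with an expanding one looks like a random function on $\binset^{m}$, and finally back through a random permutation to $\feval(\key_f,\cdot)$.

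The only substantive difference is the middle step.
\begin{itemize}
\item You apply the small-range distribution theorem directly to $G\circ R$ with $r=2^{n}$.
\item The paper instead fixes $g$ and views $g\circ h$ as an oracle whose outputs are i.i.d.\ from the uniform distribution $D_{1,g}$ on the multiset $S_g$. It invokes Zhandry's oracle-to-samples theorem, and then argues that polynomially many samples from $D_{1,g}$ (for random $g$) are statistically close to uniform.
\end{itemize}
Your version packages the same phenomenon more directly and gives an explicit $O(q^{3}/2^{n})$ loss.

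One detail needs repair; the paper's sketch uses the same ordering and has the same issue. In the second switch of your H1 there is no ``other key'' to sample. By then $h$ is defined by a truly random permutation $\pi_h$. An efficient adversary against PRP security of $\feval(\key_g,\cdot)$ cannot simulate $\pi_h$ exactly using $t$-wise independent functions, because those are not permutations.

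The fix is to move that switch after H2:
\begin{itemize}
\item First replace $\pi_h$ by the random function $\rho$. This step is information-theoretic, so the reduction may sample $\pi_g$ inefficiently.
\item Then the reduction for $\feval(\key_g,\cdot)\to\pi_g$ can simulate $R=\rho_{|n}$ with a $4q$-wise independent function. It needs $4q$ rather than $2q$ because each query to $g\circ h$ costs two queries to $h$ (compute and uncompute).
\end{itemize}
The rest of your chain then goes through unchanged.
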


For completeness, we include the proof in \Cref{sec:defElementary}. 

\section{Computational work extraction}\label{sec:comp-work-ex}

\branchcolor{darkgray}{Recall that Ergotropy (see \Cref{def:ergotropy}) is defined to be
the maximum amount of work that can be extracted from a state (given
the corresponding Hamiltonian) using a \emph{unitary} operation. However,
there are no restrictions on which unitary one uses---in this sense,
it is an information theoretic quantity. Clearly, unitary operations
that extract the maximum work, may not be efficiently realisable.
Therefore, it is physically more realistic to revise the notion of
work extraction to account for computational boundedness. In this
section, we define such a computational notion of work extraction---computational
ergotropy (see \Cref{def:computational-erg})---and show that indeed,
this notion is distinct from \Cref{def:ergotropy}. More concretely,
we show an unconditional existential separation between ergotropy
and computational ergotropy, followed by a concrete separation in
the random oracle model. \Cref{subsec:The-access-model} introduces
computational ergotropy while \Cref{subsec:erg-vs-erghat} establishes
the separations.}

\branchcolor{darkgray}{There are a few challenges in defining a computational notion of ergotropy.
We start by first looking at why one must consider families of states
as opposed to studying a single state (and their corresponding Hamiltonians).}

\subsection{Defining computational ergotropy $\protect\erghat$}\label{subsec:The-access-model}

\branchcolor{darkgray}{Ergotropy (as in \Cref{def:ergotropy}) considers a given state-Hamiltonian
pair. However, the notion of computational efficiency is only defined
asymptotically---one cannot say whether a circuit with $500$ gates
is efficient or not in isolation. Computational efficiency is therefore
defined in terms of, for instance, the number of gates in a circuit
as a function of the input size. Typically, when one uses the word
``efficient'' one thinks of quantities such as number of gates,
being polynomial in the input size.

Thus, to restrict to computationally efficient unitaries in ergotropy,
one must first reformulate ergotropy in the asymptotic setting. In
the following discussion, we start by parametrising the system using
a security parameter $\lambda\in\mathbb{N}$ and replace single instances
of state-Hamiltonian pairs, with families $\{\rho_{\secp}\}$ and
$\{H_{\secp}\}$. Since we are no longer in the information theoretic
setting, one needs to be more careful about defining how these objects
are accessed by an algorithm.}

Let $n:\N\to\N$ be a function that denotes the size (number of qubits)
of the quantum system. For all $\secp\geq1,$ let $\fat{\rho}=\left\{ \rho_{\secp}\right\} {}_{\secp}$
and $\fat H=\left\{ H_{\secp}\right\} _{\secp}$ denote a family of
states and Hamiltonians, such that $\rho_{\secp}\in\Den\left(n(\secp)\right),H_{\secp}\in\Ham\left(n(\secp)\right)$. 
\begin{defn}[State generation oracle]
\label{def:state-gen}The \emph{state generation oracle} $O_{\fat{\rho}}$,
upon being queried at $1^{\secp}$, outputs the state $n(\secp)$-qubit
state $\rho_{\secp}\in\fat{\rho}$, i.e.
\[
O_{\fat{\rho}}\left(\den{1^{\secp}}\right)=\rho_{\secp}.
\]
\end{defn}

\begin{defn}[Efficiently describable Hamiltonian]
\label{def:H-efficient}
A family of Hamiltonians $\fat H$ is \emph{efficiently
describable} if there is a PPT machine (see \Cref{nota:basic-notations})
$\hamdesc$ that on input $1^{\secp}$ outputs a canonical circuit
description of $H_{\lambda}$. We say $\fat H$ is \emph{extensive}
if $\left\Vert H_{\lambda}\right\Vert _{\infty}\in{\cal O}(n(\lambda))$.

\end{defn}

\begin{example}
A $k$-local Hamiltonian (see \Cref{def:k-local-ham}) is efficiently
describable (implicit in the proof of \Cref{lem:trHrho=00003DtrHsigma}
that appears later). 
\end{example}

\branchcolor{darkgray}{In this work, we restrict ourselves to studying only normalised Hamiltonians.
This is because by using arbitrary Hamiltonians, one can scale the
separation between information theoretic ergotropy and computational
ergotropy (see \Cref{subsec:erg-vs-erghat}) arbitrarily.}
\begin{defn}[Work extracted by an algorithm]
\label{def:work-ex-algo} Let $\fat{\rho},\fat H$ be as above, and
let $\A$ be a quantum oracle algorithm. We define the work extracted
by the algorithm $\A$ from the state $\rho_{\secp}$ with respect
to the Hamiltonian $H_{\secp}$ as
\[
\W_{\fat{\rho},\fat H,\A}(\secp):=\mathbb{\E}\left[\tr\left[H_{\secp}\left(\rho_{\secp}-U_{\secp}\rho_{\secp}U^{\dagger}_{\secp}\right)\right]:\begin{array}{c}
\circdesc(U_{\secp})\leftarrow\A^{O_{\fat{\rho}},\hamdesc}(1^{\secp}),\\
U_{\secp}\in\uni(n(\secp))
\end{array}\right].
\]
\end{defn}

\branchcolor{darkgray}{Since ergotropy captures the maximum work that can be extracted from
a state with respect to a Hamiltonian through a unitary operation,
one can define computational ergotropy as the maximum of $\W_{\fat{\rho},\fat H,\A}$
over all algorithms $\A$ of a particular computational class. Since
$\W_{\fat{\rho},\fat H,\A}$ is a function, $\W_{\fat{\rho},\fat H,\A}:\N\to\mathbb{R},$
we need a notion of comparison of these functions to meaningfully
define the maximum of $\W_{\fat{\rho},\fat H,\A}$ over all algorithms
$\A.$}
\begin{notation}[Function Comparison]
\label{nota:function-comparison} Let $f,g:\N\to\mathbb{R}$ be functions.
By $f\fnleq g$, we mean that there is a $\secp_{0}\in\N$ such that
for all $\secp\geq\secp_{0}$, it holds that $f(\secp)\leq g(\secp)$.
We say $f\fneq g$ if $f\fnleq g$ \emph{and} $f\fngeq g$ simultaneously. 

\branchcolor{darkgray}{With all these notions in place, we can finally formally state what
we mean by computational ergotropy.}
\end{notation}

\begin{defn}[Bounds on Computational Ergotropy $(\cerg,\erghat)$]
\label{def:computational-erg} Let ${\cal C}$ be a class of Turing
machines (e.g. QPT, PPT), $\ub,\lb:\N\to\mathbb{R}$ be functions,
and $\fat{\rho},\fat H$ be families of states and Hamiltonians as
above. We say that the \emph{computational ergotropy for the class
${\cal C}$}, denoted by $\cerg_{\fat{\rho},\fat H}$, is $(\lb,\ub)$
bounded if 
\begin{align}
\forall\ \A\in\mathcal{C},\text{ it holds that } & \W_{\fat{\rho},\fat H,\A}\fnleq\ub,\text{ and }\label{eq:ub-C-erg}\\
\exists\ \A\in\mathcal{C}\text{ s.t. } & \W_{\fat{\rho},\fat H,\A}\fngeq\lb.\label{eq:lb-C-erg}
\end{align}
When ${\cal C}$ is taken to be the class of QPT (or PPT) algorithms
(see \Cref{def:QPT}), we say \emph{computational ergotropy}, denoted
by $\erghat_{\fat{\rho},\fat H}$, is $(\lb,\ub)$ bounded. 
\end{defn}

\branchcolor{darkgray}{In subsequent discussions, we give upper bounds for computational
ergotropy. To express them concisely, we overload \Cref{nota:function-comparison}
as follows.}
\begin{notation}
\label{nota:erg-fnleq-et-al}We use $\cerg_{\fatrhoH}\fnleq\ub$ to
denote $\cerg$ is upper bounded by $\ub$ (i.e. \Cref{eq:ub-C-erg}
holds) and $\cerg_{\fatrhoH}\fngeq\lb$ to denote $\cerg$ is lower
bounded by $\lb$ (i.e. \Cref{eq:lb-C-erg}). 

\branchcolor{darkgray}{While we defined $\cerg$ in terms of upper and lower bounds, in the
following, we explain how one might intuitively think about $\cerg$
when these bounds coincide.}
\end{notation}

\begin{rem}[Matching bounds to specify $\cerg$]
\label{rem:Condition-well-defined-Cerg} For cases, such as when
$\fat{\rho}$ is just the maximally mixed state, it is intuitively
clear that $\cerg$ must be $0$. Formally, this corresponds to cases
where $\lb$ and $\ub$ match. 
\begin{enumerate}
\item We say $\cerg_{\fat{\rho},\fat H}$ is represented by a function $\mathsf{b}:\mathbb{N}\to\mathbb{R}$
if $\cerg$ is $(\mathsf{b},\mathsf{b})$ bounded.
\item Note that if $\cerg$ is represented both by functions $\mathsf{b}$
and $\mathsf{b'}$ then $\mathsf{b}=_{\infty}\mathsf{b'}$.
\item When $\cerg$ is represented by a function $\mathsf{b}$, then there
is an algorithm ${\cal A}^{*}\in{\cal C}$ such that 
\[
\W_{\fat{\rho},\fat H,{\cal A}^{*}}\fneq\mathsf{b}.
\]
\item Another way of viewing ${\cal A}^{*}$ is as the best algorithm for
extracting work, asymptotically, i.e. 
\[
{\cal A}^{*}\in\argmax_{{\cal A}\in{\cal C}}\W_{\fat{\rho},\fat H,{\cal A}}
\]
 where we say an algorithm ${\cal A}'\in\argmax_{{\cal A}\in{\cal C}}\W_{\fat{\rho},\fat H,{\cal A}}$
if for all ${\cal A}\in{\cal C}$, $\W_{\fat{\rho},\fat H,{\cal A}}\fnleq\W_{\fat{\rho},\fat H,{\cal A}'}$.
\end{enumerate}
\end{rem}

\branchcolor{darkgray}{To study the separation between ergotropy and its computational variant,
it also helps to define ergotropy for families of states and Hamiltonians.
}
\begin{notation}
Let $\fat{\rho}=\left\{ \rho_{\secp}\right\} _{\secp}$ be a family
of states, and $\fat H=\left\{ H_{\secp}\right\} _{\secp}$ be a family
of Hamiltonians. By $\erg_{\fat{\rho},\fat H}(\secp)$ we denote the
quantity $\erg_{\rho_{\secp},H_{\secp}}$, the ergotropy for each
state-Hamiltonian pair in the family.
\end{notation}

\branchcolor{darkgray}{With $\erg_{\fat{\rho,\fat H}}(\secp)$ now defined, we show that
by taking ${\cal {\cal C}}$ to be the class of unbounded computations,
$\cerg$ becomes essentially equivalent to $\erg_{\fat{\rho},\fat H}$.}
\begin{rem}
\label{rem:C-erg=00003Derg_when_C_is_unbounded}Let $\epsilon:\N\to\mathbb{R}$
be any function such that with $\epsilon(\lambda)>0$ for all $\lambda\in\mathbb{N}$.
Then, we have the following.
\begin{enumerate}
\item There exists a (unbounded) quantum Turing machine $\A$ such that
\begin{align*}
\erg_{\fat{\rho},\fat H}(\secp)-\W_{\fat{\rho},\fat H,\A}(\secp) & \geq\epsilon(\secp) & \text{for all }\secp\in\N.
\end{align*}
\item It holds that $\cerg$ is $(\erg_{\fat{\rho},\fat H}-\epsilon,\erg_{\fat{\rho},\fat H})$
bounded, when ${\cal C}$ is the class of all quantum Turing machines
that halt on all inputs. 
\end{enumerate}
\branchcolor{darkgray}{The first point in \Cref{rem:C-erg=00003Derg_when_C_is_unbounded}
is immediate: by querying the state generation oracle $O_{\fat{\rho}}$
(see \Cref{def:state-gen}) exponentially many times, one can do state
tomography, and obtain the description of the state up to arbitrary
precision. With the description of the state and Hamiltonian, one
can describe a unitary that maps this state to a passive state (as
characterised in \Cref{thm:char-passive-states}) therefore extracting
maximal work.

The second point in \Cref{rem:C-erg=00003Derg_when_C_is_unbounded}
is an immediate consequence of the first point: For each $\lambda$,
$\erg_{\fat{\rho},\fat H}(\lambda)$ is already the information theoretically
maximal amount of work one can extract---so even with unbounded computation,
one cannot exceed this value; and point 1 shows that there is an algorithm
${\cal A}$ that extracts at least $\erg_{\fat{\rho},\fat H}(\lambda)-\epsilon(\lambda)$
amount of energy.}

\end{rem}

\subsection{Unconditional existential separation between $\protect\erg$ and
$\protect\erghat$}\label{subsec:erg-vs-erghat}

\branchcolor{darkgray}{Given that $\erghat_{\fatrhoH}$ is most $\erg_{\fatrhoH}$, (i.e.
$\erg_{\fat{\rho},\fat H}\fngeq\erghat_{\fat{\rho},\fat H}$ using
\Cref{nota:erg-fnleq-et-al} and \Cref{rem:C-erg=00003Derg_when_C_is_unbounded}
above), one can ask the following natural question. 
\begin{quote}
\emph{What is the maximal separation possible $\erg_{\fatrhoH}$ and
$\erghat_{\fatrhoH}$?}
\end{quote}
We show a near-maximal separation between ergotropy and its computational
variant is possible. \Cref{subsec:Unconditional-Existential-Separa}
establishes an unconditional existential separation and \Cref{subsec:A-constructive-separation}
gives an explicit construction in the random oracle model (see last
paragraph in \Cref{subsec:Oracles}). We note that our proof techniques
generalise to most physically relevant extensive Hamiltonians (see
\Cref{def:H-efficient}).}

\subsubsection{Unconditional Existential Separation}\label{subsec:Unconditional-Existential-Separa}

\branchcolor{darkgray}{Here, we show that there exists a family of states and Hamiltonians
$(\fatrhoH)$ such that $\erghat_{\fatrhoH}\fnleq\negl$ for some
negligible function, while $\erg_{\fatrhoH}\fngeq\poly-\negl'$ for
$\poly(n)=n/2$ and some other negligible function $\negl'$.

This result is established using a counting argument. To this end,
we first upper bound the number of circuits, in terms of the number
of gates and wires it uses.}

For \Cref{lem:circuit-count,lem:bounding-erg-Up} below, let $G$
be a (finite) set of universal gates  consisting of one and two-qubit
gates. We assume $\circdesc$ (see \Cref{nota:basic-notations}) is
defined relative to $G$. 
\begin{lem}[Circuit Count]
\label{lem:circuit-count}Let $p:\N\to\N$ be a function. The number
of unitary operations that can be implemented on $n$-qubits with
at most $p(n)$ gates from $G$ is at most $\O(n^{p(n)}).$ 
\end{lem}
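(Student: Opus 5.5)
We bound the number of distinct circuits, since each unitary implementable with at most $p(n)$ gates arises from at least one canonical circuit description. A circuit with exactly $k$ gates is a sequence of $k$ choices, where each choice specifies a gate $g\in G$ together with an ordered tuple of wires it acts on. Because $G$ contains only one- and two-qubit gates, the number of choices per position is at most $|G|\cdot(n+n(n-1))\le |G|\,n^{2}$. Hence the number of circuits with exactly $k$ gates is at most $(|G|n^{2})^{k}$.

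Summing over $k=0,\dots,p(n)$ gives at most $(p(n)+1)(|G|n^2)^{p(n)}$ circuits, and therefore at most that many unitaries. We then simplify this to the stated form by absorbing the constants. For $n\ge 2$ the geometric sum is at most $2(|G|n^2)^{p(n)}$, so the count is $2^{\O(p(n))}n^{2p(n)}=n^{\O(p(n))}$.

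The only real obstacle is a bookkeeping mismatch. Read literally, the statement's $\O(n^{p(n)})$ would require each position to contribute a factor of about $n$, whereas a two-qubit gate on $n$ wires has $\Theta(n^2)$ placements. I would interpret the bound as $n^{\O(p(n))}$, i.e. $2^{\O(p(n)\log n)}$, which is all the subsequent union bound in Lemma~\ref{lem:bounding-erg-Up} needs. With $p(n)=2^{n/4}$ (or any subexponential size), the count is $\exp(\O(2^{n/4}n))$, which is still dominated by the Levy-type tail $\exp(-\Omega(2^n\varepsilon^2/n^2))$ from Corollary~\ref{lem:Levy-lemma-on-herm}. If the exact form $\O(n^{p(n)})$ is insisted upon, I would restate the bound with $n$ replaced by $|G|n^2$ in the base, noting that this changes nothing downstream.
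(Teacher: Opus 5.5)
Your argument is correct and is essentially the paper's proof sketch: at most $|G|\,n^{2}$ choices of gate and placement per slot, raised to the number of slots, with the number of distinct unitaries bounded by the number of distinct circuits (your sum over $k\le p(n)$ also handles the ``at most'' that the paper's sketch glosses over). Your caveat is also right. The paper's final step $\O(n^{2}|G|)^{p(n)}=\O(n^{p(n)})$ does not hold, and the literal bound genuinely fails: already for $p\equiv 1$, placing an entangling gate of $G$ on different pairs of qubits gives $\Omega(n^{2})$ distinct unitaries. The correct statement is therefore $n^{\O(p(n))}$, and that is all the union bound in \Cref{lem:bounding-erg-Up} uses.
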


\branchcolor{black}{\begin{proof}[Proof sketch.]
A circuit of size at most $p(n)$ gates, is a sequence of $p(n)$
slots, where for each slot, we choose 2 out of the $n$-qubits the
gate acts on ($\O(n^{2})$), and specific gate in $G$ ($\left|G\right|=\O(1)$).
Therefore, there are at most $\O(n^{2}.\left|G\right|)^{p(n)}=\O(n^{p(n)})$
circuits. Some of these circuits might correspond to the same unitary,
therefore, $\left|\uni_{p}(n)\right|\leq\O(n^{p(n)}).$
\end{proof}

}

In \Cref{lem:bounding-erg-Up} and \Cref{thm:unconditional-existential-separation}
we consider the $n$-qubit Hamiltonian $H_{n}=\sum^{n}_{i=1}X_{i}$
(see \Cref{nota:basic-notations}). \branchcolor{darkgray}{For the lemma, it would also be useful to consider a variant of ergotropy
where the unitary operations are restricted to having at most $p$
many gates.} We define 
\[
\erg^{\mathcal{U}_{p}(n)}_{\rho_{n},H_{n}}:=\tr[H_{n}\rho_{n}]-\min_{U_{n}\in\mathcal{U}_{p}(n)}\tr[H_{n}U_{n}\rho_{n}U^{\dagger}_{n}]
\]
where $U\in{\cal U}_{p}(n)$ iff $\circdesc(U)$ has at most $p(n)$
gates.

\branchcolor{darkgray}{The following lemma shows that there exists a state for which $\erg$
is essentially maximal while $\erg^{{\cal U}_{p}}$ is negligible.
We first show that for Haar random pure states, $\erg$ is maximal
with high probability. Then using the unitary invariance of Haar measures
and union bound, we argue that, by maximising only over unitaries
in $\uni_{p}$, one can't extract non-negligible amount of work from
a Haar random pure states.}
\begin{lem}
\label{lem:bounding-erg-Up}Let $p:\mathbb{N}\to\mathbb{N}$ be the
function $p(n)=2^{n/4}$. For the $n$-qubit Hamiltonian $H_{n}$
and any choice of $\delta(n)=2^{-n/3}$, there exists $n_{0}=n_{0}(\delta,p,G)$
such that for all $n\geq n_{0}$ there exists an $n$-qubit pure state
$\left|\psi_{n}\right\rangle $ satisfying
\begin{align*}
\erg_{\psi_{n},H_{n}} & \geq(1-\delta(n))\cdot n,\text{ and }\\
\erg^{\mathcal{U}_{p}(n)}_{\psi_{n},H_{n}} & <2\delta(n)\cdot n.
\end{align*}
 
\end{lem}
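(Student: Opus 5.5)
We use the probabilistic method: sample $\ket{\psi_n}$ from the Haar measure (\Cref{def:The-Haar}) and show that, for large $n$, both inequalities hold with positive probability. Throughout, $H_n=\sum_{i=1}^n X_i$ has $\tr(H_n)=0$, $\norm{H_n}_\infty=n$ and $E_{\min}(H_n)=-n$. The dimension is $d=2^n$.

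\emph{Step 1 (full ergotropy).} By \Cref{fact:erg-of-pure-states}, $\erg_{\psi_n,H_n}=\bra{\psi_n}H_n\ket{\psi_n}+n$. It therefore suffices to show $\bra{\psi_n}H_n\ket{\psi_n}\geq-\delta(n) n$. Apply \Cref{lem:Levy-lemma-on-herm} with $\varepsilon=\delta(n)n$. This gives
\[
\Pr\left[\left|\bra{\psi_n}H_n\ket{\psi_n}\right|\geq\delta(n)n\right]\leq2\exp\left(-\frac{2^{n}\delta(n)^2}{18\pi^3}\right)=2\exp\left(-\frac{2^{n/3}}{18\pi^3}\right).
\]
This is doubly-exponentially small.

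\emph{Step 2 (bounded-circuit ergotropy).} Fix a unitary $U\in\uni_p(n)$. By left invariance of the Haar measure, $U\ket{\psi_n}$ is again Haar distributed. Applying \Cref{lem:Levy-lemma-on-herm} to $U\ket{\psi_n}$ gives $\left|\bra{\psi_n}U^\dagger H_nU\ket{\psi_n}\right|<\delta(n)n$, except with the same probability $2\exp(-2^{n/3}/18\pi^3)$. Equivalently, one can apply the lemma to the observable $U^\dagger H_n U$, which has the same trace and norm. On the intersection of this event with the event of Step 1,
\[
\tr[H_n\psi_n]-\tr[H_nU\psi_nU^\dagger]<2\delta(n)n .
\]
We then take a union bound over all $U\in\uni_p(n)$; note $\uni_p(n)$ contains the identity, so it covers Step 1 too. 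By \Cref{lem:circuit-count}, $|\uni_p(n)|\leq C^{p(n)}n^{2p(n)}=\exp\left(\O(2^{n/4}\log n)\right)$ for a constant $C$ depending on $G$. (Here we use the explicit form $(n^2|G|)^{p}$ rather than the sloppy $\O(n^{p})$.)

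\emph{Step 3 (main obstacle: comparing the counts).} The failure probability is at most
\[
(|\uni_p(n)|+1)\cdot2\exp\left(-\frac{2^{n/3}}{18\pi^3}\right)\leq2\exp\left(\O(2^{n/4}\log n)-\frac{2^{n/3}}{18\pi^3}\right).
\]
Since $2^{n/3}$ dominates $2^{n/4}\log n$, this is below $1$ for all $n\geq n_0(\delta,p,G)$. Hence some $\ket{\psi_n}$ satisfies every event simultaneously, which yields both inequalities.

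The only delicate point is this exponent comparison. It is why $p(n)=2^{n/4}$ and $\delta(n)=2^{-n/3}$ are chosen so that $d\delta^2=2^{n/3}\gg p\log n$. We must also make sure the constants from $|G|$ and the $n^2$ wire choices only enter the $\O(\cdot)$ in the exponent, which is what makes $n_0$ depend on $G$.
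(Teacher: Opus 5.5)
Your proposal is correct and follows the paper's own argument. It samples a Haar-random state, applies Levy's lemma to the energy, uses Haar invariance to transfer that concentration to $U^\dagger H_n U$ for each bounded-size circuit, and takes a union bound against the circuit count, with $2^{n}\delta(n)^2=2^{n/3}$ dominating $\log|\mathcal{U}_p(n)|=O(2^{n/4}\log n)$. Your explicit count $(n^2|G|)^{p(n)}$ is more accurate than the paper's stated $O(n^{p(n)})$. Including the identity in the union bound is also a slightly cleaner way to combine the two events than the paper's product-of-probabilities step.
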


\branchcolor{black}{\begin{proof}
We break this proof into the following two claims.
\begin{claim}
\label{claim:lower-bound-erg-for-haar}For a Haar random state over
$n$-qubits (\Cref{def:The-Haar}), the information theoretic ergotropy
is at least $n(1-\delta)$ with high probability, i.e. 
\[
\Pr_{\phi_{n}\sim\mu}\left[\erg_{\psi_{n},H_{n}}<\left(1-\delta(n)\right)n\right]\leq2\exp\left(-\dfrac{2^{n}\delta^{2}(n)}{18\pi^{3}}\right).
\]
\end{claim}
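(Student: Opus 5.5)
The plan is to reduce the claim to the concentration bound for expectation values in \Cref{lem:Levy-lemma-on-herm}, using \Cref{fact:erg-of-pure-states} to express ergotropy through the energy. For a pure state $\psi_n$ we have $\erg_{\psi_n,H_n}=\bra{\psi_n}H_n\ket{\psi_n}-E_{\min}(H_n)$. For $H_n=\sum_{i=1}^n X_i$ the spectrum is $\{n-2k\}_{k=0}^n$, since $H_n$ is unitarily equivalent (via $\mathsf{Hadamard}^{\otimes n}$) to $\sum_i Z_i$. Hence $E_{\min}(H_n)=-n$, $\norm{H_n}_\infty=n$ and $\tr(H_n)=0$. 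Therefore
\[
\erg_{\psi_n,H_n}<(1-\delta)n \iff \bra{\psi_n}H_n\ket{\psi_n}<-\delta n ,
\]
and this event is contained in $\left\{\left|\bra{\psi_n}H_n\ket{\psi_n}-\tr(H_n)/2^n\right|\geq\delta n\right\}$.

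Next, apply \Cref{lem:Levy-lemma-on-herm} with $d=2^n$, $\varepsilon=\delta(n)\,n$ and $\norm{H_n}_\infty=n$. This gives probability at most
\[
2\exp\left(-\frac{2^n\delta^2n^2}{18\pi^3n^2}\right)=2\exp\left(-\frac{2^n\delta^2(n)}{18\pi^3}\right),
\]
which is exactly the stated bound. With $\delta=2^{-n/3}$ the exponent is $2^{n/3}/(18\pi^3)$, so the bound is doubly-exponentially small, as will be needed later for the union bound.

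The only steps needing care are these. First, Haar sampling in \Cref{def:The-Haar} produces $U\ket{0^n}$, which is the uniform measure on the unit sphere to which Levy's lemma applies. Second, the Lipschitz constant is already absorbed into the corollary via $\norm{H}_\infty$. Third, the spectrum computation of $H_n$ must be correct. I do not expect a real obstacle: the computation is a direct substitution, and the main point is simply to verify that the normalisation $\varepsilon=\delta n$ cancels $\norm{H_n}^2_\infty=n^2$.
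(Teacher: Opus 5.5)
Your proposal is correct and is essentially the paper's proof. Both use $E_{\min}(H_n)=-n$, $\tr(H_n)=0$ and $\norm{H_n}_\infty=n$ to reduce the event to $\bra{\psi_n}H_n\ket{\psi_n}<-\delta(n)\,n$, and then bound that event via the Levy-lemma corollary with $\varepsilon=\delta(n)\,n$; your Hadamard-conjugation justification of the spectrum is a small detail the paper leaves implicit.
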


\branchcolor{black}{\begin{proof}
Let $\phi_{n}$ be an $n$-qubit Haar random pure state, i.e. $\phi_{n}\sim\mu$.
Then for any $\varepsilon>0$, it holds that,
\begin{equation}
\Pr_{\phi_{n}\sim\mu}\left[\left|\bra{\phi_{n}}H_{n}\ket{\phi_{n}}\right|\geq\delta(n)\cdot n\right]\leq2\exp\left(-\dfrac{2^{n}\delta^{2}(n)}{18\pi^{3}}\right).\label{eq:conc-psin-hn}
\end{equation}
This follows from directly from \Cref{lem:Levy-lemma-on-herm}, since
for the Hamiltonian $H_{n},$ $\tr\left[H_{n}\right]=0$ and $-E_{\min}(H_{n})=n=E_{\max}(H_{n})=\norm{H_{n}}_{\infty}$.

From \Cref{fact:erg-of-pure-states} we know that the ergotropy of
a pure state $\phi$ with respect to the Hamiltonian $H$ can be written
as,
\[
\erg_{\phi,H}=\bra{\phi}H\ket{\phi}-E_{\min}(H).
\]
Combining this with \Cref{eq:conc-psin-hn}, we get, 
\begin{align*}
\Pr_{\phi_{n}\sim\mu}\left[\erg_{\psi_{n},H_{n}}<\left(1-\delta(n)\right)n\right] & =\Pr_{\phi_{n}\sim\mu}\left[\bra{\phi_{n}}H_{n}\ket{\phi_{n}}+n<\left(1-\delta(n)\right)n\right]\\
 & =\Pr_{\phi_{n}\sim\mu}\left[\bra{\phi_{n}}H_{n}\ket{\phi_{n}}<-\delta(n)\cdot n\right]\\
 & \leq\Pr_{\phi_{n}\sim\mu}\left[\left|\bra{\phi_{n}}H_{n}\ket{\phi_{n}}\right|\geq\delta(n)\cdot n\right]\\
 & \leq2\exp\left(-\dfrac{2^{n}\delta^{2}(n)}{18\pi^{3}}\right).
\end{align*}
\end{proof}

}
\begin{claim}[$\erg^{\mathcal{U}_{p}(n)}_{\phi'_{n},H_{n}}<2\delta(n)\cdot n$ w.h.p.]
\label{claim:upperbound-ergUp-for-haar}For a Haar random state over
$n$-qubits, the ergotropy in the setting where the circuit size is
at most $p(n)$, is upper bounded by $2\delta(n)\cdot n$, with high
probability, i.e. there exists a negligible function $\negl_{2}$
such that 
\[
\Pr_{\phi'_{n}\sim\mu}\left[\erg^{\mathcal{U}_{p}(n)}_{\phi'_{n},H_{n}}>2\delta(n)\cdot n\right]\leq\negl_{2}(n).
\]
\end{claim}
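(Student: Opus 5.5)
The plan is to reduce the restricted ergotropy to a finite maximum over circuits, control each circuit by unitary invariance of the Haar measure and Levy's lemma, and then take a union bound using \Cref{lem:circuit-count}. First, since $\tr[H_{n}]=0$, \Cref{lem:Levy-lemma-on-herm} applied to $H_{n}$ gives
\[
\Pr_{\phi'_{n}\sim\mu}\left[\left|\bra{\phi'_{n}}H_{n}\ket{\phi'_{n}}\right|\geq\delta(n)\cdot n\right]\leq2\exp\left(-\frac{2^{n}\delta^{2}(n)}{18\pi^{3}}\right),
\]
because $\norm{H_{n}}_{\infty}=n$. Next, fix any $U\in\uni_{p}(n)$. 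By left invariance of the Haar measure (\Cref{def:The-Haar}), $U\ket{\phi'_{n}}$ is again Haar distributed. Hence $\bra{\phi'_{n}}U^{\dagger}H_{n}U\ket{\phi'_{n}}$ obeys the same concentration bound. One can equivalently apply the corollary to $U^{\dagger}H_{n}U$, which has the same trace and the same norm.

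Then I will use the work decomposition. For a fixed $U$, the work is
\[
\bra{\phi'_{n}}H_{n}\ket{\phi'_{n}}-\bra{\phi'_{n}}U^{\dagger}H_{n}U\ket{\phi'_{n}}.
\]
If both terms lie in $(-\delta n,\delta n)$, this work is strictly less than $2\delta(n)n$. So the bad event for $U$ has probability at most $4\exp(-2^{n}\delta^{2}/(18\pi^{3}))$. The restricted ergotropy is the maximum of this quantity over the finite set $\uni_{p}(n)$. Therefore the event $\erg^{\uni_{p}(n)}_{\phi'_{n},H_{n}}>2\delta n$ is contained in the union of the per-$U$ bad events. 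In fact it suffices to take the union of the single event on $H_{n}$ with the events on each $U^{\dagger}H_{n}U$.

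The main step is the union bound. By \Cref{lem:circuit-count}, $|\uni_{p}(n)|\leq C\, n^{p(n)}=C\exp(2^{n/4}\ln n)$. Substituting $\delta(n)=2^{-n/3}$ gives $2^{n}\delta^{2}=2^{n/3}$. The total failure probability is then at most
\[
4C\exp\left(2^{n/4}\ln n-\frac{2^{n/3}}{18\pi^{3}}\right).
\]
Since $2^{n/3}$ dominates $2^{n/4}\ln n$ (the ratio $2^{n/12}/\ln n\to\infty$), the exponent tends to $-\Omega(2^{n/3})$. This is doubly-exponentially small and in particular negligible, which gives $\negl_{2}$.

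The only care needed is choosing the exponents so that the circuit count is beaten. This is exactly why $p=2^{n/4}$ and $\delta=2^{-n/3}$ are paired: we need $2^{n}\delta^{2}\gg p\log n$. The same comparison supplies the threshold $n_{0}$ used in the enclosing lemma. That lemma then follows by combining this claim with \Cref{claim:lower-bound-erg-for-haar}: both failure probabilities are below $1/2$ for large $n$, so a single state satisfies both properties.
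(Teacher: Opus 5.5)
Your proposal is correct and follows the paper's proof essentially step for step. Both use Levy's lemma for $H_n$, use Haar invariance to carry the same bound to each $U^{\dagger}H_nU$ with $U\in\uni_p(n)$, take a union bound over the circuit count of \Cref{lem:circuit-count}, and note that the restricted ergotropy can exceed $2\delta(n)n$ only if one of the $|\cdot|\geq\delta(n)n$ events occurs. Your exponent bookkeeping, with $2^{n}\delta^{2}=2^{n/3}$ beating $2^{n/4}\ln n$ while keeping the $1/(18\pi^{3})$ constant, is cleaner than the paper's, and it still holds if the more honest count $(cn^{2})^{p(n)}$ replaces $O(n^{p(n)})$.
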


\branchcolor{black}{\begin{proof}
From \Cref{eq:conc-psin-hn} we know that 
\[
\Pr_{\phi'_{n}\sim\mu}\left[\left|\bra{\phi'_{n}}H_{n}\ket{\phi'_{n}}\right|\geq\delta(n)\cdot n\right]\leq2\exp\left(-\dfrac{2^{n}\delta^{2}(n)}{18\pi^{3}}\right).
\]
Since Haar unitaries are left and right unitary invariant, for any
fixed $n$-qubit unitary $V$, the following holds,
\begin{align*}
\Pr_{\phi'_{n}\sim\mu}\left[\left|\bra{\psi}V^{\dagger}HV\ket{\psi}\right|\geq\delta(n)\cdot n\right] & \leq2\exp\left(-\dfrac{2^{n}\delta^{2}(n)}{18\pi^{3}}\right)\\
\implies\Pr_{\phi'_{n}\sim\mu}\left[\exists V\in\mathcal{U}_{p}(n)\text{ s.t. }\left|\bra{\psi}V^{\dagger}HV\ket{\psi}\right|\geq\delta(n)\cdot n\right] & \leq2\exp\left(-\dfrac{2^{n}\delta^{2}(n)}{18\pi^{3}}\right)\cdot\left|\mathcal{U}_{p}(n)\right| & \text{through union bound}\\
\implies\Pr_{\phi'_{n}\sim\mu}\left[\exists V\in\mathcal{U}_{p}(n)\text{ s.t. }\left|\bra{\psi}V^{\dagger}HV\ket{\psi}\right|\geq\delta(n)\cdot n\right] & \leq2\exp\left(-\dfrac{2^{n}\delta^{2}(n)}{18\pi^{3}}\right)\cdot\O(n^{p(n)}) & \text{}\\
 & =\mathcal{O}\left(n^{p(n)}/2^{2^{n}\delta^{2}(n)}\right) & \text{for }p(n)=2^{n/4}\text{ and }\\
 & =\negl\left(n\right) & \delta^{2}(n)=2^{-n/3}.
\end{align*}
We get the third inequality because the size of $\mathcal{U}_{p}(n)\leq\O(n^{p(n)})$
(from \Cref{lem:circuit-count}).

We now bound the ergotropy optimised over the set $\uni_{p}(n)$ as,

\begin{align*}
\erg^{\mathcal{U}_{p}(n)}_{\phi'_{n},H_{n}} & =\bra{\phi'_{n}}H_{n}\ket{\phi'_{n}}-\min_{V_{n}\in\mathcal{U}_{p}(n)}\bra{\phi'_{n}}V_{n}H_{n}V^{\dagger}_{n}\ket{\phi'_{n}}\\
 & \Downarrow\\
\Pr_{\phi'_{n}\sim\mu}\left[\erg^{\mathcal{U}_{p}(n)}_{\phi_{n},H_{n}}>2\delta(n)\cdot n\right] & \leq1-\left(1-\Pr_{\phi_{n}\sim\mu}\left[\left|\bra{\phi'_{n}}H_{n}\ket{\phi'_{n}}\right|\geq\delta(n)\cdot n\right]\right)\\
 & \quad\quad\times\left(1-\Pr_{\phi_{n}\sim\mu}\left[\exists V\in\mathcal{U}_{p}\ s.t.\left|\bra{\phi'_{n}}V_{n}H_{n}V^{\dagger}_{n}\ket{\phi'_{n}}\right|\geq\delta(n)\cdot n\right]\right)\\
 & \Downarrow\\
\Pr_{\phi'_{n}\sim\mu}\left[\erg^{\mathcal{U}_{p}(n)}_{\phi_{n},H_{n}}>2\delta(n)\cdot n\right] & \leq\Pr_{\phi_{n}\sim\mu}\left[\left|\bra{\phi'_{n}}H_{n}\ket{\phi'_{n}}\right|\geq\delta(n)\cdot n\right]\\ & \qquad \qquad +\Pr_{\phi_{n}\sim\mu}\left[\exists V\in\mathcal{U}_{p}\text{ s.t. }\left|\bra{\phi'_{n}}V_{n}H_{n}V^{\dagger}_{n}\ket{\phi'_{n}}\right|\geq\delta(n)\cdot n\right]\\
 & \leq2\exp\left(-\dfrac{2^{n}\delta^{2}(n)}{18\pi^{3}}\right)+\mathcal{O}\left(2^{n}/2^{2^{n}\delta^{2}(n)}\right)=\negl\left(n\right).
\end{align*}
\end{proof}

}

Putting together \Cref{claim:lower-bound-erg-for-haar,claim:upperbound-ergUp-for-haar}
we get the following:
\begin{enumerate}
\item For a $1-\negl_{1}(n)$ fraction of states, $\erg_{\phi_{n},H_{n}}\geq\left(1-\delta(n)\right)\cdot n$,
for some negligible function $\negl_{1}$
\item For a $1-\negl_{2}(n)$ fraction of states, $\erg^{\mathcal{U}_{p}(n)}_{\phi'_{n},H_{n}}\leq2\delta(n)\cdot n$
where $\negl_{2}$ is as in \Cref{claim:upperbound-ergUp-for-haar}
\end{enumerate}
Therefore there exists at least one $n$-qubit pure state $\ket{\psi_{n}}$
for which $\erg_{\psi_{n},H_{n}}\geq\left(1-\delta(n)\right)\cdot n$
and $\erg^{\mathcal{U}_{p}}_{\psi_{n},H_{n}}\leq2\delta(n)\cdot n$.
\end{proof}

}

\branchcolor{darkgray}{It is immediate that the proof generalises to any extensive Hamiltonian
(i.e. Hamiltonians with polynomial operator norm). With this proof
in place, we are now ready to state the main result of this section.}
\begin{thm}[Unconditional Existential Separation]
\label{thm:unconditional-existential-separation}Let $\mathsf{poly}(n):=(1-o(1))n$
and $\fat H=\{H_{n}\}_{n}$. Then, there exists a family of pure states
$\fat{\rho}=\left\{ \den{\psi_{n}}\right\} _{n}$ and a negligible
function $\negl$ such that 
\begin{align*}
\erg_{\fat{\rho},\fat H} & \geq_{\infty}\mathsf{poly}\\
\erghat_{\fat{\rho},\fat H} & \leq_{\infty}\negl
\end{align*}
where we are using \Cref{nota:function-comparison,nota:erg-fnleq-et-al}.
\end{thm}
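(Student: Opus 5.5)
The plan is to take the family of pure states supplied by \Cref{lem:bounding-erg-Up} and convert the two circuit-size statements there into statements about functions of $n$ in the sense of \Cref{nota:function-comparison,nota:erg-fnleq-et-al}. Concretely, for every $n\geq n_{0}$ fix $\ket{\psi_{n}}$ as in \Cref{lem:bounding-erg-Up}, with $p(n)=2^{n/4}$ and $\delta(n)=2^{-n/3}$. For $n<n_{0}$ take any state, since these finitely many indices are irrelevant to $\fnleq,\fngeq$. Set $\fat{\rho}=\{\den{\psi_{n}}\}_{n}$, and use the security parameter $\lambda=n$ throughout.

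\emph{Ergotropy lower bound.} By \Cref{lem:bounding-erg-Up}, for all $n\geq n_{0}$ we have $\erg_{\psi_{n},H_{n}}\geq(1-2^{-n/3})n$. This is of the form $(1-o(1))n$, so $\erg_{\fat{\rho},\fat H}\fngeq\poly$ for the polynomial $\poly(n)=(1-2^{-n/3})n$. This is the specific instance of $(1-o(1))n$ that I would commit to.

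\emph{Computational ergotropy upper bound.} Fix any $\A\in\QPT$ with access to $O_{\fat{\rho}}$ and $\hamdesc$. Its runtime is bounded by some polynomial $t(n)$, so every circuit $\circdesc(U_{n})$ it outputs has at most $t(n)$ gates. Since $t(n)\leq2^{n/4}=p(n)$ for all large $n$, every output unitary lies in $\uni_{p}(n)$, irrespective of the algorithm's internal randomness or the copies of $\psi_{n}$ it consumed. Hence, pointwise for every output,
\[
\tr\left[H_{n}\left(\psi_{n}-U_{n}\psi_{n}U^{\dagger}_{n}\right)\right]\leq\erg^{\uni_{p}(n)}_{\psi_{n},H_{n}}<2\cdot2^{-n/3}n.
\]
Averaging over the algorithm's randomness in \Cref{def:work-ex-algo} gives $\W_{\fat{\rho},\fat H,\A}(n)<2n2^{-n/3}$ for all large $n$. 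The function $\negl(n):=2n2^{-n/3}$ is negligible and independent of $\A$, which is exactly the uniform bound that \Cref{eq:ub-C-erg} demands. I would also note that the bound holds for any circuit the algorithm could output, so learning from copies cannot help.

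\emph{Main obstacle.} The only delicate points are bookkeeping ones. First, the algorithm's circuit size must be polynomial, so that it falls under the counting bound of \Cref{lem:circuit-count}. This uses the gate set $G$ fixed in \Cref{lem:bounding-erg-Up}; a different universal gate set costs at most a constant or polylogarithmic compilation overhead, which still stays below $2^{n/4}$. Second, oracle gates are not involved here, because the extraction circuit acts only on the system. Third, the existence of the family is non-constructive, but the theorem asks only for existence, so this is not a problem.
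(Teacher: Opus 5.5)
Your proposal is correct and follows the paper's proof essentially step for step. You take the states from \Cref{lem:bounding-erg-Up}, note that every circuit output by a QPT algorithm eventually lies in $\mathcal{U}_{p}(n)$, and conclude with the algorithm-independent bound $2n\,2^{-n/3}$. Your bookkeeping is, if anything, slightly more careful than the paper's: you fill in $n<n_{0}$ arbitrarily and commit to $(1-2^{-n/3})n$, whereas the paper states $(1-\mathcal{O}(2^{-n/2}))n$ ``for all $n$''.
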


\branchcolor{black}{\begin{proof}
From \Cref{lem:bounding-erg-Up} we know that there exists a family
of states $\left\{ \psi_{n}\right\} _{n}$ such that $\erg_{\fat{\rho},\fat H}(n)\geq(1-\delta(n))n$
and $\erg^{\mathcal{U}_{p}(n)}_{\psi_{n},H_{n}}<2\delta(n)\cdot n$
for $p\in\O(2^{n/4})$. Since the output of any QPT algorithm is at
most polynomially sized, for each QPT algorithm $\A,$ there exists
an $n_{\A}\in\N$ such that for all $n\geq n_{\A},$the circuit whose
description is output by $\A$ is in $\uni_{p}(n)$. Therefore, $\erghat_{\fat{\rho},\fat H}\leq_{\infty}\erg^{{\cal U}_{p}}_{\fatrhoH}\leq_{\infty}\negl$
(using \Cref{nota:function-comparison,nota:erg-fnleq-et-al}) where
$\erg^{{\cal U}_{p}}_{\fatrhoH}(n):=\erg^{\mathcal{U}_{p}(n)}_{\psi_{n},H_{n}}$.
Therefore, we have $\erghat_{\fat{\rho},\fat H}\leq_{\infty}\negl$
while $\erg_{\fat{\rho},\fat H}(n)\geq(1-\O(2^{-n/2}))\cdot n=(1-o(1))\cdot n$
for all $n\in\N$.
\end{proof}

}

\subsection{Defining computational ergotropy for distributions $\protect\erghat_{{\cal D}}$}

\branchcolor{darkgray}{For the explicit separation, we first introduce ergotropy for distributions. }

Let ${\cal D}$ denote a probability distribution over families of
states $(\fat{\rho}_{i})_{i}$ and families of Hamiltonian $(\fat H_{i})_{i}$.
We use $\fatrhoH\leftarrow{\cal D}$ to denote the result of sampling
a pair of state and Hamiltonian families from ${\cal D}$.
\begin{defn}[Bounds on Computational Ergotropy for a distribution ${\cal D}$ ($\cerg_{{\cal D}},\erghat_{{\cal D}}$)]
\label{def:computational-erg-D} Let ${\cal C}$ be a class of Turing
machines (e.g. QPT, PPT), $\ub,\lb:\mathbb{N}\to\mathbb{R}$ be functions,
and ${\cal D}$ be a distribution over families of states and Hamiltonians
(as described above). We say that the \emph{computational ergotropy
for the class ${\cal C}$ and distribution ${\cal D}$}, denoted by
$\cerg_{{\cal D}}$, is $(\lb,\ub)$ bounded if 
\begin{align*}
\forall\quad{\cal A}\in{\cal C},\text{ it holds that }\E_{\fatrhoH\leftarrow{\cal D}}\left[{\cal W}_{\fatrhoH,{\cal A}}\right]\fnleq & \ub,\text{ and }\\
\exists\quad{\cal A}\in{\cal C}\ \text{s.t.}\quad\E_{\fatrhoH\leftarrow{\cal D}}\left[{\cal W}_{\fatrhoH,{\cal A}}\right]\fngeq & \lb.
\end{align*}
When ${\cal C}$ is taken to be the class of PPT (or QPT) algorithms,
we say computational ergotropy (for the distribution ${\cal D}$),
denoted by $\erghat_{{\cal D}}$, is $(\lb,\ub)$ bounded. 
\end{defn}

\branchcolor{darkgray}{It would be helpful to make statements of the form ``$\cerg$ is
negligible''. However, stating $\cerg\fnleq\negl$ for some negligible
function turns out to be too strong in most cases---it requires that
for every algorithm ${\cal A}\in C$, the work extracted is bounded
by the same negligible function $\negl$. Therefore, we instead use
the following to captures the same intuition that no algorithm can
extract non-negligible work.}
\begin{notation}
\label{nota:erghat-is-negl}By \emph{$\cerg$ is negligible} we denote
the following property: there is no non-negligible function $\nonnegl$
for which $\cerg\fngeq\nonnegl.$
\end{notation}

\branchcolor{darkgray}{The perspective of a distribution over the family of states and Hamiltonians
is particularly helpful in the random oracle model. One can define
families of states and Hamiltonians that depends on the random oracle.
}
\begin{rem}[Ergotropy in the random oracle model]
 Consider the random oracle model where $\fat O$ is the random oracle.
\begin{enumerate}
\item The probability distribution ${\cal D}$ is over the family of states,
Hamiltonians $\fatrhoH$ and the oracle $\fat O$.
\item For a particular family of states, Hamiltonians and oracle, $(\fatrhoH,\fat O)\leftarrow{\cal D}$,
we define
\[
\W_{\fat{\rho},\fat H,\A^{\fat O}}(\secp):=\E\left[\tr\left[H_{\secp}\left(\rho_{\secp}-U_{\secp}\rho_{\secp}U^{\dagger}_{\secp}\right)\right]:\begin{array}{c}
\circdesc(U_{\secp})\leftarrow\A^{O_{\fat{\rho}},\hamdesc,\fat O}(1^{\secp})\\
U_{\secp}\in\uni(n(\secp))
\end{array}\right]
\]
where $\E$ is for a fixed $\fat{\rho},\fat H,\fat O$. 
\item We say $\cerg_{{\cal D}}$, is $(\lb,\ub)$ bounded if
\begin{align*}
\forall\quad{\cal A}\in{\cal C},\text{ it holds that }\E_{\fatrhoH,\fat O\leftarrow{\cal D}}\left[{\cal W}_{\fatrhoH,{\cal A}^{\fat O}}\right]\fnleq & \ub,\text{ and }\\
\exists\quad{\cal A}\in{\cal C}\ \text{s.t.}\quad\E_{\fatrhoH,\fat O\leftarrow{\cal D}}\left[{\cal W}_{\fatrhoH,{\cal A}^{\fat O}}\right]\fngeq & \lb.
\end{align*}
\end{enumerate}
\end{rem}

\subsection{Constructive separation between $\protect\erg_{{\cal D}}$ from $\protect\erghat_{{\cal D}}$}

\branchcolor{darkgray}{In \Cref{thm:unconditional-existential-separation} above, we showed
the \emph{existence} of states and Hamiltonians for which ergotropy
is near-maximally separated from its computational variant. We now
leverage computational hardness assumptions to \emph{explicitly} specify
states and Hamiltonians that exhibit \emph{near-maximal} separation. 

Here, we first work in the random oracle model and subsequently show
how to instantiate the oracle using a pseudorandom function to obtain
a separation in the plain model. 

The separation in the random oracle model is established through a
reprogramming technique and arguing that the output of a length doubling
random oracle (on a hidden input) is indistinguishable from a uniformly
random string. }

\subsubsection{Constructive Separation in the random oracle model}\label{subsec:A-constructive-separation}

\branchcolor{darkgray}{We start by specifying the states and Hamiltonians that we use for
the separation. These can be prepared efficiently. }

Let $n:\N\to\N$ be a polynomial which describes the size of the
system. By $\mathcal{F}$ we denote the set of functions 
\[
{\cal F}:=\left\{ g:g=\left\{ g_{\secp}\right\} _{\secp}\text{ where }g_{\secp}\text{ is from }\binset^{n(\secp)}\to\binset^{2n(\secp)}\right\} .
\]
Let $\fat O_{f}=\{O_{\secp}\}_{\secp}$ be the oracle, corresponding
to a \emph{length doubling function} $f=\{f_{\secp}\}_{\secp}$. 
\begin{defn}
\label{def:explicit-fam-state-Ham-}We define the distribution ${\cal D}$
over $\left(\fatrhoH,\fat O\right)$ in the random oracle model as follows.
\end{defn}

\begin{itemize}
\item The length doubling function  $f=\{f_{\secp}\}_{\secp}$ is uniformly
sampled from the set $\mathcal{F}$. 
\item For each fixed $\fat O_{f}$ (the oracle corresponding to $f$) we
define the families of states $\fat{\rho}^{\fat O_{f}}=\left\{ \rho_{\secp}\right\} _{\secp}$
and $\fat{\sigma}^{\fat O_{f}}=\left\{ \sigma_{\secp}\right\} _{\secp}$
over $2n(\lambda)$-qubits where 
\begin{equation}
\rho_{\secp}:=\frac{1}{2^{n(\lambda)}}\sum_{r\in\binset^{n(\secp)}}\den{f_{\secp}(r)}\label{eq:fLambda}
\end{equation}
 and 
\begin{equation}
\sigma_{\secp}:=\frac{1}{2^{2n(\lambda)}}\sum_{u_{\secp}\in\binset^{2n(\secp)}}\den{u_{\secp}}.\label{eq:maxmixedstate}
\end{equation}
 Clearly, $\sigma_{\lambda}$ is the maximally mixed state (and independent
of $\fat O_{f}$).
\item $\fat H:=\{H_{\lambda}\}_{\lambda}$ is independent of $\fat O$ and
is defined as 
\begin{equation}
H_{\lambda}:=\mathbb{I}\otimes H_{\hw(n(\lambda))}\label{eq:n_ham_n}
\end{equation}
 acting on $2n(\lambda)$ qubits.
\end{itemize}

\branchcolor{darkgray}{Using the O2H lemma, we show that no efficient procedure can distinguish
whether it holds $\fat{\rho}$ or $\fat{\sigma}$, even given access
to the oracle $\fat O$.}
\begin{lem}
\label{lem:expanding-RO-uniformity}Let $\fat O$ be the length doubling
random oracle, and $q_{1},q_{2}:\mathbb{N}\to\mathbb{N}$ be polynomials.
For every $q_{1}$-query quantum algorithm $\mathcal{B}$ with oracle
access to $\fat O$, there exists a $q_{2}$-query quantum algorithm
$\A$ with oracle access to $O_{\secp}$ corresponding to the function
$f_{\secp}$ such that,
\begin{align}
 & \left|\Pr_{{\cal D}}\left[\mathcal{B}^{\fat O}(1^{\secp},\rho_{\secp})=1\right]-\Pr_{{\cal D}}\left[\mathcal{B}^{\fat O}(1^{\secp},\sigma_{\secp})=1\right]\right|\nonumber \\
= & \left|\Pr_{{\cal D}}\left[\cA^{O_{\secp}}(1^{\secp},\rho_{\secp})=1\right]-\Pr_{{\cal D}}\left[\cA^{O_{\secp}}(1^{\secp},\sigma_{\secp})=1\right]\right|\label{eq:BO=00003DAO}\\
\leq & \negl(\secp).\label{eq:AO=00003Dnegl}
\end{align}
\end{lem}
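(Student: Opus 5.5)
The proof has two parts: an equality that reduces the whole oracle family to the single relevant oracle $O_\secp$, and then an O2H-based bound on the advantage.

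\emph{Step 1: the equality \eqref{eq:BO=00003DAO}.} The oracle family $\fat O=\{O_{\secp'}\}_{\secp'}$ is sampled independently across $\secp'$. Both $\rho_\secp$ and $\sigma_\secp$ depend only on $f_\secp$. So $\A$ can simulate $\mathcal B$'s queries to every $O_{\secp'}$ with $\secp'\neq\secp$ by sampling those functions itself. Efficiency can be kept by simulating each such random function with a $2q_1$-wise independent function, which is perfectly indistinguishable for $q_1$ quantum queries by Zhandry's result. Queries to $O_\secp$ are forwarded unchanged. This makes the two probabilities identical, with $q_2=q_1$.

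\emph{Step 2: the bound \eqref{eq:AO=00003Dnegl}.} Here I treat the input state as a classical sample. Since $\rho_\secp$ is the uniform mixture of $\den{f_\secp(r)}$, running $\A$ on $\rho_\secp$ is the same as sampling $r\leftarrow\binset^{n}$ and running $\A^{O_\secp}(1^\secp,f_\secp(r))$. Likewise, running on $\sigma_\secp$ is the same as running on $u\leftarrow\binset^{2n}$. If the algorithm receives several copies, say $k$, I use independent $r_1,\dots,r_k$; I handle one copy below for clarity, and the general case follows by a hybrid over copies or by taking $S=\{r_1,\dots,r_k\}$ in O2H.

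The key move is a reprogramming step. Define $O'$ equal to $O_\secp$ except that $O'(r)=u$ for a fresh uniform $u$. Then the pair $(O',u)$ has exactly the same joint distribution as $(O_\secp, f_\secp(r))$, because both are a uniformly random function together with its value at a uniform hidden point $r$. Hence
\[
\Pr[\A^{O_\secp}(f_\secp(r))=1]=\Pr[\A^{O'}(u)=1].
\]
Meanwhile, $\Pr[\A^{O_\secp}(\sigma_\secp)=1]=\Pr[\A^{O_\secp}(u)=1]$, where now $u$ is independent of $O_\secp$. The difference between these two is therefore exactly the setting of \Cref{lem:O2H}: the input is $z=u$, the set is $S=\{r\}$, and the oracles $O_\secp$ and $O'$ agree outside $S$. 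O2H gives an advantage of at most $2q\sqrt{P_{\mathsf{guess}}}$.

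It remains to bound $P_{\mathsf{guess}}$. In the guessing experiment the extractor $\mathcal B^{O_\secp}(u)$ runs with $O_\secp$ and $u$, and both are independent of $r$. So each measured query input equals $r$ with probability $2^{-n}$, and a union bound gives $P_{\mathsf{guess}}\le q\,2^{-n}$. The advantage is then at most $2q^{3/2}2^{-n/2}$, which is negligible for polynomial $q$ and $n$. For $k$ copies, $|S|=k$ and the bound becomes $2q\sqrt{qk2^{-n}}$, still negligible.

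The main obstacle is getting the reprogramming distribution exactly right, so that $(O',u)$ matches $(O_\secp,f_\secp(r))$ while $(O_\secp,u)$ matches the $\sigma_\secp$ experiment. This requires $u$ to be independent of $O_\secp$ at all points other than $r$, which it is. The multi-copy case needs some care: with probability at most $k^2 2^{-n}$ two of the $r_i$ collide, and that event must be absorbed into the error.
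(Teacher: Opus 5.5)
Your proposal is correct and follows the paper's proof essentially step for step. The paper likewise simulates all oracles other than $O_\secp$ internally to obtain the equality, then runs the hybrid $\A^{O_\secp}(O_\secp(r))\to\A^{O'}(u)\to\A^{O_\secp}(u)$ with O2H for $S=\{r\}$, bounding $P_{\mathsf{guess}}$ via the independence of $r$ from $(O_\secp,u)$. Your extra remarks (simulating the other oracles with $2q_1$-wise independent functions for efficiency, and the multi-copy extension with its collision term) go slightly beyond what the paper states, but are not needed for the single-copy lemma as stated.
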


\branchcolor{black}{\begin{proof}
We first prove that there exists an algorithm ${\cal A}$ such that
the output distributions of ${\cal B}^{\fat O}(1^{\lambda},\tau_{\lambda})$
and ${\cal A}^{O_{\lambda}}(1^{\lambda},\tau_{\lambda})$ are identical
for any $\tau_{\lambda}$ that depends only on $O_{\lambda}$. Observe
that each $f_{\lambda}$ is sampled uniformly and independently from
${\cal F}_{\lambda}$.  

Now, for every algorithm ${\cal B}^{\fat O}$, one can construct such
an algorithm ${\cal A}^{O_{\lambda}}$ as follows: ${\cal A}$ runs
${\cal B}$ identically except when ${\cal B}$ makes queries. When
${\cal B}$ makes a query to $\fat O$ at $\{0,1\}^{n(\lambda)}$,
then ${\cal A}$ uses $O_{\lambda}$, and when ${\cal B}$ makes any
other query, it simulates a random oracle independently. This establishes
\Cref{eq:BO=00003DAO}.

To establish \Cref{eq:AO=00003Dnegl}, we use a hybrid argument and
the One-Way to Hiding (O2H) Lemma. We define a sequence of hybrid
experiments $\mathsf{Exp}_{0},\mathsf{Exp}_{1},\mathsf{Exp}_{2}$
as follows
\begin{itemize}
\item $\mathsf{Exp}_{0}$: The challenger samples $O_{\secp}\leftarrow\mathcal{F_{\secp}}$
and $r\leftarrow\{0,1\}^{n(\secp)}$. It outputs $\mathcal{A}^{O_{\secp}}(O_{\secp}(r))$.
\item $\mathsf{Exp}_{1}$: The challenger samples $O_{\secp}\leftarrow\mathcal{F_{\secp}}$,
$r\leftarrow\{0,1\}^{n(\secp)}$ and $u\leftarrow\binset^{2n(\secp)}$.
It defines $O_{\secp}'$ such that $O_{\secp}'(r)=u$ and $O_{\secp}'(x)=O_{\secp}(x)$
for all $x\neq r$. It outputs $\mathcal{A}^{O_{\secp}'}(u)$.
\item $\mathsf{Exp}_{2}$: The challenger samples $O_{\secp}\leftarrow\mathcal{F_{\secp}}$,
$r\leftarrow\{0,1\}^{n(\secp)}$ and $u\leftarrow\binset^{2n(\secp)}$.
It outputs $\mathcal{A}^{O_{\secp}}(u)$.
\end{itemize}
Now, evidently $\Exp_{0}$ and $\Exp_{1}$ are identically distributed
(we just relabelled some variables).
\begin{claim}
$\Pr_{O_{\secp},r}\left[1\leftarrow\Exp_{0}\right]=\Pr_{O_{\secp},r,u}\left[1\leftarrow\Exp_{1}\right]$.\label{claim:exp0=00003D1}
\end{claim}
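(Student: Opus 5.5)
The plan is to show that the pair (oracle, input) handed to $\A$ has the same joint distribution in both experiments. Since $\A$'s output is a fixed (randomised) function of its oracle and its input, equal output distributions, and hence equal acceptance probabilities, follow at once. Concretely, write $\Exp_{0}$ as running $\A$ on the pair $(O_{\secp},O_{\secp}(r))$ and $\Exp_{1}$ as running $\A$ on $(O'_{\secp},u)$. By construction $u=O'_{\secp}(r)$, so $\Exp_{1}$ equals $\A^{O'_{\secp}}(O'_{\secp}(r))$. It therefore suffices to prove that the joint distribution of $(O'_{\secp},r)$ in $\Exp_{1}$ equals that of $(O_{\secp},r)$ in $\Exp_{0}$. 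The latter is a uniform function from ${\cal F}_{\secp}$ together with an independent uniform $r\in\binset^{n(\secp)}$.

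\textbf{Step 1: condition on $r$.} Fix any value $r_{0}$ of $r$ and condition on $r=r_{0}$. In $\Exp_{1}$, the values $O'_{\secp}(x)=O_{\secp}(x)$ for $x\neq r_{0}$ are independent and uniform on $\binset^{2n(\secp)}$, because $O_{\secp}$ is a uniform function sampled independently of $r$. The value $O'_{\secp}(r_{0})=u$ is uniform on $\binset^{2n(\secp)}$ and independent of all of these. Hence, conditioned on $r=r_{0}$, the function $O'_{\secp}$ is uniform over ${\cal F}_{\secp}$. This is exactly the conditional law of $O_{\secp}$ given $r=r_{0}$ in $\Exp_{0}$.

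\textbf{Step 2: average over $r$.} In both experiments $r$ is uniform, so the joint laws of $(O'_{\secp},r)$ and $(O_{\secp},r)$ coincide. Applying the deterministic map $(F,r)\mapsto(F,F(r))$ shows that the oracle–input pairs $(O'_{\secp},u)$ and $(O_{\secp},O_{\secp}(r))$ are identically distributed.

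\textbf{Step 3: conclude.} The internal randomness of $\A$, including its measurements, is independent of everything sampled by the challenger. Hence $\Pr[1\leftarrow\Exp_{0}]=\Pr[1\leftarrow\Exp_{1}]$ exactly, as claimed.

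No step is a genuine obstacle; this is a relabelling argument. The only point needing care is Step 1: one must note that the discarded value $O_{\secp}(r)$ in $\Exp_{1}$ is never used, and that $u$ is sampled independently of $O_{\secp}$ and $r$. Without these two facts the resampled oracle might fail to be uniform.
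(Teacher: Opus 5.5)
Your proposal is correct and follows the paper's own argument: both show that the oracle--input pair $(O'_{\secp},u)$ in $\Exp_{1}$ has the same joint distribution as $(O_{\secp},O_{\secp}(r))$ in $\Exp_{0}$, from which the equality of acceptance probabilities is immediate. Your version makes explicit the conditioning on $r$ and the map $(F,r)\mapsto(F,F(r))$, which the paper leaves implicit.
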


Furthermore, $\Exp_{1}$ and $\Exp_{2}$ have negligibly close output
distributions from the O2H lemma (see \Cref{lem:O2H}).
\begin{claim}
$\left|\Pr_{O_{\secp},r,u}\left[1\leftarrow\Exp_{1}\right]-\Pr_{O_{\secp},u}\left[1\leftarrow\Exp_{2}\right]\right|\leq\negl(n)$.\label{claim:exp1=00003D2}
\end{claim}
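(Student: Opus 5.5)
We prove \Cref{claim:exp1=00003D2} directly from the O2H lemma (\Cref{lem:O2H}).

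\textbf{Instantiation.} Take $O:=O'_{\secp}$ (the reprogrammed oracle) and $O':=O_{\secp}$. Take the random set $S:=\{r\}$ and the auxiliary input $z:=u$. By construction $O'_{\secp}(x)=O_{\secp}(x)$ for all $x\neq r$, so the hypothesis of the lemma holds for the joint distribution of $(S,O,O',z)$.

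Then $P_{\mathsf{left}}=\Pr[1\leftarrow\Exp_{1}]$ and $P_{\mathsf{right}}=\Pr[1\leftarrow\Exp_{2}]$, where $\A$ makes $q=q_{2}(\secp)$ queries. The lemma gives
\[
\left|\Pr[1\leftarrow\Exp_{1}]-\Pr[1\leftarrow\Exp_{2}]\right|\leq2q\sqrt{P_{\mathsf{guess}}},
\]
so it remains to bound $P_{\mathsf{guess}}$.

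\textbf{Bounding $P_{\mathsf{guess}}$.} Here $P_{\mathsf{guess}}$ is the probability that the extractor $\mathcal{B}^{O'_{\secp}}(u)$ measures $r$ in some query input register. The key observation concerns the joint distribution of $(O'_{\secp},u)$ in $\Exp_{1}$:
\begin{itemize}
\item $O'_{\secp}$ is a uniformly random function, since it agrees with the uniform $O_{\secp}$ off $r$ and takes the fresh uniform value $u$ at $r$;
\item $u=O'_{\secp}(r)$.
\end{itemize}
Rewrite the experiment by sampling a uniform function $F$ and a uniform $r$, and setting $u:=F(r)$. Then $P_{\mathsf{guess}}$ is the probability that a $q$-query algorithm, given $F(r)$ and oracle access to $F$, outputs a set $T$ of at most $q$ strings with $r\in T$.

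This is a preimage-finding task for a length-doubling random function. Since $F:\binset^{n}\to\binset^{2n}$ is length-doubling, $r$ is information-theoretically almost determined by $F(r)$, so counting alone cannot bound the success probability. We need a query lower bound.

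\textbf{Main obstacle: the inversion bound.} The plan is a second application of O2H. Reprogram $F$ at $r$ to a fresh independent value, so that $u$ becomes independent of the oracle. In that world:
\begin{itemize}
\item $u$ carries no information about $r$;
\item each measured query hits $r$ with probability at most $2^{-n}$, giving a success probability of at most $q\cdot2^{-n}$.
\end{itemize}
Moving back to the real world via O2H (in its variant bounding the change in a success event, again with $S=\{r\}$) costs only additive terms of order $q\sqrt{\text{guess}}$, where the guess probability is again at most $q^{2}2^{-n}$ in the independent world.

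If one prefers to avoid this recursion, there are two alternatives:
\begin{itemize}
\item cite the known $\O(q^{2}/2^{n})$ quantum query bound for inverting a random function on a random image;
\item use the semiclassical O2H of \cite{semiclassical_O2H_19}, which directly bounds $P_{\mathsf{guess}}$ by $\O(q^{2}/2^{n})$ in this setting because $r$ is uniform and independent of the punctured oracle.
\end{itemize}
Either way we obtain $P_{\mathsf{guess}}\leq\poly(\secp)\cdot2^{-n(\secp)/2}$ or better.

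\textbf{Conclusion.} Substituting this bound, the distinguishing gap is at most $2q\sqrt{\poly(\secp)2^{-\Omega(n(\secp))}}$. This is negligible because $q=q_{2}(\secp)$ is polynomial and $n$ is a polynomial in $\secp$, which proves \Cref{claim:exp1=00003D2}.
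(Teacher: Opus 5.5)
Your argument is correct, but it takes a detour that the paper avoids.

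\textbf{What the paper does.} The hypothesis of the O2H lemma is symmetric in $O$ and $O'$, and the paper uses this by orienting the lemma the other way from you. Its extractor runs $\mathcal{A}^{O_\lambda}(u)$, i.e.\ relative to the unprogrammed oracle of $\Exp_2$. In that world $r$ is independent of both $O_\lambda$ and $u$. So any string the extractor measures equals $r$ with probability $2^{-n}$. This gives $P_{\mathsf{guess}}\le \mathrm{poly}\cdot 2^{-n}$, and hence a gap of order $q\,2^{-n/2}$, in one line.

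\textbf{What your orientation costs.} You put the reprogrammed oracle $O'_\lambda$ in the extractor. That makes $u=O'_\lambda(r)$ a hint about $r$ and turns $P_{\mathsf{guess}}$ into a random-function inversion problem. Your second O2H hop moves the extractor into a world where the oracle and $u$ are independent of $r$. That hop is exactly what the paper gets for free from its choice of orientation.

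\textbf{Your detour is sound.} In the second hop you should place $r$ in the auxiliary input $z$, so that the event $r\in T$ can be evaluated after the last query. The lemma permits this, since $z$ may be correlated with $S$. Because the algorithm uses $r$ only after its final query, the new extractor's measured query is still independent of $r$ in the independent world.

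\textbf{What it gains or loses.} Composing two square-root losses gives roughly $q^{3/2}2^{-n/4}$, instead of the paper's $q\,2^{-n/2}$. Your alternatives bring in heavier tools: an external inversion bound, or semiclassical O2H. Note that semiclassical O2H would replace your first O2H application via a punctured-oracle find probability; it does not directly bound the standard $P_{\mathsf{guess}}$.

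Your orientation gains nothing here. The lesson is that when one of the two hybrid oracles is independent of the hidden set, the extractor should be run relative to that oracle.
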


We defer the proofs of these claims to \Cref{sec:defElementary} in
the Appendix, as these are standard arguments. Now using the hybrid
argument, we conclude $\left|\Pr_{O_{\secp},r}\left[1\leftarrow\Exp_{0}\right]-\Pr_{O_{\secp},u}\left[1\leftarrow\Exp_{2}\right]\right|\leq\negl(n)$
proving the theorem statement.
\end{proof}

}

\branchcolor{darkgray}{The following allows us to show that when two states are computationally
indistinguishable, then the average energy one can efficiently extract
is essentially the same. }

\begin{lem}[Indistinguishability of states under efficiently preparable observables]
\label{lem:trHrho=00003DtrHsigma}Let $n:\N\to\N$ be a polynomial
describing the size of the system and $\fat{\tau_{1}}=\left\{ \tau_{1,\secp}\right\} _{\secp}$
and $\fat{\tau_{2}}=\left\{ \tau_{2,\secp}\right\} _{\secp}$ be families
of states, such that the $n(\secp)$-qubit state $\tau_{1,\secp}$
is computationally indistinguishable from $\tau_{2,\secp}$ when given
a single copy of the state. For any fixed $k\in\N$ and any family
of extensive k-local Hamiltonians $\fat H=\left\{ H_{\secp}\right\} _{\secp}$
(see \Cref{def:k-local-ham}), there exists a negligible function
$\negl$ and $\lambda_{0}\in\N$, such that
\begin{equation}
\left|\tr\left[H_{\secp}\tau_{1,\secp}\right]-\tr\left[H_{\secp}\tau_{2,\secp}\right]\right|\leq\negl(\lambda)\label{eq:trHsame_whentautau}
\end{equation}
for some $\lambda>\lambda_{0}$. Furthermore, if $U_{\lambda}$ is
produced by a QPT algorithm on input $1^{\lambda}$, then 
\begin{equation}
\left|\tr\left[H_{\secp}U_{\lambda}\tau_{1,\secp}U^{\dagger}_{\lambda}\right]-\tr\left[H_{\secp}U_{\lambda}\tau_{2,\secp}U^{\dagger}_{\lambda}\right]\right|\leq\negl(\lambda).\label{eq:trUHUsame_whentautau}
\end{equation}
\end{lem}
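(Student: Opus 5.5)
The plan is to reduce the energy difference to a distinguishing advantage of an efficient single-copy distinguisher, and then invoke computational indistinguishability.

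\textbf{Step 1: decompose and rescale.} Write the $k$-local extensive Hamiltonian as $H_\lambda=\sum_{i=1}^{m}H^{(i)}$ with $m=m(\lambda)$ polynomial, as in \Cref{def:k-local-ham}. Each $H^{(i)}=h^{(i)}\otimes\I$ with $h^{(i)}$ a $2^k\times 2^k$ Hermitian matrix. Since $\norm{H_\lambda}_\infty\in\O(n)$ and the description is efficient, I will assume (after possibly regrouping terms) that each $\norm{h^{(i)}}_\infty\le B(\lambda)$ for a polynomial $B$. Shifting by a multiple of the identity changes both traces equally, so we may take $0\le h^{(i)}\le 2B\,\I$.

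\textbf{Step 2: build the distinguisher.} The distinguisher $\mathcal{D}$ takes one copy of $\tau$ and proceeds as follows:
\begin{enumerate}
\item sample $i\leftarrow[m]$;
\item diagonalise the constant-size $h^{(i)}$ by a constant-size circuit (computed classically, to exponential precision) and measure the $k$ qubits in that eigenbasis, obtaining eigenvalue $e$;
\item output $1$ with probability $e/(2B)$.
\end{enumerate}
Then
\[
\Pr[\mathcal{D}(\tau)=1]=\frac{\tr[H_\lambda\tau]}{2Bm}+c,
\]
where $c$ is a constant coming from the shift. Hence
\[
\left|\tr[H_\lambda\tau_1]-\tr[H_\lambda\tau_2]\right|=2Bm\cdot\mathrm{adv}(\mathcal{D}).
\]
The advantage is negligible by assumption, and multiplying a negligible function by the polynomial $2Bm$ keeps it negligible. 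This gives \eqref{eq:trHsame_whentautau}.

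\textbf{Step 3: handle the unitary.} For \eqref{eq:trUHUsame_whentautau}, let $\mathcal{D}'$ first run the QPT algorithm on $1^\lambda$ to obtain $\circdesc(U_\lambda)$, apply $U_\lambda$ to its copy of $\tau$, and then run $\mathcal{D}$. The QPT algorithm's randomness is internal, so the advantage of $\mathcal{D}'$ equals
\[
\frac{\left|\E_U\tr[HU\tau_1U^\dagger]-\E_U\tr[HU\tau_2U^\dagger]\right|}{2Bm}.
\]
Since $\mathcal{D}'$ is efficient, this is negligible.

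\textbf{Main obstacles.} The first is interpreting \eqref{eq:trUHUsame_whentautau}. For a fixed sampled $U_\lambda$ the bound need not hold pointwise, so I will read it as a bound in expectation over the algorithm's coins, which is what is used downstream. If a pointwise version is required, one can instead fix the coins non-uniformly, provided indistinguishability holds against non-uniform or advice-taking adversaries.

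The second obstacle is that in the random-oracle application the states depend on the oracle and the extraction algorithm may query the state generator. Since $\mathcal{D}'$ only uses its own oracle access, indistinguishability with oracle access (as in \Cref{lem:expanding-RO-uniformity}) suffices. Any extra copies needed by the extraction algorithm are taken from the same indistinguishability guarantee or generated via the oracle, so I would state the lemma's hypothesis to cover this.

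Finally, exact diagonalisation of the $h^{(i)}$ with a finite gate set introduces only exponentially small error. That error is absorbed into the negligible bound.
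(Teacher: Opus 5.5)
Your proof is correct, and its key step takes a different route from the paper's.

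\textbf{The paper's route.} The paper writes $H_\lambda$ as a combination of polynomially many projectors $\Pi'^{(ij)}$ with coefficients $e^{(ij)}$ and argues by contradiction: pigeonhole gives a $\lambda$-dependent pair $(i^*,j^*)$ with a non-negligible gap. It then uses a distinguisher that measures a uniformly random $\Pi'^{(i'j')}$ and outputs $1$ on that outcome. It claims this test has advantage at least $\frac{1}{N}\sum_{ij}\bigl|\tr[\Pi'^{(ij)}\tau_1]-\tr[\Pi'^{(ij)}\tau_2]\bigr|$.

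\textbf{The flaw in that route.} The test's true advantage is $\frac{1}{N}\bigl|\sum_{ij}(\tr[\Pi'^{(ij)}\tau_1]-\tr[\Pi'^{(ij)}\tau_2])\bigr|$, and the signed terms can cancel. When the eigenprojectors of each local term sum to the identity, the advantage is identically zero. For example, take $h=Z$, $\tau_1=\den 0$ and $\tau_2=\den 1$: the energies differ by $2$, yet the test accepts both states with probability $1/2$.

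\textbf{Why your test works.} You output $1$ with probability proportional to the shifted eigenvalue. This gives the exact identity $\mathrm{adv}(\mathcal{D})=\left|\tr[H_\lambda\tau_1]-\tr[H_\lambda\tau_2]\right|/(2Bm)$. So you need no contradiction, no pigeonhole over $\lambda$-dependent indices, and no control of signs. In effect, yours is the version of the argument that actually goes through.

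\textbf{Bounded term norms.} Your assumption that the local terms have polynomially bounded norms is needed by both proofs. In the paper, passing from ``$|e|\cdot|\Delta p|$ non-negligible'' to ``$|\Delta p|$ non-negligible'' requires $|e|\le\poly$; it does not follow from $e$ being non-negligible, which is what the paper asserts. You can justify the assumption rigorously by expanding $H_\lambda$ in Pauli strings $P$ of weight at most $k$. There are $\O(n^k)$ such strings, and each coefficient $2^{-n}\tr[PH_\lambda]$ is bounded by $\norm{H_\lambda}_\infty=\O(n)$.

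\textbf{The unitary part.} Your caveat here is a genuine refinement. The paper's one-line justification also only yields the bound averaged over the QPT algorithm's coins, or pointwise against non-uniform adversaries. The averaged bound is all that its later uses require, for instance when bounding $\W_{\fat{\rho},\fat H,\mathcal{B}}$.
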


\branchcolor{black}{\begin{proof}
Since every $H_{\secp}\in\Ham(n(\secp))$ is a $k$-local Hamiltonian,
it can be written as,
\begin{equation}
H_{\secp}=\sum^{m}_{i=1}h^{(i)}_{\secp}\otimes\I_{[n(\secp)]\backslash S_{i}}\label{eq:k-local-ham}
\end{equation}
where $S_{i,\secp}\subseteq[n(\secp)]$, such that $\left|S_{i,\secp}\right|\leq k$
and $h_{i,\secp}\in\Ham(|S_{i}|)$ (see \Cref{def:k-local-ham}).
Since $k\in\N$ is a constant, the total number of subsets of $[n(\secp)]$
of size at most $k$ is $\binom{n(\lambda)+k}{k}\in\O(n(\secp)^{k})$
(see \Cref{nota:basic-notations}) therefore, $m\in\O(n(\secp)^{k})=\poly(\secp)$
for some polynomial function $\poly.$

Furthermore, each $h^{(i)}$ can be written as a linear combination
of projectors as follows:
\[
h^{(i)}_{\secp}=\sum^{2^{k}}_{j=1}e^{(ij)}_{\secp}\Pi^{(ij)}_{\lambda}.
\]
This follows because $h^{(i)}_{\lambda}\in\Ham(|S_{i}|)$ is an operator
on a Hilbert space of dimension at most $2^{k}$. Therefore, its spectral
decomposition has at most $2^{k}$ elements. Combining this with \Cref{eq:k-local-ham}
and using the bound on $m$, we get that 
\begin{align*}
H_{\secp} & =\sum^{\poly(\secp)}_{i=1}\sum^{2^{k}}_{j=1}e^{(ij)}_{\secp}\Pi'^{(ij)}_{\lambda} & \text{ where }\Pi'^{(ij)}_{\lambda}:=\Pi^{(ij)}_{\lambda}\otimes\I_{[n(\secp)]\backslash S_{i}}
\end{align*}
Therefore, every Hamiltonian $H_{\secp}\in\fat H$ can be written
as the linear combination of $\O\left(\poly(\secp)\right)$ projectors.
Replacing $H_{\secp}$ with its linear combination of projectors,
we get
\begin{align*}
\left|\tr\left[H_{\secp}\tau_{1,\secp}\right]-\tr\left[H_{\secp}\tau_{2,\secp}\right]\right| & =\left|\tr\left[\sum^{\poly(\secp)}_{i=1}\sum^{2^{k}}_{j=1}e^{(ij)}_{\secp}\Pi'^{(ij)}_{\lambda}\tau_{1,\secp}\right]-\tr\left[\sum^{\poly(\secp)}_{i=1}\sum^{2^{k}}_{j=1}e^{(ij)}_{\secp}\Pi'^{(ij)}_{\lambda}\tau_{2,\secp}\right]\right|\\
 & \leq\sum^{\poly(\secp)}_{i=1}\sum^{2^{k}}_{j=1}\underbrace{\left|e^{(ij)}_{\secp}\right|\cdot\left|\tr\left[\Pi'^{(ij)}_{\lambda}\tau_{1,\secp}\right]-\tr\left[\Pi'^{(ij)}_{\lambda}\tau_{2,\secp}\right]\right|}_{:=\alpha_{ij}(\lambda)} & \text{using \ensuremath{\triangle}-inequality}.
\end{align*}

For contradiction, suppose the left-hand side, i.e. $\left|\tr\left[H_{\secp}\tau_{1,\secp}\right]-\tr\left[H_{\secp}\tau_{2,\secp}\right]\right|=\nonnegl(\lambda)$.
Then, from the pigeon-hole principle one can conclude that there is
some index $i^{*}(\lambda),j^{*}(\lambda)$ such that $\alpha_{i^{*}(\lambda)j^{*}(\lambda)}(\lambda)\ge\nonnegl(\lambda)/(2^{k}\poly(\lambda))$.
This means that 
\begin{equation}
\left|\tr\left[\Pi'^{(ij)}_{\lambda}\tau_{1,\secp}\right]-\tr\left[\Pi'^{(ij)}_{\lambda}\tau_{2,\secp}\right]\right|\ge\nonnegl'(\lambda)\label{eq:projDiffNonNeglHam}
\end{equation}
 (as the $e_{\lambda}\cdot|\text{something bounded by 2| is non-negligible so \ensuremath{e_{\lambda}}}$is
non-negligible). Now consider the following distinguisher ${\cal B}$: 
\begin{itemize}
\item on input $(\tau_{b,\lambda},1^{\lambda})$, it samples $i'\leftarrow[\poly(\lambda)],j'\leftarrow[2^{k}]$
(see \Cref{nota:basic-notations}) 
\item measures $\tau_{b,\lambda}$using $\Pi'^{(i'j')}_{\lambda}$ and outputs
$1$ if the outcome corresponds to $\Pi'^{(i'j')}_{\lambda}$ and
$0$ otherwise. 
\end{itemize}
The distinguishing probability of ${\cal B}$ is at least 
\begin{align*}
\frac{1}{2^{k}\cdot\poly(\lambda)}\sum_{ij}\left|\tr[\tau_{1,\lambda}\Pi'^{(ij)}_{\lambda}]-\tr[\tau_{2,\lambda}\Pi'^{(ij)}_{\lambda}]\right| & \ge\frac{1}{2^{k}\cdot\poly(\lambda)}\left|\tr[\tau_{1,\lambda}\Pi'^{(i^{*}(\lambda)j^{*}(\lambda))}_{\lambda}]-\tr[\tau_{2,\lambda}\Pi'^{(i^{*}(\lambda)j^{*}(\lambda))}_{\lambda}]\right|\\
 & =\frac{1}{2^{k}\cdot\poly(\lambda)}\cdot\nonnegl'(\lambda) \qquad\qquad \qquad  \text{using }\cref{eq:projDiffNonNeglHam}\\
 & =\nonnegl''(\lambda).
\end{align*}
However, this contradicts the indistinguishability of $\tau_{1,\secp}$
and $\tau_{2,\secp}$. Thus, 
\[
\left|\tr\left[H_{\secp}\tau_{1,\secp}\right]-\tr\left[H_{\secp}\tau_{2,\secp}\right]\right|\le\negl(\lambda)
\]
which proves \Cref{eq:trHsame_whentautau}. 

Given \Cref{eq:trHsame_whentautau} holds, establishing \Cref{eq:trUHUsame_whentautau}
is rather straightforward: if no efficient algorithm can distinguish
between $\tau_{1,\secp}$ and $\tau_{2,\secp}$, then no efficient
algorithm can distinguish between $U_{\lambda}\tau_{1,\secp}U^{\dagger}_{\lambda}$
and $U_{\lambda}\tau_{2,\secp}U^{\dagger}_{\lambda}$ as well, where
$\desc(U_{\secp})$ is produced by an efficient algorithm.
\end{proof}

}
\begin{rem}
\label{rem:trHrho=00003DtrHsigma_rel}Consider the cases where one
is working relative to oracles $\fat O$, i.e. where the distinguishing
algorithms have access to an oracle $\fat O$, and where $\fat{\tau}$s
can depend on $\fat O$. Given that $\fat{\tau}_{1}$ is computationally
indistinguishable from $\fat{\tau}_{2}$, one can verify that the
results in \Cref{lem:trHrho=00003DtrHsigma} continue to hold. 
\end{rem}

\branchcolor{darkgray}{We are now ready to give the explicit families of states and Hamiltonians
for which we obtain maximal separation.}

\branchcolor{darkgray}{\Cref{thm:explicit-sep-erg-erghat} below, is one of our main results
which shows that for the family states and extensive Hamiltonians
we described in \Cref{def:explicit-fam-state-Ham-}, $\erg_{{\cal D}}$
is at least $n/2$ asymptotically while $\erghat_{{\cal D}}$ is negligible.}

\begin{thm}[Separating $\erg_{{\cal D}}$ from $\erghat_{{\cal D}}$ in the random
oracle model]
\label{thm:explicit-sep-erg-erghat}In the random oracle model, for the family of states, Hamiltonians
and the oracle, $\fatrhoH,\fat O$ sampled from distribution ${\cal D}$ as defined above (see \Cref{def:explicit-fam-state-Ham-}), it holds that 
\begin{align*}
\nexists\ \ \nonnegl\ \ \text{s.t.}\ \ \erghat_{{\cal D}} & \fngeq\nonnegl,\text{ while}\\
\erg_{{\cal D}} & \fngeq\frac{n}{2}
\end{align*}
where $\nonnegl$ is any non-negligible function.
\end{thm}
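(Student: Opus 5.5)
The proof has two independent parts: an information-theoretic lower bound on $\erg_{{\cal D}}$, and a computational upper bound on $\erghat_{{\cal D}}$ obtained by reduction to \Cref{lem:expanding-RO-uniformity}.

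\emph{Lower bound on $\erg_{{\cal D}}$.} Fix $f_\lambda$. The state $\rho_\lambda$ is diagonal in the computational basis and supported on the image of $f_\lambda$, which has at most $2^{n}$ elements. The ground space of $H_\lambda=\I\otimes H_{\hw(n)}$ is spanned by $\{\ket{x,0^{n}}\}_{x\in\binset^{n}}$, so it has dimension exactly $2^{n}$ and energy $0$. Hence there is a permutation unitary that maps the support of $\rho_\lambda$ injectively into this ground space. It sends $\rho_\lambda$ to a state of zero energy, so $\erg_{\rho_\lambda,H_\lambda}=\tr[H_\lambda\rho_\lambda]$, using that $H_\lambda\geq0$.

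Averaging over the random $f_\lambda$, each $f_\lambda(r)$ is uniform on $\binset^{2n}$. Its last $n$ bits therefore have expected Hamming weight $n/2$, and $\E_f\tr[H_\lambda\rho_\lambda]=n/2$ exactly. This gives $\erg_{{\cal D}}\fngeq n/2$. If the definition demands the algorithmic form, \Cref{rem:C-erg=00003Derg_when_C_is_unbounded} gives the same conclusion up to an arbitrary $\epsilon$. Since the unbounded extractor can also learn $f$ from the oracle, we can instead take the exact permutation directly and avoid the $\epsilon$ loss.

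\emph{Negligibility of $\erghat_{{\cal D}}$.} Suppose for contradiction that some QPT $\A$ satisfies $\E_{{\cal D}}\W_{\fatrhoH,\A^{\fat O}}\fngeq\nonnegl$. The key observation is that $\sigma_\lambda$ is unitarily invariant. So for any $U$ we have $\tr[H_\lambda U\sigma_\lambda U^\dagger]=\tr[H_\lambda\sigma_\lambda]=n/2$, and this equals $\E_f\tr[H_\lambda\rho_\lambda]$ as computed above. Writing the expected work as $\E\tr[H\rho]-\E\tr[HU\rho U^\dagger]$, a non-negligible work value means $\E\tr[H_\lambda U\rho_\lambda U^\dagger]$ and $\E\tr[H_\lambda U\sigma_\lambda U^\dagger]$ differ non-negligibly, where $U$ is output by $\A$.

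We convert this gap into a distinguisher ${\cal B}^{\fat O}$, in the spirit of \Cref{lem:trHrho=00003DtrHsigma} and \Cref{rem:trHrho=00003DtrHsigma_rel}. On input one copy of $\tau\in\{\rho_\lambda,\sigma_\lambda\}$, ${\cal B}$ runs $\A$ to obtain $\circdesc(U)$ and applies $U$. It then picks a uniformly random $i\in[n]$, measures the $i$-th qubit of the second half, and outputs the outcome. Its acceptance probability is $\E\tr[H_\lambda U\tau U^\dagger]/n$, so its advantage is at least $\nonnegl/n$. This contradicts \Cref{lem:expanding-RO-uniformity}.

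\emph{Main obstacle.} The step needing care is that $\A$ queries the state generation oracle $O_{\fat\rho}$, possibly polynomially many times, for copies of $\rho_\lambda$ while it designs $U$. The distinguisher ${\cal B}$ only holds one challenge copy. The fix is that $\rho_\lambda$ is efficiently preparable given the random oracle: sample $r\leftarrow\binset^{n}$ and query $f_\lambda(r)$. So ${\cal B}$ answers every call of $\A$ to $O_{\fat\rho}$ with a freshly prepared copy of $\rho_\lambda$, independently of $b$, and answers $\hamdesc$ trivially.

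In the $\sigma$ world, $\A$ then still receives copies of $\rho_\lambda$, but this is harmless. The final energy is independent of $U$ by unitary invariance, so the $\sigma$-world value stays exactly $n/2$. The $\rho$-world value is the genuine work experiment. We also need \Cref{lem:expanding-RO-uniformity} to tolerate the extra forward oracle queries. It does, because it allows arbitrary polynomial-query distinguishers with access to $\fat O$, and in the O2H hybrid the fresh $r'$ coincide with the hidden $r$ only with probability $2^{-n}$. Polynomial query counts keep the O2H loss negligible, which completes the contradiction.
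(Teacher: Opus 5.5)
Your proposal is correct and follows essentially the same route as the paper. For the upper bound, you simulate $O_{\fat{\rho}}$ by fresh forward queries $f_\lambda(r')$, invoke \Cref{lem:expanding-RO-uniformity}, turn the energy gap into a distinguisher (your random-qubit measurement is \Cref{lem:trHrho=00003DtrHsigma} specialised to $H_\lambda$), and use unitary invariance of $\sigma_\lambda$; for the lower bound, you map the support of $\rho_\lambda$ into the $2^{n}$-dimensional ground space. Your computation $\E_f\tr[H_\lambda\rho_\lambda]=n/2$ as an average over $f$ is slightly more careful than the paper, which asserts the equality for every fixed $f_\lambda$, and it lets you skip the bound on initial energies.
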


\branchcolor{black}{\begin{proof}
We show the first inequality in \Cref{claim:erghat-is-negl-as-always}
and the second in \Cref{claim:erg=00003Dhalf-of-n}.
\begin{claim}
\label{claim:erghat-is-negl-as-always}In the random oracle model,
there does not exist a non-negligible function $\nonnegl$ such that
$\erghat_{{\cal D}}\fngeq\nonnegl$.
\end{claim}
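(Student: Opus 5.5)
The plan is a reduction to the indistinguishability of $\rho_\secp$ and the maximally mixed state $\sigma_\secp$ (\Cref{lem:expanding-RO-uniformity}), combined with the unitary invariance of $\sigma_\secp$.

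\textbf{Step 1: rewrite the extracted work.} Fix any QPT algorithm $\A$ with access to $O_{\fat\rho}$, $\hamdesc$ and $\fat O$, and write $U_\secp$ for the unitary it outputs. Since $U_\secp\sigma_\secp U_\secp^\dagger=\sigma_\secp$ for every $U_\secp$, we can write
\[
\W_{\fat\rho,\fat H,\A^{\fat O}}(\secp)=\Big(\tr[H_\secp\rho_\secp]-\tr[H_\secp\sigma_\secp]\Big)-\Big(\tr[H_\secp U_\secp\rho_\secp U_\secp^\dagger]-\tr[H_\secp U_\secp\sigma_\secp U_\secp^\dagger]\Big).
\]
It therefore suffices to show that both bracketed differences are negligible in expectation over $\cal D$.

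\textbf{Step 2: bound each difference by a distinguisher.} $H_\secp$ is a $1$-local, oracle-independent, extensive Hamiltonian. Following the proof of \Cref{lem:trHrho=00003DtrHsigma} together with \Cref{rem:trHrho=00003DtrHsigma_rel}, a non-negligible expected gap in either term yields a distinguisher $\mathcal{B}$ with non-negligible advantage. This $\mathcal{B}$ receives one copy of $\tau\in\{\rho_\secp,\sigma_\secp\}$ and works as follows.
\begin{enumerate}
\item It runs $\A$ to obtain $\circdesc(U_\secp)$, answering $\A$'s calls to $O_{\fat\rho}$ itself: it samples $r$ and queries $O_\secp$ once to prepare fresh copies of $\rho_\secp$. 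This is possible because $\rho_\secp$ is efficiently preparable given oracle access.
\item It applies $U_\secp$ (or the identity, for the first term) to $\tau$.
\item It picks a uniformly random qubit $i$ among the last $n(\secp)$ and measures $Z_i$, outputting $1$ on outcome $\ket1$.
\end{enumerate}
The acceptance-probability gap of $\mathcal{B}$ equals the energy gap divided by $n(\secp)$. Hence a non-negligible energy gap, averaged over $\cal D$, gives a non-negligible advantage.

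One subtlety: $\A$'s output $U_\secp$ is correlated with the oracle, and with the extra copies of $\rho_\secp$, which depend on the same $f$. That is acceptable, because \Cref{lem:expanding-RO-uniformity} allows $\mathcal{B}$ arbitrary oracle access. The only requirement is that $\mathcal{B}$ make polynomially many queries, which holds since $\A$ runs in polynomial time and each simulated copy costs one query.

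\textbf{Step 3: conclude.} By \Cref{lem:expanding-RO-uniformity}, every such $\mathcal{B}$ has negligible advantage. So for every $\A$ the expected work $\E_{\cal D}[\W]$ is bounded by a negligible function $\negl_\A$. If $\E_{\cal D}[\W]\fngeq\nonnegl$ held for some non-negligible function, this would contradict the bound, which gives the claim.

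The main obstacle is the correlation issue in Step 2. Extra copies of $\rho_\secp$ share the hidden structure of $f$, so a reader might worry that the indistinguishability of a single challenge copy is not enough. I would handle this by making explicit that \Cref{lem:expanding-RO-uniformity} applies to every polynomial-query $\mathcal{B}$, and that simulating $O_{\fat\rho}$ costs only forward queries to $f$ at freshly sampled points. Those points differ from the hidden $r$ except with probability $\poly\cdot 2^{-n}$, which is exactly what the O2H hybrid already accounts for.
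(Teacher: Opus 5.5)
Your proposal is correct and follows essentially the same route as the paper: you simulate $O_{\fat\rho}$ with one forward query per copy, reduce to the single-copy indistinguishability of $\rho_\secp$ and $\sigma_\secp$ (\Cref{lem:expanding-RO-uniformity}) via the Hamiltonian-to-projector argument (which you specialise to measuring a random $Z_i$), and combine the unitary invariance of $\sigma_\secp$ with the triangle inequality. Your explicit handling of the correlation between $U_\secp$, the simulated copies and the challenge copy usefully spells out a step the paper only asserts.
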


\branchcolor{black}{\begin{proof}
For every $\QPT$ algorithm $\mathcal{B}^{\fat O,O_{\fat{\rho},}\hamdesc}(1^{\secp})$
that outputs the description of a unitary, one can do the following
analysis to upper bound $\W_{\fatrhoH,{\cal B}}$.  We proceed in
two steps. First we show that one can extract a unitary $U$ from
${\cal B}$ without giving access to $O_{\fat{\rho}}$. Then, from
the fact that $\rho$ and $\sigma$ are computationally indistinguishable,
we can use \Cref{rem:trHrho=00003DtrHsigma_rel} and prove that the
associated energies are the same, up to negligible factors.

More precisely, one can construct a simulator $\mathcal{D}^{\fat O}$,
which upon receiving the input state $\tau_{\secp}\in\left\{ \rho_{\secp},\sigma_{\secp}\right\} $,
${\cal D}^{\fat O}(1^{\secp})$ simulates $\mathcal{B}^{\fat O,O_{\fat{\rho},}\hamdesc}(1^{\secp})$
to obtain unitary $U_{\secp}$. ${\cal D}^{\fat O}$ does the following:
\begin{enumerate}
\item ${\cal D}$ runs ${\cal B}$ exactly, when ${\cal B}$ does not make
any queries to the oracles.
\item When ${\cal B}$ queries the oracle $\fat O$, ${\cal D}$ also queries
the oracle $\fat O$. 
\item When ${\cal B}$ queries the oracle $O_{\fat{\rho}}$, ${\cal D}$
simulates the output of $O_{\fat{\rho}}$ by randomly sampling the
bit string $r_{1}\leftarrow\binset^{n(\secp)}$, querying the oracle
$\fat O$ and returning the state $\den{f_{\secp}(r_{1})}$. Since
$r_{1}$ is sampled uniformly and independently, the state so prepared
is identical to the state $\rho_{\secp}.$
\item When ${\cal B}$ queries the oracle $\hamdesc$, ${\cal D}$ can simulate
$\hamdesc$ exactly, because it is independent of the oracle $\fat O$.
Therefore, ${\cal D}$ need not make any additional queries to $\fat O$.
\end{enumerate}
From \Cref{lem:expanding-RO-uniformity} we know that $\rho_{\secp}\text{ and }\sigma_{\secp}$
are computationally indistinguishable. Using \Cref{lem:trHrho=00003DtrHsigma}
and \Cref{rem:trHrho=00003DtrHsigma_rel} this means that for any
unitary $U_{\secp}$ obtained by ${\cal D}^{\fat O}(1^{\lambda})$
(which is a QPT algorithm), there exists negligible functions $\negl_{1},\negl_{2}$
such that
\begin{align*}
\left|\tr\left[H_{\secp}\rho_{\secp}\right]-\tr\left[H_{\secp}\sigma_{\secp}\right]\right| & \leq\negl_{1}(\secp),\text{ and }\\
\left|\tr\left[H_{\secp}U_{\secp}\rho_{\secp}U^{\dagger}_{\secp}\right]-\tr\left[H_{\secp}U_{\secp}\sigma_{\secp}U^{\dagger}_{\secp}\right]\right| & \leq\negl_{2}(\secp).
\end{align*}
Using triangle inequality, we have
\begin{align*}
\left|\left(\tr\left[H_{\secp}\rho_{\secp}\right]-\tr\left[H_{\secp}U_{\secp}\rho_{\secp}U^{\dagger}_{\secp}\right]\right)-\underbrace{\left(\tr\left[H_{\secp}\sigma_{\secp}\right]-\tr\left[H_{\secp}U_{\secp}\sigma_{\secp}U^{\dagger}_{\secp}\right]\right)}_{0\text{ since }\sigma_{\secp}\text{ is invariant under unitary operations}}\right| & \leq\negl_{1}(\secp)+\negl_{2}(\secp)\\
\implies\left(\tr\left[H_{\secp}\rho_{\secp}\right]-\tr\left[H_{\secp}U_{\secp}\rho_{\secp}U^{\dagger}_{\secp}\right]\right) & \leq\negl_{1}(\secp)+\negl_{2}(\secp)\\
\implies\W_{\fatrhoH,{\cal B}}(\secp) & \leq\negl_{1}(\secp)+\negl_{2}(\secp)\\
 & =\negl_{{\cal B}}(\secp).
\end{align*}

Since for all ${\cal B}\in\QPT$, we can construct a simulator ${\cal D}$
which only uses the random oracle $\fat O$, there exists some negligible
function $\negl_{{\cal B}}$ such that, $\W_{\fat{\rho},\fat H,\mathcal{B}}(\secp)\leq\negl_{{\cal B}}(\secp)$,
it follows that there does not exist a non-negligible function $\nonnegl$
such that $\erghat_{{\cal D}}\fngeq\nonnegl$.
\end{proof}

}

\begin{claim}
\label{claim:erg=00003Dhalf-of-n}It holds that $\erg_{{\cal D}}(\secp)=\frac{n(\secp)}{2}$.
\end{claim}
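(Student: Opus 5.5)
The approach is to compute the ergotropy of each sampled $\rho_\secp$ exactly. I will show it always equals the initial energy $\tr[H_\secp\rho_\secp]$, and then average over the random oracle. Throughout, $\erg_{\cal D}(\secp)$ is read as $\E_{f}\left[\erg_{\rho_\secp,H_\secp}\right]$, with $f$ sampled as in \Cref{def:explicit-fam-state-Ham-}.

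\emph{Step 1: lower bound on final energy.} Write $H_\secp=\I\otimes H_{\hw(n)}=\sum_{a,b\in\binset^{n}}\hw(b)\den{a,b}$, where $a$ is the first $n$-bit half and $b$ the second. Since $H_\secp\geq0$, we have $E_{\min}(H_\secp)=0$. Hence $\tr[H_\secp U\rho_\secp U^\dagger]\geq 0$ for every unitary $U$, and so $\erg_{\rho_\secp,H_\secp}\leq\tr[H_\secp\rho_\secp]$.

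\emph{Step 2: the bound is attained.} The state $\rho_\secp$ is diagonal in the computational basis. Its support is the image set $S_f:=\{f_\secp(r):r\in\binset^{n}\}$, with $|S_f|\leq 2^{n}$; collisions of $f_\secp$ only shrink $S_f$ and reweight its elements, which is harmless. The zero-energy eigenspace of $H_\secp$ is spanned by $\{\ket{a,0^{n}}:a\in\binset^{n}\}$, which has dimension exactly $2^{n}$.
\begin{itemize}
\item Choose any injection $\iota:S_f\to\{(a,0^{n})\}$.
\item Extend $\iota$ to a bijection $\pi$ of $\binset^{2n}$; this is always possible because both sets are finite and the injection can be completed.
\item Let $U_\pi$ be the permutation unitary defined by $\pi$.
\end{itemize}
Then $U_\pi\rho_\secp U_\pi^\dagger$ is supported on the ground space, so its energy is $0$. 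Together with Step 1, this gives $\erg_{\rho_\secp,H_\secp}=\tr[H_\secp\rho_\secp]$ for every $f$. This also agrees with \Cref{thm:char-passive-states}, since the output state is passive.

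\emph{Step 3: averaging over the oracle.} By definition of $\rho_\secp$,
\[
\tr[H_\secp\rho_\secp]=\frac{1}{2^{n}}\sum_{r}\hw\left(f_\secp(r)_{\text{second half}}\right).
\]
For each fixed $r$, the value $f_\secp(r)$ is uniform on $\binset^{2n}$ when $f$ is uniform over ${\cal F}$. Its last $n$ bits are therefore uniform, with expected Hamming weight $n/2$. By linearity of expectation, $\E_f\,\erg_{\rho_\secp,H_\secp}=n(\secp)/2$ exactly, for every $\secp$. In particular $\erg_{\cal D}\fngeq n/2$ holds, which is the second inequality needed for \Cref{thm:explicit-sep-erg-erghat}.

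\emph{Main obstacle.} The only real subtlety is Step 2: ensuring the support always fits in the ground space despite possible collisions of $f_\secp$, and that a single global unitary (a permutation) realises this. The counting $|S_f|\leq 2^{n}=\dim(\text{ground space})$ settles both points. Everything else is linearity of expectation.
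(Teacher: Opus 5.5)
Your proof is correct and follows essentially the same route as the paper. Both compute the oracle-averaged initial energy $n/2$ and then move the support of $\rho_\secp$ into the $2^{n}$-dimensional ground space of $H_\secp$ by a permutation; the paper's map $\ket{f_\secp(r)}\mapsto\ket{r^{*},0^{n}}$ (with $r^{*}$ the smallest preimage) is just one instance of your injection $\iota$. Your Step 1 ($H_\secp\succeq 0$, hence $\erg_{\rho_\secp,H_\secp}\le\tr[H_\secp\rho_\secp]$) and your explicit averaging over $f$ in Step 3 supply two things the paper leaves implicit: the matching upper bound needed for the stated equality, since the paper only concludes $\erg_{\cal D}\fngeq n/2$, and the fact that $\tr[H_\secp\rho_\secp]=n/2$ holds only in expectation over $f$, which the paper states as if it held pointwise.
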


\branchcolor{black}{\begin{proof}
For the family of states and Hamiltonians $\fatrhoH$ it holds that
\[
\tr\left[H_{\secp}\rho_{\secp}\right]=\frac{n(\secp)}{2}.
\]
This is because, each of the last $n(\secp)$-qubits of $\rho_{\secp}$
corresponds to 0 or 1 in the computational basis with equal probability
and since $H_{\secp}$ is the hamming weight Hamiltonian on the last
$n(\secp)$-qubits, the expected hamming weight of the last $n(\secp)$-qubits
is $\frac{n(\secp)}{2}.$

A query unbounded algorithm $\A^{\fat O,O_{\fat{\rho}},\hamdesc}(1^{\secp})$
can query the random oracle at all points in $\binset^{n(\secp)}$
and find the complete truth table of the random oracle $O_{\secp}.$
With this information, it can output the description of a unitary
$U_{\secp},$ which maps $\ket{f_{\secp}(r)}\mapsto\ket{r^{*},0^{n(\secp)}}$,
where $r^{*}$ is the lexicographically smallest inverse of $f_{\secp}(r)$,
i.e. $r^{*}\in\binset^{n(\secp)}$ is the smallest value such that
$f_{\secp}(r^{*})=f_{\secp}(r).$

For such a unitary $U_{\secp}$, it holds that
\begin{align*}
\tr\left[H_{\secp}U_{\secp}\rho_{\secp}U^{\dagger}_{\secp}\right] & =0.
\end{align*}
This is because after applying the unitary $U_{\secp}$ on the state
$\rho_{\secp}$, the resulting state would always have $0^{n(\secp)}$
in the last $n(\secp)$-qubits and therefore, the hamming weight over
the last $n(\secp)$-qubits is 0.

Therefore, for using this explicit unbounded algorithm $\A$, we are
able to construct a unitary $U_{\secp}$ such that, 
\[
\tr\left[H_{\secp}\rho_{\secp}\right]-\tr\left[H_{\secp}U_{\secp}\rho_{\secp}U^{\dagger}_{\secp}\right]=\frac{n(\secp)}{2}.
\]
From this, we conclude that $\erg_{{\cal D}}\fngeq\frac{n(\secp)}{2}.$
\end{proof}

}

Combining \Cref{claim:erghat-is-negl-as-always,claim:erg=00003Dhalf-of-n}
we have the proof for the theorem. 
\end{proof}

}

\subsubsection{Constructive Separation in the plain model}

\branchcolor{darkgray}{In \Cref{thm:explicit-sep-erg-erghat-prf} below, we lift the separation
in the random oracle model (see \Cref{thm:explicit-sep-erg-erghat})
to the plain model by instantiating the random oracle with a pseudorandom
function (PRF) whose key is hidden.}

Let $F_{\lambda}:\{0,1\}^{k(\lambda)}\times\{0,1\}^{n(\lambda)}\to\{0,1\}^{2n(\lambda)}$
be an expanding pseudorandom function (see \Cref{def:PRF}). 
\begin{defn}
\label{def:explicit-fam-state-Ham-PRF}Let $n,\kappa:\N\to\N$ be
polynomials denoting the size of the quantum system and size of the
key respectively. We define a distribution ${\cal D}$ over families
of states $(\fat{\rho},\fat H)$ as follows. 
\begin{itemize}
\item The family of keys $\fat{\k}=\{\k_{\lambda}\}_{\lambda}$ where $\k_{\lambda}\leftarrow\{0,1\}^{\kappa(\lambda)}$.
\item The family of states $\fat{\rho}=\left\{ \rho_{\secp}\right\} _{\secp}$
acting on $2n(\lambda)$-qubits where 
\[
\rho_{\lambda}:=\frac{1}{c}\sum_{r\in\{0,1\}^{n(\lambda)}}\den{F_{\secp}(\key_{\lambda},r)}
\]
where $c$ is chosen to ensure $\tr(\rho_{\lambda})=1$. 
\item The family of Hamiltonians $\fat H=\left\{ H_{\secp}\right\} _{\secp}$
(independent of $\fat{\k}$) where $H_{\lambda}$ is the padded Hamming
weight Hamiltonian (see \Cref{eq:n_ham_n}).
\end{itemize}
\end{defn}

\branchcolor{darkgray}{Intuitively, we have PRF $F_{\lambda}(\key,\cdot)$ play the role
of the random oracle $O_{\lambda}$. While this cannot be done in
general, it is straightforward to see that the proof of \Cref{thm:explicit-sep-erg-erghat}
can be extended to work for a PRF.}
\begin{thm}[Separating $\erg_{{\cal D}}$ from $\erghat_{{\cal D}}$ in the plain
model]
\label{thm:explicit-sep-erg-erghat-prf}Assuming the existence of quantum secure PRFs, for the family of states,
Hamiltonians, $\fat{\rho},\fat H$ sampled from ${\cal D}$ as defined
above (see \Cref{def:explicit-fam-state-Ham-PRF}), it holds that
\begin{align*}
\nexists\ \ \nonnegl\ \ \text{s.t.}\ \ \erghat_{{\cal D}} & \fngeq\nonnegl,\text{ while}\\
\erg_{{\cal D}} & \fngeq\frac{n}{2}
\end{align*}
where $\nonnegl$ is any non-negligible function.
\end{thm}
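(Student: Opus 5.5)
The plan follows the proof of \Cref{thm:explicit-sep-erg-erghat}, with the PRF $F_\lambda(\k_\lambda,\cdot)$ replacing the random oracle $O_\lambda$. The ergotropy lower bound and the upper bound on computational ergotropy are proved separately.

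\emph{Ergotropy.} Fix a key $\k_\lambda$. The state $\rho_\lambda$ is diagonal in the computational basis, and its support is $\{F_\lambda(\k_\lambda,r)\}_r$, which has at most $2^{n}$ elements. An unbounded algorithm that knows $\k_\lambda$ (or tabulates $F_\lambda(\k_\lambda,\cdot)$) can therefore choose a permutation mapping each image $F_\lambda(\k_\lambda,r)$ to a distinct string $\ket{r^*,0^n}$, where $r^*$ is the smallest preimage, and extend it arbitrarily to a permutation of the remaining basis states. After this permutation the last $n$ qubits are $0^n$ on the whole support, so the final energy is $0$. The ergotropy is thus at least $\tr[H_\lambda\rho_\lambda]$. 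I then need $\E_{\k}\tr[H_\lambda\rho_\lambda]\fngeq n/2$, possibly up to a negligible term. For a truly random function, the expected Hamming weight of the second half is exactly $n/2$. For a PRF I would use \Cref{lem:trHrho=00003DtrHsigma} together with PRF security. Estimating the average Hamming weight of the suffix of $F(\k,r)$ for random $r$ is an efficient test, so any deviation must be negligible. This yields $n/2-\negl$, and the weighting by $1/c$ with multiplicities does not affect the argument.

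\emph{Computational ergotropy.} Let $\mathcal{B}$ be any QPT extractor with access to $O_{\fat\rho}$ and $\hamdesc$. Every query to $O_{\fat\rho}$ can be simulated by sampling $r$ and evaluating $F(\k,r)$, which requires only classical oracle access to $F(\k,\cdot)$. Queries to $\hamdesc$ are key-independent. I then run the following hybrid argument.
\begin{enumerate}
\item Replace $F(\k,\cdot)$ everywhere by a uniformly random length-doubling function. This covers the copies used to choose $U_\lambda$, the challenge state, and the energy test. By PRF security, any change in the expected work is negligible, since the work is estimated by a QPT distinguisher: sample a term of $H_\lambda$ and measure it, as in the proof of \Cref{lem:trHrho=00003DtrHsigma}.
\item In the resulting random-oracle world, \Cref{thm:explicit-sep-erg-erghat} (specifically \Cref{claim:erghat-is-negl-as-always}) shows that the expected work is negligible.
\end{enumerate}

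The step I expect to need care is the first reduction. The distinguisher must produce a non-negligible advantage from a non-negligible amount of work. The difficulty is that the work is an expectation of a bounded, but $O(n)$-sized, observable. I would handle it exactly as in \Cref{lem:trHrho=00003DtrHsigma}: pick a uniformly random qubit in the second half and measure it, which loses only a factor of $1/n$. The distinguisher also has to be one that needs only oracle access to $F(\k,\cdot)$, which the simulation of the copies already guarantees.
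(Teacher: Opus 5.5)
Your proposal is correct and follows essentially the paper's route: simulate $O_{\fat\rho}$ with classical queries to $F_\lambda(\k_\lambda,\cdot)$, replace the PRF by a random oracle, and reduce to the random-oracle argument (\Cref{claim:erghat-is-negl-as-always}, via \Cref{lem:expanding-RO-uniformity} and \Cref{lem:trHrho=00003DtrHsigma}). The only differences are cosmetic: you run the PRF-to-random-oracle hybrid on the whole work-extraction experiment rather than first proving $\rho_\lambda\approx\sigma_\lambda$ given access to $F_\lambda(\k_\lambda,\cdot)$, and your ergotropy bound is more careful than the paper's sketch, which asserts exactly $n/2$ even though PRF security only yields $\E_{\k}\tr[H_\lambda\rho_\lambda]\geq n/2-\negl$, as you note.
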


\branchcolor{black}{\begin{proof}[Proof sketch]

To establish $\erg_{{\cal D}}\fngeq\frac{n}{2}$, observe that an
unbounded algorithm $\A$ can query the state generation oracle $O_{\fat{\rho}}$
exponentially many times, to enumerate all possible values $F_{\secp}(r)$
assumes. With this information, $\A$ can output the description of
a unitary that maps $\left\{ \ket{F_{\secp}(r)}\right\} _{r\in\binset^{n(\secp)}}\mapsto\left\{ \ket{r,0^{n(\secp)}}\right\} _{r\in\binset^{n(\secp)}}.$
This way, the hamming weight of the second half of the string is 0,
therefore attaining $\erg_{{\cal D}}=\frac{n}{2}.$ 

To establish an upper bound on $\erghat_{{\cal D}}$, we consider
a slightly relaxed problem, i.e. we allow all algorithms to access
$F_{\lambda}(\k,\cdot)$. Since we are only giving extra information
to the algorithms, work extracted in this relaxed setting, will still
serve as an upper bound on $\erghat_{{\cal D}}$.

In this relaxed setting, we want to ensure that
\begin{enumerate}
\item $\fat{\rho}$ is computationally indistinguishable from $\fat{\sigma}$
(see \Cref{eq:maxmixedstate}),
\item The unitary $U_{\lambda}$ produced by any efficient algorithm ${\cal B}^{O_{\fat{\rho}},\hamdesc,F_{\lambda}(\k,\cdot)}(1^{\lambda})$
can also be produced by an efficient algorithm ${\cal C}^{F_{\lambda}(\k,\cdot)}(1^{\lambda})$
(where $\hamdesc$ can be assumed to be hardcoded in ${\cal C}$ because
it describes a local Hamiltonian).
\end{enumerate}
The first point follows by applying a hybrid argument as follows:
(i) replace the PRF with a random oracle (with negligible error using
\Cref{def:PRF}) and (ii) in the random oracle model, use the indistinguishability
of $\fat{\rho}_{\RO}$ and $\fat{\sigma}$ (see \Cref{lem:expanding-RO-uniformity})
where $\fat{\rho}_{\RO}\propto(\sum_{r\in\{0,1\}^{n(\lambda)}}\den{f_{\lambda}(r)})$
is the state in \Cref{eq:fLambda}. The second point can be proved
by proceeding as we did in the proof of \Cref{claim:erghat-is-negl-as-always}.

One can now use \Cref{lem:trHrho=00003DtrHsigma} and \Cref{rem:trHrho=00003DtrHsigma_rel}
(as we did in the proof of \Cref{claim:erghat-is-negl-as-always})
to conclude that there is a negligible function $\negl'$ such that
${\cal W}_{\fatrhoH,{\cal B}}\fnleq\negl'$ (for every QPT algorithm
${\cal B}$). This establishes that $\erghat_{{\cal D}}$ is negligible
(see \Cref{nota:erghat-is-negl}).

\end{proof}

}

\section{Catalytic Computation}\label{sec:Catalytic-Computation}

\branchcolor{darkgray}{The definition of computational work extraction we considered above
(see \Cref{def:work-ex-algo}), requires that the unitary output by
the algorithm, must act strictly only on $n(\secp)$-qubits. This
requirement, however, appears to be too restrictive. For instance,
even the most basic cryptographic applications rely on keyed cryptographic
primitives, where the key $\k$ is used explicitly. If one is limited
to using only as many qubits as the input, one quickly runs into such
issues as the following problem. 
\begin{problem}
\label{prob:in-place-permutations-are-hard}Given the key $\k\in\binset^{\secp},$
implement a pseudo-random permutation $\pi_{\k}$ on $n(\secp)$-bit
strings, using only $\lambda+n(\secp)$-qubits, i.e. \emph{efficiently}
implement $\left|\k,x\right\rangle \mapsto\left|\k,\pi_{\k}(x)\right\rangle $.
\end{problem}

The standard compilation of arbitrary classical circuits into reversible
quantum circuits usually requires auxiliary space to store intermediate
computations \cite{bennet1989_timespace_tradeoffs_rev_computing}.
The standard pseudorandom permutation oracle acts on $\secp+2n(\secp)$-qubits,
$\secp$-qubits to key the permutation to use, $n(\secp)$-qubits
for input and $n(\secp)$-qubits for output,
\[
O_{\pi}\ket{\k}_{\key}\ket x_{i}\ket 0_{o}=\ket{\k}_{\key}\ket x_{i}\ket{\pi_{\k}(x)}_{o}
\]
Without auxiliary space, it is unclear whether standard cryptographic
primitives can be implemented efficiently. One may be tempted to resolve
this by allowing quantum algorithms to use auxiliary qubits (ancillas).
While this approach is often taken when one considers polynomial-time
machines, in the physical context here, this introduces a fundamental
flaw---unconstrained fresh qubits may be exploited to extract near-maximal
work even from passive states (see \Cref{def:passiveState}), as illustrated
by the following.
\begin{example}
\label{exa:HammingSwapping}Consider the Hamming-weight Hamiltonian
$H_{\hw(n)}$ (see \Cref{def:HammingWeight-Ham}) acting on $n$ qubits
and consider the maximally mixed state $\sigma_{n}\propto\mathbb{I}_{n}$.
Clearly, one should not be able to extract any work from $\sigma_{n}$
using a unitary operation. However, if one allows for ancillas initialised
to $\den{0^{n}}$, then one can easily swap the system and ancilla
registers (which is a unitary operation) to extract $n/2$ units of
work.
\end{example}

The discussion so far, suggests that the physically relevant definition
of computational work extraction must be some kind of generalisation
of \Cref{def:work-ex-algo}. We propose the following generalisation:
allow auxiliary qubits that must be restored exactly to the state
they were initialised in. We call such auxiliary qubits, catalyst
qubits and the corresponding model of computation, the \emph{catalytic
model of computation}. Clearly, the pathological case described in
\Cref{exa:HammingSwapping} no longer applies---and indeed, as we
show later (see \Cref{sec:Computational-Catalytic-Work}), the operational
meaning of ergotropy remains meaningful.

Furthermore, in this catalytic model of computation, it is immediate
that one can now solve \Cref{prob:in-place-permutations-are-hard}
by implementing the PRP as follows: 
\label{eq:PRP-using-cats}\[
\left|\k,x,0\right\rangle \mapsto\left|\k,x,\pi_{\k}(x)\right\rangle \mapsto\left|\k,0,\pi_{\k}(x)\right\rangle \mapsto\left|\k,\pi_{\k}(x),0\right\rangle .
\]
where the second step erases $x$ by computing $\pi^{-1}_{\k}$ of
the third register. Each classical logic gates $\mathsf{gate}$ is
mapped to controlled gate that acts as $\left|x\right\rangle \left|b\right\rangle \mapsto\left|x\right\rangle \left|b\oplus\mathsf{gate}(x)\right\rangle $.

Clearly, some catalytic models of computation allows for potentially
more operations. There could be various notions of catalysts (e.g.
whether they are restored exactly or approximately, whether they can
be left slightly entangled, whether the catalytic condition is required
for a fixed state or all states, etc.). A priori, it is unclear whether
these additional operations allow one to map an input state $\left|\psi\right\rangle $
to a larger class of output states $\left|\phi\right\rangle $. We
address this in both the computationally bounded and unbounded setting---starting
with the latter.

}

\subsection{Catalytic Unitaries}

\branchcolor{darkgray}{Consider a bipartite system with Hilbert space ${\cal H}=\H_{\IO}\otimes\H_{\catalyst}$,
where $\H_{\IO}$ denotes the Hilbert space corresponding to an \emph{input/output}
(unconstrained) quantum space and $\H_{\catalyst}$ denotes the \emph{catalytic
space}. We start by considering a model of catalytic computation where
unitaries act on ${\cal H}$ in such a way that every state in the
catalytic space ${\cal H}_{\catalyst}$, is restored exactly after
the computation. }

Let $n,\ell\in\N$, $\H_{\IO}=(\mathbb{C}^{2})^{\otimes n}\text{ and }\H_{\catalyst}=(\mathbb{C}^{2})^{\otimes\ell}$.
\begin{defn}[{Catalytic Unitaries $\catU[n][\ell]$}]
\label{def:cat-unitaries}Let the $(n+\ell)$-qubit composite Hilbert
space be denoted by ${\cal H}:=\H_{\IO}\otimes\H_{\catalyst}.$ Then,
the set of \emph{catalytic unitaries} $\catU[n][\ell]$ is defined
as

\[
\catU[n][\ell]:=\left\{ U\in{\cal U}(n+\ell):\begin{array}{r}
\forall\ \left|\phi\right\rangle \otimes\ket{\eta}\in\H,\quad\exists\ \left|\psi\right\rangle \in{\cal H}_{\IO}\\
\text{s.t. }U\left|\phi\right\rangle \otimes\left|\eta\right\rangle =\left|\psi\right\rangle \otimes\left|\eta\right\rangle 
\end{array}\right\} .
\]
\end{defn}

\branchcolor{darkgray}{Since the primary system of interest is the register $\IO$, one would
often want to consider only the reduced state on $\IO$. For brevity,
we introduce the following notation to represent the reduced state
of the system after the application of a catalytic unitary.}
\begin{notation}
\label{nota:partial-tr-channel-catU}Let ${\cal V}\subseteq\uni(n+\ell)$
be set of unitaries on ${\cal H}_{\IO}\otimes{\cal H}_{\catalyst}$.
Then, by $C_{{\cal V}}$ we denote the set of \emph{channels} \emph{associated}
with the action of these unitaries using the catalyst as
\[
C_{{\cal V}}:=\{C^{\sigma}_{U}(\cdot):U\in{\cal V},\ \sigma\in\Den(\ell)\}.
\]
where $C^{\sigma}_{U}(\cdot):=\tr_{\catalyst}U(\ \cdot\ \otimes\sigma)U^{\dagger}$
is the channel associated with $(U,\sigma)$. 
\end{notation}

\begin{example}
These may help to clarify the notation. 
\begin{enumerate}
\item Note that for $\ell=0$, $C_{{\cal U}(n)}=\{V(\cdot)V^{\dagger}:V\in{\cal U}(n)\}$.
\item Note also that $C_{\catU[n][\ell]}$ is simply $\{\tr_{\catalyst}U(\ \cdot\ \otimes\sigma)U^{\dagger}:U\in\catU[n][\ell],\ \sigma\in\Den(\ell)\}$.
\end{enumerate}
\end{example}

\branchcolor{darkgray}{Clearly, it holds that $C_{\uni(n)}=C_{\catU[n][\ell]}$ when $\ell=0$
because ${\cal U}(n)=\catU[n][0]$. On the other hand, with $\ell>0$,
$C_{\catU[n][\ell]}$ may allow one to map $\left|\phi\right\rangle $
to a larger class of states $\left|\psi\right\rangle $, i.e. the
containment 
\[
C_{\uni(n)}\subseteq C_{\catU[n][\ell]}
\]
could potentially be strict. 

A little thought reveals that the two classes are equivalent, i.e.
$C_{\catU[n][\ell]}=C_{\uni(n)}.$ Suppose $U\in\catU[n][\ell]$.
For an orthonormal basis $\left\{ \left|\phi_{i}\right\rangle \right\} $
of ${\cal H}_{\IO}$, and for any $\left|\eta\right\rangle \in{\cal H}_{\catalyst}$,
the action of $U$ must be $U\left|\phi_{i}\right\rangle \otimes\left|\eta\right\rangle =\left|\psi_{i}\right\rangle \otimes\left|\eta\right\rangle $.
Since $U$ must map orthonormal vectors to orthonormal vectors, we
have that $\{\left|\psi_{i}\right\rangle \}_{i}$ must also be an
orthonormal basis of ${\cal H}_{\IO}$. Therefore, for $V:=\sum_{i}\left|\psi_{i}\right\rangle \left\langle \phi_{i}\right|\in\uni(n)$,
we have that $C^{\eta}_{U}(\phi)=C_{V}(\phi)$ for all $\ket{\phi}\in\H_{\IO}.$
 }
\begin{claim}[{$C_{\uni(n)}=C_{\catU[n][\ell]}$}]
\label{claim:exactCatalystUseless}For all $U\in\catU[n][\ell]$
there exists $V\in\uni(n)$ such that, for all $\ket{\phi}_{\IO}\otimes\ket{\eta}_{\catalyst}\in\H_{\IO}\otimes\H_{\catalyst}$
it holds that
\[
C^{\eta}_{U}\left(\phi\right)=C_{V}\left(\phi\right).
\]
\end{claim}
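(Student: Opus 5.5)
The plan is to show that every $U\in\catU[n][\ell]$ factorises exactly as $U=V\otimes\I_{2^{\ell}}$ for a single $V\in\uni(n)$. Once this holds, $C^{\eta}_{U}(\phi)=\tr_{\catalyst}\left[(V\phi V^{\dagger})\otimes\eta\right]=V\phi V^{\dagger}=C_{V}(\phi)$ for every $\ket{\phi}$ and every $\ket{\eta}$. This is slightly stronger than the argument sketched before the claim, where $V$ could a priori depend on $\eta$; the statement asks for one $V$ that works for all catalyst states. I would proceed in three steps.

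\emph{Step 1 (fixed catalyst: a linear isometry).} Fix a unit vector $\ket{\eta}\in\H_{\catalyst}$. By \Cref{def:cat-unitaries}, for each $\ket{\phi}$ there is a $\ket{\psi}$ with $U\ket{\phi}\ket{\eta}=\ket{\psi}\ket{\eta}$. This $\ket{\psi}$ is unique, since $\ket{\psi}=(\I\otimes\bra{\eta})U\ket{\phi}\ket{\eta}$. Define $L_{\eta}:=(\I\otimes\bra{\eta})U(\I\otimes\ket{\eta})$, which is manifestly linear. Then $U(\ket{\phi}\otimes\ket{\eta})=L_{\eta}\ket{\phi}\otimes\ket{\eta}$ for all $\ket{\phi}$, including non-normalised vectors by linearity. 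Because $U$ preserves norms, $\norm{L_{\eta}\phi}=\norm{\phi}$, so $L_{\eta}$ is an isometry on the finite-dimensional space $\H_{\IO}$ and hence $L_{\eta}\in\uni(n)$. This already gives the $\eta$-dependent version of the claim.

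\emph{Step 2 (independence from $\eta$).} Take two orthonormal catalyst vectors $\ket{e_{1}},\ket{e_{2}}$ and set $\ket{\eta}=(\ket{e_{1}}+\ket{e_{2}})/\sqrt{2}$. By linearity of $U$, and Step 1 applied to $\ket{e_{1}}$ and $\ket{e_{2}}$,
\[
U\ket{\phi}\ket{\eta}=\tfrac{1}{\sqrt{2}}\left(L_{e_{1}}\ket{\phi}\otimes\ket{e_{1}}+L_{e_{2}}\ket{\phi}\otimes\ket{e_{2}}\right).
\]
By Step 1 applied to $\ket{\eta}$, the same vector also equals $\tfrac{1}{\sqrt{2}}\left(L_{\eta}\ket{\phi}\otimes\ket{e_{1}}+L_{\eta}\ket{\phi}\otimes\ket{e_{2}}\right)$. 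Comparing the coefficients of the orthonormal vectors $\ket{e_{1}}$ and $\ket{e_{2}}$ gives $L_{e_{1}}\ket{\phi}=L_{\eta}\ket{\phi}=L_{e_{2}}\ket{\phi}$ for all $\ket{\phi}$.

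\emph{Step 3 (conclusion).} Applying Step 2 to every pair of computational basis states of the catalyst shows that all the $L_{e_{j}}$ equal a common $V\in\uni(n)$. Therefore $U(\ket{\phi}\otimes\ket{e_{j}})=V\ket{\phi}\otimes\ket{e_{j}}$ for every $j$. By linearity in the catalyst argument, $U=V\otimes\I$, and the channel identity follows as described at the start.

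The only delicate point is Step 2, where one must use the exact product form with the \emph{same} vector $\ket{\eta}$, not just the same state up to phase. The definition supplies this: the catalyst factor in the output is literally $\ket{\eta}$. Global phases can therefore be absorbed into $\ket{\psi}$ and cause no ambiguity. The case $\ell=0$ is trivial, with $V=U$.
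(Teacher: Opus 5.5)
Your proof is correct, and it proves slightly more than the paper's argument does. The paper's proof is essentially your Step 1 written in a basis. It fixes $\ket{\eta}$, takes an orthonormal basis $\{\ket{\phi_i}\}_i$ of $\H_{\IO}$, observes that the images $\ket{\psi_i}\otimes\ket{\eta}$ must again be orthonormal, and sets $V:=\sum_i\ket{\psi_i}\bra{\phi_i}$. The $\ket{\psi_i}$ are defined relative to the chosen $\ket{\eta}$, so this produces one unitary $V_{\eta}$ per catalyst state. That is, it establishes ``for every $\ket{\eta}$ there is a $V$''. This suffices for the set equality $C_{\uni(n)}=C_{\catU[n][\ell]}$, but not, read literally, for the quantifier order in the statement.

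Your Steps 2--3 supply exactly the missing uniformity. Feeding in the superposed catalyst $(\ket{e_1}+\ket{e_2})/\sqrt{2}$ forces all the $L_{e_j}$ to coincide. Hence $U=V\otimes\mathbb{I}$, and a single $V$ serves every catalyst, mixed ones included.

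The price is in the hypothesis. You use restoration of \emph{every} catalyst state, superpositions in particular. This is built into $\catU[n][\ell]$ but not into the fixed-catalyst class $\catU[n][\ell][\ket{\eta}]$, where the factorisation genuinely fails. For instance, with $n=\ell=1$, the controlled gate $\mathbb{I}\otimes\den{0}+X\otimes\den{1}$ restores $\ket{0}$ exactly but is not of the form $V\otimes\mathbb{I}$. The paper's one-catalyst basis argument, by contrast, carries over verbatim to $\catU[n][\ell][\ket{\eta}]$. That is the class over which $\caterg$ is actually maximised when the paper later shows $\erg_{\rho,H}=\caterg_{\rho,H}$.
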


\branchcolor{darkgray}{In view of this, one might argue that perhaps the definition is too
strict---requiring \emph{all} catalysts to be restored \emph{exactly}
is not very physical. Indeed, no physical process happens with zero
error. Further, using a completely unknown state in a register that,
must be restored exactly, seems rather difficult. 

We thus, address both of these arguments at once and consider a more
relaxed notion of catalytic computation: we only require that for
some fixed state, the catalytic property holds---and that too up
to some error $\varepsilon$. In fact, we even allow the $\catalyst$
register to stay slightly entangled with the $\IO$ register, as long
as it is close to being a tensor product. More formally, we consider
the following notion. }

\begin{defn}[{Approximate Catalytic Unitaries $\catU[n][\ell,\varepsilon]$}]
\label{def:approx-cat-unitaries}Let the $(n+\ell)$-qubit composite
Hilbert space be denoted by ${\cal H}:=\H_{\IO}\otimes\H_{\catalyst}.$
Then, the set of \emph{approximate catalytic unitaries} $\catU[n][\ell,\varepsilon]$
is defined as 
\[
\catU[n][\ell,\varepsilon]:=\left\{ U\in{\cal U}(n+\ell):\begin{array}{r}
\exists\left|\eta\right\rangle \in{\cal H}_{\catalyst}\ \forall\ \left|\phi\right\rangle \in\H_{\IO},\quad\exists\ \left|\psi\right\rangle \otimes\ket{\xi}\in\H\\
\text{s.t. }\TD{U\left(\phi\otimes\eta\right)U^{\dagger}}{\psi\otimes\xi}\le\varepsilon
\end{array}\right\} .
\]
\end{defn}

\branchcolor{darkgray}{We now investigate the same question: does relaxing the catalyst
restoration to being approximate, result in more computational power?
To wit: 
\[
C_{\catU[n][\ell,\varepsilon]}\stackrel{?}{\subsetneq}C_{\uni(n)}.
\]
It turns out that, up to some error (that vanishes with $\varepsilon$)
these operations are still the same. We prove this in \Cref{lem:dirty-vs-no-cat-unbounded}
and \Cref{thm:approx-cat-restoration-is-useless} below. 

 }
\begin{lem}
\label{lem:dirty-vs-no-cat-unbounded}Let $S$ be a finite-dimensional quantum system with Hilbert space
$\H_S$, and let $C$ be a catalyst register with Hilbert space $\H_C$.
Let the catalyst be initially prepared in a fixed pure state
$\ket{\eta}_C$, and let $U_{SC}$ be a fixed unitary on
$\H_S\otimes\H_C$. Define the isometry
$W:\H_S\to\H_S\otimes\H_C$ by
\begin{equation}
\label{eq:def-isometry-W}
W\ket{\phi}
:=
U_{SC}\bigl(\ket{\phi}_S\ket{\eta}_C\bigr),
\end{equation}
and let the induced channel on $S$ be
\begin{equation}
\label{eq:def-induced-system-channel}
\mathcal{N}_S(\rho)
:=
\tr_C\left[W\rho W^\dagger\right].
\end{equation}

Suppose that $0\leq\varepsilon<1$ and that, for every pure state
$\ket{\phi}_S$, there exists a pure state
$\ket{\psi_\phi}_S$ such that
\begin{equation}
\label{eq:joint-approximate-restoration-assumption}
\mathsf{TD}\left(
W\den{\phi}W^\dagger,\,
\den{\psi_\phi}_S\otimes\den{\eta}_C
\right)
\leq \varepsilon.
\end{equation}
Then there exists a fixed unitary $V_S$ on $\H_S$ such that, for every $\rho\in\Den(S)$,
\begin{equation}
\label{eq:approx-cat-close-unitary-channel}
\mathsf{TD}\left(
\mathcal{N}_S(\rho),\,
V_S\rho V_S^\dagger
\right)
\leq
\frac{3\varepsilon^2}{2}.
\end{equation}
\end{lem}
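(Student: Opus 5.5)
The plan is to isolate the component of the output that lies along the catalyst state $\ket{\eta}$, show that this component acts on $S$ as a near-isometry, and take the unitary part of its polar decomposition as $V_S$.

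\textbf{Step 1: the compressed map and its singular values.} Define the operator $L:\H_S\to\H_S$ by $L:=(\I_S\otimes\bra{\eta}_C)W$. Write $\Pi_\eta^\perp:=\I_C-\den{\eta}$ and $R:=(\I_S\otimes\Pi_\eta^\perp)W$, so that
\[
W\ket{\phi}=L\ket{\phi}\otimes\ket{\eta}+R\ket{\phi}.
\]
Both outputs in \Cref{eq:joint-approximate-restoration-assumption} are pure, so by the pure-state formula in \Cref{nota:basic-notations} the hypothesis gives $|\bra{\psi_\phi}\bra{\eta}W\ket{\phi}|^2\geq1-\varepsilon^2$. The left-hand side equals $|\braket{\psi_\phi}{L\phi}|^2\leq\norm{L\ket{\phi}}^2$. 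Since this holds for every unit vector $\ket{\phi}$, we get the operator inequality $L^\dagger L\geq(1-\varepsilon^2)\I_S$. On the other hand, $W$ is an isometry, so $L^\dagger L+R^\dagger R=\I_S$ and hence $L^\dagger L\leq\I_S$. Together these say the singular values of $L$ lie in $[\sqrt{1-\varepsilon^2},1]$.

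\textbf{Step 2: the candidate unitary.} Take the polar decomposition $L=V_S|L|$, where $V_S$ is unitary; $L$ is invertible by Step 1, and a unitary polar factor exists in finite dimension in any case. Set $\delta:=1-\sqrt{1-\varepsilon^2}$. Step 1 gives $\norm{\I-|L|}_\infty\leq\delta\leq\varepsilon^2$. Note that $V_S$ depends only on $U_{SC}$ and $\eta$, and not on the input state, which is exactly what the statement requires.

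\textbf{Step 3: decomposing the channel.} Tracing out $C$ in $W\rho W^\dagger$ gives
\[
\mathcal{N}_S(\rho)=L\rho L^\dagger+\tr_C\left[R\rho R^\dagger\right].
\]
The cross terms vanish because $R$ maps into $\H_S\otimes\Pi_\eta^\perp\H_C$, so $\tr_C\bigl[(L\rho\otimes\ket{\eta})R^\dagger\bigr]$ involves $\Pi_\eta^\perp\ket{\eta}=0$. This is the step I consider the crux: exploiting the orthogonality is what makes the error quadratic in $\varepsilon$ rather than linear. The second term is positive, with trace
\[
\tr\bigl[\rho(\I-L^\dagger L)\bigr]\leq\varepsilon^2 .
\]
For the first term, $L\rho L^\dagger-V_S\rho V_S^\dagger=V_S\bigl(|L|\rho|L|-\rho\bigr)V_S^\dagger$. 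Writing $P=|L|$ and expanding $P\rho P-\rho=(P-\I)\rho P+\rho(P-\I)$, Hölder's inequality bounds its trace norm by $2\delta$.

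\textbf{Step 4: assembling the bound.} Combining the two terms,
\[
\norm{\mathcal{N}_S(\rho)-V_S\rho V_S^\dagger}_1\leq2\delta+\varepsilon^2\leq3\varepsilon^2 .
\]
Halving gives the trace-distance bound $3\varepsilon^2/2$. All the estimates above are operator inequalities, so they hold uniformly over all $\rho\in\Den(S)$, including mixed states, not only over pure inputs.
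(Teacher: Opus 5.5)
Your proposal is correct and follows essentially the same route as the paper. The paper also projects onto $\ket{\eta}$ to get $\mathcal{L}$ and pins its singular values in $[\sqrt{1-\varepsilon^2},1]$ via the pure-state overlap; it takes $V_S=\sum_j\ket{u_j}\bra{v_j}$ from the SVD, which is exactly your polar factor, and splits off the leakage term $K=W-\mathcal{L}\otimes\ket{\eta}$ (your $R$), whose cross terms vanish under $\tr_C$, to obtain the same bound $1-\sqrt{1-\varepsilon^2}+\varepsilon^2/2\leq 3\varepsilon^2/2$.
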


\branchcolor{black}{\begin{proof}
Define the linear operator $\mathcal{L}:\H_S\to\H_S$ by
\begin{equation}
\label{eq:def-L-joint-restoration}
\mathcal{L}
:=
\left(\mathbb{I}_S\otimes\bra{\eta}_C\right)W.
\end{equation}
For every pure state $\ket{\phi}_S$, the two states appearing in
\Cref{eq:joint-approximate-restoration-assumption} are pure. Hence,
using
\begin{equation}
\label{eq:pure-state-trace-distance-overlap}
\mathsf{TD}\left(\den a,\den b\right)
=
\sqrt{1-\left|\braket{a}{b}\right|^2},
\end{equation}
we obtain
\begin{align}
\left|
\left(
\bra{\psi_\phi}_S\otimes\bra{\eta}_C
\right)
W\ket{\phi}
\right|
&\geq
\sqrt{1-\varepsilon^2}
\nonumber\\
\implies
\left|
\bra{\psi_\phi}\mathcal{L}\ket{\phi}
\right|
&\geq
\sqrt{1-\varepsilon^2}.
\label{eq:L-overlap-lower-bound}
\end{align}
Since $\ket{\psi_\phi}$ is normalized,
\Cref{eq:L-overlap-lower-bound} implies
\begin{equation}
\label{eq:L-norm-lower-bound-new}
\norm{\mathcal{L}\ket{\phi}}
\geq
\sqrt{1-\varepsilon^2}
\qquad
\text{for every normalized }\ket{\phi}\in\H_S.
\end{equation}
Moreover, $\mathcal{L}$ is a contraction because it is obtained by
projecting the isometry $W$ onto the catalyst state $\ket{\eta}$.
Therefore,
\begin{equation}
\label{eq:L-norm-sandwich-new}
\sqrt{1-\varepsilon^2}
\leq
\norm{\mathcal{L}\ket{\phi}}
\leq 1
\qquad
\text{for every normalized }\ket{\phi}\in\H_S.
\end{equation}
Equivalently,
\begin{equation}
\label{eq:LdaggerL-sandwich}
(1-\varepsilon^2)\mathbb{I}_S
\preceq
\mathcal{L}^\dagger\mathcal{L}
\preceq
\mathbb{I}_S.
\end{equation}

Let
\begin{equation}
\label{eq:L-svd-new}
\mathcal{L}
=
\sum_{j=1}^{d}
s_j\ket{u_j}\bra{v_j}
\end{equation}
be a singular-value decomposition. By
\Cref{eq:LdaggerL-sandwich},
\begin{equation}
\label{eq:singular-values-L-new}
\sqrt{1-\varepsilon^2}
\leq
s_j
\leq 1
\qquad
\text{for every }j\in[d].
\end{equation}
Define
\begin{equation}
\label{eq:def-V-from-L-new}
V_S
:=
\sum_{j=1}^{d}
\ket{u_j}\bra{v_j}.
\end{equation}
Since $\varepsilon<1$, all singular values of $\mathcal{L}$ are
strictly positive, and hence $V_S$ is unitary. Furthermore,
\begin{align}
\norm{\mathcal{L}-V_S}_\infty
&=
\max_j |s_j-1|
\nonumber\\
&\leq
1-\sqrt{1-\varepsilon^2}.
\label{eq:L-V-operator-bound-new}
\end{align}

We now separate the component of $W$ in which the catalyst is restored
exactly from the component orthogonal to $\ket{\eta}_C$. Define
$K:\H_S\to\H_S\otimes\H_C$ by
\begin{equation}
\label{eq:def-K-leakage}
K
:=
W-
\left(\mathcal{L}\otimes\ket{\eta}_C\right).
\end{equation}
By the definition of $\mathcal{L}$,
\begin{equation}
\label{eq:K-orthogonal-catalyst}
\left(\mathbb{I}_S\otimes\bra{\eta}_C\right)K
=
0.
\end{equation}
Consequently, the two terms in
\Cref{eq:def-K-leakage} have orthogonal catalyst components. Since
$W^\dagger W=\mathbb{I}_S$, we therefore have
\begin{equation}
\label{eq:KdaggerK}
K^\dagger K
=
\mathbb{I}_S-\mathcal{L}^\dagger\mathcal{L}
\preceq
\varepsilon^2\mathbb{I}_S,
\end{equation}
where the final inequality follows from
\Cref{eq:LdaggerL-sandwich}.

Using \Cref{eq:def-K-leakage,eq:K-orthogonal-catalyst}, the reduced system channel can be written as
\begin{equation}
\label{eq:reduced-channel-L-K}
\mathcal{N}_S(\rho)
=
\mathcal{L}\rho\mathcal{L}^\dagger
+
\tr_C\left[K\rho K^\dagger\right].
\end{equation}
Indeed, the cross terms vanish after tracing out the catalyst because the image of $K$ is orthogonal to the catalyst state $\ket{\eta}_C$.

Let
\begin{equation}
\label{eq:def-leakage-state-R}
R_\rho
:=
\tr_C\left[K\rho K^\dagger\right].
\end{equation}
This is positive semidefinite and, by
\Cref{eq:KdaggerK},
\begin{equation}
\tr R_\rho
=
\tr\left[K^\dagger K\rho\right]
\nonumber\leq
\varepsilon^2.
\label{eq:leakage-trace-bound}
\end{equation}

We first compare the first term in \Cref{eq:reduced-channel-L-K} with the unitary channel generated by $V_S$. Using the H\"older inequality, $\norm{\rho}_1=1$, $\norm{\mathcal{L}}_\infty\leq1$, and
$\norm{V_S}_\infty=1$, we obtain
\begin{align}
\frac{1}{2}
\norm{
\mathcal{L}\rho\mathcal{L}^\dagger
-
V_S\rho V_S^\dagger
}_1
&\leq
\frac{1}{2}
\norm{
(\mathcal{L}-V_S)\rho\mathcal{L}^\dagger
}_1
+
\frac{1}{2}
\norm{
V_S\rho(\mathcal{L}^\dagger-V_S^\dagger)
}_1
\nonumber\\
&\leq
\norm{\mathcal{L}-V_S}_\infty
\nonumber\\
&\leq
1-\sqrt{1-\varepsilon^2}.
\label{eq:L-channel-close-V-channel}
\end{align}
Moreover, since $R_\rho\succeq0$,
\begin{equation}
\label{eq:trace-distance-leakage-term}
\frac{1}{2}\norm{R_\rho}_1
=
\frac{1}{2}\tr R_\rho
\leq
\frac{\varepsilon^2}{2}.
\end{equation}
Combining
\Cref{eq:reduced-channel-L-K,eq:L-channel-close-V-channel,eq:trace-distance-leakage-term}
with the triangle inequality yields
\begin{equation}
\mathsf{TD}\left(
\mathcal{N}_S(\rho),
V_S\rho V_S^\dagger\right)
\leq
1-\sqrt{1-\varepsilon^2}
+\frac{\varepsilon^2}{2}
\nonumber
\leq
\frac{3}{2}\varepsilon^2.
\label{eq:final-joint-restoration-bound}
\end{equation}
\end{proof}}

\branchcolor{darkgray}{By inspecting the definitions of $\catU[n][\ell,\varepsilon]$ and
$\uni(n)$ above, \Cref{lem:dirty-vs-no-cat-unbounded} can be recast
more compactly as follows.}

Recall that $n,\ell\in\N$, $\H_{\IO}=(\mathbb{C}^{2})^{\otimes n}\text{ and }\H_{\catalyst}=(\mathbb{C}^{2})^{\otimes\ell}$.

\begin{thm}[{$C_{\catU[n][\ell,\varepsilon]}\approx_{\varepsilon^{2}}C_{\uni(n)}$
}]
\label{thm:approx-cat-restoration-is-useless}Let $0\leq\varepsilon<1$, and let
$U\in\catU[n][\ell,\varepsilon]$.
For every pure catalyst state $\eta=\den{\eta}\in\Den(\ell)$,
there exists a unitary $V_{\eta}\in\uni(n)$ such that
\begin{equation}
\label{eq:approx-cat-channel-close-unitary}
\max_{\rho\in\Den(n)}
\TD{C^{\eta}_{U}(\rho)}
{C_{V_{\eta}}(\rho)}
\leq \frac{3\varepsilon^{2}}{2}.
\end{equation}
\end{thm}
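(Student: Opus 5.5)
The plan is to obtain the theorem as a direct instantiation of \Cref{lem:dirty-vs-no-cat-unbounded}, with $S=\IO$ and $C=\catalyst$, so that $\H_S=\H_{\IO}=(\C^{2})^{\otimes n}$ and $\H_C=\H_{\catalyst}=(\C^{2})^{\otimes\ell}$.

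\emph{Setting up the lemma.} Fix $U\in\catU[n][\ell,\varepsilon]$ and the pure catalyst state $\ket{\eta}$ for which the restoration property of \Cref{def:approx-cat-unitaries} holds. Set $U_{SC}:=U$ and define the isometry $W\ket{\phi}:=U(\ket{\phi}\otimes\ket{\eta})$, exactly as in \Cref{eq:def-isometry-W}. By \Cref{nota:partial-tr-channel-catU}, the induced channel $\mathcal{N}_S$ of \Cref{eq:def-induced-system-channel} is precisely $C^{\eta}_{U}$. Moreover, $C_{V}(\rho)=V\rho V^{\dagger}$ for any $V\in\uni(n)$.

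\emph{Checking the hypothesis.} \Cref{eq:joint-approximate-restoration-assumption} requires that, for every pure $\ket{\phi}$, some product state $\psi_\phi\otimes\eta$ lies within trace distance $\varepsilon$ of $W\phi W^{\dagger}$. Membership in $\catU[n][\ell,\varepsilon]$ supplies such a product $\psi\otimes\xi$ for each $\phi$. Here I read the catalyst factor $\xi$ as the original catalyst state $\eta$, which is the sense of ``restoration'' shown in the upper rows of \Cref{fig:models-catalysts}. With $\xi=\eta$, the hypothesis holds verbatim with the same $\varepsilon<1$.

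\emph{Concluding.} The lemma then gives a single unitary $V_S=:V_\eta$, namely the polar (unitary) part of $\mathcal{L}=(\I\otimes\bra{\eta})W$. It satisfies $\TD{C^\eta_U(\rho)}{V_\eta\rho V_\eta^{\dagger}}\le 3\varepsilon^2/2$ for every $\rho\in\Den(n)$. The bound does not depend on $\rho$, so taking the maximum over $\rho$ gives \Cref{eq:approx-cat-channel-close-unitary}.

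\emph{Main obstacle.} Since the analytic work is already done in the lemma, the one step needing care is matching the quantifiers and the restored catalyst state. If one insists on the literal definition, where $\xi=\xi_\phi$ may be arbitrary and may depend on $\phi$, the lemma does not apply directly. In that case I would first show that the overlap $|\bra{\eta}\xi_\phi\rangle|$ is forced to be large, or else replace $\ket{\eta}$ by a fixed reference $\ket{\xi}$, in the following steps.

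\begin{enumerate}
\item Apply linearity of $W$ to superpositions of basis inputs. A $\phi$-dependent catalyst factor would create correlations, and hence trace distance of order $\varepsilon$, for superposed inputs. This should pin every $\xi_\phi$ to within $O(\varepsilon)$ of a single state $\xi$.
\item Run the singular-value and polar-decomposition argument of the lemma with $\bra{\xi}$ in place of $\bra{\eta}$.
\end{enumerate}

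This alternative route would likely degrade the constant, and possibly the exponent of $\varepsilon$. For the theorem as stated, I would therefore use the intended reading $\xi=\eta$.
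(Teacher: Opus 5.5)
Your argument is correct and is the same as the paper's proof: you instantiate \Cref{lem:dirty-vs-no-cat-unbounded} with $W=U(\cdot\otimes\ket{\eta})$, identify the induced channel with $C^{\eta}_{U}$, and read the product state supplied by the definition as $\psi_\phi\otimes\eta$, which is exactly how the paper's proof uses the definition. One correction to your fallback: the reading $\xi=\eta$ is necessary, not merely convenient, because under the literal definition the theorem is false. For $n=\ell$, the unitary $U=\SWAP$ sends every $\phi\otimes\eta$ exactly to the product $\eta\otimes\phi$, so $U\in\catU[n][n,0]$, yet $C^{\eta}_{U}$ is the replacement channel $\rho\mapsto\eta$; a $\phi$-dependent catalyst factor therefore need not create any correlations, and your Step 1 cannot succeed.
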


\branchcolor{black}{\begin{proof}
Fix a pure catalyst state $\eta=\den{\eta}$.
By the definition of $\catU[n][\ell,\varepsilon]$, for every pure
state $\ket{\phi}\in\H_{S}$ there exists a pure state
$\ket{\psi_{\phi}}\in\H_{S}$ such that
\begin{equation}
\label{eq:joint-restoration-for-theorem}
\TD{
U\left(\den{\phi}\otimes\eta\right)U^{\dagger}
}{
\den{\psi_{\phi}}\otimes\eta
}
\leq\varepsilon.
\end{equation}
Therefore, from \Cref{lem:dirty-vs-no-cat-unbounded}, there exists a unitary $V_{\eta}\in\uni(n)$ such that, for every $\rho\in\Den(n)$,
\begin{equation}
\TD{
\tr_{\catalyst}\left[
U\left(\rho\otimes\eta\right)U^{\dagger}
\right]
}{
V_{\eta}\rho V_{\eta}^{\dagger}
}
\leq \frac{3\varepsilon^{2}}{2}.
\label{eq:apply-joint-restoration-lemma}
\end{equation}
Using the definitions
$$
C^{\eta}_{U}(\rho)
=
\tr_{\catalyst}\left[
U\left(\rho\otimes\eta\right)U^{\dagger}
\right]
$$
and
$$
C_{V_{\eta}}(\rho)
=
V_{\eta}\rho V_{\eta}^{\dagger},
$$
we obtain
$$
\max_{\rho\in\Den(n)}
\TD{C^{\eta}_{U}(\rho)}{C_{V_{\eta}}(\rho)}
\leq
\frac{3\varepsilon^{2}}{2}.
$$
\end{proof}}

\branchcolor{darkgray}{As alluded to earlier, from \Cref{thm:approx-cat-restoration-is-useless}
above, we essentially get $C_{\catU[n][\ell,\varepsilon]}\approx_{\varepsilon^{2}}C_{\uni(n)}$,
i.e. the catalytic model of computation does not offer an advantage
in the computationally unbounded setting---even where the catalyst
may be correlated with the system.
\begin{quote}
\emph{Could it be that in the computationally bounded setting, catalysts
turn out to be advantageous?}
\end{quote}
To answer this question, we first define computationally efficient
variants of catalytic unitaries discussed above.  }

\subsection{Computational Catalytic setting}

\branchcolor{darkgray}{The most natural way to define a computational version of the catalytic
model, seems to be one where the catalyst register is initialised
in some fixed state $\left|0^{\ell}\right\rangle $ of $\ell$ qubits
and must be restored exactly. Restricting to $\left|0^{\ell}\right\rangle $
ensures that one cannot inject ``advice'' into the model, making
it potentially too powerful. Requiring that $\left|0^{\ell}\right\rangle $
is restored---but other states can be mapped arbitrarily---makes
it easier to consider to use the catalyst in a meaningful way. Finally,
the requirement that the catalyst be exactly restored, seems reasonable
in the idealised setting (e.g. where fault-tolerance allows one to
focus on logical qubits). One may justifiably argue that perhaps allowing
some error could substantially increase the power of this model. We
account for this later where we prove impossibilities (i.e. problems
one cannot solve with few/no catalysts in the computational setting),
we allow arbitrary errors---in fact, we prove space lower bounds. 

It would be helpful to first define a class of catalytic unitaries,
for a fixed $n,\ell\in\N$, that is required to restore a given catalyst
state $\left|\eta\right\rangle $ on $\ell$-qubits. }

\begin{defn}[{Catalytic unitaries given a catalyst $\catU[n][\ell][\left|\eta\right\rangle ]$}]
Let the $(n+\ell)$-qubit composite Hilbert space be denoted by ${\cal H}:=\H_{\IO}\otimes\H_{\catalyst}.$
Then, the set of \emph{catalytic unitaries} $\catU[n][\ell][\left|\eta\right\rangle ]$
given the catalyst $\left|\eta\right\rangle \in\H_{\catalyst}$, is
defined as

\[
\catU[n][\ell][\left|\eta\right\rangle ]:=\left\{ U\in{\cal U}(n+\ell):\begin{array}{r}
\forall\ \left|\phi\right\rangle \in\H_{\IO},\quad\exists\ \left|\psi\right\rangle \in{\cal H}_{\IO}\\
\text{s.t. }U\left|\phi\right\rangle \otimes\left|\eta\right\rangle =\left|\psi\right\rangle \otimes\left|\eta\right\rangle 
\end{array}\right\} .
\]
\end{defn}

\branchcolor{darkgray}{To restrict to a computational notion of catalytic computation, we
now define a class of PPT algorithms that generate the description
of catalytic unitaries that are in $\mathsf{cat}\text{-}{\cal U}_{\left|0^{\ell}\right\rangle }$.}

Fix polynomials $n,\ell:\mathbb{N}\to\mathbb{N}$. For $\secp\in\mathbb{N}$,
let $\H_{\IO,\secp}=(\mathbb{C}^{2})^{\otimes n(\secp)}$ and $\H_{\catalyst,\secp}=(\mathbb{C}^{2})^{\otimes\ell(\secp)}$
be the family of Hilbert spaces indexed by $\secp$, corresponding
to the quantum workspace and the catalyst respectively.
\begin{defn}[$\catgen{n,\ell}$]
\label{def:ql-cat} A PPT algorithm $\cA$ belongs to $\catgen{n,\ell}$,
if for all $\lambda\in\N$, running ${\cal A}$ on input $1^{\lambda}$
produces a circuit description of a projector $\Pi_{\lambda}\in\Proj(n(\lambda))$
and a catalytic unitary $U_{\lambda}\in\catU[n(\lambda)][\ell(\lambda)][\ket{0^{\ell(\lambda)}}]$,
i.e. $\circdesc(\Pi_{\lambda}),\desc(U_{\lambda})\leftarrow\mathcal{A}(1^{\lambda})$.
\end{defn}

\branchcolor{darkgray}{Using this notion of computational catalytic computation, one can
define a corresponding class of relational problems. We first setup
some notation to describe ``valid relations'' based on the sizes
of the $\IO$ register and the $\catalyst$ register. Informally,
we require that both the input and output of the relation, to fit
into the $\IO$ register. }

Using the notation setup above, we denote the set of \emph{valid relations}
as 
\[
R:=\left\{ \rel=\left\{ \rel_{\secp}\right\} _{\secp}:\begin{array}{c}
\exists m_{1},m_{2}:\N\to\N\text{ s.t. }\forall\secp\in\N,\ \ n(\secp)\geq\max(m_{1}(\secp),m_{2}(\secp))\\
\rel_{\secp}\subseteq\binset^{m_{1}(\secp)}\times\binset^{m_{2}(\secp)}
\end{array}\right\} .
\]

\branchcolor{darkgray}{We can now define the relation $\mathsf{F\widehat{cat}}$ that is
a set of relations one can solve with some computational catalytic
algorithm, i.e. an algorithm in $\mathsf{\widehat{cat}Gens}$.}
\begin{defn}[$\Fcat{n,\ell}$]
We say a relation $\rel\in R$ belongs to the complexity class $\Fcat{n,\ell}$
if there exists an algorithm ${\cal A}\in\catgen{n,\ell}$ such that
with probability at least $2/3$rds, ${\cal A}$ produces $U_{\lambda}$
satisfying the following: For all $(x_{\lambda},\ \cdot\ )\in\rel_{\lambda}$,
\end{defn}

\begin{itemize}
\item there exists $y_{\secp}$ such that $(x_{\secp},y_{\secp})\in\rel_{\secp}$
and 
\item some normalised state $\left|\varphi\right\rangle $ such that
\end{itemize}
\[
U_{\lambda}\underbrace{\ket{x_{\secp},0^{n(\secp)-|x_{\secp}|}}}_{\IO}\underbrace{\ket{0^{\ell(\lambda)}}}_{\catalyst}=\underbrace{\ket{y_{\secp}}\ket{\varphi}}_{\IO}\underbrace{\ket{0^{\ell(\lambda)}}}_{\catalyst}.
\]
 
\begin{defn}
\branchcolor{darkgray}{To clarify the notation, consider the following problem---that we
eventually use for one of our separations.}
\end{defn}

\begin{example}
\label{exa:sample-P}Given $x\in\binset^{*}$, produce
\[
y:=(x,f(g(x))).
\]
Where $f,g:\binset^{*}\to\binset^{*}$ are arbitrary functions such
that, $f=\{f_{\secp}\}_{\secp}$ where $f_{\secp}:\binset^{\secp}\to\binset^{\secp}$
and similarly $g=\{g_{\secp}\}_{\secp}$ where $g_{\secp}:\binset^{\secp}\to\binset^{\secp}$.
In our notation, this problem can be described as follows. 
\end{example}

\begin{itemize}
\item $\rel=\left\{ \rel_{\secp}\right\} _{\secp}$ where $\rel_{\secp}=\left\{ (x,y):y=(x,f(g(x))),\forall x\in\binset^{\secp}\right\} $, 
\item For this relation $m_{1}(\secp)=\secp,m_{2}(\secp)=2\secp$.
\end{itemize}
Note that this relation is only valid when the size of $\H_{\IO}$,
$n:\N\to\N$, is such that $n(\secp)\geq2\secp$.

\branchcolor{darkgray}{Using this complexity class, one can concretely ask whether catalysts
help in the computational setting: 
\[
\Fcat{n,0}\stackrel{?}{\subsetneq}\Fcat{n,\ell}
\]
for some $\ell>0$? 

In the following discussion, we work in the oracle setting and therefore
first describe oracular versions of the definitions considered so
far. These are fairly natural but we include them for completeness.}

\subsubsection{Computational catalytic computation with oracles}\label{subsec:Computational-catalytic-setting-oracles}

Let $c\in\N$ be a constant and oracles $\mathbf{O}=\left\{ O_{1},\dots,O_{c}\right\} $,
where $O_{i}$ encodes a function $f_{i}:{\cal Q}\to{\cal R}$ via
\[
O_{i}\left|q,r\right\rangle =\left|q,r\oplus f_{i}(q)\right\rangle ,\qquad q\in{\cal Q},\ r\in{\cal R}
\]
A machine $\A\in\PPT$ that outputs a circuit description of the unitaries
can do the following actions when given access to $\fat O$:
\begin{enumerate}
\item At any step, ${\cal A}$ may make queries to $\mathbf{O}$;
\item The circuit description of the unitary, $\desc(U_{\secp})\leftarrow\A^{\fat O}(1^{\secp})$,
that is output by $\cA$, can use the oracles $O_{1},\dots,O_{c}$
as a gate. Therefore, the unitary circuit $U_{\secp}$ can query $\fat O$
at any point in its  execution. 
\end{enumerate}
In the oracle setting, we allow the set of relations $R^{\fat O}$
to also depend on $\fat O$.
\begin{rem}[$\catgen{n,\ell}^{\fat O}$ and $\Fcat{n,\ell}^{\fat O}$]
 We extend the definitions of $\catgen{n,\ell}$ and $\Fcat{n,\ell}$
naturally.
\end{rem}

\begin{enumerate}
\item A PPT algorithm $\cA^{\fat O}$ belongs to $\catgen{n,\ell}^{\fat O}$,
if for all $\lambda\in\N$, running ${\cal A}^{\fat O}$ on input
$1^{\lambda}$ produces a circuit description of a projector $\Pi_{\lambda}\in\Proj(n(\lambda))$
and a catalytic unitary $U_{\lambda}\in\catU[n(\lambda)][\ell(\lambda)][\left|0^{\ell(\lambda)}\right\rangle ]$
i.e. $(\circdesc(\Pi_{\lambda}),\desc(U_{\lambda}))\leftarrow\mathcal{A}^{\fat O}(1^{\lambda})$.
\item We say a relation $\rel\in R^{\fat O}$ belongs to the complexity
class $\Fcat{n,\ell}^{\fat O}$ if there exists an algorithm ${\cal A}^{\fat O}\in\catgen{n,\ell}^{\fat O}$
such that with probability at least $2/3$rds, ${\cal A}^{\fat O}$
produces $U_{\lambda}$ satisfying the following: For all $(x_{\lambda},\ \cdot\ )\in\rel^{\fat O}_{\lambda}$,
\begin{itemize}
\item there exists $y_{\secp}$ such that $(x_{\secp},y_{\secp})\in\rel^{\fat O}_{\secp}$
and 
\item some normalised state $\left|\varphi\right\rangle $ such that
\end{itemize}
\[
U_{\lambda}\underbrace{\ket{x_{\secp},0^{n(\secp)-|x_{\secp}|}}}_{\IO}\underbrace{\ket{0^{\ell(\lambda)}}}_{\catalyst}=\underbrace{\ket{y_{\secp}}\ket{\varphi}}_{\IO}\underbrace{\ket{0^{\ell(\lambda)}}}_{\catalyst}.
\]

\end{enumerate}

\subsection{Relational Separation $\protect\Fcat{2\lambda,c\lambda}\subsetneq\protect\Fcat{2\lambda,\lambda}$
(for $c<1$)}

\branchcolor{darkgray}{With the notation in place, we now focus on establishing one of our
main results. As we alluded to earlier (in \Cref{exa:sample-P}),
the problem we consider is as follows.}

Let $\fat O:=(O_{f},O_{g})$ be oracles corresponding to the \emph{length
preserving} random functions $f,g:\binset^{*}\to\binset^{*}$.
\begin{problem}
\label{prob:the-probem}We denote by $\mathcal{P}:=\left\{ {\cal P}_{\secp}\right\} _{\secp}$
the relation where,
\begin{align*}
\P_{\secp} & :=\left\{ (x,y):x\in\binset^{\secp},y=(x,f(g(x)))\right\} .
\end{align*}
\end{problem}

\begin{rem}
We observe that the relation above, satisfies the requirements in
\Cref{subsec:Computational-catalytic-setting-oracles}. Let
\begin{itemize}
\item $n,\ell:\mathbb{N}\to\mathbb{N}$ be functions, $n(\secp)=2\secp,\ \ell(\secp)=\secp$.
\item For $\secp\in\mathbb{N}$, let $\H_{\IO,\secp}=(\mathbb{C}^{2})^{\otimes n(\secp)}$
and $\H_{\catalyst,\secp}=(\mathbb{C}^{2})^{\otimes\ell(\secp)}$
be the family of Hilbert spaces indexed by $\secp$.
\item Let $O_{f},O_{g}$ be as above. 
\end{itemize}
The relation $\P$ (see \Cref{prob:the-probem}) is a valid relation
because for each $(x,y)\in\P_{\secp}$, it holds that $\max\left(|x|,|y|\right)=2\secp=n(\secp)$.

\end{rem}

\branchcolor{darkgray}{We start by establishing that with enough catalysts, one can solve
the problem ${\cal P}$ in the computational catalytic setting.}
\begin{lem}
\label{lem:PinFCAT_3lambda} Let $n,\ell:\mathbb{N}\to\mathbb{N}$
be functions, $n(\secp)=2\secp,\ \ell(\secp)=\secp$. Then, it holds
that 
\[
{\cal P}\in\Fcat{n,\ell}^{\fat O}.
\]
\end{lem}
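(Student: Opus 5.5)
We will give an explicit deterministic algorithm $\A^{\fat O}\in\catgen{n,\ell}^{\fat O}$ that, on input $1^{\secp}$, outputs (with probability $1$, hence in particular with probability at least $2/3$) the description of a three-query circuit $U_{\secp}$, together with a trivial projector, say $\Pi_{\secp}=\I$. Throughout we label the $\IO$ register as $\mathsf{X}\mathsf{Y}$, with $\secp$ qubits each, and the catalyst as $\mathsf{C}$, with $\ell(\secp)=\secp$ qubits. The circuit $U_{\secp}$ is the following sequence of oracle gates:
\begin{align*}
\ket{x}_{\mathsf X}\ket{y}_{\mathsf Y}\ket{0^{\secp}}_{\mathsf C}
&\xrightarrow{O_g \text{ on } (\mathsf X,\mathsf C)}\ket{x}\ket{y}\ket{g(x)}\\
&\xrightarrow{O_f \text{ on } (\mathsf C,\mathsf Y)}\ket{x}\ket{y\oplus f(g(x))}\ket{g(x)}\\
&\xrightarrow{O_g \text{ on } (\mathsf X,\mathsf C)}\ket{x}\ket{y\oplus f(g(x))}\ket{0^{\secp}}.
\end{align*}
Each step is an oracle gate on registers of the correct sizes, since $f$ and $g$ are length preserving on $\binset^{\secp}$. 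The description of $U_{\secp}$ has constant size with respect to the wiring, so $\A$ runs in polynomial time.

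The key steps are as follows.
\begin{enumerate}
\item Check the computational basis action above. The third gate uncomputes $g(x)$, using $g(x)\oplus g(x)=0^{\secp}$.
\item Verify that $U_{\secp}\in\catU[n(\secp)][\ell(\secp)][\ket{0^{\ell(\secp)}}]$. By the first step, $U_{\secp}$ maps every basis state $\ket{x,y}\ket{0^{\secp}}$ to $\ket{x,y\oplus f(g(x))}\ket{0^{\secp}}$. By linearity, for every $\ket{\phi}=\sum_{x,y}\alpha_{xy}\ket{x,y}$ we get
\[
U_{\secp}\ket{\phi}\ket{0^{\secp}}=\ket{\psi}\ket{0^{\secp}},\qquad \ket{\psi}:=\sum_{x,y}\alpha_{xy}\ket{x,y\oplus f(g(x))}.
\]
Thus the restoration is exact for all inputs, not merely for basis states.
\item Check the $\Fcat{n,\ell}^{\fat O}$ condition for $\P$. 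For each $x\in\binset^{\secp}$ the input is $\ket{x,0^{\secp}}$, which the circuit maps to $\ket{x,f(g(x))}\ket{0^{\secp}}$. Setting $y_{\secp}=(x,f(g(x)))$, this gives $(x,y_{\secp})\in\P_{\secp}$, and the $\IO$ register is exactly $\ket{y_{\secp}}$. The residual state $\ket{\varphi}$ is empty (or trivial) because $|y_{\secp}|=2\secp=n(\secp)$.
\end{enumerate}

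There is no real obstacle here. The only point needing care is the formal requirement that the catalytic property hold for all superpositions in $\H_{\IO}$ rather than only for basis inputs, and linearity handles this. It is also worth noting explicitly that the construction uses exactly $\ell(\secp)=\secp$ catalyst qubits, namely the size of $g(x)$. This is the resource that the subsequent lower bound shows cannot be reduced to $c\secp$.
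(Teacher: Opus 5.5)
Your proposal is correct and follows the paper's proof: the same three-query compute, query, uncompute circuit ($O_g$ on $(\mathsf X,\mathsf C)$, then $O_f$ on $(\mathsf C,\mathsf Y)$, then $O_g$ again), with the $\secp$-qubit catalyst temporarily holding $g(x)$. Your linearity argument, that the catalyst is restored on all superpositions, makes explicit what the paper leaves implicit. The paper's extra SWAP gates on registers $A,B$ are not needed, since $(x,f(g(x)))$ already sits in the $\IO$ register in the required order.
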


\branchcolor{black}{\begin{proof}
Consider the family of unitaries $\fat U=\{U_{\secp}\}_{\secp}$ that
we describe below, where each $U_{\secp}$ acts on $3\secp$. For
clarity, we denote by register $A,B,C$ the first, second and third
$\secp$-qubits the unitary acts on. We describe the action of each
unitary $U_{\secp}\in\fat U$ as follows.
\begin{align*}
\ket x_{A}\ket y_{B}\ket{0^{\secp}}_{C} & \xmapsto{O_{g,AC}}\ket x_{A}\ket y_{B}\ket{g(x)}_{C}\\
 & \xmapsto{O_{f,CB}}\ket x_{A}\ket{y\oplus f(g(x))}_{B}\ket{g(x)}_{C}\\
 & \xmapsto{O_{g,AC}}\ket x_{A}\ket{y\oplus f(g(x))}_{B}\ket{0^{\secp}}_{C}.
\end{align*}
This unitary family $\fat U$ solves the relation $\rel$. Since
one can also efficiently describe this family using $\O(\secp)$-gates
in its canonical representation---three gates corresponding to the
oracles, and $\O(\secp)$ gates to swap registers $A,B$ using 2-qubit
$\mathsf{SWAP}$ gates---this unitary family can be produced by an
algorithm ${\cal A}\in\catgen{n,\ell}^{\fat O}$.
\end{proof}

}

\branchcolor{darkgray}{Proving exclusion from, even say $\Fcat{2\lambda,0}$, turns out to
be much more challenging. In fact, we will show that even with $2\lambda+c\lambda$
many qubits (where $c<1$), one cannot solve ${\cal P}$ (on most
inputs). We first setup some notation. }

Let $\Pi_{\secp},\Pi^{\perp}_{\secp}$ be projectors on $\left(n(\secp)+\ell(\secp)\right)$-qubits
such that $\Pi+\Pi^{\perp}=\mathbb{I}$. Every $\left(n(\secp)+\ell(\secp)\right)$-qubit
unitary $U_{\secp}$ with access to $\fat O=(O_{f},O_{g})$ can be
written as
\[
U_{\secp}=V_{q(\secp)}O_{f}V_{q(\secp)-1}O_{f}\cdots V_{2}O_{f}V_{1}
\]
where
\begin{itemize}
\item $V_{i}$, for $i\in[q(\secp)]$, contains gates from $G$ (the universal
gate set), and can potentially query the oracle $O_{g}$, and
\item $V_{i}$ is the unitary transformation right before the $i$-th query
to oracle $O_{f}$.Further, one can write the unitary $U$ (we suppress
the security parameter for brevity)
\begin{align*}
U & = & V_{q}O_{f}(\Pi+\Pi^{\bot})V_{q-1}O_{f}(\Pi+\Pi^{\bot})\cdots V_{2}O_{f}(\Pi+\Pi^{\bot})V_{1}\\
 & = & \underbrace{V_{q}O_{f}V_{q-1}O_{f}\cdots O_{f}V_{3}O_{f}V_{2}O_{f}}_{A_{1}}\Pi\underbrace{V_{1}}_{Z_{1}}+\\
 &  & \underbrace{V_{q}O_{f}V_{q-1}O_{f}\cdots O_{f}V_{3}O_{f}}_{A_{2}}\Pi\underbrace{V_{2}O_{f}\Pi^{\bot}V_{1}}_{Z_{2}}+\\
 &  & \vdots\quad\quad\\
 &  & \underbrace{V_{q}O_{f}}_{A_{q-1}}\Pi\underbrace{V_{q-1}O_{f}\Pi^{\bot}\cdots O_{f}\Pi^{\bot}V_{3}O_{f}\Pi^{\bot}V_{2}O_{f}\Pi^{\bot}V_{1}}_{Z_{q-1}}+\\
 &  & \underbrace{\mathbb{I}}_{A_{q}}\underbrace{V_{q}O_{f}\Pi^{\bot}V_{q-1}O_{f}\Pi^{\bot}\cdots O_{f}\Pi^{\bot}V_{3}O_{f}\Pi^{\bot}V_{2}O_{f}\Pi^{\bot}V_{1}.}_{Z_{q}}
\end{align*}
\end{itemize}
\begin{notation}
\label{nota:succint-Ufg}One can break the unitary $U$ into branches
based on the step where $f$ was queried when the state was parallel
to $\Pi$. 
\[
U=\sum^{q-1}_{i=1}A_{i}\Pi Z_{i}+\cancelto{\mathbb{I}}{A_{q}}Z_{q}
\]
\end{notation}

\branchcolor{darkgray}{We are now ready to state one of our main technical results, namely, that even with $c\lambda$
catalysts, for $c<1$, the problem $\mathcal{P}$ cannot be solved (on most instances). This provides a sharp separation.}

\begin{lem}[$\mathcal{P}\notin\Fcat{2\secp,c\secp}$ for $c<1$]
\label{lem:sep-cat-vs-lesscat}Let $n,\ell,\ell':\N\to\N$ be the
polynomials $n(\lambda)=2\lambda$, $\ell(\lambda)=\lfloor c\cdot\lambda \rfloor$
for $c<1$. Let $\P$ be the relation as above (see \Cref{prob:the-probem}).
Given oracle access to length preserving random oracles $\fat O=\left(\text{\ensuremath{O_{f}},\ensuremath{O_{g}}}\right)$,
 it holds that $\mathcal{P}\notin\Fcat{n,\ell}^{\fat O}$. 
\end{lem}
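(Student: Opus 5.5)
We argue by contradiction, following the outline in \Cref{subsec:overview-query-separations}. Suppose some $\A^{\fat O}\in\catgen{2\lambda,\ell}^{\fat O}$ with $\ell=\lfloor c\lambda\rfloor$ outputs, with probability at least $2/3$, a circuit $U$ with $q=\poly(\lambda)$ queries that maps $\ket{x,0^\lambda}\ket{0^\ell}$ to $\ket{x,f(g(x))}\ket{0^\ell}$ for all $x$. We pass to the inverse $U^\dagger$, which is also a $q$-query circuit. It maps $\ket{x,f(g(x)),0^\ell}\mapsto\ket{x,0^\lambda,0^\ell}$, i.e.\ it \emph{erases} the second register.

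We aim to show, for every $q$-query circuit $W$ on $2\lambda+\ell$ qubits (with no restoration requirement on the auxiliary register), the bound
\[
\E_{f,g,x}\bigl\|(\I\otimes\bra{0^\lambda}_B\otimes\I)\,W\ket{x,f(g(x)),0^\ell}\bigr\|^2\leq \poly(q)\,2^{-(1-c)\lambda}.
\]
This rules out the $\ge 2/3$ average success implied above. The oracle-independent gates and their distribution come from the PPT $\A$. Any classical queries $\A$ makes can be folded into the circuit, so we may fix the non-oracle gates independently of $f,g$.

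\textbf{Step 1: decompose $W$ by whether it queries $f$ at $g(x)$.} We apply \Cref{nota:succint-Ufg} to $W$ with $\Pi=\den{g(x)}$ on the $f$-query input register. This writes $W=\sum_{i<q}A_i\Pi Z_i+Z_q$, where $Z_q$ never queries $f$ at $g(x)$.

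Conditioned on $g$, $x$ and $f$ off the point $g(x)$, the operator $Z_q$ is fixed, while $y:=f(g(x))$ is uniform. The success amplitude of $Z_q$ is $\bra{x',0^\lambda,a}Z_q\ket{x,y,0^\ell}$. Summing the squared norm over $y$ gives at most the trace of a projector of rank $2^{\lambda+\ell}$ (the output space with $B=0^\lambda$). Averaging over the $2^\lambda$ values of $y$ therefore gives a contribution of at most $2^{\ell}\cdot2^{\lambda}/2^{2\lambda}\cdot$(a normalisation factor), i.e.\ $\le 2^{\ell-\lambda}$. So, up to the cross terms (controlled by Cauchy--Schwarz with a factor $q$), constant success forces $\sum_i\E\|\Pi Z_i\ket{x,y,0^\ell}\|^2\ge\Omega(1/q)$.

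\textbf{Step 2: build an inverter for $g$.} Given only $z=g(x)$ and oracle access to $f,g$, the inverter proceeds as follows. It picks $i\leftarrow[q]$ and prepares the state $\ket{z}$ in the $f$-query register, with all remaining registers of the $2\lambda+\ell$ qubits in uniformly random basis states. It then runs the reversed prefix $Z_i^\dagger$ (replacing $\Pi^\perp$ by identity, and paying for the difference via the branch structure) and measures register $A$ to get a candidate $x$, checking it with $g$.

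A naive overlap estimate loses a factor $2^{2\lambda+\ell-\lambda}=2^{\lambda+\ell}$. The key point is that averaging over the uniform value $y=f(g(x))$ in register $B$ cancels the $2^\lambda$ associated with $B$. The loss is therefore only $2^\ell\cdot\poly(q)$, giving inversion probability at least $2^{-\ell}/\poly(q)$.

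\textbf{Step 3: compare with the random-function inversion bound.} This bound says any $\poly$-query quantum algorithm inverts a random $g$ on a random image with probability $O(q^2/2^\lambda)$ (e.g.\ via \Cref{lem:O2H}). Comparing the two yields average success $\le\poly(q)2^{\ell-\lambda}=\poly(\lambda)2^{-(1-c)\lambda}$, which is negligible and contradicts the $2/3$ requirement.

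The main obstacle is Step 2. The inverter's random-register initialisation must match the joint distribution of $(x,y,\text{aux})$ at the $i$-th query, with the $\Pi^\perp$ projections inside $Z_i$ handled correctly. The loss must then be shown to depend only on $2^\ell$. To carry this out we would first write the success amplitude of branch $i$ as an inner product $\bra{\cdot}\Pi Z_i\ket{x,y,0^\ell}$, then apply Cauchy--Schwarz over the $2^{\lambda+\ell}$ basis states of the non-query registers, and finally average over $y$ to absorb the $B$-register dimension.
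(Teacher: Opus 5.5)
Your outline is the paper's: first-hit decomposition at $g(x)$, a reversed-prefix inverter started from $\ket{z}$ with the other $\lambda+\ell$ qubits randomised, cancelling $2^{\lambda}$ by averaging over $f(g(x))$, and finally the inversion bound. However, two steps fail as written.

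\textbf{Step 1.} You test only $B=0^{\lambda}$, i.e.\ you use $P_B=\mathbb{I}_A\otimes\den{0^{\lambda}}_B\otimes\mathbb{I}_C$, which has rank $2^{\lambda+\ell}$. Conditioned on $g,x,\bar f$ you may average only over $y$, and this gives
\[
\E_y\norm{P_BZ_q\ket{x,y,0^{\ell}}}^2=2^{-\lambda}\tr\left[P_BZ_q\left(\den{x}_A\otimes\mathbb{I}_B\otimes\den{0^{\ell}}_C\right)Z_q^{\dagger}\right]\leq1,
\]
which is useless. The missing $2^{-\lambda}$ hidden in your ``normalisation factor'' would need an average over $x$ with $Z_q$ held fixed. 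But $Z_q$ depends on $x$ through $\Pi^{\perp}=\mathbb{I}-\den{g(x)}$. The paper instead tests $\Pi^{\mathsf{search}}=\den{x}_A\otimes\den{0^{\lambda}}_B\otimes\mathbb{I}_C$. This is legitimate, since the hypothesised inverse circuit also returns $x$. Then $Z_q^{\dagger}\Pi^{\mathsf{search}}Z_q$ has rank at most $2^{\ell}$, and the $y$-average alone gives $2^{\ell-\lambda}$. If you want to keep your weaker test, you need an extra step. For example, reprogram $f$ at $g(x)$ to an independent uniform value, so that the circuit becomes independent of $(x,y)$ and you can average over both. Then charge the difference to the branch norms $\norm{\Pi Z_i\ket{x,y,0^{\ell}}}$.

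\textbf{Step 2.} Replacing $\Pi^{\perp}$ by the identity runs the unitary prefix. That prefix may already have queried $f$ at $g(x)$, so it depends on $y=f(g(x))$. The identity that cancels the $2^{\lambda}$ is
$2^{-\lambda}\tr[(\den{x}_A\otimes\mathbb{I}_B\otimes\den{0^{\ell}}_C)Z_i^{\dagger}\Pi Z_i]=\E_y\norm{\Pi Z_i\ket{x,y,0^{\ell}}}^2$,
and it requires the operator to be independent of $y$. ``Paying for the difference via the branch structure'' does not supply an argument, and you flag this yourself as the main obstacle.

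The missing idea is that the inverter knows $z$. It can therefore realise each $\Pi^{\perp}$ exactly by the measurement $\{\Pi,\Pi^{\perp}\}$, aborting on outcome $\Pi$. The surviving branch then has Kraus operator exactly $Z_i^{\dagger}$. Starting from $\Pi/2^{\lambda+\ell}$ and discarding the $C\neq0^{\ell}$ part, its success probability is
$\frac{1}{(q-1)2^{\lambda+\ell}}\sum_i\E\tr[(\den{x}_A\otimes\mathbb{I}_{BC})Z_i^{\dagger}\Pi Z_i]\geq\frac{1}{(q-1)2^{\ell}}\sum_i\E\norm{\Pi Z_i\ket{x,f(g(x)),0^{\ell}}}^2$.
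The inverter may even simulate $f$ with its own random function, since $Z_i$ only sees $f$ away from $g(x)$. With both repairs your Step 3 goes through and yields the paper's bound $q\left(1+(q-1)(2q_g+1)^2\right)2^{\ell-\lambda}$ on the average success.
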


\branchcolor{black}
{
\begin{proof}
We give the proof by showing that any circuit mapping
$\ket{x,f(g(x)),0^\ell} \mapsto \ket{x,0^{\secp}}$ has negligible success probability when $\ell(\secp)=c\secp$.

Suppose, for contradiction, that
$\mathcal{P}\in\PPTFcat[2\secp][\ell][\fat O]$.
Since the relation requires the entire $2\secp$-qubit
system register to contain $(x,f(g(x)))$, and the
catalytic register must be restored exactly, the inverse
of the corresponding unitary circuit satisfies
\begin{equation}
\label{eq:relational-cat-inverse-action}
U\ket{x,f(g(x)),0^\ell}
=
\ket{x,0^{\secp},0^\ell}
\end{equation}
for every $x$, up to a phase that does not affect any
success probability.
Reversing the circuit replaces each elementary gate by
its adjoint and preserves the oracle-query counts,
since $O_f^\dagger=O_f$ and $O_g^\dagger=O_g$.

Fix the security parameter $\secp$, write
$\ell=\ell(\secp)$, and suppress the security parameter
in the circuit notation.
Let $q-1$ denote the number of queries to $O_f$, and let
$q_g$ denote the number of queries to $O_g$.
The non-oracle gates of $U$ are fixed independently of
$f,g,x$. We first fix any oracle-independent randomness
in the choice of the circuit; the bounds below also hold
after averaging over this randomness.

Recall that
$f,g:\{0,1\}^{\secp}\to\{0,1\}^{\secp}$
are independent uniformly random functions, and
$x\in\{0,1\}^{\secp}$ is independently uniform.
Define
\begin{equation}
\label{eq:relational-cat-success}
\ssearch
:=
\E_{f,g,x}
\left\Vert
\Pisearch U\ket{x,f(g(x)),0^\ell}
\right\Vert^2,
\end{equation}
where
$\Pisearch:=\den{x}_Q\otimes\den{0^{\secp}}_R
\otimes\mathbb{I}_C$.
The registers are $Q$ ($\secp$-qubit query register),
$R$ ($\secp$-qubit response register), and
$C$ ($\ell$-qubit catalytic register).
In particular, $\Pisearch$ is independent of $f(g(x))$
when $x$ is fixed.

The circuit in \Cref{eq:relational-cat-inverse-action}
would have $\ssearch=1$.
We will bound $\ssearch$ for an arbitrary circuit $U$,
without requiring that its catalytic register be restored.

If $q-1=0$, the circuit is independent of $f$.
Conditional on $g,x$, averaging over the uniform value
$f(g(x))$ gives
\begin{align}
\E_f
\left\Vert
\Pisearch U\ket{x,f(g(x)),0^\ell}
\right\Vert^2
&=
\frac{1}{2^{\secp}}
\tr\left[
\Pisearch U
\left(\den{x}_Q\otimes\mathbb{I}_R
\otimes\den{0^\ell}_C\right)
U^\dagger
\right]
\nonumber\\
&\leq
\frac{\operatorname{rank}(\Pisearch)}{2^{\secp}}
=
\frac{2^\ell}{2^{\secp}}.
\label{eq:relational-cat-no-f-queries}
\end{align}
This is negligible for the stated choice of $\ell$.
Hence assume $q\geq 2$.

one may condition on $g,x$ when defining the branches
below, but retain the averaging over them in all
success probabilities.
Define
$\Pi:=\den{g(x)}_Q\otimes\mathbb{I}_{RC}$ and
$\Pi^\perp:=\mathbb{I}-\Pi$.
Recalling \Cref{nota:succint-Ufg}, write
$U=V_qO_fV_{q-1}O_f\cdots V_2O_fV_1$ and
\begin{equation}
\label{eq:relational-cat-decomposition}
\Pisearch U\ket{x,f(g(x)),0^\ell}
=
\Pisearch
\left(\sum_{i=1}^{q-1}A_i\Pi Z_i+Z_q\right)
\ket{x,f(g(x)),0^\ell},
\end{equation}
where
\begin{itemize}
\item
$Z_1:=V_1$ and
$Z_i:=V_iO_f\Pi^\perp Z_{i-1}$ for $2\leq i\leq q$;
\item
$A_{q-1}:=V_qO_f$ and
$A_i:=A_{i+1}V_{i+1}O_f$ for $1\leq i<q-1$.
\end{itemize}
Each $A_i$ is unitary, and each $Z_i$ is a contraction.
Conditional on $g,x$ and the restriction of $f$ to all
points other than $g(x)$, every $Z_i$ is independent
of $f(g(x))$, since every $f$-query within $Z_i$ is
preceded by $\Pi^\perp$.

With this notation in place, we proceed in two steps:
(1) upper bound $\ssearch$ in terms of the weights of
the branches $\Pi Z_i\ket{x,f(g(x)),0^\ell}$; and
(2) construct an inverter whose success probability
is bounded below by these weights.

\textbf{Step (1).}

The upper bound is computed as follows:
\begin{align}
\left\Vert
\Pisearch U\ket{x,f(g(x)),0^\ell}
\right\Vert^2
&=
\left\Vert
\Pisearch
\left(\sum_{i=1}^{q-1}A_i\Pi Z_i+Z_q\right)
\ket{x,f(g(x)),0^\ell}
\right\Vert^2
\nonumber\\
&\leq
q\Bigg(
\sum_{i=1}^{q-1}
\left\Vert
\Pisearch A_i\Pi Z_i\ket{x,f(g(x)),0^\ell}
\right\Vert^2
 +
\left\Vert
\Pisearch Z_q\ket{x,f(g(x)),0^\ell}
\right\Vert^2
\Bigg)
\nonumber\\
&\leq
q\Bigg(
\sum_{i=1}^{q-1}
\left\Vert
\Pi Z_i\ket{x,f(g(x)),0^\ell}
\right\Vert^2
+
\tr\left[
M\den{x,f(g(x)),0^\ell}
\right]
\Bigg),
\label{eq:relational-cat-branch-bound}
\end{align}
where $M:=Z_q^\dagger\Pisearch Z_q$.
The first inequality follows from Cauchy--Schwarz, while the second uses the unitarity of $A_i$ and the fact
that $\Pisearch$ is a contraction.

From the definitions of $\Pisearch$ and $Z_q$, one obtains
$0\preceq M\preceq\mathbb{I}$ and $\operatorname{rank}(M)\leq 2^\ell$. Thus one may write $M=\sum_{j=1}^{2^\ell}c_j\den{m_j}$, where $0\leq c_j\leq 1$ and the $\ket{m_j}$ are orthonormal, padding with zero eigenvalues if necessary. For fixed $g,x$ and $f$ away from $g(x)$, both $c_j$ and $\ket{m_j}$ can be chosen independently of $f(g(x))$.

Let $\bar f$ denote the restriction of $f$ to
$\{0,1\}^{\secp}\setminus\{g(x)\}$.
Since the vectors
$\{\ket{x,a,0^\ell}:a\in\{0,1\}^{\secp}\}$
are orthonormal, for each normalized $\ket{m_j}$ one obtains
$\sum_a|\braket{m_j}{x,a,0^\ell}|^2\leq 1$.
Consequently,
\begin{align}
\E_{f,g,x}
\tr\left[
M\den{x,f(g(x)),0^\ell}
\right]
&=
\E_{g,x,\bar f}
\left[
\frac{1}{2^{\secp}}
\sum_{j=1}^{2^\ell}c_j
\sum_{a\in\{0,1\}^{\secp}}
\left|\braket{m_j}{x,a,0^\ell}\right|^2
\right]
\nonumber\\
&\leq
\frac{1}{2^{\secp}}
\E_{g,x,\bar f}
\sum_{j=1}^{2^\ell}c_j
\nonumber\\
&\leq
\frac{2^\ell}{2^{\secp}}.
\label{eq:relational-cat-no-hit-bound}
\end{align}

Averaging \Cref{eq:relational-cat-branch-bound}
and using \Cref{eq:relational-cat-no-hit-bound},
we obtain
\begin{align}
\frac{\ssearch}{q}
-\frac{2^\ell}{2^{\secp}}
&\leq
\E_{f,g,x}
\sum_{i=1}^{q-1}
\left\Vert
\Pi Z_i\ket{x,f(g(x)),0^\ell}
\right\Vert^2
\label{eq:piZupperbound-cat}\\
&=
\E_{f,g,x}
\sum_{i=1}^{q-1}
\left\Vert
\bra{x,f(g(x)),0^\ell}Z_i^\dagger\Pi
\right\Vert^2
\nonumber\\
&=
\E_{f,g,x}
\sum_{i=1}^{q-1}
\tr\left[
\left(
\den{x}_Q\otimes\den{f(g(x))}_R
\otimes\den{0^\ell}_C
\right)
Z_i^\dagger\Pi Z_i
\right]
\nonumber\\
&=
\E_{g,x,\bar f}
\sum_{i=1}^{q-1}
\frac{1}{2^{\secp}}
\tr\left[
\left(
\den{x}_Q\otimes\mathbb{I}_R
\otimes\den{0^\ell}_C
\right)
Z_i^\dagger\Pi Z_i
\right]
\nonumber\\
&=
\E_{f,g,x}
\sum_{i=1}^{q-1}
\frac{1}{2^{\secp}}
\tr\left[
\left(
\den{x}_Q\otimes\mathbb{I}_R
\otimes\den{0^\ell}_C
\right)
Z_i^\dagger\Pi Z_i
\right].
\label{eq:search-upperbound-cat}
\end{align}
In the second-last equality, we averaged over $f(g(x))$,
using its conditional uniformity and the independence
of $Z_i$ from that value.
In the last equality, we restored the expectation
over all of $f$, since the expression no longer
depends on $f(g(x))$.

\textbf{Step (2).}

The inversion algorithm ${\cal I}$ receives
$z=g(x)$ and has oracle access to $O_g$.
It knows the circuit $U$, but is not given $x$
or access to the original oracle $O_f$.
It proceeds as follows:
\begin{enumerate}
\item
Sample $i$ uniformly from $\{1,\ldots,q-1\}$ and
sample an independent uniformly random function
$f:\{0,1\}^{\secp}\to\{0,1\}^{\secp}$.

\item
Create the state
$\den{z}_Q\otimes\mathbb{I}_{RC}/2^{\secp+\ell}$
by preparing $Q$ in $\ket{z}$ and choosing independent
uniform computational-basis states for $R$ and $C$.

\item
Apply $V_i^\dagger$.
For $j=i-1,i-2,\ldots,1$, apply $O_f^\dagger=O_f$,
perform the projective test $\{\Pi^\perp,\Pi\}$,
and, if the outcome is $\Pi^\perp$, apply $V_j^\dagger$.
If any test returns $\Pi$, return $\bot$ and stop.
Here $\Pi=\den{z}_Q\otimes\mathbb{I}_{RC}$.

\item
Measure $Q$ in the computational basis and return
the resulting string.
\end{enumerate}

The branch on which the algorithm does not abort has
Kraus operator 
$$
V_1^\dagger\Pi^\perp O_f^\dagger V_2^\dagger \cdots\Pi^\perp O_f^\dagger V_i^\dagger = Z_i^\dagger.
$$
Thus the algorithm implements the trace-nonincreasing
operation $\tau\mapsto Z_i^\dagger\tau Z_i$ on that
branch. All aborts count as failures. Note that this is possible because ${\cal I}$ knows $z$. Each $V_j^\dagger$ is implemented by reversing the corresponding circuit. Since $O_g^\dagger=O_g$, the total number of queries to $g$ is at most $q_g$. The independently sampled $f$ is simulated internally.

Define
$\sinv:=\Pr_{g,x,{\cal I}}[{\cal I}^{g}(g(x))=x]$.
Writing
$\Pi'_\ell:=\mathbb{I}_C-\den{0^\ell}_C$,
the success probability can be computed as
\begin{align}
\sinv
&=
\frac{1}{(q-1)2^{\secp+\ell}}
\E_{f,g,x}
\sum_{i=1}^{q-1}
\tr\left[
\left(\den{x}_Q\otimes\mathbb{I}_{RC}\right)
Z_i^\dagger\Pi Z_i
\right]
\nonumber\\
&=
\frac{1}{(q-1)2^{\secp+\ell}}
\E_{f,g,x}
\sum_{i=1}^{q-1}
\tr\left[
\left(
\den{x}_Q\otimes\mathbb{I}_R
\otimes\left(\den{0^\ell}_C+\Pi'_\ell\right)
\right)
Z_i^\dagger\Pi Z_i
\right]
\nonumber\\
&\geq
\frac{1}{(q-1)2^{\secp+\ell}}
\E_{f,g,x}
\sum_{i=1}^{q-1}
\tr\left[
\left(
\den{x}_Q\otimes\mathbb{I}_R
\otimes\den{0^\ell}_C
\right)
Z_i^\dagger\Pi Z_i
\right]
\nonumber\\
&=
\frac{1}{(q-1)2^\ell}
\E_{f,g,x}
\sum_{i=1}^{q-1}
\left\Vert
\Pi Z_i\ket{x,f(g(x)),0^\ell}
\right\Vert^2.
\label{eq:inv-success-lowerbound-cat}
\end{align}
The first equality follows from the unnormalized state
$Z_i^\dagger\Pi Z_i/2^{\secp+\ell}$ on the branch
that does not abort. The inequality drops a nonnegative term, since $\den{x}_Q\otimes\mathbb{I}_R\otimes\Pi'_\ell$
and $Z_i^\dagger\Pi Z_i$ are positive semidefinite.
The final equality is the same averaging identity
used in \Cref{eq:search-upperbound-cat}.

Combining
\Cref{eq:piZupperbound-cat,eq:search-upperbound-cat,eq:inv-success-lowerbound-cat},
we obtain
\begin{align}
2^\ell(q-1)\sinv
&\geq
\frac{\ssearch}{q}-\frac{2^\ell}{2^{\secp}},
\nonumber\\
\sinv
&\geq
\frac{\ssearch}{q(q-1)2^\ell}
-
\frac{1}{(q-1)2^{\secp}}.
\label{eq:relational-cat-inversion-reduction}
\end{align}

On the other hand, the random-function inversion bound
obtained gives
\begin{equation}
\label{eq:relational-cat-inversion-upper-bound}
\sinv
\leq
\frac{(2q_g+1)^2}{2^{\secp}}.
\end{equation}
Therefore,
\begin{align}
\ssearch
&\leq
q\,2^\ell(q-1)\sinv
+
q\,\frac{2^\ell}{2^{\secp}}
\nonumber\\
&\leq
q\left(1+(q-1)(2q_g+1)^2\right)
2^{\ell-\secp}.
\label{eq:relational-cat-success-upper-bound}
\end{align}
This bound also holds when $q=1$, by \Cref{eq:relational-cat-no-f-queries}. For polynomially bounded $q,q_g$, it is negligible whenever $\secp-\ell(\secp)=\omega(\log\secp)$. In particular, for $\ell(\secp)=\lfloor c\secp\rfloor$ with fixed $c<1$, it is bounded by a polynomial in $\secp$ times $2^{-(1-c)\secp}$, which is negligible.
\end{proof}}

\begin{rem}[Queries made by $\A$ don't matter]
The queries to $\fat O$ made $\PPT$ algorithm $\A^{\fat O}$ only
affects the success probability negligibly. The algorithm $\A^{\fat O}$
can only query polynomially many points of the random oracle $\fat O$,
while we need the output of the unitary to satisfy the relation on
all inputs.
\end{rem}

\branchcolor{darkgray}{It immediately follows that one can solve a problem with $\lambda$ many catalysts but not with $0$ catalysts. We separate out the $0$ catalyst case because of its implications in separating computational catalytic ergotropy and computational ergotropy.}

\begin{lem}[$\mathcal{P}\notin\Fcat{2\secp,0}$]
\label{lem:sep-cat-vs-nocat-evaluation} Let $n,\ell:\N\to\N$ be
the polynomials $n(\lambda)=2\lambda$ and $\ell(\lambda)=0$. Let
$\P$ be the relation as above (see \Cref{prob:the-probem}). Given
oracle access to length preserving random oracles $\fat O=\left(\text{\ensuremath{O_{f}},\ensuremath{O_{g}}}\right)$,
 it holds that $\mathcal{P}\notin\Fcat{n,\ell}^{\fat O}$.
\end{lem}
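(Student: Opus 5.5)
This is the special case $\ell(\lambda)=0$ of \Cref{lem:sep-cat-vs-lesscat}. I will either invoke that lemma directly or re-run its argument with a trivial catalytic register. One point needs care: \Cref{lem:sep-cat-vs-lesscat} is stated for $\ell(\lambda)=\lfloor c\lambda\rfloor$ with $c<1$. Taking $c=0$ gives $\ell=0$, and the proof never uses $c>0$. So the cleanest route is to note that the bound \Cref{eq:relational-cat-success-upper-bound},
\[
\ssearch\leq q\left(1+(q-1)(2q_g+1)^2\right)2^{\ell-\secp},
\]
was derived for an arbitrary $\ell$, and then to set $\ell=0$.

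Concretely, I will argue by contradiction. Suppose $\P\in\Fcat{2\secp,0}^{\fat O}$. Then with probability at least $2/3$ the algorithm outputs a circuit $U$ on $2\secp$ qubits with $U\ket{x,0^\secp}=\ket{x,f(g(x))}$ for every $x$. Its inverse $U^\dagger$ uses the same numbers of oracle calls, since $O_f,O_g$ are self-inverse, and it maps $\ket{x,f(g(x))}\mapsto\ket{x,0^\secp}$. Hence $\ssearch$, which is averaged over the random oracles, uniform $x$, and the algorithm's randomness, is at least $2/3$ minus a negligible term.

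To justify that negligible term, I will invoke the remark that the PPT generator's own classical queries to $\fat O$ touch only polynomially many points. This reduces to the case where the non-oracle gates are fixed independently of $f,g$. Conditioning on those polynomially many query answers changes the uniformity of $f(g(x))$ and $g$ only on a negligible fraction of inputs $x$, which I will control with a union bound.

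Then I apply the chain from \Cref{lem:sep-cat-vs-lesscat} with $\ell=0$. First, decompose $U^\dagger$ into branches according to the first $O_f$-query at $g(x)$ (\Cref{nota:succint-Ufg}). The non-hitting branch contributes at most $2^{0-\secp}$, because $\Pisearch$ now has rank $1$. Second, the hitting branches yield the reversed-prefix inverter $\mathcal I$, which has success probability at least $\ssearch/(q(q-1))-1/((q-1)2^\secp)$; here there is no $2^\ell$ loss at all. Third, compare this with the random-function inversion bound $(2q_g+1)^2/2^\secp$. Together these give $\ssearch\le\poly(\secp)\,2^{-\secp}$, contradicting $\ssearch\ge 2/3-\negl$.

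The main obstacle is bookkeeping rather than new mathematics. I need to check that the case $q=1$ (no $f$-queries, handled by \Cref{eq:relational-cat-no-f-queries}) and the generator-query issue go through unchanged when the catalytic register is empty. I will also state explicitly that the restoration condition is vacuous when $\ell=0$, so the lower bound for arbitrary (non-restoring) auxiliary space covers this case.
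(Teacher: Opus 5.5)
Your proposal is correct and is essentially the paper's proof. The paper simply sets $\ell(\secp)=0$ in \Cref{lem:sep-cat-vs-lesscat}, so that the bound $\ssearch\leq q\left(1+(q-1)(2q_g+1)^2\right)2^{\ell-\secp}$ becomes $\poly(\secp)\,2^{-\secp}$, which is negligible for polynomially bounded $q,q_g$. The extra bookkeeping you list (reversing the circuit, the case $q=1$, the generator's own classical queries, and restoration being vacuous when $\ell=0$) makes explicit what the paper handles inside that lemma and in its remark that the generator's queries do not matter.
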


\branchcolor{black}{\begin{proof}
Set $\ell(\secp)=0$ in \Cref{lem:sep-cat-vs-lesscat}. For $q, q_g$ polynomially bounded, we have that $\ssearch$ is negligible. 

\end{proof}
}

\branchcolor{darkgray}{Combining the results so far, we obtain our final result here.}
\begin{thm}[{$\PPTFcat[2\secp][c\lambda]\subsetneq\PPTFcat[2\secp][\secp]$}]
\label{cor:sep-Fcat-2lambda-3-lambda}Let $n,\ell,\ell':\N\to\N$
be the polynomials $n(\lambda)=2\lambda$, $\ell'(\lambda)=\lambda$
and $\ell(\lambda)=c\cdot\secp$ for $c<1$. In the random oracle
model, it holds that 
\[
\Fcat{n,\ell}^{\fat O}\subsetneq\Fcat{n,\ell'}^{\fat O}.
\]
\end{thm}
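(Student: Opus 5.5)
The theorem needs two things: the containment $\Fcat{n,\ell}^{\fat O}\subseteq\Fcat{n,\ell'}^{\fat O}$, and the relation $\P$ of \Cref{prob:the-probem} as a witness that the containment is strict. Both follow quickly from results already proved, so the plan is mostly assembly. The one point needing care is the containment when catalysts are padded.

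\textbf{Containment.} Let $\rel\in\Fcat{n,\ell}^{\fat O}$ be witnessed by $\A\in\catgen{n,\ell}^{\fat O}$ with $\ell(\secp)=c\secp\leq\secp=\ell'(\secp)$. Define $\A'$ to run $\A$ and output the same projector $\Pi_\secp$ and the unitary $U'_\secp:=U_\secp\otimes\I_{2^{\ell'-\ell}}$. That is, $U'_\secp$ acts as $U_\secp$ on the $\IO$ register together with the first $\ell$ catalyst qubits, and leaves the remaining $\ell'-\ell$ catalyst qubits idle.

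Since $U_\secp\in\catU[n][\ell][\ket{0^\ell}]$, for every $\ket{\phi}$ we have
\[
U'_\secp\ket{\phi}\ket{0^{\ell'}}=(U_\secp\ket{\phi}\ket{0^\ell})\ket{0^{\ell'-\ell}}=\ket{\psi}\ket{0^{\ell'}},
\]
so $U'_\secp\in\catU[n][\ell'][\ket{0^{\ell'}}]$ and $\A'\in\catgen{n,\ell'}^{\fat O}$. The same identity shows that whenever $U_\secp$ maps $\ket{x,0}\ket{0^\ell}$ to $\ket{y}\ket{\varphi}\ket{0^\ell}$, the map $U'_\secp$ does the same with $\ell'$ catalysts. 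Hence $\A'$ succeeds with probability at least $2/3$ whenever $\A$ does, and $\rel\in\Fcat{n,\ell'}^{\fat O}$. The description of $U'_\secp$ is $\desc(U_\secp)$ with relabelled wires, so $\A'$ remains PPT.

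\textbf{Strictness.} By \Cref{lem:PinFCAT_3lambda}, $\P\in\Fcat{2\secp,\secp}^{\fat O}$, since the three-query compute--apply--uncompute circuit restores the catalyst exactly. By \Cref{lem:sep-cat-vs-lesscat}, $\P\notin\Fcat{2\secp,\lfloor c\secp\rfloor}^{\fat O}$ for every fixed $c<1$. That lemma bounds the success $\ssearch$ of any circuit with $q$ $f$-queries and $q_g$ $g$-queries by $q(1+(q-1)(2q_g+1)^2)2^{\ell-\secp}$. This is negligible for polynomial $q,q_g$ and $\ell=\lfloor c\secp\rfloor$, whereas membership would require success $2/3$.

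\textbf{Main obstacle.} The lower bound is stated on average over $x,f,g$, while $\Fcat{}$ asks for correctness on all inputs for an oracle-dependent algorithm $\A^{\fat O}$. I would connect the two as follows. If $\A$ produced, with probability $\geq 2/3$, a unitary correct on all $x$, then averaging over $x$ and the oracles gives $\ssearch\geq 2/3$ for the inverse circuit, up to the effect of $\A$'s own classical oracle queries. By the remark following \Cref{lem:sep-cat-vs-lesscat}, those polynomially many queries shift the success only negligibly.

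I would justify this by arguing that the lemma's bound also holds when the non-oracle gates depend on polynomially many fixed oracle values. On the event that $g(x)$ lies outside $\A$'s query set, which fails with probability $\poly/2^\secp$, those values are independent of $f(g(x))$ and of inverting $g$ at $x$. I also need to note that $\lfloor c\secp\rfloor$ versus $c\secp$ is immaterial.

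Combining the containment with the witness $\P$ gives $\Fcat{n,\ell}^{\fat O}\subsetneq\Fcat{n,\ell'}^{\fat O}$.
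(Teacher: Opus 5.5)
Your proposal is correct and follows the paper's proof: the paper simply combines \Cref{lem:PinFCAT_3lambda} ($\mathcal{P}\in\Fcat{n,\ell'}^{\fat O}$) with \Cref{lem:sep-cat-vs-lesscat} ($\mathcal{P}\notin\Fcat{n,\ell}^{\fat O}$). Two of your steps go beyond it. The paper leaves the inclusion $\Fcat{n,\ell}^{\fat O}\subseteq\Fcat{n,\ell'}^{\fat O}$ implicit, and it treats the generator's own oracle queries only through a remark; your idle-catalyst padding and your conditioning on the generator never querying $f$ at $g(x)$ (with the inverter rerunning the generator and counting its $g$-queries against the inversion bound) fill in those two steps.
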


\branchcolor{black}{From \Cref{lem:PinFCAT_3lambda} and \Cref{lem:sep-cat-vs-lesscat},
we know that $\mathcal{P}\in\PPTFcat[n][\ell'][\fat O]$ and $\mathcal{P}\notin\PPTFcat[n][\ell][\fat O]$.
As a direct consequence we have that $\PPTFcat[n][\ell][\fat O]\subsetneq\PPTFcat[n][\ell'][\fat O]$.}

\branchcolor{darkgray}{We conclude with the following remark.}
\begin{rem}
We emphasise that the proof of \Cref{lem:sep-cat-vs-lesscat} shows
something stronger---it establishes that even with $2\lambda+c\lambda$-many
qubits, the problem ${\cal P}$ remains intractable. Thus, even allowing
for arbitrary errors in the definition of computational catalytic
computation, preserves the separation.
\end{rem}

\subsection{Decision problem separation $\protect\cat{2\lambda,c\lambda}\subsetneq\protect\cat{2\lambda,\lambda}$
(for $c<1$)}

\branchcolor{darkgray}{Separations in relational problems are often easier to establish than
decision problems. In fact, we have reason to believe that in many
cases, decision separations relative to a random oracle are impossible---such
as the well known Aaronson Ambainis conjecture. Furthermore, decision
problems turn out to be much more relevant for cryptographic applications.
For instance, indistinguishability of encryptions may be viewed as
a decision problem.

Thus, here we show how to establish a separation between computational
catalytic computations using a decision problem. We start by defining
the decision variant $\mathsf{\widehat{cat}}$ of the relational problem
class $\mathsf{F\widehat{cat}}$. Here, we directly introduce the
oracle version. }

Denote a (promise) decision problem ${\cal P}$ by the sets $(\mathsf{YES},\mathsf{NO})$
where $\mathsf{YES}\subseteq\{0,1\}^{*}$ and $\mathsf{NO}\subseteq\{0,1\}^{*}$.
When ${\cal P}$ depends on the oracle $\fat O$, we denote it by
${\cal P}^{\fat O}$. Fix $n,\ell:\N\to\N$. 
\begin{defn}[$\cat{n,\ell}$]
We say a problem ${\cal P}^{\fat O}$ is in $\cat{n,\ell}$ if there
exists an algorithm ${\cal A}^{\fat O}\in\catgen{n,\ell}^{\fat O}$
such that 
\begin{itemize}
\item Completeness: For all $x\in\mathsf{YES}^{\fat O}$ such that $|x|=n(\lambda)$,
\[
\E_{\circdesc(\Pi,U)\leftarrow{\cal A}^{\fat O}(1^{\lambda})}\left\Vert \Pi U\ket{x,0^{n(\lambda)-|x|},0^{\ell(\lambda)}}\right\Vert^2 \ge2/3
\]
\item Soundness: For all $y\in\mathsf{NO}^{\fat O}$ such that 
\[
\E_{\circdesc(\Pi,U)\leftarrow{\cal A}^{\fat O}(1^{\lambda})}\left\Vert \Pi U\ket{x,0^{n(\lambda)-|x|},0^{\ell(\lambda)}}\right\Vert^2 \le 1/3
\]
\end{itemize}
\end{defn}

\branchcolor{darkgray}{The decision problem we consider, is closely related to the relational
problem we considered. }
\begin{problem}
\label{prob:the-probem-1}Let $\fat O=(O_{g},O_{f})$ be the oracles
corresponding to length preserving random functions $f,g$. The decision
problem ${\cal P}^{\fat O}$ is defined as follows: 
\[
\mathsf{YES}^{\fat O}:=\{(x,f(g(x)):x\in\{0,1\}^{*}\}
\]
and 
\[
\mathsf{NO}^{\fat O}:=\{0,1\}^{*}\backslash\mathsf{YES}^{\fat O}.
\]
\end{problem}

\branchcolor{darkgray}{While part of the argument we used to establish \Cref{lem:sep-cat-vs-lesscat} goes through, it fails when one tries to bound the contribution of the term $A_{q}\Pi Z_{q}\left|x,f(g(x))\right\rangle $. This has to do with the fact that for search, the projector $\Pisearch$ is rank-1 while for a decision problem, we have no such guarantee on the corresponding projector $\Pidec$. We now prove the decision analogue of Lemma \ref{lem:sep-cat-vs-lesscat}, where the catalyst register has $\ell(\lambda)= c\cdot\lambda$ qubits.}

\begin{lem}[$(x,f(g(x))\approx(x,r)$ with $\ell$ catalysts]
\label{lem:decision-cat-separation-robust}Let $n,\ell:\N\to\N$
be the polynomials $n(\lambda)=2\lambda$, $\ell(\lambda)=c\cdot\lambda$
for $c<1$. Given oracle access to $\fat O=(O_{g},O_{f})$, it holds
that for every ${\cal A}\in\cat{n,\ell}$ there is a negligible function
$\negl$ such that 
\[
\left|\E_{\substack{\desc(\Pidec_{\lambda},U_{\lambda})\leftarrow{\cal A}^{\fat O}\\
\fat O
}
}\norm{\Pi^{\mathsf{dec}}_{\secp}\left(U_{\secp}\ket{x,f(g(x))}\right)}-\E_{\substack{\desc(\Pidec_{\lambda},U_{\lambda})\leftarrow{\cal A}^{\fat O}\\
\fat O
}
}\norm{\Pi^{\mathsf{dec}}_{\secp}\left(U_{\secp}\ket{x,r}\right)}\right|\leq\negl(\secp).
\]
\end{lem}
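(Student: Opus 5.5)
We reuse the branch decomposition of \Cref{nota:succint-Ufg} and the reversed-prefix inverter from the proof of \Cref{lem:sep-cat-vs-lesscat}. The quantity to control is the difference of squared norms $\norm{\Pidec U\ket{x,f(g(x)),0^\ell}}^2$ and $\norm{\Pidec U\ket{x,r,0^\ell}}^2$, with $x,r$ uniform. The unsquared norms in the statement are handled at the end via $|a-b|\le\sqrt{|a^2-b^2|}$ for $a,b\in[0,1]$.

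\textbf{Step 1: set-up.} First fix the oracle-independent randomness of ${\cal A}$, exactly as in \Cref{lem:sep-cat-vs-lesscat}. Set $\Pi:=\den{g(x)}_Q\otimes\mathbb{I}_{RC}$. Define $Z_i$ and $A_i$ as before, and put $\ket{\phi_a}:=\ket{x,a,0^\ell}$. For each $a$ the decomposition gives
\[
U\ket{\phi_a}=Z_q\ket{\phi_a}+\sum_{i<q}A_i\Pi Z_i\ket{\phi_a}=:\ket{\alpha_a}+\ket{\beta_a}.
\]

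\textbf{Step 2: the no-hit term.} Condition on $g,x,\bar f$. Then $Z_q$ does not depend on $f(g(x))$, and in the YES case $a=f(g(x))$ is uniform, just like $r$. Hence
\[
\E\norm{\Pidec Z_q\ket{\phi_{f(g(x))}}}^2=\E_r\norm{\Pidec Z_q\ket{\phi_r}}^2 .
\]
This is where the decision case differs from the search case: this term may be large, but it is identical in both worlds, so it cancels. We do not try to bound it.

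\textbf{Step 3: cross and hit terms.} We have $\norm{\Pidec(\alpha+\beta)}^2-\norm{\Pidec\alpha}^2=2\mathrm{Re}\langle\Pidec\alpha,\Pidec\beta\rangle+\norm{\Pidec\beta}^2\le 3\norm{\beta}$. By Cauchy--Schwarz and unitarity of the $A_i$, $\norm{\beta_a}^2\le q\sum_i\norm{\Pi Z_i\ket{\phi_a}}^2$. Therefore, in each world, the difference from the no-hit value is at most $3\sqrt{q\,\E\sum_i\norm{\Pi Z_i\ket{\phi_a}}^2}$ by Jensen. In both worlds $a$ is uniform conditional on $g,x,\bar f$, and $Z_i$ is independent of $f(g(x))$. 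So both expectations equal
\[
2^{-\lambda}\E\sum_i\tr\bigl[(\den{x}\otimes\mathbb{I}_R\otimes\den{0^\ell})Z_i^\dagger\Pi Z_i\bigr].
\]
By \Cref{eq:inv-success-lowerbound-cat}, this quantity is at most $(q-1)2^\ell\sinv$. Combining this with the inversion bound \Cref{eq:relational-cat-inversion-upper-bound} gives
\[
\text{advantage}\le 6\sqrt{q(q-1)2^\ell(2q_g+1)^2 2^{-\lambda}}=\poly(\lambda)\,2^{-(1-c)\lambda/2},
\]
which is negligible.

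\textbf{Main obstacle.} The delicate point is Step 3's interference term. In the NO world, the branches $\Pi Z_i\ket{\phi_r}$ might look unrelated to inversion, since $r$ is independent of $f$. The argument relies on both worlds having uniform $a$ given $g,x,\bar f$, so that the identical averaging identity, and hence the same inverter, applies to both. Remaining care points are that the inverter is given $g(x)$ and simulates $f$ itself, and averaging over the choice of $(\Pidec,U)$ by ${\cal A}$.
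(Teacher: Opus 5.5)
Your core argument is the paper's argument. You use the same decomposition $U=\sum_{i<q}A_i\Pi Z_i+Z_q$. You cancel the no-hit term in the same way, by conditional uniformity of $f(g(x))$ and $r$ given $g,x,\bar f$. You also observe that the hit branches have equal average weight in both worlds, and you use the same reversed-prefix inverter with the $2^{\ell}$ loss. The only difference is the order of the estimates. You apply Cauchy--Schwarz and Jensen to the whole hit vector $\beta$. The paper applies the triangle inequality over branches and then Cauchy--Schwarz over $i$. This changes only constants.

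One minor point concerns your one-sided estimate $\le 3\norm{\beta}$. It tacitly uses $\norm{\beta}\le 1$, which can fail, since only $\norm{\beta}\le 2$ is guaranteed. The estimate is rescued by noting that the difference of squared norms lies in $[-1,1]$. More cleanly, use the paper's bound $\bigl|\norm{u+v}^2-\norm{u}^2\bigr|=(\norm{u+v}+\norm{u})\bigl|\norm{u+v}-\norm{u}\bigr|\le 2\norm{v}$.

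The step that actually fails is the last one, where you transfer to unsquared norms via $|a-b|\le\sqrt{|a^2-b^2|}$. That inequality holds pointwise. What you bounded is $|\E a^2-\E b^2|$, comparing two different worlds. The no-hit contributions in those worlds agree only in distribution, not pointwise. So $\E|a^2-b^2|$ can be of order one: take $U=\mathbb{I}$ and let $\Pi^{\mathsf{dec}}$ test one bit of the second register. More generally, $|\E a-\E b|$ is not controlled by $|\E a^2-\E b^2|$ at all. For example, take $a$ uniform on $\{0,1\}$ and $b\equiv 1/\sqrt2$.

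The paper avoids this issue by proving only the squared version. Its $\sdec$ is a difference of acceptance probabilities, matching the definition of $\cat{n,\ell}$, so the missing squares in the statement are best read as a typo. If you want the literal statement, rerun your argument directly with norms:
\[
\bigl|\norm{\Pi^{\mathsf{dec}}(\alpha+\beta)}-\norm{\Pi^{\mathsf{dec}}\alpha}\bigr|\le\norm{\beta}.
\]
By the same conditional-uniformity argument, $\E\norm{\Pi^{\mathsf{dec}}Z_q\ket{x,a,0^\ell}}$ coincides in both worlds. The same $\poly(\lambda)\,2^{-(1-c)\lambda/2}$ bound then follows.
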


\branchcolor{black}{
\begin{proof}
Similar to Lemma \ref{lem:sep-cat-vs-lesscat}, we show that if one can distinguish $(x,f(g(x)))$ from $(x,r)$ with probability $\sdec$ then there is an inversion algorithm that inverts $g(x)$ with success probability at least 
$$
\sinv\geq \dfrac{\sdec^2}{16(q-1)^2 \times 2^\ell},
$$ 
where $q-1\geq 1$ is the number of queries to $O_f$.

Fix the security parameter $\secp$, write
$\ell=\ell(\secp)$, and suppress the security parameter
in the circuit notation. Let $q_g$ denote the number
of queries to $O_g$. The circuit $U$ acts on registers
$Q,R,C$ containing $\secp,\secp,\ell$ qubits,
respectively, with $C$ initially in $\ket{0^\ell}$.
We implicitly extend $\Pidec$ by the identity on $C$.

The non-oracle gates of $U$ and the projector $\Pidec$
are fixed independently of $f,g,x,r$.
Recall that the functions
$f,g:\{0,1\}^{\secp}\to\{0,1\}^{\secp}$
are independent and uniformly random, and
$x,r\in\{0,1\}^{\secp}$ are independent uniform strings,
independent of $f,g$. All expectations below include $f,g,x,r$, as applicable.

If $q-1=0$, then conditional on $g,x$, the circuit is
independent of the uniform value $f(g(x))$. The two acceptance probabilities are therefore equal. Hence assume $q\geq 2$.

Define
\begin{equation}
\sdec:=
\Big|
\E_{f,g,x}
\left\Vert
\Pidec U\ket{x,f(g(x)),0^\ell}
\right\Vert^2
-
\E_{f,g,x,r}
\left\Vert
\Pidec U\ket{x,r,0^\ell}
\right\Vert^2
\Big|.
\label{eq:decision-cat-advantage}
\end{equation}
We may condition on $g,x$ when defining the branches
below, but retain the averaging over them in all
success probabilities.

Recalling the notation from \Cref{nota:succint-Ufg},
write $U=V_qO_fV_{q-1}O_f\cdots V_2O_fV_1$ and
\begin{equation}
\label{eq:decision-cat-decomposition}
U=\sum_{i=1}^{q-1}A_i\Pi Z_i+Z_q,
\end{equation}
where
\begin{itemize}
\item
$\Pi:=\den{g(x)}_Q\otimes\mathbb{I}_{RC}$ and
$\Pi^\perp:=\mathbb{I}-\Pi$;
\item
$Z_1:=V_1$ and
$Z_i:=V_iO_f\Pi^\perp Z_{i-1}$ for $2\leq i\leq q$;
\item
$A_{q-1}:=V_qO_f$ and
$A_i:=A_{i+1}V_{i+1}O_f$ for $1\leq i<q-1$.
\end{itemize}
Each $A_i$ is unitary, and each $Z_i$ is a contraction.

We proceed in two steps.
\begin{enumerate}
\item
We prove that
\begin{equation}
\label{eq:decision-cat-step-one}
\sdec
\leq
4\sum_{i=1}^{q-1}
\E\left\Vert
\Pi Z_i\ket{x,f(g(x)),0^\ell}
\right\Vert.
\end{equation}

\item
One can construct an inversion algorithm ${\cal I}$
whose success probability
$\sinv:=\Pr_{g,x,{\cal I}}[{\cal I}^{g}(g(x))=x]$
satisfies
\begin{align}
(q-1)^2 2^\ell\sinv
&\geq
(q-1)\sum_{i=1}^{q-1}
\E\left\Vert
\Pi Z_i\ket{x,f(g(x)),0^\ell}
\right\Vert^2
\nonumber\\
&\geq
\left(
\sum_{i=1}^{q-1}
\E\left\Vert
\Pi Z_i\ket{x,f(g(x)),0^\ell}
\right\Vert
\right)^2.
\label{eq:decision-cat-step-two}
\end{align}
\end{enumerate}

Step 2 uses exactly the same inversion algorithm as in \Cref{lem:sep-cat-vs-lesscat}. Given $z=g(x)$, the inverter samples $i$ uniformly from
$\{1,\ldots,q-1\}$, samples an independent random function
$f$, prepares
$\den{z}_Q\otimes\mathbb{I}_{RC}/2^{\secp+\ell}$,
and implements the branch $Z_i^\dagger$ before measuring $Q$.

To implement this branch, it runs the gates defining $Z_i$
in reverse order and implements each occurrence of
$\Pi^\perp$ by the projective test
$\{\Pi^\perp,\Pi\}$, aborting on outcome $\Pi$.
The branch on which it does not abort has Kraus operator
$Z_i^\dagger$.
These tests are available because the inverter knows $z$.
The inverter makes at most $q_g$ queries to $g$, since
$O_g^\dagger=O_g$.
Its independently sampled $f$ can be simulated internally:
the inversion bound restricts the queries to $g$, not
the inverter's other computation or workspace.

For completeness, define
\begin{align}
P_x
&:=\den{x}_Q\otimes\mathbb{I}_{RC},
\nonumber\\
P_{x,0}
&:=\den{x}_Q\otimes\mathbb{I}_R
   \otimes\den{0^\ell}_C.
\label{eq:decision-cat-inverter-projectors}
\end{align}
The success probability, including the possibility of abort,
satisfies
\begin{align}
\sinv
&=
\frac{1}{(q-1)2^{\secp+\ell}}
\sum_{i=1}^{q-1}
\E_{f,g,x}
\operatorname{tr}\left(P_xZ_i^\dagger\Pi Z_i\right)
\nonumber\\
&\geq
\frac{1}{(q-1)2^{\secp+\ell}}
\sum_{i=1}^{q-1}
\E_{f,g,x}
\operatorname{tr}\left(P_{x,0}Z_i^\dagger\Pi Z_i\right)
\nonumber\\
&=
\frac{1}{(q-1)2^\ell}
\sum_{i=1}^{q-1}
\E_{f,g,x}
\left\Vert
\Pi Z_i\ket{x,f(g(x)),0^\ell}
\right\Vert^2.
\label{eq:decision-cat-inverter-bound}
\end{align}
The inequality follows from $P_x\succeq P_{x,0}$ and
$Z_i^\dagger\Pi Z_i\succeq 0$.
The final equality follows by conditioning on $g,x$ and
the restriction of $f$ to all points other than $g(x)$.
Under this conditioning, $Z_i$ is independent of the
uniform value $f(g(x))$, because each $f$-query within
$Z_i$ is preceded by $\Pi^\perp$.
Averaging over that value gives the factor $2^{-\secp}$.
Cauchy--Schwarz, together with
$(\E X)^2\leq\E X^2$, then proves
\Cref{eq:decision-cat-step-two}.

It remains to establish Step 1.
For any vectors $u,v$ with $\|u\|\leq 1$ and
$\|u+v\|\leq 1$, the reverse triangle inequality gives
\begin{align}
\left|\|u+v\|^2-\|u\|^2\right|
&=
\left(\|u+v\|+\|u\|\right)
\left|\|u+v\|-\|u\|\right|
\nonumber\\
&\leq 2\|v\|.
\label{eq:decision-single-vector-comparison}
\end{align}
Consequently, if random vectors $u_1,v_1,u_2,v_2$
satisfy $\|u_j\|\leq 1$ and $\|u_j+v_j\|\leq 1$
almost surely for $j\in\{1,2\}$, and
$\E\|u_1\|^2=\E\|u_2\|^2$, then
\begin{align}
\left|
\E\left(\|u_1+v_1\|^2-\|u_2+v_2\|^2\right)
\right|
&\leq
\E\left|\|u_1+v_1\|^2-\|u_1\|^2\right|
\nonumber\\
&\quad+
\E\left|\|u_2+v_2\|^2-\|u_2\|^2\right|
\nonumber\\
&\leq
2\left(\E\|v_1\|+\E\|v_2\|\right).
\label{eq:decision-general-vector-comparison}
\end{align}
This estimate does not require $\|v_j\|\leq 1$ or
negligibility of either expectation.

We apply \Cref{eq:decision-general-vector-comparison} with
\begin{align}
\ket{v_1}
&:=
\Pidec\sum_{i=1}^{q-1}A_i\Pi Z_i
\ket{x,f(g(x)),0^\ell},
\nonumber\\
\ket{u_1}
&:=
\Pidec Z_q\ket{x,f(g(x)),0^\ell},
\nonumber\\
\ket{v_2}
&:=
\Pidec\sum_{i=1}^{q-1}A_i\Pi Z_i
\ket{x,r,0^\ell},
\nonumber\\
\ket{u_2}
&:=
\Pidec Z_q\ket{x,r,0^\ell}.
\label{eq:decision-cat-comparison-vectors}
\end{align}
Since $Z_q$ is a contraction,
$\|u_1\|,\|u_2\|\leq 1$.
Moreover, $u_j+v_j$ is the projection of a unit vector,
so $\|u_j+v_j\|\leq 1$ for $j\in\{1,2\}$.

Conditional on $g,x$ and $f$ away from $g(x)$,
the operator $Z_q$ is fixed, while both $f(g(x))$ and $r$
are uniform strings.
Since $\Pidec$ is independent of the oracles and the
sampled strings, we obtain
\begin{equation}
\label{eq:decision-cat-no-hit-cancellation}
\E\|u_1\|^2=\E\|u_2\|^2.
\end{equation}

The triangle inequality, together with the unitarity of
$A_i$ and the fact that $\Pidec$ is a contraction, gives
\begin{align}
\E\|v_1\|
&=
\E\left\Vert
\Pidec\sum_{i=1}^{q-1}A_i\Pi Z_i
\ket{x,f(g(x)),0^\ell}
\right\Vert
\nonumber\\
&\leq
\sum_{i=1}^{q-1}
\E\left\Vert
\Pidec A_i\Pi Z_i\ket{x,f(g(x)),0^\ell}
\right\Vert
\nonumber\\
&\leq
\sum_{i=1}^{q-1}
\E\left\Vert
\Pi Z_i\ket{x,f(g(x)),0^\ell}
\right\Vert.
\label{eq:decision-cat-first-hit-bound}
\end{align}
Similarly,
\begin{align}
\E\|v_2\|
&\leq
\sum_{i=1}^{q-1}
\E\left\Vert
\Pidec A_i\Pi Z_i\ket{x,r,0^\ell}
\right\Vert
\nonumber\\
&\leq
\sum_{i=1}^{q-1}
\E\left\Vert
\Pi Z_i\ket{x,r,0^\ell}
\right\Vert
\nonumber\\
&=
\sum_{i=1}^{q-1}
\E\left\Vert
\Pi Z_i\ket{x,f(g(x)),0^\ell}
\right\Vert.
\label{eq:decision-cat-second-hit-bound}
\end{align}

Now using \Cref{eq:decision-general-vector-comparison},
we conclude that
\begin{align}
\sdec
&=
\left|
\E\left(
\|u_1+v_1\|^2-\|u_2+v_2\|^2
\right)
\right|
\nonumber\\
&\leq
2\left(\E\|v_1\|+\E\|v_2\|\right)
\nonumber\\
&\leq
4\sum_{i=1}^{q-1}
\E\left\Vert
\Pi Z_i\ket{x,f(g(x)),0^\ell}
\right\Vert.
\label{eq:decision-cat-quantitative-comparison}
\end{align}
This proves Step 1.

Combining Steps 1 and 2 gives
\begin{equation}
\label{eq:decision-cat-inversion-lower-bound}
\sinv
\geq
\frac{\sdec^2}{16(q-1)^2 \times 2^\ell}.
\end{equation}
On the other hand, the random-function inversion bound gives
\begin{equation}
\label{eq:decision-cat-inversion-upper-bound}
\sinv
\leq
\frac{(2q_g+1)^2}{2^{\secp}}.
\end{equation}
Therefore,
\begin{equation}
\label{eq:decision-cat-final-bound}
\sdec
\leq
4(q-1)(2q_g+1)\,2^{-(\secp-\ell)/2}.
\end{equation}
This is negligible for polynomially bounded $q,q_g$
whenever $\secp-\ell(\secp)=\omega(\log\secp)$. In particular, this holds for $\ell(\secp)=\lfloor c\secp\rfloor$ with any fixed $0\leq c<1$. Note that this does not require the catalytic register to be restored.
\end{proof}
}

\branchcolor{darkgray}{As before, combining the lemma and using essentially the same algorithm
that solved the relational problem, we obtain our main result for
a decision problem.}
\begin{thm}[$\cat{2\lambda,c\lambda}^{\fat O}\subsetneq\cat{2\lambda,\lambda}^{\fat O}$
for $c<1$]
\label{thm:sep-cat-vs-nocat-dec}Let $n,\ell,\ell':\N\to\N$ be
the polynomials $n(\lambda)=2\lambda$, $\ell'(\lambda)=\lambda$
and $\ell(\lambda)=c\cdot\secp$ for $c<1$. In the random oracle
model, it holds that 
\[
\cat{n,\ell}^{\fat O}\subsetneq\cat{n,\ell'}^{\fat O}.
\]
\end{thm}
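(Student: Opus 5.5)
The plan has two parts: show that the decision problem of \Cref{prob:the-probem-1} lies in $\cat{n,\ell'}^{\fat O}$, and show that it does not lie in $\cat{n,\ell}^{\fat O}$. For the second part we combine the lower bound with the trivial containment $\cat{n,\ell}^{\fat O}\subseteq\cat{n,\ell'}^{\fat O}$ for $\ell\leq\ell'$. The containment holds because one can pad with unused catalyst qubits that are left untouched, and these are restored exactly.

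\textbf{Upper bound.} We adapt the three-query circuit from \Cref{lem:PinFCAT_3lambda}. On input $\ket{x,y}_{AB}\ket{0^\lambda}_C$ we apply $O_g$ on $AC$, then $O_f$ on $CB$, then $O_g$ on $AC$. This yields $\ket{x,y\oplus f(g(x))}\ket{0^\lambda}$, so the catalyst is restored exactly for every input. We take $\Pi_\lambda:=\mathbb{I}_{A}\otimes\den{0^\lambda}_B$. On a $\mathsf{YES}$ instance $(x,f(g(x)))$ the acceptance probability is $1$. On a $\mathsf{NO}$ instance $(x,y)$ with $y\neq f(g(x))$ the register $B$ ends in a nonzero basis string, so the acceptance probability is $0$. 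The circuit has $\O(\lambda)$ gates and is output deterministically, so this algorithm lies in $\catgen{n,\ell'}^{\fat O}$.

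\textbf{Lower bound.} Suppose for contradiction that some $\A\in\catgen{n,\ell}^{\fat O}$ decides the problem with completeness $2/3$ and soundness $1/3$. We evaluate it on a random $\mathsf{YES}$ instance $(x,f(g(x)))$ and on a random $(x,r)$ with $x,r$ uniform. The pair $(x,r)$ is a $\mathsf{NO}$ instance except with probability $2^{-\lambda}$. Averaging over $x$, $r$ and the oracles, the acceptance gap is therefore at least $1/3-2^{-\lambda}$. This contradicts \Cref{lem:decision-cat-separation-robust}, which bounds the gap by $4(q-1)(2q_g+1)2^{-(\lambda-\ell)/2}$; this quantity is negligible for $\ell=\lfloor c\lambda\rfloor$ and polynomial $q,q_g$.

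\textbf{Main obstacle.} \Cref{lem:decision-cat-separation-robust} assumes that the non-oracle gates and $\Pidec$ are independent of $f,g$. A PPT algorithm $\A^{\fat O}$, however, may query the oracles while it chooses the circuit. I would handle this as in the remark after \Cref{lem:sep-cat-vs-lesscat}. First, the classical queries of $\A$ touch only polynomially many points. Second, since $x$ is uniform and independent of $\A$'s choice, the probability that $\A$ queried $x$ under $g$, or $g(x)$ under $f$, is $\poly(\lambda)2^{-\lambda}$. Conditioned on neither event occurring, $g(x)$ and $f(g(x))$ remain uniform given $\A$'s view. The argument of the lemma then goes through after conditioning on $\A$'s transcript, with the oracles re-randomised outside the queried points. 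Alternatively, $\A$'s queries can be absorbed as an extra prefix of the circuit and counted among $q,q_g$. There is one more point: the gap is computed for a random instance, whereas the class definition demands correctness on all instances. Averaging the definition's worst-case guarantees over random instances gives exactly the averaged gap above, so the class definition does yield this gap.
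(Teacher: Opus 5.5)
Your proposal is correct and follows the paper's own proof. Membership comes from the three-query circuit of \Cref{lem:PinFCAT_3lambda} with $\Pi_\lambda=\mathbb{I}_A\otimes\den{0^\lambda}_B$, and non-membership from a contradiction with \Cref{lem:decision-cat-separation-robust}; you additionally make explicit the padding containment, the passage from worst-case completeness/soundness to the averaged gap $1/3-2^{-\lambda}$, and the oracle-dependent choice of circuit, all of which the paper leaves implicit. For that last point, keep the conditioning-on-the-transcript argument, with the inverter re-running $\A$ on the real $g$ and its own simulated $f$; your alternative of absorbing $\A$'s classical queries into a coherent prefix does not work, because storing the transcript needs polynomially many extra qubits and would destroy the $2^{\ell}$ accounting in the lemma.
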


\branchcolor{black}{\begin{proof}
Since from \Cref{lem:decision-cat-separation-robust} it holds that $(x,r)$ and $(x,f(g(x)))$ are indistinguishable with less than $\ell(\secp)=\secp$ catalysts, for the decision problem $\mathcal{P}$ (see \Cref{prob:the-probem-1}) it follows that: $\mathcal{P} \notin \cat{2\lambda,c\lambda}^{\fat O}$ while $\mathcal{P} \in \cat{2\lambda,\lambda}^{\fat O}$
\end{proof}
}

\section{Catalytic Work Extraction}\label{sec:Computational-Catalytic-Work}

\branchcolor{darkgray}{
Since the definition of computational work extraction in \Cref{subsec:The-access-model} appears to be too restrictive (see \Cref{prob:in-place-permutations-are-hard}), one can relax the definition to allow for catalytic computations in the work extraction process. In this section we allow the use of catalysts for work extraction and make statements about catalytic ergotropy and its computational variant, separations between computational ergotropy and computational catalytic ergotropy and catalytic work potential of Hamiltonians.
}

We start by defining catalytic ergotropy as follows.

\begin{defn}[$\caterg_{\rho,H}(\ell)$, $\caterg_{\rho,H}$]
Given a quantum state (complete description) $\rho\in\dens(n)$ and
a Hamiltonian $H\in\Ham(n)$, the catalytic ergotropy of the state
$\rho$ with respect to the Hamiltonian $H$ using $\ell$-catalysts
is given by
\[
\caterg_{\rho,h}(\ell):=\max_{\substack{\substack{\left|\eta\right\rangle \in{\cal H}_{\catalyst}\\
U\in\catU[n][\ell][\left|\eta\right\rangle ]
}
}
}\tr\left[H\otimes\mathbb{I}_{2^{\ell}}\left(\rho\otimes\den{\eta}^{\otimes\ell}-U\left(\rho\otimes\den{\eta}^{\otimes\ell}\right)U^{\dagger}\right)\right]
\]
and with arbitrarily many catalysts is given by
\[
\caterg_{\rho,h}:=\max_{\ell\geq0}\caterg_{\rho,h}(\ell).
\]
\end{defn}

\branchcolor{darkgray}{
We introduce the following notation to be consistent with \Cref{sec:comp-work-ex}. 
}

\begin{notation}
Let $\fat{\rho}=\{\rho_{\secp}\}_{\secp}$ be a family of states and
$\fat H=\{H_{\secp}\}_{\secp}$ be a family of Hamiltonians. We use
these alternate notations to represent the catalytic ergotropy for
families
\end{notation}

\begin{itemize}
\item $\caterg_{\fatrhoH}(\secp,\ell(\secp))$ $:=\caterg_{\rho^{\secp},h^{\secp}}(\ell(\secp))$
and
\item $\caterg_{\fatrhoH}(\secp)$ $:=\caterg_{\rho^{\secp},h^{\secp}}$.
\end{itemize}
\branchcolor{darkgray}{
In order define catalytic ergotropy, one has to extend the $n$-qubit Hamiltonian to define it on the catalytic space as well. 
It is crucial to ensure that the extended Hamiltonian does not trivially lead to higher work extraction.

As a consequence of \Cref{lem:dirty-vs-no-cat-unbounded},in the following theorem we show that the value of catalytic ergotropy  coincides with the value of ergotropy without catalysts in the information theoretic setting.

}
\begin{thm}[Equivalence $\erg$ and $\caterg$]
 Let $n\in\N$ denote the size of the system. For state $\rho\in\Den(n)$
and Hamiltonian $H\in\Ham(n)$ , the maximum extractable work remains
unchanged even with the assistance of a catalytic system, i.e. 
\begin{flalign*}
\erg_{\rho,H} & =\caterg_{\rho,H} & \forall\rho\in\dens(n),\forall h\in\hams(n),\forall n.
\end{flalign*}
\end{thm}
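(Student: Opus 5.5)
We prove the two inequalities $\erg_{\rho,H}\leq\caterg_{\rho,H}$ and $\caterg_{\rho,H}\leq\erg_{\rho,H}$ separately.

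\textbf{The easy direction.} The first inequality holds because $\ell=0$ is allowed in the maximisation. Equivalently, for any $\ell$ and any $\ket{\eta}$, the unitary $V\otimes\mathbb{I}_{2^{\ell}}$ lies in $\catU[n][\ell][\left|\eta\right\rangle ]$ for every $V\in\uni(n)$. It extracts exactly $\tr[H(\rho-V\rho V^{\dagger})]$ work, since $H\otimes\mathbb{I}$ does not see the catalyst. Taking the maximum over $V$ gives $\caterg_{\rho,H}(\ell)\geq\erg_{\rho,H}$ for every $\ell$.

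\textbf{The substantive direction.} Fix $\ell$, a catalyst state $\ket{\eta}$, and $U\in\catU[n][\ell][\left|\eta\right\rangle ]$. We read the catalyst in the definition as the single pure state $\ket{\eta}$ on $\ell$ qubits.
\begin{itemize}
\item By definition, for every pure $\ket{\phi}$ there is $\ket{\psi_{\phi}}$ with $U\ket{\phi}\ket{\eta}=\ket{\psi_{\phi}}\ket{\eta}$. This is precisely the hypothesis of \Cref{lem:dirty-vs-no-cat-unbounded} with $\varepsilon=0$.
\item The lemma then gives a unitary $V_{S}$ such that $\mathcal{N}_{S}(\rho)=\tr_{\catalyst}[U(\rho\otimes\den{\eta})U^{\dagger}]=V_{S}\rho V_{S}^{\dagger}$ exactly, for every $\rho\in\Den(n)$ and not merely for pure states.
\item One could instead argue directly as in \Cref{claim:exactCatalystUseless}. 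The operator $\mathcal{L}=(\mathbb{I}\otimes\bra{\eta})U(\cdot\otimes\ket{\eta})$ is then an isometry on a finite-dimensional space, hence a unitary $V$. The leakage operator $K$ vanishes because $K^{\dagger}K=\mathbb{I}-\mathcal{L}^{\dagger}\mathcal{L}=0$. So $U(\rho\otimes\eta)U^{\dagger}=V\rho V^{\dagger}\otimes\eta$, and linearity handles mixed $\rho$.
\end{itemize}

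\textbf{Computing the work.} Since the Hamiltonian is $H\otimes\mathbb{I}_{2^{\ell}}$, the energy depends only on the system marginal:
\[
\tr\left[(H\otimes\mathbb{I})U(\rho\otimes\eta)U^{\dagger}\right]=\tr\left[H\,\mathcal{N}_{S}(\rho)\right]=\tr\left[HV_{S}\rho V_{S}^{\dagger}\right].
\]
Likewise the initial energy is $\tr[H\rho]$. The work extracted by $U$ therefore equals the work extracted by $V_{S}\in\uni(n)$, which is at most $\erg_{\rho,H}$. Maximising over $U$, $\ket{\eta}$ and then $\ell$ gives $\caterg_{\rho,H}\leq\erg_{\rho,H}$.

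\textbf{Main obstacle.} There is little genuine difficulty; the only care needed is conceptual. First, the catalytic condition is required only for the single fixed $\ket{\eta}$, so one must use the fixed-catalyst lemma rather than \Cref{claim:exactCatalystUseless}, which assumes restoration for all catalyst states. Second, the extension of $H$ to the catalyst as $H\otimes\mathbb{I}$ is exactly what prevents the catalyst from storing or releasing energy. If mixed catalysts were allowed, I would purify them and note that product restoration of a purification reduces to the pure case. Alternatively I would use convexity, since the channel is then an average of the pure-catalyst channels, each of which is unitary.
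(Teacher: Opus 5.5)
Your proof is correct and follows essentially the same route as the paper: you replace the exactly catalytic $U$ by a system unitary that induces the same channel (the paper cites \Cref{claim:exactCatalystUseless} and \Cref{thm:approx-cat-restoration-is-useless}, while you apply \Cref{lem:dirty-vs-no-cat-unbounded} at $\varepsilon=0$), and you then use that $H\otimes\mathbb{I}_{2^{\ell}}$ only sees the system marginal. Your version is, if anything, slightly more careful: it makes the trivial direction $\erg_{\rho,H}\leq\caterg_{\rho,H}$ explicit, and it relies on a result whose hypothesis is restoration of the single fixed $\ket{\eta}$, whereas \Cref{claim:exactCatalystUseless} as stated assumes every catalyst state is restored.
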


\branchcolor{black}{\begin{proof}
Since for unitaries $U\in\catU[n][\ell][\left|\eta\right\rangle ]$,
it holds that $U\left(\rho\otimes\den{\eta}\right)U^{\dagger}=\sigma\otimes\den{\eta}$,
one can write the resultant state as
\[
U\left(\rho\otimes\den{\eta}\right)U^{\dagger}=\tr_{\catalyst}\left[U\left(\rho\otimes\den{\eta}\right)U^{\dagger}\right]\otimes\tr_{\IO}\left[U\left(\rho\otimes\den{\eta}\right)U^{\dagger}\right]
\]
Using this in this in the definition of catalytic ergotropy, one can
write
\begin{align*}
\caterg_{\rho,h}(\ell) & =\max_{\substack{U\in\catU[n][\ell]}
}\tr\left[H\otimes\mathbb{I}_{2^{\ell}}\left(\rho\otimes\den{\eta}-U\left(\rho\otimes\den{\eta}\right)U^{\dagger}\right)\right]\\
 & =\max_{\substack{U\in\catU[n][\ell]}
}\tr\left[\left(H\rho\right)\otimes\left(\I_{2^{\ell}}\den{\eta}\right)\right]\\ & \qquad \qquad \qquad \qquad -\tr\left[\left(H\cdot\tr_{\catalyst}\left[U\left(\rho\otimes\den{\eta}\right)U^{\dagger}\right]\right)  \times\left(\I_{2^{\ell}}\cdot\tr_{\IO}\left[U\left(\rho\otimes\den{\eta}\right)U^{\dagger}\right]\right)\right]\\
 & =\max_{\substack{U\in\catU[n][\ell]}
}\tr\left[\left(H\rho\right)\times1\right]-\tr\left[\left(H\cdot\tr_{\catalyst}\left[U\left(\rho\otimes\den{\eta}\right)U^{\dagger}\right]\right)\times1\right]
\end{align*}
Since for all $n,\ell\in\N$ we have $C_{\uni(n)}=C_{\catU[n][\ell]}$
(from \Cref{claim:exactCatalystUseless,thm:approx-cat-restoration-is-useless}),
it holds that for every $U\in\catU[n][\ell]$ there exists a unitary
$V\in\uni(n)$ such that,
\[
\tr\left[\left(H\rho\right)\right]-\tr\left[\left(H\cdot\tr_{\catalyst}\left[U\left(\rho\otimes\den{\eta}\right)U^{\dagger}\right]\right)\right]=\tr\left[\left(H\rho\right)\right]-\tr\left[\left(H\cdot V\rho V^{\dagger}\right)\right].
\]
As a consequence one it holds that $\erg_{\rho,H}=\caterg_{\rho,H}(\ell)$
for all $\ell,$ states and Hamiltonians. Therefore one obtains that
$\erg_{\rho,H}=\caterg_{\rho,H}$. 
\end{proof}
}

\subsection{Computational Catalytic Work Extraction}\label{subsec:cat-erg-hat}

\branchcolor{darkgray}{
From the theorem above it holds that 
catalytic ergotropy, retains the meaning of the notion of ergotropy
as defined in \cite{Lenard:1978thm}, while allowing one to
use catalytic space.

Analogous to the defintion of computational ergotropy, one can naturally extend the notion of catalytic ergotropy to the computationally bounded setting as follows.

}

Let $n,\ell:\mathbb{N}\to\mathbb{N}$ be functions denoting the size
of the quantum system and the catalyst registers respectively. For
$\secp\in\mathbb{N}$, let $\fat{\rho}=\{\rho_{\secp}\}_{\secp}$
and $\fat H=\{H_{\secp}\}_{\secp}$ denote a family of states and
Hamiltonians, such that $\rho_{\secp}\in\Den(n(\secp))$, $H_{\secp}\in\Ham(n(\secp))$. 
\begin{defn}[Work extracted by an algorithm]
\label{def:caterghat} Let $\fatrhoH$ be as above, and let $\A$
be a quantum oracle algorithm. We define the work extracted by the
algorithm $\A$, using $\ell$ catalyst qubits, from the state $\rho_{\secp}$
with respect to the Hamiltonian $H_{\secp}$ as

\begin{align*}
\W^{(\ell)}_{\fat{\rho},\fat H,\A}(\secp) & :=\mathbb{\E}\Biggr[\tr\left[H_{\secp}\otimes\I_{2^{\ell(\secp)}}\left(\rho_{\secp}\otimes\den{0^{\ell(\secp)}}-U_{\secp}\left(\rho_{\secp}\otimes\den{0^{\ell(\secp)}}\right)U^{\dagger}_{\secp}\right)\right]:\\
 & \qquad\qquad\qquad\qquad\qquad\qquad\qquad\qquad\qquad\begin{array}{r}
\circdesc(U_{\secp})\leftarrow\A^{O_{\fat{\rho}},\hamdesc}(1^{\secp}),\\
U_{\secp}\in\catU[n(\secp)][\ell(\secp)][\left|0^{\ell(\lambda)}\right\rangle ]
\end{array}\Biggr]
\end{align*}
where the expectation is over the internal randomness of the algorithms,
and oracles.\footnote{We had to do this instead of directly using $\cat{n,\ell}$ because
we wanted to explicitly give the algorithm access to $O_{\fat{\rho}}$.}
\end{defn}

\branchcolor{darkgray}{Similar to computational ergotropy, one can define computational
catalytic ergotropy for distributions over states and Hamiltonians.}

Let ${\cal D}$ denote a probability distribution over families of
states $\{\fat{\rho}_{i}\}_{i}$ and families of Hamiltonian $\{\fat H_{i}\}_{i}$.
We use $\fatrhoH\leftarrow{\cal D}$ to denote the result of sampling
a pair of state and Hamiltonian families from ${\cal D}$.
\begin{defn}[Computational Catalytic Ergotropy ($\caterghat_{{\cal D},\ell}$)]
\label{def:computational-caterg-D-ell}Let $\ub,\lb:\mathbb{N}\to\mathbb{R}$
be functions, and ${\cal D}$ be a distribution over families of states
and Hamiltonians (as described above). We say that the \emph{computational
catalytic ergotropy using $\ell$ catalyst qubits and for the distribution
${\cal D}$}, denoted by $\caterghat_{{\cal D},\ell}$, is $(\lb,\ub)$
bounded if
\begin{align*}
\forall\quad{\cal A}\in\PPT,\text{ it holds that }\E_{\fatrhoH\leftarrow{\cal D}}\left[{\cal W}^{(\ell)}_{\fatrhoH,{\cal A}}\right]\fnleq & \ub,\text{ and }\\
\exists\quad{\cal A}\in\PPT\ \text{s.t.}\quad\E_{\fatrhoH\leftarrow{\cal D}}\left[{\cal W}^{(\ell)}_{\fatrhoH,{\cal A}}\right]\fngeq & \lb.
\end{align*}
We say that the \emph{computational catalytic ergotropy for the distribution
${\cal D}$}, denoted by $\caterghat_{{\cal D}}$, is $(\lb,\ub)$
bounded if
\begin{align*}
\forall\quad{\cal A}\in{\cal C},\text{and for all polynomials }\ell\text{, it holds that }\E_{\fatrhoH\leftarrow{\cal D}}\left[{\cal W}^{(\ell)}_{\fatrhoH,{\cal A}}\right]\fnleq & \ub,\text{ and }\\
\text{there exists }{\cal A}\in{\PPT},\text{and polynomial }\ell\ \text{s.t.}\quad\E_{\fatrhoH\leftarrow{\cal D}}\left[{\cal W}^{(\ell)}_{\fatrhoH,{\cal A}}\right]\fngeq & \lb.
\end{align*}

\end{defn}

\subsection{Separating $\protect\erghat_{{\cal D}}$ and $\protect\caterghat_{{\cal D}}$
in the random oracle model}
\branchcolor{darkgray}{
Clearly, since the catalytic model of computation allows for more operations in the computationally bounded setting, one can potentially extract more work using catalysts. We give the explicit distributions over families of states and Hamiltonians for which one attains maximal separation between ergotropy and catalytic ergotropy in the computationally bounded setting, relative to a random oracle.  

Before we show the separation, we make the following remarks.
}

\begin{rem}[Catalytic ergotropy relative to oracles]
\label{rem:caterg-distributions} One can naturally extend the definitions
above to the oracle setting.
\begin{enumerate}
\item The probability distribution ${\cal D}$ is over the family of states,
Hamiltonians $\fatrhoH$ and the oracle $\fat O$.
\item For a particular family of states, Hamiltonians and oracle, $(\fatrhoH,\fat O)\leftarrow{\cal D}$,
we define
\begin{align*}
\W^{(\ell)}_{\fat{\rho},\fat H,\A^{\fat O}}(\secp) & :=\mathbb{\E}\Biggr[\tr\left[H_{\secp}\otimes\I_{2^{\ell(\secp)}}\left(\rho_{\secp}\otimes\den{0^{\ell(\secp)}}-U_{\secp}\left(\rho_{\secp}\otimes\den{0^{\ell(\secp)}}\right)U^{\dagger}_{\secp}\right)\right],\\
 & \qquad\qquad\qquad\qquad\qquad\qquad\qquad\qquad\qquad:\begin{array}{r}
\circdesc(U_{\secp})\leftarrow\A^{O_{\fat{\rho}},\hamdesc,\fat O}(1^{\secp}),\\
U_{\secp}\in\catU[n(\secp)][\ell(\secp)][\left|0^{\ell(\lambda)}\right\rangle ]
\end{array}\Biggr]
\end{align*}
\item We say $\caterghat_{{\cal D}}$, is $(\lb,\ub)$ bounded if
\begin{align*}
\forall\quad{\cal A}\in\PPT,\text{and for all polynomials }\ell\text{, it holds that }\E_{\fatrhoH,\fat O\leftarrow{\cal D}}\left[{\cal W}^{(\ell)}_{\fatrhoH,{\cal A}^{\fat O}}\right]\fnleq & \ub,\text{ and }\\
\text{there exists }{\cal A}\in\PPT,\text{and polynomial }\ell\ \text{s.t.}\quad\E_{\fatrhoH,\fat O\leftarrow{\cal D}}\left[{\cal W}^{(\ell)}_{\fatrhoH,{\cal A}^{\fat O}}\right]\fngeq & \lb.
\end{align*}
\end{enumerate}
\end{rem}

\branchcolor{darkgray}{
For the proof, we use a modified version of the Hamiltonian lemma (\Cref{lem:trHrho=00003DtrHsigma}), that we state in the following remark. 
}

\begin{rem}[$\catgen{n,\ell}$ indistinguishability]
\end{rem}

\begin{itemize}
\item We say the family of states $\fat{\tau}_{1}$ and $\fat{\tau}_{2}$
are $\catgen{n,\ell}$ indistinguishable if for any algorithm $\A\in\cat{n,\ell}$,
such that $\desc(\Pi_{\secp},U_{\secp})\leftarrow\A(1^{\secp})$ ,
it holds that
\[
\left|\Pr\left[\left[\Pi_{\secp}U_{\secp}\tau_{1,\secp}U^{\dagger}_{\secp}\right]=1\right]-\Pr\left[\left[\Pi_{\secp}U_{\secp}\tau_{2,\secp}U^{\dagger}_{\secp}\right]=1\right]\right|\leq\negl
\]
for some negligible function $\negl$.
\item One can naturally extend the \Cref{lem:trHrho=00003DtrHsigma} to
$\catgen{n,\ell}$ indistinguishable states as follows: For every
family of extensive $k$-local extensive Hamiltonians, we have that
\[
\left|\tr\left[H_{\secp}U_{\lambda}\tau_{1,\secp}U^{\dagger}_{\lambda}\right]-\tr\left[H_{\secp}U_{\lambda}\tau_{2,\secp}U^{\dagger}_{\lambda}\right]\right|\leq\negl(\lambda)
\]
 where the unitary $U_{\secp}$ is produced by an algorithm $\A\in\cat{n,\ell}$,
i.e. $\desc\left(\Pi_{\secp},U_{\secp}\right)\leftarrow\A(1^{\secp})$.
\end{itemize}

With the notations in place, we define the distribution $\mathcal{D}$ over $\fatrhoH,\fat O$, for which we show the separation as follows.

Let $\fat O=(O_{f},O_{g})$ be the oracles corresponding
to the length preserving functions $f,g$ that are drawn uniformly
at random from the set of all length preserving functions. Corresponding
to each $\fat O$ we define the following families of states and Hamiltonians,
$\fat{\rho}=\{\rho_{\secp}\}_{\secp}$ and $\fat H=\{H_{\secp}\}_{\secp}$
as follows. 
\[
\rho_{\secp}:=\sum_{x\in\binset^{\secp}}\den{x,f(g(x))}\text{ and }
\]
\[
H_{\secp}:=\I_{2^{\secp}}\otimes\hw_{\secp}
\]
We denote the family of maximally mixed states $\fat{\sigma}=\{\sigma_{\secp}\}_{\secp}$,
where
\[
\sigma_{\secp}=\sum_{x,r\in\binset^{\secp}}\den{x,r}.
\]

\branchcolor{darkgray}{
The high level intuition the separation follows from \Cref{lem:sep-cat-vs-nocat-evaluation}. Since one can not distinguish between $\ket{x,f(g(x))}$ and $\ket{x,r)}$ in the absence of catalysts, one can not map the state $\rho_{\secp}$ to a low energy state of the Hamiltonian. Whereas, with sufficient catalysts, one can map $\ket{x,f(g(x))}\mapsto \ket{x,0}$, therefore minimising the energgy.
}

\begin{thm}[$\erghat_{{\cal D}}$ is strictly lesser than $\caterghat_{{\cal D}}$]
\label{thm:sep-cat-erg-hat-and-erg-hat}
In the random oracle model, 
for the family of states, Hamiltonians and the oracle, $\fatrhoH,\fat O$
sampled from ${\cal D}$ as defined above (see \Cref{def:explicit-fam-state-Ham-}),
it holds that 
\begin{align*}
\nexists\ \ \nonnegl\ \ \ \text{s.t. }\erghat_{{\cal D}} & \fngeq\nonnegl\text{, while }\\
\caterghat_{{\cal D}} & \fngeq\frac{n}{2}
\end{align*}
where $\nonnegl$ is any non-negligible function.
\end{thm}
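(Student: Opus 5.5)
The proof has two halves, matching the two claims of the theorem: a catalytic extraction algorithm giving the lower bound on $\caterghat_{\cal D}$, and a reduction to \Cref{lem:decision-cat-separation-robust} (with $\ell=0$) for the upper bound on $\erghat_{\cal D}$. Throughout, $\rho_\lambda$ is the normalised state $2^{-\lambda}\sum_x\den{x,f(g(x))}$ on $n=2\lambda$ qubits, and $\sigma_\lambda$ is the maximally mixed state.

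\textbf{Lower bound on $\caterghat_{\cal D}$.} I take $\ell(\lambda)=\lambda$ and use the three-query circuit from \Cref{lem:PinFCAT_3lambda}, run so that it XORs $f(g(x))$ into the second register:
\[
\ket{x,f(g(x)),0^\lambda}\mapsto\ket{x,0^\lambda,0^\lambda}.
\]
This circuit is oracle-aided and efficiently describable. It is an exact catalytic unitary, since it returns the catalyst to $\ket{0^\lambda}$ on every basis input $\ket{x,y}$, and therefore on all of $\H_{\IO}$. Its output is supported on states whose second register is $0^\lambda$, so the final energy is $0$. Because $f(g(x))$ is uniform over the choice of $f$, the expected initial energy is $\E\,\tr[H_\lambda\rho_\lambda]=\lambda/2$. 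The extracted work is therefore $\lambda/2=n/4$ in expectation. The theorem writes $n/2$; here I read $n$ as the size of the Hamming-weight register, $\lambda$. With that reading the bound $\caterghat_{\cal D}\fngeq n/2$ follows after the normalisation is fixed.

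\textbf{Upper bound on $\erghat_{\cal D}$.} I follow the proof of \Cref{claim:erghat-is-negl-as-always}. Fix a QPT ${\cal B}$. The first step is to build a simulator that produces the same unitary $U_\lambda$ using only $\fat O$. It answers each call to $O_{\fat\rho}$ by sampling $x$ and computing $\ket{x,f(g(x))}$ with two classical queries, and it hardcodes $\hamdesc$. This yields a single oracle circuit: the step that prepares copies of the state, followed by $U_\lambda$ acting with no ancilla, so it fits the $\ell=0$ model. The second step is to argue by contradiction. Suppose $\E\,\W$ were non-negligible. Since $U\sigma U^\dagger=\sigma$, we would have
\[
\tr[H\rho]-\tr[HU\rho U^\dagger]-\big(\tr[H\sigma]-\tr[HU\sigma U^\dagger]\big)
\]
non-negligible. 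Splitting $H=\sum_i(\I-Z_i)/2$ and measuring a uniformly random qubit $i$ after applying $U$ would then give a distinguisher between $(x,f(g(x)))$ and $(x,r)$ with non-negligible advantage. The two initial energies agree exactly in expectation, so only the final energies need comparing.

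\textbf{Main obstacle.} This distinguisher is not literally of the form in \Cref{lem:decision-cat-separation-robust}. Its randomness over $\lambda$ and $i$ and the classical pre-queries used to simulate copies have to be absorbed into the model. The lemma allows randomised circuit generation, and the copy-preparation queries are classical queries that can be folded into $q$ and $q_g$. The remaining issue is that the measured input is a uniform mixture over $x$ rather than a fixed $x$. This is fine because the lemma already averages over $x$. The acceptance projector $(\I-Z_i)/2$ is oracle-independent, which is what the lemma requires.

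With these points addressed, the lemma gives advantage at most $4(q-1)(2q_g+1)2^{-\lambda/2}$, which is negligible. Hence for every ${\cal B}$ there is a negligible $\negl_{\cal B}$ bounding its extracted work, and $\erghat_{\cal D}$ is negligible in the sense of \Cref{nota:erghat-is-negl}.
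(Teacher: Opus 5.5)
Your overall route is the paper's. For the catalytic lower bound you use the three-query circuit of \Cref{lem:PinFCAT_3lambda} with $\ell=\secp$. For the upper bound you simulate $O_{\fat{\rho}}$ with two classical queries per sample, split $H_{\secp}$ into oracle-independent single-qubit projectors, and invoke the $\ell=0$ case of \Cref{lem:decision-cat-separation-robust}. Your reading of the target as $\secp/2$ is correct: the paper's own computation writes $\ket{x,0^{n(\secp)}}$ and obtains $n(\secp)/2$, with $n$ the size of the Hamming-weight register.

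The step that fails as written is your resolution of the ``main obstacle''. There are two ways to read it, and neither works.
\begin{itemize}
\item If the copy-preparation queries are folded into $q$ and $q_g$, i.e.\ made oracle gates of one circuit, that circuit must store the simulated samples and compute $\circdesc(U_{\secp})$ from them before applying $U_{\secp}$. That requires $\poly(\secp)$ auxiliary qubits starting in $0$, so it is not an $\ell=0$ circuit. With $\ell=\poly(\secp)$ the bound $4(q-1)(2q_g+1)2^{-(\secp-\ell)/2}$ is vacuous.
\item If instead those queries stay in the classical generator, the non-oracle gates of $U_{\secp}$ depend on $f,g$. The lemma's proof explicitly excludes this: it conditions on $g,x,\bar f$ with every $Z_i$ fixed, and its inverter must be able to rebuild $U$.
\end{itemize}

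What is needed is a conditioning argument. Suppose the generator makes $T$ queries. The fresh challenge $x$ is independent of the generator's view, so with probability at least $1-2T/2^{\secp}$ the generator never queries $f$ at $g(x)$. This event is determined by $(g,x,\bar f)$ and the generator's coins. On it, $\circdesc(U_{\secp})$ is fixed while $f(g(x))$ is still uniform. Both steps of the lemma then go through with an additive $2T/2^{\secp}$ loss: the inverter simulates the generator itself (its own random $f$, the real $g$, internally produced samples), and the generator's $g$-queries are added to $q_g$. The final bound therefore keeps $\ell=0$. The paper glosses over exactly this point with its remark that queries made by $\A$ ``don't matter'', but your explicit justification of it is not correct.
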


\branchcolor{black}{\begin{proof}
We prove this statement by breaking it into two smaller claims.
\begin{claim*}
For any PPT algorithm $\A^{O_{\fat{\rho}},\hamdesc,\fat O}$ such
that, $\desc(U_{\secp})\leftarrow\A^{O_{\fat{\rho}},\hamdesc,\fat O}(1^{\secp})$,
where $U_{\secp}\in\uni(n)$, it holds that $\E_{{\cal D}}\left[\W_{\fatrhoH,\A}(\secp)\right]\leq\negl$.
\end{claim*}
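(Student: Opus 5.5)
The plan mirrors the proof of \Cref{claim:erghat-is-negl-as-always}, with \Cref{lem:decision-cat-separation-robust} (at $\ell=0$) replacing \Cref{lem:expanding-RO-uniformity} as the source of indistinguishability.

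\textbf{Step 1: remove the state generation oracle.} First I would eliminate $O_{\fat{\rho}}$. Given $\A^{O_{\fat{\rho}},\hamdesc,\fat O}$, I build a simulator $\mathcal{S}^{\fat O}$. Each time $\A$ requests a copy of $\rho_{\secp}$, $\mathcal{S}$ samples $x\leftarrow\binset^{\secp}$, queries $O_g$ and then $O_f$ classically, and hands over $\den{x,f(g(x))}$. This copy is distributed exactly as $\rho_{\secp}$. The description of $H_{\secp}$ is oracle-independent, so $\mathcal{S}$ hardcodes it. The output $U_{\secp}\in\uni(2\secp)$ therefore has the same distribution, and $\mathcal{S}$ makes only polynomially many queries. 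All queries the PPT generator makes while choosing the circuit can be folded into a single oracle-dependent choice of circuit.

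\textbf{Step 2: compare energies via a distinguisher.} Next I relate work to distinguishing. The maximally mixed state $\sigma_{\secp}$ is unitarily invariant, so $\tr[H_{\secp}(\sigma_{\secp}-U\sigma_{\secp}U^\dagger)]=0$. Also $\tr[H_{\secp}\rho_{\secp}]=\tr[H_{\secp}\sigma_{\secp}]=\secp/2$ in expectation over $f$, because $f(g(x))$ is uniform when averaged over $f$. Hence
\[
\E\,\W_{\fatrhoH,\A}=\E\big(\tr[H_{\secp}U\sigma_{\secp}U^\dagger]-\tr[H_{\secp}U\rho_{\secp}U^\dagger]\big).
\]
Write $H_{\secp}=\sum_{i=1}^{\secp}P_i$ with $P_i=\den{1}$ on qubit $\secp+i$. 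Consider the distinguisher that runs $\mathcal{S}$ to get $U$, picks $i\leftarrow[\secp]$, applies $U$, and accepts on $\I-P_i$. Its advantage in distinguishing $\ket{x,f(g(x))}$ from $\ket{x,r}$ is exactly $\E\,\W/\secp$. This distinguisher is a zero-catalyst circuit with polynomially many queries to $O_f,O_g$, so it fits the setting of \Cref{lem:decision-cat-separation-robust} with $\ell=0$. The lemma, in the form proved in \Cref{eq:decision-cat-final-bound}, bounds the advantage by $4(q-1)(2q_g+1)2^{-\secp/2}$. Therefore $\E\,\W\le\secp\cdot\negl$, which is negligible.

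\textbf{Main obstacle.} The delicate point is that \Cref{lem:decision-cat-separation-robust} assumes the non-oracle gates and $\Pidec$ are fixed independently of $f,g,x$. In our setting the circuit $U$ output by $\mathcal{S}$ depends on the oracle, both through the copies of $\rho_{\secp}$ and through the generator's classical queries. I would handle this as follows.

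First, I absorb the copy-generation and circuit-selection into one larger oracle circuit. It prepares the copies coherently, with $x_j$ sampled from fresh uniform randomness, and applies the chosen gates controlled on classically computed descriptions. Its non-oracle gates are then fixed, and it makes only polynomially many queries. The challenge $x$ is independent of this randomness, so the branch decomposition in that proof, built on $\Pi=\den{g(x)}$, still goes through.

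If that conditioning feels unsafe, the fallback is a hybrid via \Cref{lem:O2H}. I would reprogram $f$ at the hidden point $g(x)$ and bound the probability that any query hits it. A hit requires inverting $g$ at $x$, which costs at most $\poly\cdot 2^{-\secp}$, and this again yields a negligible bound.
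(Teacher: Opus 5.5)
Your overall route is the paper's: simulate $O_{\fat{\rho}}$ by sampling $x$ and querying $g$ then $f$, use that $\sigma_\secp$ is unitarily invariant with the same average energy as $\rho_\secp$, split $H_\secp$ into oracle-independent projectors to turn work into a distinguishing advantage, and apply \Cref{lem:decision-cat-separation-robust} with $\ell=0$. The paper simply invokes that lemma as stated; its statement already lets the generator query $\fat O$. The paper does not revisit the fact that the lemma's proof fixes the non-oracle gates. You correctly flag this dependence, but the repair you propose fails.

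\textbf{Why absorbing the generator fails.} Absorbing copy generation and circuit selection into one oracle circuit with fixed gates needs $\poly(\secp)$ qubits outside the $2\secp$-qubit register. They hold the coins, the samples $(x_j,f(g(x_j)))$ and the classical description that controls the gates. These are auxiliary qubits for the lemma, and its bound $4(q-1)(2q_g+1)2^{-(\secp-\ell)/2}$ pays for every one of them. Its inverter starts everything except $Q$ maximally mixed. Averaging over $f(g(x))$ cancels only the $2^{\secp}$ loss on $R$, so a $2^{\ell}$ loss remains. With $\ell\geq\secp$ the bound is vacuous, and it must be: $\secp$ free qubits already give the three-query distinguisher (write $g(x)$ into the workspace, XOR $f(g(x))$ into $R$, test $R=0^{\secp}$).

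\textbf{The O2H fallback has the same hole.} Producing a query at $g(x)$ is trivial for a circuit that holds $x$ and has room. So ``a hit requires inverting $g$'' is exactly the space-bounded reversed-prefix argument of the lemma, and it would have to be redone for the oracle-dependent circuit anyway.

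\textbf{What works.} Keep $U$ acting on the $2\secp$ qubits alone and treat the generator's classical view as side information. In Step 2 of the lemma's proof, the inverter runs the generator itself. It answers the generator's $g$-queries with the real oracle, which raises $q_g$ only polynomially. It answers the $f$-queries and produces the samples with its internally sampled $f$. The inverter therefore knows $U$ and incurs no $2^{\ell}$ loss. In Step 1 and in the averaging identities, condition on the view and discard the event that $x$ or $g(x)$ appears in it. That event has probability $\poly(\secp)\cdot 2^{-\secp}$, and outside it $f(g(x))$ remains uniform given $U$. The negligible bound then survives with only polynomial overhead.
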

\begin{proof}
Notice that a PPT algorithm that has access to $\fat O$ can efficiently
simulate $\A^{O_{\fat{\rho}},\hamdesc,\fat O}(1^{\secp}).$This is
because the given access to $O_{f},O_{g}$ the algorithm can efficiently
prepare the (classical) state $\rho_{\secp}$ upon receiving input
$1^{\secp}.$ Therefore, any unitary $U_{\secp}$ whose circuit description
is output by $\A^{O_{\fat{\rho}},\hamdesc,\fat O}(1^{\secp})$ belongs
to $\PPTcatU[2\secp][0]$.

From \Cref{lem:decision-cat-separation-robust} we know that, for
all $\fat U\in\PPTcatU[2\secp][0]$ and for any projector $\Pi^{\dec}$
that does not depend on $f,g$, there exists a negligible function
$\negl'$ such that
\[
\E_{{\cal D}}\left[\tr\left[\Pi^{\dec}U_{\secp}\rho_{\secp}U^{\dagger}_{\secp}\right]\right]\approx_{\negl'}\E_{{\cal D}}\left[\tr\left[\Pi^{\dec}U_{\secp}\sigma_{\secp}U^{\dagger}_{\secp}\right]\right].
\]
Since the Hamiltonian $H_{\secp}$ is extensive, and can be decomposed
into projectors that are independent of $f,g$, one can show (by arguments
similar to \Cref{lem:trHrho=00003DtrHsigma}) that
\begin{align*}
\E_{{\cal D}}\left[\tr\left[H_{\secp}\rho_{\secp}\right]-\tr\left[H_{\secp}U_{\secp}\rho_{\secp}U_{\secp}\right]\right]-\E_{{\cal D}}\left[\tr\left[H_{\secp}\sigma_{\secp}\right]-\tr\left[H_{\secp}U_{\secp}\sigma_{\secp}U_{\secp}\right]\right] & =\negl(\secp)\\
\implies\E_{{\cal D}}\left[\W_{\fatrhoH,\A^{\fat O}}(\secp)\right]-\E_{{\cal D}}\left[\W_{\fat{\sigma},\fat H,\A^{\fat O}}(\secp)\right] & \leq\negl(\secp) & \forall\A\in\PPT.
\end{align*}
Since $\fat{\sigma}$ is the family of maximally mixed states, on
average one can extract no work from the state $\sigma_{\secp}$ using
unitary operations. Therefore, as a consequence, we have that $\erghat_{{\cal D}}\fnleq\negl$.
\end{proof}

\begin{claim*}
$\caterghat_{{\cal D}}\fngeq\frac{n}{2}$.
\end{claim*}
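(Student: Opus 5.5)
The plan is to exhibit an explicit efficient algorithm $\A\in\PPT$ with $\ell(\secp)=\secp$ catalysts and compute its extracted work directly. Note that $n(\secp)=2\secp$ here, so the target $\frac{n}{2}$ should be read against the initial energy of $\rho_\secp$ (normalised by $2^{-\secp}$). Any upper bound $\ub$ on $\caterghat$ is irrelevant, since only the lower bound is claimed. The algorithm $\A^{O_{\fat\rho},\hamdesc,\fat O}(1^\secp)$ ignores $O_{\fat\rho}$ and $\hamdesc$. It outputs the three-query circuit from \Cref{lem:PinFCAT_3lambda}, acting on registers $A$ (first $\secp$ system qubits), $B$ (last $\secp$ system qubits) and the $\secp$-qubit catalyst $C$:
\[
U_\secp:=O_{g,AC}\,O_{f,CB}\,O_{g,AC}.
\]

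First, I will check that $U_\secp\in\catU[2\secp][\secp][\ket{0^\secp}]$. On a basis state $\ket{x,y,0^\secp}$ the circuit maps to $\ket{x,y\oplus f(g(x)),0^\secp}$, as computed in the proof of \Cref{lem:PinFCAT_3lambda}. By linearity, every $\ket{\phi}_{AB}\otimes\ket{0^\secp}_C$ is sent to $(W\ket\phi)\otimes\ket{0^\secp}$, where $W$ is the permutation $\ket{x,y}\mapsto\ket{x,y\oplus f(g(x))}$. So the catalyst is restored exactly for every system input, and the circuit has $\O(\secp)$ description size, as the definition of $\W^{(\ell)}$ requires.

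Next, I compute the work. The state $\rho_\secp\otimes\den{0^\secp}$ is mapped to
\[
2^{-\secp}\sum_x\den{x,0^\secp}\otimes\den{0^\secp}.
\]
Since $H_\secp=\I\otimes H_{\hw(\secp)}$ acts only on $B$, the final energy is $0$. The initial energy is $2^{-\secp}\sum_x\hw(f(g(x)))$. Taking the expectation over the uniformly random $f,g$, each $f(g(x))$ is a uniform $\secp$-bit string, so the expected initial energy is $\secp/2$. Hence $\E_{\cal D}\W^{(\secp)}_{\fatrhoH,\A^{\fat O}}(\secp)=\secp/2$ exactly, for every $\secp$.

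The main obstacle is bookkeeping rather than mathematics: the claim is stated as $\fngeq\frac{n}{2}$, while the system has $n=2\secp$ qubits and the Hamiltonian only weights the second half. If $n$ is meant to denote the size $\secp$ of the energised register (as in \Cref{def:explicit-fam-state-Ham-}, where $H$ acts on $2n$ qubits), the bound $\secp/2$ is exactly the claim. Otherwise I will state the result as $\secp/2=n/4$, which is still the full ergotropy. In either case this equals the full initial energy, so it is optimal.
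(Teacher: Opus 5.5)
Your proposal is correct and takes the same route as the paper. The paper likewise invokes the three-query circuit of \Cref{lem:PinFCAT_3lambda} to map $\ket{x,f(g(x))}\ket{0^{\lambda}}$ to $\ket{x,0^{\lambda}}\ket{0^{\lambda}}$ and reads off the energy drop; you additionally check exact catalyst restoration and the expectation over $f,g$ explicitly, which the paper leaves implicit.

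Your bookkeeping remark is also right. With $n(\lambda)=2\lambda$ the Hamiltonian weights only $\lambda$ qubits, so the extracted work is $\lambda/2$, which is the full ergotropy. The paper's $n/2$ (and its $0^{n(\lambda)}$ in $\rho'_\lambda$) is correct only if $n$ denotes the size of the energised register, as in \Cref{def:explicit-fam-state-Ham-}.
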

\begin{proof}
From \Cref{lem:PinFCAT_3lambda} we know that there exists an algorithm
$\A^{\fat O}\in\catgen{n,\ell}^{\fat O}$ that produces a unitary
$U_{\secp}$ such that 
\[
U_{\secp}\left(\rho_{\secp}\otimes\den{0^{\ell(\secp)}}\right)U^{\dagger}_{\secp}=\underbrace{\sum_{x\in\binset^{\secp}}\den{x,0^{n(\secp)}}}_{\rho'_{\secp}}\otimes\den{0^{\ell(\secp)}}
\]
For such a unitary $U_{\secp}$ it holds that
\[
\E\left[\tr\left[H_{\secp}\rho_{\secp}\right]-\tr\left[H_{\secp}\rho'_{\secp}\right]\right]=\frac{n(\secp)}{2}.
\]
Therefore, there exists an oracle PPT algorithm $\A^{\fat O}$ such
that $\W_{\fatrhoH,\A^{\fat O}}\fngeq\frac{n}{2}$.
\end{proof}

\end{proof}
}

\subsection{Catalytic work potential}
\branchcolor{darkgray}{
From the above theorem, one can notice that there exists distributions over families of states and Hamiltonians for which one can extract more work through catalytic processes than without catalysts. We call such distributions as having \emph{catalytic work potential} as formally define it as follows.
}

\begin{defn}[Catalytic work potential]
 Let $n:\N\to\N$ be a function denoting the size of the system.
We say a distribution ${\cal D}$ over families of states $\fat{\rho}$,
Hamiltonians $\fat H$ (and oracles $\fat O$) have computational
\emph{catalytic-work potential} with gap $\Delta$ if either of the
following hold:

\begin{itemize}
\item $\erghat_{{\cal D}}\fnleq\ub$ but $\caterghat_{{\cal D}}\fngeq\lb$,
for some non-negligible function $\Delta:=\lb-\ub$.
\item $\erghat_{{\cal D}}$ is negligible (see \Cref{nota:erghat-is-negl})
but $\caterghat_{{\cal D}}\fngeq\Delta$, for some non-negligible
function $\Delta$. 
\end{itemize}
\end{defn}

We say the \emph{catalytic work potential is near-maximal} if $\Delta$ is at least the
energy difference between the average energy and the ground energy,
i.e. $\Delta\fngeq\bar{E}-E^{\min}$ where 
\begin{itemize}
\item $E=(E_{\secp})_{\secp}$ with $\bar{E}_{\secp}=\tr\left[H_{\secp}\right]/2^{n(\secp)}$ while 
\item $E^{\min}=(E^{\min}_{\secp})$ where $E^{\min}_{\secp}$ denotes the smallest eigenvalue of the Hamiltonian $(H_{\secp})$
\end{itemize}

One can extend this notion and say that a \emph{family of Hamiltonians has catalytic work potential}, if there exists a distribution over family of states from which one can extract more work (with respect to the Hamiltonian) by using catalysts.

\begin{thm}
\label{thm:cat-work-potential}
Let $n:\N\to\N$ be the polynomial $n(\lambda)=2\lambda$. Let $\fat H=\{H_{\lambda}\}_{\lambda}$
be a family of efficient extensive Hamiltonians acting on $n(\lambda)$
qubits.
\begin{itemize}
\item Denote its average energy by $\bar{E}_{\lambda}:=\tr\left[H_{\lambda}\right]/2^{n(\lambda)}$.
\item There is polynomial $\ell:\N\to\N$ and an algorithm ${\cal A}\in\cat{n,\ell}$
such that 
\[
\sigma_{\lambda}:=\E_{\circdesc(\Pi_{\lambda},U_{\lambda})\leftarrow{\cal A}(1^{\lambda})}U\left(\mathbb{I}_{2^\lambda}\otimes\den 0^{\lambda}\otimes\den 0^{\ell(\lambda)}\right)U^{\dagger}
\]
and its corresponding energy is $E_{\lambda}=\tr[H_{\lambda}\sigma_{\lambda}]=\nonnegl(\lambda)$.
\end{itemize}
Then $\left\{ H_{n}\right\} _{n}$ has catalytic work-potential of $\Delta=E-\bar{E}$, relative to a random oracle.
\end{thm}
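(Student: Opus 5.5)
The plan is to use the oracle-dependent state family of \Cref{thm:sep-cat-erg-hat-and-erg-hat}, now paired with the given Hamiltonian $\fat H$. Sample $\fat O=(O_f,O_g)$ uniformly and set
\[
\rho_\lambda:=2^{-\lambda}\sum_{x}\den{x,f(g(x))},
\]
and take the distribution ${\cal D}$ over $(\fat\rho,\fat H,\fat O)$. The argument needs two bounds: a negligible upper bound on $\erghat_{{\cal D}}$ and a lower bound $\bar E-E$ on $\caterghat_{{\cal D}}$. Reading the statement, the gap is meant to be $\Delta=\bar E_\lambda-E_\lambda$, and I will interpret the hypothesis as saying this gap is non-negligible.

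\textbf{Step 1 (no catalysts).} The goal is to show that $\erghat_{{\cal D}}$ is negligible.

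First, simulate the state oracle $O_{\fat\rho}$ with forward queries to $f$ and $g$, as in \Cref{claim:erghat-is-negl-as-always}. The $\hamdesc$ oracle is oracle-independent, so it can be hardcoded. After this, every extraction unitary is a $0$-catalyst oracle circuit whose non-oracle gates do not depend on $f$ or $g$.

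Next, write $H_\lambda$ as a polynomial-size combination of oracle-independent projectors. For $k$-local $H$ this follows from the decomposition in \Cref{lem:trHrho=00003DtrHsigma}. For general efficient $H$, I would instead implement an energy-estimation measurement as the projector $\Pidec$; this generality is where I expect subtlety.

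Then apply \Cref{lem:decision-cat-separation-robust} with $\ell=0$ to each projector composed with $U$. This gives $\E\,\tr[HU\rho U^\dagger]\approx\E\,\tr[HU\sigma U^\dagger]$, and with $U=\I$ also $\E\,\tr[H\rho]\approx\bar E$.

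Finally, $\sigma$ is maximally mixed, so $U\sigma U^\dagger=\sigma$. Hence the expected work is at most $\negl$ after accounting for $\mathrm{poly}\cdot\|H\|$ factors, which stay polynomial because $\fat H$ is extensive. Since \Cref{lem:decision-cat-separation-robust} already averages over the random choice of circuit, the same argument covers randomized $\PPT$ algorithms.

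\textbf{Step 2 (with catalysts).} The extractor composes two circuits. It first applies the $\lambda$-catalyst three-query circuit of \Cref{lem:PinFCAT_3lambda}, taken in the erasing direction. This maps $\ket{x,f(g(x)),0^\lambda}\mapsto\ket{x,0^\lambda,0^\lambda}$ exactly, restores the catalyst, and produces $\rho'_\lambda=\frac{\I_{2^\lambda}}{2^\lambda}\otimes\den{0^\lambda}$. It then applies the given algorithm ${\cal A}\in\cat{n,\ell}$, whose unitary restores $\ket{0^{\ell}}$ and sends $\rho'_\lambda$ to $\sigma_\lambda$ in expectation.

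Using $\max(\lambda,\ell)$ catalyst qubits, the composition is a catalytic unitary in $\catU[n][\ell'][\ket{0^{\ell'}}]$ and is produced by a $\PPT$ algorithm. By Step 1 the initial energy satisfies $\E\,\tr[H\rho]=\bar E_\lambda\pm\negl$; it is in fact exactly $\bar E_\lambda$ on average over $f$. The final energy is $E_\lambda$. The extracted work is therefore $\bar E_\lambda-E_\lambda-\negl$, so $\caterghat_{{\cal D}}\fngeq\Delta-\negl$.

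Combining the two steps shows that ${\cal D}$ witnesses catalytic work potential with gap $\Delta$ whenever $\Delta$ is non-negligible, which is the second bullet in the definition.

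\textbf{Main obstacle.} I expect the hard step to be the Step 1 reduction from energy to projectors for general efficient, non-local extensive $H$. If $H$ is given only as an efficient circuit, I would instead estimate $\tr[H\,\cdot\,]$ with a phase-estimation or Hadamard-test measurement. That estimator must be an oracle-independent projector, possibly acting on extra ancilla space, so I would need to confirm that \Cref{lem:decision-cat-separation-robust} tolerates the extra ancillas. Its proof never required restoration, so I expect it does.
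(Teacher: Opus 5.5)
Your proposal is essentially the paper's proof sketch, written out in more detail. It uses the same state $\rho_\lambda=2^{-\lambda}\sum_x\den{x,f(g(x))}$; it uses \Cref{lem:decision-cat-separation-robust} together with $U\sigma U^\dagger=\sigma$ to show non-catalytic work is negligible; and it runs the three-query erasing circuit followed by the given catalytic unitary, sharing a $\max(\lambda,\ell)$-qubit catalyst, to extract exactly $\bar E_\lambda-E_\lambda$ on average. Your correction of the sign and your reading of the hypothesis as a non-negligible gap are what the paper's technical overview intends.

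One correction to how you resolve the obstacle. The lemma's bound $4(q-1)(2q_g+1)2^{-(\lambda-\ell)/2}$ degrades with every auxiliary qubit present during the $f$-queries, so extra ancillas are not tolerated merely because restoration is never required. They are tolerated because an oracle-free energy-estimation circuit applied after $U$ can be absorbed into an oracle-independent POVM element $E$. The proof only uses that $\Pidec$, here replaced by $E^{1/2}$, is a contraction independent of $f,g,x,r$.
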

\branchcolor{black}{
\begin{proof} (Proof sketch) Consider the following distribution over family of states $\fat{\rho}$ and oracles $\fat{O}$. Let $\fat{O}=(O_f ,O_g)$ be the random oracle, and the family corresponding to it be  $\fat{\rho}=\{\rho_\secp \}_\secp$ where $\rho_\secp = \sum_{x\in\binset^{\secp}}\den{x,f(g(x))}$. From \Cref{lem:decision-cat-separation-robust}, no computationally bounded algorithm can distinguish the state from maximally mixed state. Therefore it holds that the computational ergotropy for the distribution is negligible.

In the presence of catalysts, one can map the state $\rho_\secp = \sum_{x\in\binset^{n(\secp)}}\den{x,f(g(x))}$ to $\rho'_\secp = \sum_{x\in\binset^{\secp}}\den{x,0^{\ell(\secp)}}=\I\otimes\den{0^{\secp}}\otimes \den{0^{\ell(\secp)}}$

Since one can map the state $\rho'_\secp \mapsto \sigma_\secp$ efficiently (by premise), it holds that the family of  Hamiltonians $\fat{H}={H_\secp}_\secp$ has catalytic work potential of $\Delta=E-\bar{E}$. 
\end{proof}
}

\branchcolor{darkgray}{
Ask a consequence of the above theorem, we have the follwing. 
}

\begin{cor}
The Hamming weight Hamiltonian has catalytic work-potential with near-maximal
gap.
\end{cor}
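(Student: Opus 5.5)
The plan is to obtain the corollary by instantiating \Cref{thm:cat-work-potential} with $n(\lambda)=2\lambda$ and $H_\lambda:=H_{\hw(2\lambda)}$, the \emph{full} Hamming weight Hamiltonian on all $2\lambda$ qubits (\Cref{def:HammingWeight-Ham}). I will not switch to the padded Hamiltonian used in \Cref{thm:sep-cat-erg-hat-and-erg-hat}. The routine hypotheses come first. The family $H_{\hw(2\lambda)}=\sum_i(\I-Z_i)/2$ is $1$-local, so it is efficiently describable in the sense of \Cref{def:H-efficient}; it is extensive since $\norm{H_\lambda}_\infty=2\lambda$. It is also oracle independent, and its spectral projectors $\den{1}_i$ do not depend on $f,g$. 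The latter is what the proof sketch of \Cref{thm:cat-work-potential} needs in order to apply \Cref{lem:decision-cat-separation-robust}: it shows that no efficient non-catalytic extractor gets non-negligible work from $\rho_\lambda=2^{-\lambda}\sum_x\den{x,f(g(x))}$. The reference quantities are $\bar E_\lambda=\tr[H_\lambda]/2^{2\lambda}=\lambda$ and $E^{\min}_\lambda=0$. The initial energy of $\rho_\lambda$ is also exactly $\lambda$ in expectation over $f$, because $f(g(x))$ is uniform.

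The substantive step is to exhibit an algorithm in $\cat{n,\ell}$ whose output state $\sigma_\lambda$, obtained from $\I_{2^\lambda}\otimes\den{0^\lambda}$ (normalised), has small energy. I would proceed in two stages.

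\emph{Stage one (trivial choice).} Take $U=\I$, which is trivially catalytic. This gives $\sigma_\lambda=2^{-\lambda}\I\otimes\den{0^\lambda}$ with $E_\lambda=\lambda/2$. Composing it with the three-query catalytic eraser of \Cref{lem:PinFCAT_3lambda} already yields an extracted work of $\lambda/2=n/4$, which is a constant fraction of $\bar E-E^{\min}$.

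\emph{Stage two (improvement).} I would compose with an efficient reversible enumerative-coding permutation $C$ that acts on the $2\lambda$-qubit $\IO$ register. It sends $\ket{x,0^\lambda}$ to the $2\lambda$-bit string of $x$-th lowest rank in the order by Hamming weight and then lexicographic order. It is implemented catalytically by the compute--copy--uncompute pattern of \Cref{prob:in-place-permutations-are-hard}, with polynomially many catalysts $\ket{0^{\ell}}$: first compute $C(x)$ into the catalysts, then erase $x$ by computing $C^{-1}$, and finally swap. Both directions of ranking by binomial sums are polynomial time, so the result lies in $\catgen{n,\ell}$.

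The resulting $\sigma_\lambda$ is uniform over the $2^\lambda$ lowest-weight strings. Its energy is about $2\lambda\,h^{-1}(1/2)\approx0.22\lambda$, where $h$ is the binary entropy function; this gives a gap of roughly $0.78\lambda$.

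\emph{Main obstacle: reconciling with ``near-maximal''.} The potential difficulty is matching this with the definition of near-maximal. By spectrum preservation (\Cref{claim:exactCatalystUseless}), any catalytic unitary keeps the rank-$2^\lambda$ state $\I\otimes\den{0^\lambda}$ at rank $2^\lambda$. Hence no catalytic circuit can push it into the one-dimensional ground space, and $\sigma_\lambda$ from stage two is already passive, i.e.\ optimal for this route. So I would argue near-maximality up to constants: the gap is $\Theta(n)=\Theta(\bar E-E^{\min})$. In addition, the non-catalytic side is not merely small but negligible, since \Cref{lem:decision-cat-separation-robust} with $\ell=0$ rules out any non-negligible work without catalysts. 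I would state explicitly that this is the sense in which the Hamming weight Hamiltonian attains near-maximal catalytic work potential. If the literal inequality $\Delta\fngeq\bar E-E^{\min}$ is demanded, the rank argument above shows it cannot be met by the construction of \Cref{thm:cat-work-potential}, and one would need a lower-rank battery state with a shorter $x$ register.
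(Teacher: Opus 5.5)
There is a genuine gap: you do not prove the statement as the paper defines it. Near-maximality is defined literally as $\Delta\fngeq\bar E-E^{\min}$. For your choice $H_{\hw(2\lambda)}$, your own rank argument, which is correct, shows that the route through \Cref{thm:cat-work-potential} tops out at a gap of roughly $0.78\lambda$. That is strictly below $\bar E_\lambda-E^{\min}_\lambda=\lambda$. Reinterpreting ``near-maximal'' as ``$\Theta(n)$ up to constants'' proves a weaker claim, not the corollary. Your suggested remedy of a shorter $x$ register would not rescue the literal bound for $H_{\hw(n)}$ either. Such a battery still has expected initial energy $n/2=\bar E-E^{\min}$, while every flat state of rank $2^k\geq 2$ has strictly positive energy, so the extracted work stays strictly below $\bar E-E^{\min}$.

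The missing step is the choice of Hamiltonian, and your rank argument actually points to it. The Hamming weight Hamiltonian in the paper's catalytic constructions is the padded one, $H_\lambda=\I_{2^\lambda}\otimes H_{\hw(\lambda)}$ on $n=2\lambda$ qubits (\Cref{eq:n_ham_n}, \Cref{thm:sep-cat-erg-hat-and-erg-hat}). Its ground space $\C^{2^\lambda}\otimes\ket{0^\lambda}$ has dimension exactly $2^\lambda$, matching the rank of the eraser's output. Here $\bar E_\lambda=\lambda/2$ and $E^{\min}_\lambda=0$. Your Stage one ($U=\I$) already gives $\sigma_\lambda=2^{-\lambda}\I\otimes\den{0^\lambda}$ with $E_\lambda=0=E^{\min}_\lambda$. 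Composed with the three-query eraser of \Cref{lem:PinFCAT_3lambda}, the expected catalytic work from $\rho_\lambda$ is exactly $\lambda/2$; the initial energy is exactly $\lambda/2$ in expectation over $f$ because $f(g(x))$ is uniform. Meanwhile \Cref{lem:decision-cat-separation-robust} with $\ell=0$ makes $\erghat_{{\cal D}}$ negligible; this is the content of \Cref{thm:sep-cat-erg-hat-and-erg-hat}. Hence $\Delta=\lambda/2$ meets $\Delta\fngeq\bar E-E^{\min}$ with equality, and your Stage two (enumerative coding) is unnecessary.
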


\section{Pseudoergotropy}\label{sec:Pseudoergotropy}

\branchcolor{darkgray}{Ergotropy is a resource, in some ways analogous to a battery. One
can think of the battery as a quantum state $\rho$ which follows
dynamics corresponding to the Hamiltonian $H$ and one can extract
work from the battery by evolving through unitary evolutions. To ensure
that only the intended user is able to use the battery and extract
work from it, we define the notion of catalytic pseudoergotropy (used interchangably with pseudoergotropy). Informally,
we say that a distribution over families of state-Hamiltonian pair is has pseudoergotropy
if the states are efficiently constructible and have low ergotropy,
but are indistinguishable from a distribution over families state-Hamiltonian pairs with high ergotropy.
This notion is analogous to pseudorandom numbers. To an adversary
that does not know the secret key, the the high ergotropy state is
indistinguishable from the low ergotropy state, therefore it can extract
no work from the high ergotropic state. This is the case where the
adversary has stolen the battery, but the adversary can still not
extract any work out of it.

Symmetrically, one can also think about it this way, for computational
purposes, the low ergotropy state can mimic the high ergotropy state
and can be used as a resource. This is similar to how we think about
Pserudorandomness, a state with low randomness (the pseudorandom state)
which mimics a truly random state, to the entity that does not know
the secret key. 

In this section, we formalise the notion of pseudoergotropy and give explicit distributions that have catalytic pseudoergotropy. The construction we give in this section is also a counterexample to the Watanabe and Takagi's conjecture.

To begin with, we first set up some notations and definitions.}

A keyed distribution ${\cal D}$ is a probability distribution over
families of states, Hamiltonians and keys $(\fat{\rho},\fat H,\fat{\k})$.

Let $n,\kappa:\N\to\N$ be polynomials. Consider the keyed distributions
${\cal D}\text{ and }{\cal D}'$ over $(\fat{\rho},\fat H,\fat{\k})$
and $(\fat{\sigma},\fat H,\fat{\k})$ respectively, where the families
of states, Hamiltonians are defined on $n$-qubits and keys of are
of size $\kappa$.
\begin{rem}[Catalytic ergotropy for keyed distributions]
We extend $\caterghat$ for keyed distributions as follows: For a
particular family of states, Hamiltonians and oracle, $(\fatrhoH,\fat{\k})\leftarrow{\cal D}$,
we define the work extracted by an algorithm $\A$ as
\begin{align*}
\W^{(\ell)}_{\fat{\rho},\fat H,\A^{\fat{\k}}}(\secp) & :=\mathbb{\E}\Biggr[\tr\left[H_{\secp}\otimes\I_{2^{\ell(\secp)}}\left(\rho_{\secp}\otimes\den{0^{\ell(\secp)}}-U_{\secp}\left(\rho_{\secp}\otimes\den{0^{\ell(\secp)}}\right)U^{\dagger}_{\secp}\right)\right],\\
 & \qquad\qquad\qquad\qquad\qquad\qquad\qquad\qquad\qquad:\begin{array}{r}
\circdesc(U_{\secp})\leftarrow\A^{O_{\fat{\rho}},\hamdesc,\fat{\k}}(1^{\secp}),\\
U_{\secp}\in\catU[n(\secp)][\ell(\secp)][\left|0^{\ell(\lambda)}\right\rangle ]
\end{array}\Biggr]
\end{align*}
i.e. the algorithm knows the key $\fat{\k}$ corresponding to the
family of states $\fat{\rho}$ sampled from ${\cal D}$. The definition
of $\caterghat$ is then defined using $\W^{(\ell)}_{\fatrhoH,{\cal A}^{\fat k}}$
above, by proceeding as in \Cref{rem:caterg-distributions}.
\end{rem}

\branchcolor{darkgray}{
With the notations set up, we formally define Catalytic pseudoergotropy (used interchangably with pseudoergotropy) as follows.
}

\begin{defn}[Catalytic pseudoergotropy]
Let $c,d:\N\to\N$ be functions. We say the distribution ${\cal D}$
has catalytic pseudoergotropy $(c,d)$ with respect to ${\cal D}'$
if the following conditions hold.
\begin{itemize}
\item Ergotropy conditions:
\begin{align*}
\caterghat_{{\cal D}} & \fnleq c\\
\caterghat_{{\cal D}'} & \fngeq d.
\end{align*}
\item Indistinguishability condition. For all QPT algorithms ${\cal A}$,
there is a negligible function $\negl$ such that
\[
\left|\Pr_{(\fatrhoH,\fat k)\leftarrow{\cal D}}\left[1\leftarrow\A^{O_{\fat{\rho}}}(1^{\secp})\right]-\Pr_{(\fat{\sigma},\fat H,\fat k)\leftarrow{\cal D}'}\left[1\leftarrow\A^{O_{\fat{\sigma}}}(1^{\secp})\right]\right|\leq\negl(\secp).
\]
\item Efficiency condition: There exists an efficient sampling procedure
for the distribution ${\cal D}$.
\end{itemize}
\end{defn}

\subsection{Construction for Catalytic Pseudoergotropy}

\branchcolor{darkgray}{
In this section, we explicitly describe the distributions $\mathcal{D},\mathcal{D'}$ for which one can attain $(0,\nonnegl)$-catalytic pseudoergotropy. We start by setting up the following notations.
}

Consider a PRP family $(\feval,\finv)$ and let $n,m:\N\to\N$ be
s.t. $m>n$. We consider $\key_{f},\key_{g},\key_{h}\in\binset^{m}$,
and define the following functions $f=\{f_{\secp}\}_{\secp},g=\{g_{\secp}\}_{\secp},h=\{h_{\secp}\}_{\secp}$
where
\begin{itemize}
\item $f_{\secp}:\binset^{m(\secp)}\to\binset^{m(\secp)}$ as $f(x)=\feval\left(\key_{f},x\right)$, 
\item $g_{\secp}:\binset^{n(\secp)}\to\binset^{m(\secp)}$ as $g(x)=\feval\left(\key_{g},x\|0^{m-n}\right)$,
and 
\item $h_{\secp}:\binset^{m(\secp)}\to\binset^{n(\secp)}$ as $h(x)=\feval\left(\key_{h},x\right)_{|n}$\footnote{One note in the above lemma is that, the adversary $\mathcal{A}$
is only given access to the forward direction oracles, $f\left(.\right)$
or $g\left(h\left(.\right)\right)$ and not the inverses as well.
This observation will be crucial for us in catalytic ergotropy.}

(i.e. the $n$-bit prefix of the outcome).
\end{itemize}
\begin{defn}
\label{def:explicit-fam-state-Ham-PRF-1}Let $n,m:\N\to\N$ be polynomials
denoting the size of the quantum system and size of the key respectively.
We define a keyed distribution ${\cal D}'$ over families of states,
Hamiltonians and keys $(\fat{\rho},\fat H,\fat{\k})$ as follows.
\begin{itemize}
\item The family of keys $\fat{\k}=\left(\fat{\k}_{f},\fat{\k}_{g},\fat{\k}_{h}\right)$
are sampled uniformly from $\binset^{m}$
\item The family of states $\fat{\rho}=\left\{ \rho_{\secp}\right\} _{\secp}$
acting on $2n(\lambda)$-qubits where 
\[
\rho_{\lambda}:=\frac{1}{2^{m(\secp)}}\sum_{x\in\binset^{m(\secp)}}\den{g_{\secp}(h_{\secp}(x))}.
\]
\item The family of Hamiltonians $\fat H=\left\{ H_{\secp}\right\} _{\secp}$
(independent of $\fat{\k}$) where $H_{\lambda}=\I_{2^{n(\secp)}}\otimes H_{\hw(m(\secp)-n(\secp))}$
is the padded Hamming weight Hamiltonian (see \Cref{eq:n_ham_n}).
\end{itemize}
The keyed distribution ${\cal D}$ is defined exactly as above, except,
the family of states $\fat{\sigma}=\{\sigma_{\secp}\}_{\secp}$ is
defined as
\[
\sigma_{\secp}:=\frac{1}{2^{m(\secp)}}\sum_{x\in\binset^{m(\secp)}}\den{f_{\secp}(x)}=\frac{\I_{2^{m(\secp)}}}{2^{m(\secp)}}.
\]
\end{defn}

\branchcolor{darkgray}{
With the distributions set up we now show the main result of the section. 
}

\begin{thm}[${\cal D}$ is $\left(0,\frac{n}{4}\right)$-catalytic
pseudoergotropy w.r.t. ${\cal D}'$]
\label{thm:cons1-thomas-inspired-nonlocalH}Let the function $m=2n$. Assuming the existence of quantum secure PRPs, the keyed distribution
${\cal D}$ has $\left(0,\frac{n}{4}\right)$-catalytic
pseudoergotropy w.r.t. the family ${\cal D}'$.
\end{thm}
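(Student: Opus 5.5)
The plan is to verify the three conditions of catalytic pseudoergotropy separately, with $m=2n$. Here $\sigma_\lambda$ is maximally mixed on $m$ qubits, and $\rho_\lambda$ is supported on the $2^{n}$ strings $g(y)$ with $y\in\binset^{n}$.

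\textbf{Efficiency.} Sampling the keys is trivial. Given the keys, one copy of $\rho_\lambda$ is prepared by sampling $x\leftarrow\binset^{m}$ and classically computing $g(h(x))$. One copy of $\sigma_\lambda$ is prepared by sampling a uniform string, or equivalently by computing $f(x)$.

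\textbf{Indistinguishability.} We reduce to \Cref{lem:fx_equals_ghx}. A distinguisher $\A$ that is given oracle access to $O_{\fat\rho}$ (respectively $O_{\fat\sigma}$) but not the key is simulated by an algorithm $\mathcal{B}$ that has forward oracle access to either $g(h(\cdot))$ or $f(\cdot)$. Each time $\A$ calls the state oracle, $\mathcal{B}$ samples a fresh $x$, queries its oracle classically, and hands $\A$ the basis state $\ket{\mathrm{out}}$. In one case the result is exactly a copy of $\rho_\lambda$; in the other it is a copy of $f(x)$ for uniform $x$, which is $\sigma_\lambda$ because $f$ is a permutation. Hence any non-negligible advantage of $\A$ becomes the same advantage for $\mathcal{B}$, contradicting \Cref{lem:fx_equals_ghx}. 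Only forward queries are needed, as the footnote anticipates.

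\textbf{Ergotropy conditions.} For $\mathcal{D}$ (the $\sigma$ family), every catalytic unitary $U\in\catU[m][\ell][\ket{0^\ell}]$ acts on $\sigma_\lambda\otimes\den{0^\ell}$ as follows. Since $U$ maps every $\ket{\phi}\otimes\ket{0^\ell}$ to $\ket{\psi}\otimes\ket{0^\ell}$, \Cref{claim:exactCatalystUseless} gives a unitary $V$ on the system with output $V\sigma_\lambda V^\dagger\otimes\den{0^\ell}=\sigma_\lambda\otimes\den{0^\ell}$. So the work is exactly $0$ for every algorithm, including unbounded ones and ones holding the key, and hence $\caterghat_{\mathcal D}\fnleq 0$.

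For $\mathcal{D}'$ we give the key-holding algorithm explicitly, as a classical reversible circuit with catalysts in the style of \Cref{eq:PRP-using-cats}. On input $\ket{z}$ with $z=g(y)=\feval(\k_g,y\|0^{m-n})$, it proceeds in three steps.
\begin{enumerate}
\item Compute $\finv(\k_g,z)=y\|0^{m-n}$ into an $m$-qubit catalyst block, using further catalysts as garbage space for the reversible evaluation, and uncompute that garbage with Bennett's trick.
\item Erase $z$ by XORing $\feval(\k_g,\cdot)$ of the catalyst block into the system register.
\item Swap the catalyst block into the system register.
\end{enumerate}
This realises the permutation $\ket{z}\ket{0}\mapsto\ket{y\|0^{m-n}}\ket{0}$ on valid inputs.

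To place this in $\catU[m][\ell][\ket{0^\ell}]$, the map must send every system state, not only images of $g$, to something tensored with $\ket{0^\ell}$. We obtain this by defining the whole operation as the bijection $z\mapsto\pi(z)$, where $\pi:=\finv(\k_g,\cdot)$ is itself a permutation of $\binset^m$. Under this map the strings $g(y)$ are sent to $y\|0^{n}$. The circuit computing $\pi$ in place with catalysts restored is exactly the PRP construction above, applied with $\finv$ and then $\feval$. The resulting state has zero Hamming weight on the last $m-n=n$ qubits. Since $\rho_\lambda$ is supported on $g(\cdot)$, all of its weight is moved there, and the work equals $\E\,\tr[H_\lambda\rho_\lambda]$.

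\textbf{Main obstacle.} The remaining step is to lower bound $\tr[H_\lambda\rho_\lambda]$. By PRP security, $g(y)$ for uniform $y$ is computationally indistinguishable from a uniform $m$-bit string. Since the energy is measured by a local, key-independent $H_\lambda$ and the states are efficiently preparable, \Cref{lem:trHrho=00003DtrHsigma} shows that the expected padded Hamming weight is $(m-n)/2-\negl=n/2-\negl$. This exceeds $n/4$ for large $\lambda$, which gives $\caterghat_{\mathcal{D}'}\fngeq n/4$. (If one insists on the key-averaged expectation, that is also covered by this argument.)
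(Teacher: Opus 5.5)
Your proposal is correct and follows the paper's route: indistinguishability via \Cref{lem:fx_equals_ghx}, zero work for the maximally mixed family $\mathcal{D}$, and extraction for $\mathcal{D}'$ by applying $\finv(\k_g,\cdot)$ in place with exactly restored catalysts. You are more careful than the paper in two places, and both conclusions still give the claimed bound $n/4$. First, you reduce exactly-catalytic unitaries to system unitaries, which is what actually makes $\sigma_\lambda\otimes\den{0^\ell}$ inert (cf.\ \Cref{exa:HammingSwapping}). Second, you compute the key-averaged initial energy via PRP security as $(m-n)/2-\negl(\lambda)=n/2-\negl(\lambda)$, whereas the paper's heuristic count gives $\frac{(m-n)n}{2m}=n/4$.
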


\branchcolor{darkgray}{

Notice that the only difference between the distributions $\mathcal{D}$ and $\mathcal{D'}$ is in the families of states used. The family in distribution $\mathcal{D}$ is maximally mixed, whereas in $\mathcal{D'}$ it is less mixed. In the presence of sufficient catalysts, one can convert the less mixed state to the ground of the Hamiltonian, by inverting the keyed permutation.

We prove \cref{thm:cons1-thomas-inspired-nonlocalH} formally through the following lemmas.
}

\begin{lem}
\label{lem:families-are-indistinguishable}For all polynomial functions
$t:\N\to\N$, it holds that 
\[
\left|\Pr_{(\fatrhoH,\fat k)\leftarrow{\cal D}}\left[1\leftarrow\A^{O_{\fat{\rho}}}(1^{\secp})\right]-\Pr_{(\fat{\sigma},\fat H,\fat k)\leftarrow{\cal D}'}\left[1\leftarrow\A^{O_{\fat{\sigma}}}(1^{\secp})\right]\right|\leq\negl(\secp).
\]
\end{lem}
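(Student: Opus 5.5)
Our plan is to reduce the statement to \Cref{lem:fx_equals_ghx}: an adversary that distinguishes the two state-generation oracles is turned into an adversary that distinguishes oracle access to $g\circ h$ from oracle access to $f$. Throughout we read $t$ as the polynomial bounding the number of queries $\A$ makes to the state-generation oracle; since $\A$ is QPT, $t$ is polynomial anyway.

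\textbf{Step 1: reformulate what a query returns.} A single query to $O_{\fat{\rho}}$ returns
\[
\rho_\secp=\frac{1}{2^{m}}\sum_{x\in\binset^{m}}\den{g_\secp(h_\secp(x))},
\]
which is exactly the classical state obtained by sampling $x\leftarrow\binset^{m}$ uniformly and outputting $\ket{g(h(x))}$. Likewise, $\sigma_\secp$ is the state obtained by outputting $\ket{f(x)}$ for a uniform $x$. So $t$ queries to the state oracle amount to $t$ fresh uniform samples $x_1,\dots,x_t$ pushed through the function $F\in\{g\circ h,\ f\}$.

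\textbf{Step 2: build the reduction.} Given $\A$, define $\mathcal{B}^{F}(1^\secp)$ as follows. $\mathcal{B}$ runs $\A$. Each time $\A$ queries its state oracle, $\mathcal{B}$ samples $x\leftarrow\binset^{m}$, makes one classical query to $F$, and feeds $\A$ the basis state $\ket{F(x)}$. At the end $\mathcal{B}$ outputs whatever $\A$ outputs.

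$\mathcal{B}$ is polynomial time and uses only forward queries to $F$. This matches the access model of \Cref{lem:fx_equals_ghx}, which grants forward oracles only, as the footnote stresses. Since $H_\secp$ is key-independent, any $\hamdesc$ calls by $\A$ can be hardcoded into $\mathcal{B}$.

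\textbf{Step 3: conclude.} By construction,
\[
\Pr\left[1\leftarrow\A^{O_{\fat{\rho}}}\right]=\Pr_{\k_g,\k_h}\left[1\leftarrow\mathcal{B}^{g(h(\cdot))}\right]
\quad\text{and}\quad
\Pr\left[1\leftarrow\A^{O_{\fat{\sigma}}}\right]=\Pr_{\k_f}\left[1\leftarrow\mathcal{B}^{f(\cdot)}\right].
\]
\Cref{lem:fx_equals_ghx} applies with polynomial $m=2n$ and gives the bound $\negl(n)$. Because $n$ is a polynomial in $\secp$, this is also $\negl(\secp)$.

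\textbf{Main obstacle.} The only delicate point is making sure the adversary in the statement never receives the key $\fat{\k}$ or any inverse oracle. If it did, the reduction would fail: for example, with $\k_h$ one could query $\finv$ and break indistinguishability. The statement only gives $\A$ access to $O_{\fat{\rho}}$ or $O_{\fat{\sigma}}$, so this is satisfied. I would add a remark that key knowledge is reserved for the ergotropy conditions.
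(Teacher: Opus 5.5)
Your proposal is correct and follows the paper's route. The paper's proof is the one-line remark that the lemma follows immediately from \Cref{lem:fx_equals_ghx}, and your reduction is exactly the argument that one-liner leaves implicit: each state-generation query is answered by one forward query to $F\in\{g\circ h,\,f\}$ on a fresh uniform $x$, so $\A$'s view is exactly $t$ copies of $\rho_\secp$ or of $\sigma_\secp$, and the adversary never receives the keys or inverse oracles.
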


\branchcolor{black}{\begin{proof}
This immediately follows from \Cref{lem:fx_equals_ghx}.
\end{proof}

}
\begin{lem}
\label{lem:phi-is-low-ergotropy}It holds that $\caterghat_{{\cal D}}\fnleq0$
for every Hamiltonian
\end{lem}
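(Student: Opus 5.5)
Under $\mathcal{D}$ the state is $\sigma_\lambda=\I_{2^{m(\lambda)}}/2^{m(\lambda)}$, so I would show directly that no catalytic unitary, efficient or not, extracts positive work from it. Fix any key, any $\ell$, any algorithm $\A$, and any output $U_\lambda\in\catU[n(\lambda)][\ell(\lambda)][\ket{0^{\ell(\lambda)}}]$. By the definition of this class, for every $\ket{\phi}$ there is $\ket{\psi}$ with $U_\lambda\ket{\phi}\ket{0^\ell}=\ket{\psi}\ket{0^\ell}$. Hence, as in the argument before \Cref{claim:exactCatalystUseless}, the map $V_\lambda:\ket{\phi}\mapsto\ket{\psi}$ is a unitary on the $\IO$ register (orthonormal bases go to orthonormal bases), and
\[
U_\lambda(\tau\otimes\den{0^\ell})U_\lambda^\dagger=V_\lambda\tau V_\lambda^\dagger\otimes\den{0^\ell}
\]
for every $\tau$. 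This follows by linearity: expand $\tau$ in a basis and use $U\ket{\phi_i,0^\ell}=\ket{V\phi_i,0^\ell}$ on both sides. The main point is that this only needs restoration of the fixed catalyst $\ket{0^\ell}$, which is exactly what the class requires. It does not need restoration for all catalyst states, and it does not use \Cref{lem:dirty-vs-no-cat-unbounded}.

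Next I would plug this into the work expression. With $\tau=\sigma_\lambda$ maximally mixed, $V_\lambda\sigma_\lambda V_\lambda^\dagger=\sigma_\lambda$. So the final joint state equals the initial one, $\sigma_\lambda\otimes\den{0^\ell}$, and the trace of $(H_\lambda\otimes\I_{2^\ell})$ against the difference is zero. This holds for every Hamiltonian, which is why the lemma says ``for every Hamiltonian''. It also holds regardless of the algorithm's access to $O_{\fat{\sigma}}$, $\hamdesc$ or the key $\fat{\k}$, since the argument is pointwise in $U_\lambda$.

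Finally, taking the expectation over the algorithm's randomness and over $(\fat{\sigma},\fat H,\fat{\k})\leftarrow{\cal D}$ gives $\E[\W^{(\ell)}_{\fat{\sigma},\fat H,\A^{\fat{\k}}}(\lambda)]=0$ for all $\lambda$, all polynomials $\ell$ and all $\A$. That is exactly the upper-bound clause of $\caterghat_{\cal D}\fnleq 0$ in \Cref{rem:caterg-distributions}.

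I expect no real obstacle. The one subtle step is justifying that the reduced action is a unitary conjugation on mixed inputs, not just on pure product inputs. This is handled by writing $\sigma_\lambda$ as a uniform mixture of basis states and applying linearity. One should also note that $\sigma_\lambda$ in the definition is indeed $\I/2^{m}$ on the full register on which $H_\lambda$ acts, because $f$ is a permutation.
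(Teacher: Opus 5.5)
Your proof is correct and follows the same route as the paper. The paper simply observes that the states under ${\cal D}$ are maximally mixed and hence unitarily invariant, so no work can be extracted. You additionally make explicit the step the paper leaves implicit: exact restoration of the fixed catalyst $\ket{0^{\ell}}$ forces $U_\lambda(\tau\otimes\den{0^\ell})U_\lambda^\dagger=V_\lambda\tau V_\lambda^\dagger\otimes\den{0^\ell}$ for some system unitary $V_\lambda$. This step is needed, since invariance of $\sigma_\lambda$ alone would not rule out joint unitaries that exploit the ancilla as in \Cref{exa:HammingSwapping}.
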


\branchcolor{black}{\begin{proof}
Since the family of states in ${\cal D}$ is the family of maximally
mixed states, it is invariant under unitary operations. Therefore,
$\caterghat_{{\cal D}}\fnleq0$.
\end{proof}

}
\begin{lem}
\label{lem:psi-is-efficiently-preparable}The distribution ${\cal D}$
is efficiently sampleable.
\end{lem}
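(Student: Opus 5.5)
The plan is to exhibit an explicit PPT sampler for $\mathcal{D}$ and to check that every object it outputs is efficient. We read ``efficiently sampleable'' in the sense needed by the indistinguishability experiment. There should be a PPT procedure that samples the keys, answers every query $1^{\secp}$ to the state generation oracle $O_{\fat{\sigma}}$ by preparing a fresh copy of the state in polynomial time, and outputs the circuit description $\hamdesc$ of the Hamiltonian. Since the same procedure works verbatim for $\mathcal{D}'$ (replacing $f$ by $g\circ h$), we will note that as well. This is convenient for the reduction in \Cref{lem:families-are-indistinguishable}.

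\textbf{Step 1: keys.} On input $1^{\secp}$, sample $\k_f,\k_g,\k_h\leftarrow\binset^{m(\secp)}$ uniformly. This takes $3m(\secp)=\poly(\secp)$ coin tosses.

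\textbf{Step 2: Hamiltonian.} The Hamiltonian $H_{\secp}=\I_{2^{n(\secp)}}\otimes H_{\hw(m(\secp)-n(\secp))}$ is key-independent. It is a sum of $m-n$ single-qubit terms $(\I-Z_i)/2$, so it is $1$-local. By the example after \Cref{def:H-efficient}, $\hamdesc$ is therefore a PPT machine, and we can simply hardcode it.

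\textbf{Step 3: states.} To prepare a copy of $\sigma_{\secp}=\frac{1}{2^{m}}\sum_{x}\den{f_{\secp}(x)}$, sample $x\leftarrow\binset^{m(\secp)}$ and compute the classical string $\feval(\k_f,x)$ with the deterministic polynomial-time PRP evaluator. We then prepare the basis state $\ket{\feval(\k_f,x)}$ using $m$ NOT gates. Averaging over $x$ yields exactly $\sigma_{\secp}$, since $f_{\secp}$ is a permutation; in fact one may even output $\ket{x}$ directly. For $\mathcal{D}'$ we instead compute $h(x)=\feval(\k_h,x)_{|n}$ and then $g(h(x))=\feval(\k_g,h(x)\|0^{m-n})$, which is two PRP evaluations plus a truncation and padding. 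Averaging over $x$ gives exactly $\rho_{\secp}$. Each query to the state oracle costs $\poly(\secp)$ time, so any polynomial number of copies is still polynomial time.

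We do not expect a real obstacle here. The only subtle point is definitional: sampling the exponentially large families themselves is meaningless. The efficiency must therefore be understood as sampling the key and simulating $O_{\fat{\sigma}}$ and $\hamdesc$ per $\secp$, and the proof should state this reading explicitly. We also note that all preparations are classical, so no quantum resources beyond basis-state preparation are needed.
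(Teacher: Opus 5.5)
Your proposal is correct and takes the same approach as the paper: sample the keys $\mathsf{k}_f,\mathsf{k}_g,\mathsf{k}_h$ uniformly, and observe that the states of $\mathcal{D}$ are maximally mixed, so each copy is trivially preparable (for instance by outputting a uniformly random basis state, as you note). What you add beyond the paper is consistent with it: hardcoding the key-independent $1$-local $\hamdesc$, and the analogous per-copy preparation for $\mathcal{D}'$ via two PRP evaluations.
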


\branchcolor{black}{\begin{proof}
Sample a family of keys $\fat{\k}=\left(\fat{\k}_{f},\fat{\k}_{g},\fat{\k}_{h}\right)$
uniformly at random. The family of states is just the family of maximally
mixed states.
\end{proof}

}
\begin{lem}
\label{lem:psi-has-high-caterghat}$\caterghat_{{\cal D}'}\fngeq\frac{\left(m-n\right)\cdot n}{2m}$.
%
\end{lem}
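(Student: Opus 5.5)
We exhibit one explicit PPT extractor $\A$ that uses the key and a polynomial number $\ell$ of catalysts. We then compute its extracted work exactly, as an expectation over the keys. The state in $\mathcal{D}'$ is $\rho_\lambda=2^{-m}\sum_{x}\den{g(h(x))}$. Since $g(y)=\feval(\key_g,y\|0^{m-n})$ with $y=h(x)\in\binset^{n}$, we have $\finv(\key_g,g(h(x)))=h(x)\|0^{m-n}$. The strategy is therefore to apply the in-place permutation $\ket{z}\mapsto\ket{\finv(\key_g,z)}$ on the $m$-qubit system. This maps the support of $\rho_\lambda$ into strings whose last $m-n$ bits are zero, so the padded Hamming weight Hamiltonian $H_\lambda=\I_{2^{n}}\otimes H_{\hw(m-n)}$ has energy $0$ afterwards.

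\textbf{Catalytic implementation of the inverse permutation.} We follow the construction right after \Cref{prob:in-place-permutations-are-hard}, now with $\finv$ in the forward role. Use $m$ catalyst qubits for the output, plus the ancillas needed to compile $\feval(\key_g,\cdot)$ and $\finv(\key_g,\cdot)$ into reversible circuits via Bennett's compute--copy--uncompute; these ancillas are returned to $\ket{0}$. With the key hardwired classically by $\A$, the steps are
\[
\ket{z,0^{m}}\mapsto\ket{z,\finv(\key_g,z)}\mapsto\ket{z\oplus\feval(\key_g,\finv(\key_g,z)),\finv(\key_g,z)}=\ket{0^{m},\finv(\key_g,z)}\mapsto\ket{\finv(\key_g,z),0^{m}}.
\]
Each step is a permutation of basis states that restores $\ket{0^{\ell}}$ on every computational basis input, hence on every input by linearity. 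So $U_\lambda\in\catU[m][\ell][\ket{0^\ell}]$, and its description is produced in polynomial time because $\feval$ and $\finv$ are efficient deterministic algorithms (\Cref{def:PRP}). Correctness of the PRP gives $\feval(\key_g,\finv(\key_g,z))=z$, which makes the second step exact.

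\textbf{Energy accounting.} The final energy is $0$, so the extracted work equals the initial energy $\tr[H_\lambda\rho_\lambda]$, namely the expected Hamming weight of the last $m-n$ bits of $g(h(x))$. Averaged over the keys, this quantity is where the claimed bound $\frac{(m-n)n}{2m}$ must come from.

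We expect this step to be the main obstacle. If $g$ were a truly random injection into $\binset^{m}$, then for fixed $y$ the value $g(y)$ would be uniform, and the expected weight would be $(m-n)/2$. That already exceeds the claimed $\frac{(m-n)n}{2m}$, which equals $n/4$ when $m=2n$.

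The catch is that $g$ is a concrete PRP, so the expectation over $\key_g$ need not equal the ideal value exactly. We would handle this as follows:
\begin{itemize}
\item Estimate the energy of the last $m-n$ qubits with an efficient test, by measuring a uniformly chosen qubit.
\item Invoke PRP pseudorandomness, replacing $\feval(\key_g,\cdot)$ by a random permutation $\pi$. This changes the expected energy by at most $(m-n)\cdot\negl(\lambda)$; the factor $(m-n)$ appears because $\norm{H_\lambda}_\infty=m-n$ and the per-qubit test is rescaled.
\item Under a random permutation, each $\pi(y\|0^{m-n})$ is uniform on $\binset^{m}$, so each of the $m-n$ bits is $1$ with probability exactly $1/2$.
\end{itemize}
This yields expected work $\frac{m-n}{2}-\negl\fngeq\frac{(m-n)n}{2m}$, since $n/m<1$. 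The slack between the two bounds absorbs the negligible term.

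Finally, we would double-check that the catalytic restoration holds exactly rather than approximately, which it does because every step is a classical reversible permutation.
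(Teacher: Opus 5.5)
Your proposal is correct. The extractor is the same as the paper's: apply $\finv(\key_g,\cdot)$ in place using catalysts. This sends every string $g(h(x))$ to $h(x)\|0^{m-n}$ and leaves zero energy.

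The genuine difference is in how you compute the initial energy. The paper argues heuristically in three moves:
\begin{itemize}
\item it asserts that $g(h(x))$ ``has at least $m-n$ zeroes'' because of the padding;
\item it treats the remaining $n$ bits as uniform;
\item it counts only the fraction $\frac{m-n}{m}$ of those bits that fall under $H_{\hw(m-n)}$, concluding $\E\,\tr[H_\lambda\rho_\lambda]=\frac{m-n}{2}\cdot\frac{n}{m}$.
\end{itemize}
The first premise does not hold: the zeros are appended \emph{before} the permutation is applied, so the output carries no forced zeros. Consequently the asserted equality is not the true expectation.

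Your reduction fixes this. You sample $\key_h$ and $x$ yourself, query the forward oracle at $h(x)\|0^{m-n}$, and test the last $m-n$ bits. By PRP security this compares to a random permutation $\pi$, for which $\pi(h(x)\|0^{m-n})$ is exactly uniform on $\binset^{m}$. This gives the correct value $\frac{m-n}{2}\pm\negl$.

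The negligible loss is absorbed because $\frac{m-n}{2}-\frac{(m-n)n}{2m}=\frac{(m-n)^2}{2m}\geq\frac{1}{2m}$ is inverse-polynomial. So the stated lower bound follows with room to spare; for $m=2n$ you actually obtain $n/2$ rather than $n/4$.

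You also spell out the reversible (Bennett-style) compilation and the compute--uncompute--swap sequence, with exact restoration of the catalyst on every basis input. The paper only invokes this by pointing back to its in-place PRP construction.

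In short, the paper's route is shorter, but its energy accounting does not stand on its own. Yours is a rigorous proof of the lemma, and of a stronger bound.
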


\branchcolor{black}{\begin{proof}
We compute $\E_{(\fatrhoH,\fat{\k})\leftarrow{\cal D}}\left[\tr\left[H_{\secp}\rho_{\secp}\right]\right]$
as follows
\begin{itemize}
\item $g_{\secp}(h_{\secp}(x))$ has at least $m(\secp)-n(\secp)$ zeroes.
This is because $g_{\secp}(\cdot)$ appends $0^{m(\secp)-n(\secp)}$
before applying the permutation.
\item In expectation over $x\in\binset^{m(\secp)}$ and the key $\k$, the
remaining qubits are indistinguishable from a uniformly distribution
over $0$ and $1$. Therefore, there are $\frac{n(\secp)}{2}$ qubits
that are 1.
\item Of these, only an $\frac{m(\secp)-n(\secp)}{m(\secp)}$ fraction aligns
with the Hamming weight part of the Hamiltonian $H_{\secp}.$ Therefore,
we have
\[
\E_{(\fatrhoH,\fat{\k})\leftarrow{\cal D}}\tr\left[H_{\secp}\rho_{\secp}\right]=\frac{m(\secp)-n(\secp)}{2}\times\frac{n(\secp)}{m(\secp)}.
\]
\end{itemize}
Consider the oracle algorithm $\A$, such that $U_{\secp}\leftarrow\A^{O_{\fat{\rho}},\hamdesc,\fat{\k}}(1^{\secp})$,
i.e. on input $1^{\secp}$ it outputs the description of a catalytic
unitary, that does the following:
\begin{itemize}
\item The catalytic unitary $U_{\secp}$ applies the inverse permutation
$\finv\left(\key_{g,\secp},\ \cdot\ \right)$ on the state $\rho_{\secp}$.
The state thus obtained is
\begin{align*}
\rho^{\prime}_{\secp} & =\frac{1}{2^{m(\secp)}}\sum_{x\in\{0,1\}^{m(\secp)}}\ket{\finv\left(\key_{g,\secp},g_{\secp}\left(h_{\secp}\left(x\right)\right)\right)}\bra{\finv\left(\key_{g,\secp},g_{\secp}\left(h_{\secp}\left(x\right)\right)\right)}_{\IO}\otimes\den{0^{m(\secp)}}_{\catalyst}\\
 & =\frac{1}{2^{m(\secp)}}\sum_{x\in\{0,1\}^{m(\secp)}}\ket{h_{\secp}\left(x\right)\|0^{m(\secp)-n(\secp)}}\bra{h_{\secp}\left(x\right)\|0^{m(\secp)-n(\secp)}}_{\IO}\otimes\den{0^{\otimes m(\secp)}}_{\catalyst}.
\end{align*}
\item Therefore, it follows that 
\[
\E_{(\fatrhoH,\fat{\k})\leftarrow{\cal D}}\tr\left[H_{\secp}\rho'_{\secp}\right]=0.
\]
\end{itemize}
One can compute the catalytic work extracted by the algorithm $\A$
as, 
\begin{align*}
\E_{{\cal D}}\left[\W^{(\ell)}_{\fatrhoH,\A^{\fat{\k}}}(\secp)\right] & =\E_{D}\left[\tr\left[H_{\secp}\rho_{\secp}\right]-\tr\left[H_{\secp}\rho'_{\secp}\right]\right]\\
 & \frac{m(\secp)-n(\secp)}{2}\times\frac{n(\secp)}{m(\secp)}.
\end{align*}
Setting $m=2n$ we get that $\caterghat_{{\cal D}}\fngeq\left(0,\frac{n}{4}\right)$.
\end{proof}

}

\begin{rem}
\label{rem:pseudoergotropy-free-energy}
The same state ensembles also exhibit a separation in nonequilibrium free energy. Fix an inverse temperature $\beta>0$ independent of $\lambda$, and equip the system with the Hamiltonian $\widetilde H_\lambda=0$. Its Gibbs state is $\tau_\lambda=\sigma_\lambda$. Define $F_{\beta,\lambda}(\omega)=\operatorname{tr}[\widetilde H_\lambda\omega]-\beta^{-1}S(\omega)$, where the entropy uses natural logarithms.

Since $h_\lambda$ is balanced and $g_\lambda$ is injective, $\rho_\lambda$ is uniform on $2^{n(\lambda)}$ computational-basis states. Hence $S(\rho_\lambda)=n(\lambda)\ln 2$ and $S(\sigma_\lambda)=m(\lambda)\ln 2$. For every key, it follows that $F_{\beta,\lambda}(\rho_\lambda)-F_{\beta,\lambda}(\tau_\lambda)=\beta^{-1}(m(\lambda)-n(\lambda))\ln 2$, whereas the reference state has zero free-energy excess. Taking $m(\lambda)=2n(\lambda)$ makes this gap linear in the number of qubits. Computational indistinguishability given polynomially many copies is unchanged. 

Thus, assuming quantum-secure pseudorandom permutations exist, these ensembles provide a counterexample to Watanabe and Takagi's conjecture that pseudo-nonequilibrium states do not exist~\cite{Watanabe2026}.
\end{rem}

\section{Classical vs Quantum Computational Ergotropy}
\branchcolor{darkgray}{
In this section we explore the notion of work extraction using classical operations instead of unitaries. Clearly allowing the set of operations to reversible classical computation (permutations), restricts the set of operations one can perform to extract work.  
In this section, assuming BPP$\subsetneq$BQP, we show that, there exists families of classical states and Hamiltonians from which one can extract non-negligible work catalytically using efficient unitary operations, while one can not do the same using permutations.

We proceed by setting up the notation for classical work extraction as follows.
}

\begin{defn}[Catalytic permutations $\catPerm(n,\ell,\varepsilon)$]
\label{def:approx-cat-perm}Let the $(n+\ell)$-qubit composite
Hilbert space be denoted by ${\cal H}:=\H_{\IO}\otimes\H_{\catalyst}.$
Then, the set of \emph{approximate catalytic permutations} $\catPerm(n,\ell,\varepsilon)$
is defined as 
\[
\catPerm(n,\ell,\varepsilon):=\left\{ \pi\in{\cal \Perm}(n+\ell):\begin{array}{r}
\exists\left|\eta\right\rangle \in\Cl\left({\cal H}_{\catalyst}\right)\ \forall\ \left|\phi\right\rangle \in\Cl\left(\H_{\IO}\right),\quad\exists\ \left|\psi\right\rangle \otimes\ket{\xi}\in\Cl\left(\H\right)\\
\text{s.t. }\TD{\pi\left(\phi\otimes\eta\right)\pi^{-1}}{\psi\otimes\xi}\le\varepsilon
\end{array}\right\} 
\]
where $\Cl(\H')$ represents all the computational basis states in
the Hilbert space $\H'$.
\end{defn}

\branchcolor{darkgray}{
Since the notion of using reversible classical computation is too restrictive, we study classical work extraction in the catalytic setting with approximate restoration.
More formally, we define the work extracted by a classical algorithm as follows.
}

\begin{defn}[Classically extracted work $(\WC_{\fat{\rho},\fat H,\A})$]
Let $\fatrhoH$ be families of classical states and Hamiltonians.
For an algorithm $\A$ we define the \emph{work extracted by an $\varepsilon$-approximate
$\ell$-catalytic permutation} (i.e. reversible classical circuit)
as follows.
\[
\WC^{(\ell,\varepsilon)}_{\fat{\rho},\fat H,\A}(\secp):=\E\left[\tr\left[H_{\secp}\left(\rho_{\secp}\otimes\den{0^{\ell}}-\pi_{\secp}\left(\rho_{\secp}\otimes\den{0^{\ell}}\right)\pi^{-1}_{\secp}\right)\right]:\begin{array}{c}
\circdesc(\pi_{\secp})\leftarrow\A^{O_{\fat{\rho}},\hamdesc}(1^{\secp}),\\
\pi_{\secp}\in\catPerm(n(\secp),\ell(\secp),\varepsilon)
\end{array}\right].
\]
Similarly for an algorithm $\A$ one can define the \emph{work extracted
by an $\varepsilon$-approximate $\ell$-catalytic unitary} as follows.
\[
\W^{(\ell,\varepsilon)}_{\fat{\rho},\fat H,\A}(\secp):=\E\left[\tr\left[H_{\secp}\left(\rho_{\secp}\otimes\den{0^{\ell}}-U_{\secp}\left(\rho_{\secp}\otimes\den{0^{\ell}}\right)U^{\dagger}_{\secp}\right)\right]:\begin{array}{c}
\circdesc(U_{\secp})\leftarrow\A^{O_{\fat{\rho}},\hamdesc}(1^{\secp}),\\
U_{\secp}\in\catU[n][\ell,\varepsilon]
\end{array}\right].
\]
\end{defn}
\branchcolor{darkgray}{
Naturally, one can extend the notion of work extracted by an algorithm to define the notion of classical catalytic ergotropy. For classical catalytic ergotropy, we allow at most negligible imperfections when returning the catalyst.
}
\begin{defn}[$\caterghatC(\ell)$]
We say $\caterghatC(\ell)$ is $(\lb,\ub)$-bounded if 
\begin{align*}
\forall\quad{\cal A}\in\PPT\ \ \exists\negl\ \ \text{s.t.},\E_{\fatrhoH\leftarrow{\cal D}}\left[{\cal \WC}^{(\ell,\negl)}_{\fatrhoH,{\cal A}}\right]\fnleq & \ub,\text{ and }\\
\exists\quad\A\in\PPT,\negl'\ \text{s.t.}\quad\E_{\fatrhoH\leftarrow{\cal D}}\left[{\cal \WC}^{(\ell,\negl')}_{\fatrhoH,{\cal A}}\right]\fngeq & \lb
\end{align*}
where $\ub,\lb$ are functions and $\negl,\negl'$ are negligible
functions.

\branchcolor{darkgray}{Analogously, we define quantum catalytic ergotropy with approximate
restoration of catalysts as follows.}
\end{defn}

\begin{defn}[$\caterghatQ(\ell)$]
We say $\caterghatQ(\ell)$ is $(\lb,\ub)$-bounded if 
\begin{align*}
\forall\quad{\cal A}\in\PPT\ \ \exists\negl\ \ \text{s.t.},\E_{\fatrhoH\leftarrow{\cal D}}\left[{\cal \W}^{(\ell,\negl)}_{\fatrhoH,{\cal A}}\right]\fnleq & \ub,\text{ and }\\
\exists\quad\A\in\PPT,\negl'\ \text{s.t.}\quad\E_{\fatrhoH\leftarrow{\cal D}}\left[\W^{(\ell,\negl')}_{\fatrhoH,{\cal A}}\right]\fngeq & \lb
\end{align*}
where $\ub,\lb$ are functions and $\negl,\negl'$ are negligible
functions.
\end{defn}

\branchcolor{darkgray}{
In subsequent discussions, we set up some notation and lemmas, to prove a separation between the classical catalytic ergotropy, and quantum catalytic ergotropy in \Cref{subsec:classical-quantum-sep}.
}

\subsection{Pseudo-deterministic to catalytic circuits}

\branchcolor{darkgray}{
A pseudo-deterministic circuit is one that for each input $x$ produces a fixed answer $y$ with $1-\negl$ probability. Here we show that any pseudo-deterministic circuit can be turned into a catalytic circuit with negligible restoration error on the catalyst.
}

Consider a relational problem $\rel$ that has only one solution for
each input, i.e. for each $x$, there is a unique $y$ such that $(x,y)\in\rel$.
This is essentially function evaluation. 

For this exposition, we restrict ourselves to such relations. We say a QPT algorithm ${\cal A}$ pseudo-deterministically
solves $\rel$ if there is a negligible function $\negl$ such that
$\Pr[y\leftarrow{\cal A}(x)]=1-\negl(|x|)$ for all inputs $x$. We
assume that ${\cal A}$ defers all measurements to the end of the
computation.

For the claim below, let $m,n,s:\N\to\N$ be functions denoting the size of the input, output and the workspace respectively.
\begin{claim}[Pseudo-Deterministic to Catalytic]
\label{claim:pseudo-det-to-catalytic}Let $\rel,n,m,s$ be as above.
Suppose ${\cal A}$ is a pseudo-deterministic QPT algorithm that,
on an $n(\lambda)$-bit input $x$, uses an $s(\lambda)$-qubit workspace
and produces an $m(\lambda)$-bit output $y$. Then, one can construct
an algorithm ${\cal B}$ that uses an $n(\lambda)+m(\lambda)$ qubit
$\IO$ register and $\ell(\lambda)$ qubit $\catalyst$ register where
$\ell:=n+m+s$. Furthermore, there is a negligible function $\negl'$
such that on input $x$, ${\cal B}$ applies a unitary $U_{\lambda}$
satisfying
\[
\left\Vert U_{\lambda}\left|x,0\right\rangle _{\IO}\left|0\right\rangle _{\catalyst}-\left|x,y\right\rangle _{\IO}\left|0\right\rangle _{\catalyst}\right\Vert \le\negl'(\lambda)
\]
for $\lambda>\lambda'$ for some $\lambda'\in\N$. 
\end{claim}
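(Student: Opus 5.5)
We use the standard compute--copy--uncompute construction. Write $W_\lambda$ for the unitary that $\cal A$ applies before its deferred final measurement. It acts on the input $x$ (placed in the first $n(\lambda)$ qubits of the catalyst) and on an $s(\lambda)$-qubit workspace, and the output is read from a designated $m(\lambda)$-qubit output register $\mathsf{Y}$. Pseudo-determinism says that
\[
W_\lambda\ket{x}\ket{0^{s}}=\sqrt{1-\delta_x}\,\ket{\gamma_{x}}\ket{y}_{\mathsf Y}+\sqrt{\delta_x}\,\ket{\mathrm{bad}_x},
\]
where $\ket{\mathrm{bad}_x}$ is orthogonal to everything with $y$ in $\mathsf Y$, $\ket{\gamma_x}$ is the normalised junk on the remaining qubits, and $\delta_x\le\negl(\lambda)$. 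We split the catalyst into $\mathsf{X}'$ ($n$ qubits), $\mathsf{Y}'$ ($m$ qubits) and $\mathsf{S}$ ($s$ qubits); the slack in $\ell=n+m+s$ lets us hold a copy of the input and a copy of the answer separately from the workspace. The unitary $U_\lambda$ consists of four steps.
\begin{enumerate}
\item $\CNOT$ the input $x$ from $\IO$ into $\mathsf X'$.
\item Run $W_\lambda$ on $\mathsf X'\mathsf S$, with $\mathsf Y$ a sub-register of the workspace.
\item $\CNOT$ $\mathsf Y$ into the zero output half of $\IO$.
\item Run $W_\lambda^\dagger$, then uncopy $\mathsf X'$ by $\CNOT$ from $\IO$.
\end{enumerate}
The $\mathsf{Y}'$ slack stays idle; alternatively one may copy through it, which changes nothing. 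Every step is a canonical circuit computable in PPT from $\desc(W_\lambda)$, so $\cal B$ is efficient.

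For correctness, compare this with the ideal run. In the ideal run, after step 2 the state is exactly $\ket{\gamma_x}\ket{y}$, so the copy in step 3 acts as $\ket{0}\mapsto\ket{y}$ on the output half of $\IO$ and leaves the catalyst untouched. Step 4 then returns the catalyst to $\ket{x}\ket{0^s}$ exactly, and removing $x$ restores $\ket{0^\ell}$. The computed state after step 2 differs from the ideal one by at most $\norm{\sqrt{1-\delta_x}\ket{\gamma_x}\ket y+\sqrt{\delta_x}\ket{\mathrm{bad}_x}-\ket{\gamma_x}\ket y}\le\sqrt{2\delta_x}$ in $\ell_2$ norm. Steps 3 and 4 are unitary, so the final error is also at most $\sqrt{2\delta_x}$. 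This gives
\[
\norm{U_\lambda\ket{x,0}\ket{0}-\ket{x,y}\ket{0}}\le\sqrt{2\,\negl(\lambda)}=:\negl'(\lambda),
\]
which is negligible.

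The main point needing care is the step "the ideal run restores the catalyst exactly". It is not literally true: $W_\lambda^\dagger$ undoes $W_\lambda$ only on the actual state, not on the idealised one. The fix is to define the ideal vector \emph{after} step 2 as $\ket{\phi_{\mathrm{id}}}:=\ket{\gamma_x}\ket y$ and observe that step 3 acts on it exactly as it does on $W_\lambda\ket{x,0^s}$ with $\mathsf Y$ replaced by $y$. Concretely, $\mathrm{CNOT}_{\mathsf Y\to\IO}(\ket{\phi_{\mathrm{id}}}\ket{0})=\ket{\phi_{\mathrm{id}}}\ket{y}$. Applying $W_\lambda^\dagger$ then gives $W_\lambda^\dagger\ket{\phi_{\mathrm{id}}}\otimes\ket y$, which is within $\sqrt{2\delta_x}$ of $\ket{x,0^s}\ket y$. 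We therefore chain two errors of size $\sqrt{2\delta_x}$ each, one before and one after the copy, and conclude $\negl'=2\sqrt{2\negl}$. The bound holds for every $x$, and hence for all $\lambda$ beyond the threshold $\lambda'$ at which the pseudo-determinism guarantee holds.
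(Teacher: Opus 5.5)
Your proposal is correct and follows the same compute--copy--uncompute construction as the paper, which only asserts the error bound ``by direct calculation''; your hybrid calculation supplies that missing step. Keep only the corrected bound $\negl' = 2\sqrt{2\negl}$ and remove the earlier paragraph claiming $\sqrt{2\negl}$, since that paragraph rests on the exact-restoration step that you yourself point out is false.
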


\branchcolor{black}{\begin{proof}
The proof is elementary. Suppose the action of ${\cal A}$ on input
$x$ is given by the unitary $V_{\lambda}$ that (from the fact that
${\cal A}$ is a pseudo-deterministic QPT algorithm implies) acts
as 
\[
V_{\lambda}\left|x\right\rangle \left|0^{m+s}\right\rangle =\sqrt{1-a^{2}_{xy}}\left|x,y\right\rangle \otimes\left|\varphi_{xy}\right\rangle +a_{xy}\left|\Psi_{\bad}\right\rangle 
\]
 followed by a measurement, where $0\le a_{xy}\le1$ is at most a
negligible function $\negl$, and $\left|\varphi_{xy}\right\rangle ,\left|\Psi\right\rangle $
are normalised vectors. 

We define a unitary $U_{\lambda}$ that acts on registers $\mathsf{I}$
($n$-qubits), $\mathsf{O}$ ($m$-qubits), $\mathsf{C}_{1}.\mathsf{I}$
($n$-qubits), $\mathsf{C}_{1}\cdot\mathsf{O}$ ($m$-qubit) and $\mathsf{C}_{2}$
($s$-qubit) as follows (see also \Cref{fig:pseudo-det-to-catalytic})
\[
U_{\lambda}=\CNOT_{\mathsf{I},\mathsf{C_{1}.\mathsf{I}}}(V_{\lambda})_{\mathsf{C}_{1}\mathsf{C}_{2}}\CNOT_{\mathsf{C}_{1},\IO}(V^{\dagger}_{\lambda}){}_{\mathsf{C_{1}C_{2}}}\SWAP_{\IO,\mathsf{C}_{1}}
\]
where $\CNOT_{\mathsf{A}\mathsf{B}}$ is a controlled-NOT gate, with
$\mathsf{A}$ as control and $\mathsf{B}$ as the target. 

By direct calculation (and using the fact that $a_{xy}\le\negl$),
one can verify that there is a negligible function $\negl'$ such
that 
\[
\left\Vert U_{\lambda}\left|x,0\right\rangle _{\IO}\left|0,0\right\rangle _{\mathsf{C}_{1}\mathsf{C}_{2}}-\left|x,y\right\rangle _{\mathsf{IO}}\otimes\left|0,0\right\rangle _{\mathsf{C}_{1}\mathsf{C}_{2}}\right\Vert \le\negl'(\lambda).
\]

\begin{figure}
\begin{centering}
\includegraphics[width=10cm]{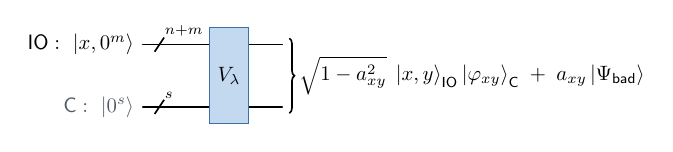}\\
\par\end{centering}
\begin{centering}
\includegraphics[width=13cm]{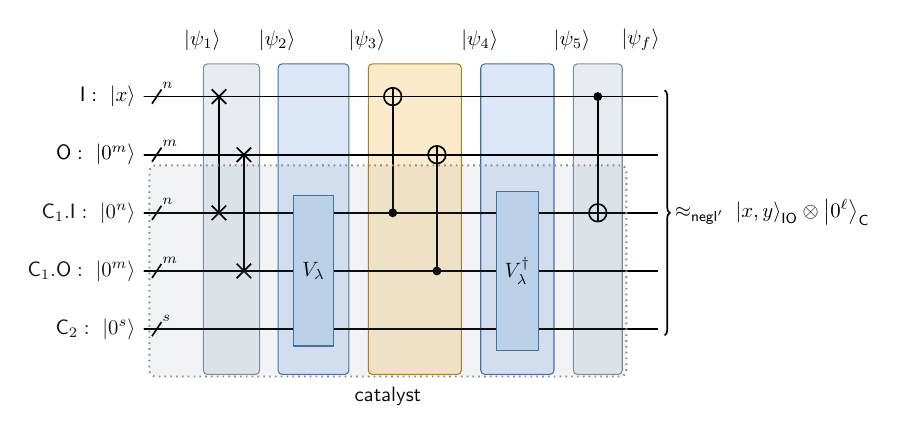}
\par\end{centering}
\caption{Pseudo-deterministic to catalytic}
\label{fig:pseudo-det-to-catalytic}

\end{figure}
\end{proof}

}

\subsection{Classical circuit to classical Hamiltonian}\label{subsec:Classical-circuit-to}

\branchcolor{darkgray}{
The Cook-Levin theorem gives a conditions on the transcript of a Turing machine to ensure that a Turing machine arrived at the through valid intermediate computation. In this part, we adapt the same method for circuits, and show the procedure to construct an extensive Hamiltonian corresponding to a poly-sized classical circuit such that:
\begin{itemize}
\item given an input where the circuit accepts, one can efficiently construct
the (classical) ground state of the Hamiltonian, and the converse,
\item given a ground state of the Hamiltonian, one can efficiently construct
an input where the circuit accepts. 
\end{itemize}

We set up the notation and describe the circuit-Hamiltonian conversion procedure below.
}

\paragraph{Circuit.}

Consider a polynomially-sized classical circuit $C$ specified by
$w$ wires and by gates $G_{1},\dots G_{g}$ where the gate $G_{\ell}$
acts on $W_{\ell}$ wires (where $W_{\ell}$ is a set of wires). For
simplicity, assume that each gate act on $2$ input wires and produce
$2$ output wires. Denote the output of the circuit (decision bit)
by the state of the first wire (after all the gates have been applied). 

Denote by $t_{0},t_{1},\dots t_{\ell},\dots t_{g}$ the state of the
wires (transcript) starting with the input $t_{0}$, where $t_{\ell}$
is the transcript after gate $G_{\ell}$ is applied. 

\paragraph{Hamiltonian.}

Given a circuit as described above, we define a Hamiltonian $H$ acting
on $gw$ wires/bits (where $w$ is the number wires and $g$ is the
number of gates in $C$, as described above), where the bits are labelled
as $(i,j)\in\{1,2\dots g\}\times\{1,2\dots w\}$ as follows
\begin{itemize}
\item Let $\pred_{\ell}$ be the predicate that applies $G_{\ell}$ on wires
$\{(\ell-1,i):i\in W_{\ell}\}$ and outputs $0$ (no penalty) if the
result matches the wires $\{(\ell,i):i\in W_{\ell}\}$ (i.e. it applies
$G_{\ell}$ on $t_{\ell-1}$ on the $\ell-1$th block of wires and
checks if it matches $t_{\ell}$ on the $\ell$th block of wires;
see ) but $1$ (penalty) otherwise.
\item Let $\pred_{\dec}$ be the predicate that outputs the \emph{opposite}
of the decision bit (i.e. state of wire $(g,1)$). 
\end{itemize}
Define the $4$-local Hamiltonian $H$ as
\[
H:=\sum^{g}_{\ell=1}\pred_{\ell}+\pred_{\dec}.
\]
 
\begin{claim}
\label{claim:CircToHam}Let $C$ and $H$ be as above. 
\begin{itemize}
\item Given an input $t_{0}$ on which $C$ accepts, one can efficiently
construct the ground state $t_{0},t_{1},\dots t_{g}$ of $H$
\item Given a ground state $t_{0},t_{1},\dots t_{g}$ of $H$, the string
$t_{0}$ is an accepting input of $C$. 
\end{itemize}
\end{claim}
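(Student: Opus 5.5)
The plan is to exploit that $H$ is a sum of $\{0,1\}$-valued predicates. Hence $H\succeq 0$, and a computational basis string (a candidate transcript $t_0,t_1,\dots,t_g$) has energy $0$ exactly when every predicate vanishes. So the first step is to establish that the ground energy of $H$ is $0$, whenever $C$ has at least one accepting input. Note that the first bullet of \Cref{claim:CircToHam} already presupposes such an input. The argument is direct: exhibiting one zero-energy basis state together with $H\succeq 0$ pins the minimum eigenvalue at $0$. Since $H$ is diagonal in the computational basis, its ground states are then exactly the strings on which all of $\pred_1,\dots,\pred_g,\pred_{\dec}$ output $0$. I will prove both bullets through this characterisation.

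For the first bullet, given an accepting $t_0$, define $t_\ell$ inductively by applying $G_\ell$ to the wires $W_\ell$ of $t_{\ell-1}$ and copying the remaining wires unchanged. This takes $g$ gate evaluations, so it is polynomial time.

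By construction each $\pred_\ell$ is satisfied. One point needs care here: the predicate as stated only compares the wires in $W_\ell$. I will read it as also enforcing (or I will add terms enforcing) that wires outside $W_\ell$ are copied, or else I will observe that our construction copies them anyway, so satisfaction holds under either reading. Since $t_g=C(t_0)$ has decision bit $1$, $\pred_{\dec}$ also vanishes. The total energy is therefore $0$, and the string is a ground state.

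For the second bullet, take a zero-energy string $t_0,\dots,t_g$. Every $\pred_\ell=0$ means $t_\ell$ agrees with $G_\ell$ applied to $t_{\ell-1}$ on $W_\ell$. Inducting on $\ell$, $t_\ell$ equals the true intermediate state of $C$ on input $t_0$. Then $\pred_{\dec}=0$ says the first wire of $t_g$ is $1$, i.e.\ $C(t_0)$ accepts.

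I expect the main obstacle to be exactly this induction, together with the $4$-locality. If $\pred_\ell$ touches only wires in $W_\ell$ of blocks $\ell-1$ and $\ell$ (which is $2+2=4$ bits, giving the stated locality), then untouched wires could change arbitrarily between blocks. In that case the induction fails for wires not in $W_\ell$. I would fix this by including $2$-local copy constraints $[t_{\ell-1,i}\neq t_{\ell,i}]$ for $i\notin W_\ell$ as part of $\pred_\ell$. These keep the Hamiltonian $4$-local and extensive, since there are $\O(gw)$ terms each of norm $1$. With these constraints in place the induction goes through, and both bullets follow.
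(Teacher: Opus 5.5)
Your argument is the same as the paper's: a correctly propagated accepting transcript has energy $0$. Conversely, the vanishing of every predicate forces, by induction over the blocks, the string to be the actual run of $C$ on $t_0$ with decision bit $1$. You also rightly read ``ground state'' as ``zero-energy string'', which presupposes that an accepting input exists.

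Where you differ is in how you treat wires outside $W_\ell$, and there you are more careful than the paper, whose sketch simply asserts that $H$ penalises every inconsistency. With $\pred_\ell$ as literally defined, the second bullet is false. For instance, if $1\notin W_g$, take the correct transcript of a rejecting input and overwrite $t_{g,1}$ with $1$: all $\pred_\ell$ and $\pred_{\dec}$ then vanish, yet $C$ rejects $t_0$. The paper's parenthetical whole-block reading of $\pred_\ell$ avoids this, but as a single predicate it acts on $2w$ bits rather than $4$.

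Your $2$-local copy terms $[t_{\ell-1,i}\neq t_{\ell,i}]$ for $i\notin W_\ell$ repair the problem. They keep $H$ $4$-local with integer spectrum, so every nonzero energy is still at least $1$, as used in \Cref{thm:classical-vs-quantum-erg}. They also keep $\norm{H}_\infty=\O(gw)$. The claim, and your proof of it, hold for this amended $H$.
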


\branchcolor{black}{\begin{proof}[Proof Sketch (elementary).]
The Hamiltonian $H$ acts on a sequence of `transcripts' $t_{0},t_{1},\dots t_{g}$
and applies an energy penalty for every inconsistency. It also applies
a penalty if the decision bit is $0$. 

Clearly, if $t_{0},t_{1},\dots t_{g}$ is in fact the transcript produced
by running $C$ such that the decision bit in $t_{g}$ is $1$, then
clearly $t_{0},t_{1}\dots t_{g}$ is the ground state of $H$.

Conversely, if the sequence $t_{0},t_{1},\dots t_{g}$ is a ground
state of $H$, it means that on input $t_{0}$ the circuit $C$ produces
$t_{g}$ as the final transcript and the decision bit (i.e. the first
bit of $t_{g}$) is $1$. Thus, $t_{0}$ is an accepting input of
$C$. 
\end{proof}

}

\subsection{Putting everything together---the separation}\label{subsec:classical-quantum-sep}

\branchcolor{darkgray}{
With all the parts in place, we now establish the final separation.

}

\paragraph{(Amplified, pseudo-deterministic) Public-Coin Proof of Quantumness.}

A \emph{public-coin proof of quantumness} is specified by a pair of
algorithms $({\cal P},{\cal V})$ where
\begin{itemize}
\item The prover ${\cal P}$ is a QPT algorithm that takes as input $c$
(challenge) and produces an output $w$ (witness)
\item The verifier ${\cal V}$ is a PPT algorithm that 
\begin{itemize}
\item on input $1^{\lambda}$ samples $c$ from some distribution over the
set of challenges $C$ according to distribution ${\cal D}_{C}(1^{\lambda})$
and sends it to ${\cal P}$,
\item on input $w$ from ${\cal P}$, computes a deterministic function
to compute $b(w,c)\in\{\mathsf{accept},\mathsf{reject}\}$ and outputs
$b$. 
\end{itemize}
\end{itemize}
A public-coin proof of quantumness protocol must satisfy the following conditions.
\begin{itemize}
\item (completeness) there is a QPT prover ${\cal P}$ that can make the
verifier accept with probability at least $1-\negl$ and
\item (soundness) no PPT prover ${\cal P}^{*}$ can make the verifier accept
with probability more than $\negl$.
\item We also require an additional `pseudo-deterministic' property: for
each $c$ there is a unique $w$ such that $b(w,c)=\mathsf{accept}$. 
\end{itemize}
\branchcolor{darkgray}{We will work in the common reference string model. This is like the
random oracle model but now, every party gets access to a string,
sampled from some fixed distribution---as opposed to getting access
to an oracle for a function sampled from the uniform distribution
over all functions. }

\begin{thm}
\label{thm:classical-vs-quantum-erg}
Suppose $({\cal P},{\cal V})$ specifies a public-coin proof of quantumness
as above. Suppose every party has access to the common reference string
$\{c_{\lambda}\}_{\lambda}$ sampled from the distribution ${\cal D}_{C}$.
Then, for each $\{c_{\lambda}\}_{\lambda}$, there is a family of
4-local extensive classical Hamiltonians $\fat H:=\{H_{\lambda}\}_{\lambda}$
(normalisation is as in \Cref{def:H-efficient}) and corresponding
family of states $\fat{\rho}:=\{\rho_{\lambda}\}_{\lambda}$ such 
that
\begin{align}
\nexists\quad\nonnegl\quad\text{s.t .}\caterghatC(\ell) & \fngeq\nonnegl\label{eq:soundness-cq-sep}\\
\exists\quad\negl'\quad\text{s.t. }\caterghatQ(\ell) & \fngeq1-\negl'\label{eq:completeness-cq-sep}
\end{align}
where ${\cal D}$ is the distribution over $\fatrhoH$, specified
implicitly by ${\cal D}_{c}$, $\ell$ is some polynomial and $\nonnegl$
is a non-negligible function while $\negl'$ is some negligible function.
\end{thm}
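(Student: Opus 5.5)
The plan is to follow the construction sketched in the technical overview. Fix the challenge $c=c_\lambda$ as the common reference string. Let $C_c$ be the classical (reversible, polynomial-size) circuit computing the verifier's decision $b(w,c)$ on input $w$. Apply the circuit-to-Hamiltonian construction of \Cref{subsec:Classical-circuit-to} to obtain the $4$-local classical Hamiltonian $H_\lambda:=H_{c_\lambda}=\sum_{\ell}\pred_\ell+\pred_{\dec}$ on $gw$ bits.

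This Hamiltonian is diagonal in the computational basis and has integer spectrum, with every nonzero eigenvalue at least $1$. Its norm is at most $g+1=\poly(\lambda)$, so it is extensive in the sense of \Cref{def:H-efficient}. Let $\rho_\lambda$ be the basis state encoding the honest transcript $t_0=(0^{|w|},\text{padding}),t_1,\dots,t_g$ of $C_c$ on the all-zero witness. This state violates no consistency predicate, so $\tr[H_\lambda\rho_\lambda]=\mathbf{1}[b(0,c)=\mathsf{reject}]$. Soundness implies that the trivial classical prover outputting $0$ is accepted only with negligible probability over $c$. Hence $\E_c\tr[H_\lambda\rho_\lambda]\geq 1-\negl$.

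\textbf{Classical upper bound (\Cref{eq:soundness-cq-sep}).} Take any PPT $\A$ producing a permutation $\pi_\lambda\in\catPerm(n,\ell,\varepsilon)$ with $\varepsilon$ negligible, and suppose it extracts work $W\geq\nonnegl$ on average. Since $\rho_\lambda\otimes\den{0^\ell}$ is a basis state and $\pi$ is a permutation, the output is a single basis string $s$. Approximate restoration on basis states forces exact restoration, so $s=(t',0^\ell)$ with final energy $\hw$-like value $E(t')\in\{0\}\cup[1,\infty)$. The work is $\tr[H\rho]-E(t')\le 1-E(t')$ up to the negligible rejection term, so it is positive only if $E(t')=0$. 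Therefore $W\le\Pr[E(t')=0]$ plus a negligible term. A classical prover can run $\A$ (simulating $O_{\fat\rho}$ by computing the honest transcript and $\hamdesc$ from $c$), evaluate $\pi$ on $\rho_\lambda\otimes 0^\ell$, and output the block $t'_0$. By \Cref{claim:CircToHam}, whenever $E(t')=0$ the string $t'_0$ is accepted. This gives acceptance probability at least $\nonnegl$, contradicting soundness.

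\textbf{Quantum lower bound (\Cref{eq:completeness-cq-sep}).} Apply \Cref{claim:pseudo-det-to-catalytic} to the amplified prover. It is pseudo-deterministic because the accepting $w$ is unique and the completeness error is negligible. This yields a catalytic unitary that maps $|c,0\rangle|0\rangle\mapsto|c,w_c\rangle|0\rangle$ up to negligible error. Since $c$ is public it can be hardwired, and the catalyst size $\ell$ is polynomial. The extractor is the composition of three stages. First, uncompute the honest transcript reversibly (the inverse of the efficient transcript computation), mapping $\rho_\lambda$ to all zeros. Second, run the catalytic prover to write $w_c$ into the $t_0$ block. Third, recompute the transcript $t_1,\dots,t_g$ on the zero-initialized blocks using reversible gate-by-gate CNOT-style copies; this needs no extra workspace if each gate is made reversible with its output in the next block.

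The ideal output is a ground state with exactly restored catalyst. The actual output is negligibly close in norm, so the energy differs by at most $\negl\cdot\|H_\lambda\|_\infty=\negl$, and the restoration error lies in $\catU[n][\ell,\negl]$. The work is therefore $\geq 1-\negl'$.

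\textbf{Main obstacle.} I expect the main obstacle to be the classical step, in two respects. First, one must argue rigorously that an $\varepsilon$-approximate catalytic permutation acts on basis inputs with exact restoration and a single-string output. This holds because the trace distance between distinct basis states is $1>\varepsilon$. Second, one must ensure that the reduction prover's access (to $O_{\fat\rho}$, $\hamdesc$, and the CRS) is efficiently simulable classically. I also need care that the transcript-recomputation step fits within the $\IO$ register, i.e.\ that each gate's output block starts at zero after stage one.
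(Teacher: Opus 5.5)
Your proposal follows the paper's sketch: the same verifier-circuit Hamiltonian from \Cref{claim:CircToHam}, the honest all-zero transcript as $\rho_\lambda$, gap $1$ plus soundness for the classical bound, and compute--copy--uncompute for the quantum bound. The gap is in the classical step ``whenever $E(t')=0$ the string $t'_0$ is accepted''. It fails for the Hamiltonian you wrote down.

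\textbf{The gap.} Because you take $C_c$ reversible, its input block is $t_0=(w,\text{padding})$. The Hamiltonian $\sum_\ell\pred_\ell+\pred_{\dec}$ puts no constraint on block $0$. A zero-energy transcript therefore only certifies that $C_c$, run on $t'_0$ \emph{including whatever padding it carries}, sets the decision wire to $1$. For nonzero padding this says nothing about $b(w',c)$, so your reduction prover need not output an accepting witness.

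\textbf{The statement is actually false for this $H_\lambda$.} $C_c$ is a bijection built from efficiently invertible gates. A PPT algorithm can therefore do the following.
\begin{enumerate}
\item Fix any final block $t^*_g$ with first bit $1$.
\item Propagate backwards via $t^*_{\ell-1}=G_\ell^{-1}(t^*_\ell)$ to get a zero-energy string $s^*$.
\item Output the catalyst-free permutation that XORs $s\oplus s^*$ into the system, where $s$ is the honest transcript.
\end{enumerate}
This extracts $\tr[H_\lambda\rho_\lambda]$, which is $1-\negl$ in expectation. So the classical bound is not merely unproved here; it is false. The paper's sketch is silent on the same point.

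There is a related omission: $\pred_\ell$ as literally defined only compares wires in $W_\ell$. Without copy terms for untouched wires, zero-energy strings need not be transcripts at all.

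\textbf{The repair is cheap and keeps the rest of your argument.} Add two kinds of terms:
\begin{itemize}
\item $1$-local penalties $\sum_{j\in\mathrm{pad}}\den{1}_{(0,j)}$ forcing the padding of block $0$ (and any constant or CRS input wires) to their prescribed values;
\item $2$-local terms penalising $(\ell-1,i)\neq(\ell,i)$ for $i\notin W_\ell$.
\end{itemize}
Then the following hold.
\begin{itemize}
\item The Hamiltonian stays $4$-local, extensive, and integer-valued with gap $1$.
\item The honest transcript $\rho_\lambda$ violates none of the new terms, and neither does your quantum extractor's ideal output, which writes $w_c$ next to zero padding and recomputes the transcript gate by gate. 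So the $1-\negl$ initial energy and the quantum lower bound are unchanged.
\item Zero-energy strings are now exactly honest transcripts on $(w,0^{\mathrm{pad}})$ with $b(w,c)=\mathsf{accept}$. This is precisely what your soundness reduction needs.
\end{itemize}
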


\branchcolor{black}{\begin{proof}[Proof Sketch.]
 Since the verifier is a PPT algorithm, one can construct a reversible classical circuit corresponding to it.
Let $\fat H=\{H_\secp\}_\secp$, where the \emph{reversible} circuit for the verifier is converted
 into $H_{\lambda}$ (using \Cref{claim:CircToHam}). We define $\fat{\rho}=\{\rho_\secp\}_\secp$, where  $\rho_{\lambda}$ is to be the transcript $t_{0},t_{1},\dots t_{g}$
where $t_{0}$ is the all zero string, and this is propagated correctly
through the verifier's circuit (see \Cref{subsec:Classical-circuit-to}).

Notice that 
\begin{itemize}
\item $\E_{{\cal D}_{C}}\tr(H_{\lambda}\rho_{\lambda})=1-\negl(\lambda)$
because no penalties until the very last step where the decision bit
is false---and so 1 unit penalty (with $1-\negl$ probability from
the guarantee of the PoQ)
\item $\rho_{\lambda}$ can be converted to an all zero string catalytically
(since we used a reversible circuit for the verifier, thus each element
in the transcript can be reversed)
\end{itemize}

One can show that
$\exists\quad\negl'\quad\text{s.t. }\caterghatQ(\ell)  \fngeq1-\negl'$ \Cref{eq:completeness-cq-sep}), through the following arguments.
The PoQ has almost perfect completeness and has the extra pseudo-deterministic
property; and these imply there is a pseudo-deterministic quantum
algorithm $V_{\lambda}$ that on input $c$ produces a $w$ such that
$b(w,c)=\mathsf{accept}$ with $1-\negl$ probability.

Using \Cref{claim:pseudo-det-to-catalytic} one can run $V_{\lambda}$
as catalytic unitary $U_{\lambda}'$ on an all zero string and produce
the response $w$.
Concretely, the PPT algorithm ${\cal A}^{c,\hamdesc,O_{\fat{\rho}}}$
produces the description of $U_{\lambda}$ that 
\begin{itemize}
\item first undoes the verifier's reversible computations to obtain an all
zero state starting from $t_{0},t_{1},\dots t_{g}$, and 
\item then applies the catalytic unitary $U'_{\lambda}$ that has the input
$c$ hard-coded.
\end{itemize}

One can show that $\nexists\quad\nonnegl\quad\text{s.t .}\caterghatC(\ell)  \fngeq\nonnegl$ 
(i.e. \Cref{eq:soundness-cq-sep}) through the following arguments.
Since $\E_{{\cal D}_{C}}\tr(H_{\lambda}\rho_{\lambda})$ is already
the `first excited' state, the only way to lower the energy is to
move to the ground state.
Using \Cref{claim:CircToHam} it follows that if the PPT algorithm
is able to produce the ground state, this would mean that from the
ground state, one can also efficiently extract $w$ (an accepting
input to the verifier's circuit). 
However, the PoQ has almost perfect soundness and therefore no PPT
algorithm on input $c$, can produce $w$ such that $b(w,c)=\mathsf{accept}$
with more than $\negl'$ probability.
\end{proof}

}
\branchcolor{darkgray}{
As a consequence of the above separation, we make the following remark.
}

\begin{rem}[Catalysts are useful computationally]
\indent (i) information-theoretically, quantum ergotropy and classical ergotropy for this family is identical. \\
\indent (ii) computationally, quantum ergotropy and classical ergotropy for this family is identical. \\
\indent (iii) Classical catalytic ergotropy is negligible.\\
\indent (iv) Quantum catalytic ergotropy is constant. \\
\end{rem}

\branchcolor{darkgray}{
This is another feature of thermodynamics that seems to only appear in the computational theory—and it is again aided by catalysts.
}

\newpage{}

\appendix

\section{Deferred Proofs}\label{sec:defElementary}
\begin{lem}[Restatement of \Cref{lem:fx_equals_ghx}; \cite{arnon2023CET}]
\label{lem:fx_equals_ghx-1}Consider a PRP family $(\feval,\finv)$
and let $n,m\in\mathbb{N}$ be s.t. $m>n$. We consider $\key_{f},\key_{g},\key_{h}\in\binset^{m}$,
and define the following functions: $f:\binset^{m}\to\binset^{m}$
as $f(x)=\feval\left(\key_{f},x\right)$, $g:\binset^{n}\to\binset^{m}$
as $g(x)=\feval\left(\key_{g},x\|0^{m-n}\right)$, and $h:\binset^{m}\to\binset^{n}$
as $h(x)=\feval\left(\key_{h},x\right)_{|n}$ (i.e. the $n$-bit prefix
of the outcome). Then for any polynomial function $m=m\left(n\right)$
and for any polynomial-time quantum algorithm $\A$ with binary output
it holds that 
\[
\left|\Pr_{\key_{g},\key_{h}}\left[\A^{g(h(\cdot))}\left(1^{n}\right)=1\right]-\Pr_{k_{f}}\left[\A^{f(\cdot)}\left(1^{n}\right)=1\right]\right|=\negl\left(n\right).
\]
\end{lem}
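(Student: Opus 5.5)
The plan is a short hybrid argument that uses PRP pseudorandomness (\Cref{def:PRP}) to replace every keyed permutation by a truly random one, together with an information-theoretic comparison of oracles. I will move through four worlds.
\begin{enumerate}
\item \textbf{Real composite world.} $\A$ gets $g(h(\cdot))$ with PRP keys $\key_g,\key_h$.
\item \textbf{Random composite world.} $\A$ gets $\pi_g(\pi_h(\cdot)_{|n}\|0^{m-n})$ with independent uniform permutations $\pi_g,\pi_h$ on $\binset^m$.
\item \textbf{Random permutation world.} $\A$ gets a uniform permutation $\pi$ on $\binset^m$.
\item \textbf{Real PRP world.} $\A$ gets $f=\feval(\key_f,\cdot)$.
\end{enumerate}

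\textbf{Steps (1)$\to$(2) and (3)$\to$(4).} These are direct reductions to PRP security. For the first, I will replace $\key_h$ and $\key_g$ one at a time, giving two sub-hybrids. In each, the reduction simulates the other function itself by sampling its own key and queries the challenge oracle for the replaced one. Since $\A$ only gets forward access, the reduction only needs the forward challenge oracle, so no inverses are required. The step from (3) to (4) is exactly PRP security for $\key_f$. Each of these steps costs $\negl(n)$.

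\textbf{Step (2)$\to$(3), the main step.} First I simplify the composite. Since $\pi_h$ is a uniform permutation, $\pi_h(\cdot)_{|n}$ is a uniformly random $2^{m-n}$-to-$1$ balanced function $H:\binset^m\to\binset^n$. The map $G(y)=\pi_g(y\|0^{m-n})$ is a uniformly random injection $\binset^n\to\binset^m$. So the composite $G\circ H$ is distributed as a random function that is $2^{m-n}$-to-one onto a random $2^n$-subset of $\binset^m$. I then compare it to a random permutation using three intermediate distributions:
\begin{enumerate}
\item a uniformly random permutation on $\binset^m$ versus a uniformly random function $\binset^m\to\binset^m$;
\item a uniformly random function $\binset^m\to\binset^m$ versus a uniform random function into a random image of size $2^n$ (equivalently, $G\circ F$ for a uniform random function $F:\binset^m\to\binset^n$);
\item $G\circ F$ versus $G\circ H$, which reduces to distinguishing a random function $F$ from a random balanced function $H$.
\end{enumerate}
Each comparison has a known quantum query bound that is small for polynomially many queries. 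Comparison (a) is the quantum PRP/PRF switching lemma, with advantage $O(q^3/2^m)$ (Zhandry). Comparison (c) is bounded by $O(q^3/2^n)$-type bounds, via small-range distributions and Zhandry's results on random versus balanced functions. For comparison (b), I will treat it as a collision-finding problem and use Zhandry's small-range distribution theorem, which gives indistinguishability at $O(q^3/2^n)$. All three bounds are negligible because $n$ is the security parameter and $q=\poly(n)$.

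The main obstacle I expect is comparison (b), together with transferring the bounds from uniform functions to the random-range composite form. If invoking the small-range theorem directly is awkward, my fallback is to observe that any $q$-query distinguisher of $G\circ H$ from a uniform random function yields a collision finder for a random function with range $2^n$. Quantum collision bounds then give advantage at most $O(q^3/2^n)$, which is negligible. Summing the hybrid losses completes the argument.
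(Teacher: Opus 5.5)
Your proposal is correct and follows the paper's hybrid chain almost step for step: your (1)$\to$(2) is the paper's $H_0\to H_2$, comparison (c) is $H_2\to H_3$, comparison (b) is $H_3\to H_5$, and (a) together with the final PRP step is $H_5\to H_7$; in (b) you invoke Zhandry's small-range theorem, whereas the paper uses the small-domain (sample-distinguishing) form together with a birthday bound. Two routine details need tightening. First, in your second PRP sub-hybrid the other function is by then a truly random permutation, which the reduction cannot simulate by sampling a key; fix this by switching $\pi_h$ to a random function first (an information-theoretic step) and simulating that function efficiently with an $O(q)$-wise independent function, a point the paper's sketch also glosses. Second, small-range distributions use i.i.d.\ image points, so (b) also needs the switch from the injection $G$ to a random function, which is the paper's $H_3\to H_4$ at cost $O(q^3/2^m)$.
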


\branchcolor{black}{\begin{proof}
(Proof sketch) The proof proceeds through a sequence of hybrids. Consider
an adversary $\A$ and define the random variable $H_{0}=H_{0}(n)=\A^{g(h(\cdot))}\left(1^{n}\right)$,
where $g,h$ are defined based on uniformly random keys $\key_{g},\key_{h}$.

We then let $\pi_{g},\pi_{h}$ be uniformly random permutations on
$\binset^{m}$. In $H_{1}$ we change $h$ to be defined using $\pi_{h}$
instead of $\feval(\key_{h},\cdot)$. It holds that $\norm{H_{0}-H_{1}}_{1}=\negl\left(n\right)$
because of PRP pseudorandomness, where $\norm{\cdot}_{1}$ denotes
$L_{1}$ norm. Similarly, in $H_{2}$ we also change $g$ to be defined
using $\pi_{g}$ instead of $\feval(\key_{g},\cdot)$. Again we have
$\norm{H_{1}-H_{2}}=\negl\left(n\right)$ for the same reason.

The next few hybrids do not rely on computational assumptions but
rather on query lower bounds to random permutations and functions.
Let $Q$ be an upper bound on the number of queries made by $\A$
to its oracle. The next bounds only rely on the assumption that $Q=\poly\left(n\right)$.

In hybrid $H_{3}$ we change $\pi_{h}$ to be a random $\emph{function}$
rather than a permutation. It holds that $\norm{H_{2}-H_{3}}=\negl\left(n\right)$
using the query lower bound for distinguishing random functions from
random permutations. Hybrid $H_{4}$ does the same for $\pi_{g}$,
and by the same theorem $\norm{H_{3}-H_{4}}=\negl\left(n\right)$.

In $H_{5}$ we make a more radical change. Let $r:\binset^{m}\to\binset^{m}$
be a random function, we let $H_{5}=\A^{r(\cdot)}\left(1^{n}\right)$.
We show that $\norm{H_{4}-H_{5}}=\negl\left(n\right)$ using Zhandry's
small domain theorem (SDT). Fix the function $g(x)=\pi_{g}\left(x\|0^{m-n}\right)$.
This function defines a multi-set $S_{g}\subseteq\binset^{m}$ of
size $2^{n}$ (namely, it possibly has repetitions). Let $D_{1,g}$
be the uniform distribution over $S_{g}$. Let $D_{2}$ be the uniform
distribution over $\binset^{m}$ Using SDT, an algorithm that distinguishes
$H_{4}$ from $H_{5}$ with some advantage can be transformed into
an algorithm that distinguishes $D_{1,g}$ from $D_{2}$ by getting
polynomially samples from either distribution. Since the reduction
is uniform, this distinguisher has advantage in expectation over $g$
as well. However, when $g$ is chosen at random, polynomially many
samples from $D_{1,g}$ are statistically indistinguishable from polynomially
many samples from $D_{2}$, which completes the argument.

Next, we define $H_{6}=\A^{\pi_{f}(\cdot)}(1^{n})$, where $\pi_{f}$
is a random permutation. Here again we use the function to permutation
indistinguishability to argue that $\norm{H_{5}-H_{6}}=\negl(n)$.
Our last hybrid again relies on a cryptographic assumption. In $H_{7}$
we replace $\pi_{f}$ with $f$ and use PRP pseudorandomness to deduce
that $\norm{H_{6}-H_{7}}=\negl(n)$. We can conclude that $\norm{H_{0}-H_{7}}=\negl(n)$
and the lemma follows.
\end{proof}

}

\begin{lem}[Restatement of \cref{lem:expanding-RO-uniformity}]
\label{lem:expanding-RO-uniformity-1} Let $\fat O$ be the length
doubling random oracle, and $q_{1},q_{2}:\mathbb{N}\to\mathbb{N}$
be polynomials. For every $q_{1}$-query quantum algorithm $\mathcal{B}$
with oracle access to $\fat O$, there exists a $q_{2}$-query quantum
algorithm $\A$ with oracle access to $O_{\secp}$ corresponding to
the function $f_{\secp}$ such that,
\begin{align}
 & \left|\Pr\left[\mathcal{B}^{\fat O}(1^{\secp},\rho_{\secp})=1\right]-\Pr\left[\mathcal{B}^{\fat O}(1^{\secp},\sigma_{\secp})=1\right]\right|\nonumber \\
= & \left|\Pr\left[\cA^{O_{\secp}}(1^{\secp},\rho_{\secp})=1\right]-\Pr\left[\cA^{O_{\secp}}(1^{\secp},\sigma_{\secp})=1\right]\right|\label{eq:BO=00003DAO-1}\\
\leq & \negl(\secp).\label{eq:AO=00003Dnegl-1}
\end{align}
\end{lem}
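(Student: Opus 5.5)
The plan is a two-step reduction: first collapse $\mathcal{B}$ onto an algorithm that queries only $O_\lambda$, then run a three-hybrid argument whose only non-trivial transition is handled by the O2H lemma (\Cref{lem:O2H}).

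\textbf{Step 1: restricting $\mathcal{B}$ to $O_\lambda$ (\Cref{eq:BO=00003DAO-1}).} The algorithm $\cA^{O_\lambda}$ runs $\mathcal{B}$ gate by gate. It forwards queries at length $n(\lambda)$ to $O_\lambda$. Queries at any other length $\lambda'\neq\lambda$ it answers with an internally simulated independent random function. To keep $\cA$ efficient, I would use a $2q_1$-wise independent hash function, which by Zhandry's result perfectly simulates a random oracle against $q_1$ quantum queries.

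Since $\rho_\lambda$ and $\sigma_\lambda$ depend only on $f_\lambda$, and $f_{\lambda'}$ for $\lambda'\neq\lambda$ is independent of $f_\lambda$, the joint distribution of (input state, oracle view) is identical in the two worlds. Hence the acceptance probabilities agree exactly, and $q_2\le q_1$.

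\textbf{Step 2: hybrids for \Cref{eq:AO=00003Dnegl-1}.} Fixing $\cA$, the input $\rho_\lambda$ is a mixture over uniform $r$ of $\den{f_\lambda(r)}$, so $\Pr[\cA(\rho_\lambda)=1]$ equals the probability in $\Exp_0$, where the challenger hands $\cA$ the string $f(r)$ together with oracle $f$. Likewise $\sigma_\lambda$ corresponds to $\Exp_2$, where $\cA$ gets an independent uniform $u$ together with oracle $f$.

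\Cref{claim:exp0=00003D1} says $\Exp_0$ and $\Exp_1$ are identical. In $\Exp_1$ the challenger reprograms $f$ at the point $r$ to the value $u$ and hands $\cA$ the pair $(u,f')$. Since $f(r)$ is uniform and independent of $f$ restricted away from $r$, the pair $(f(r),f)$ has the same distribution as $(u,f')$.

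For \Cref{claim:exp1=00003D2}, I would apply \Cref{lem:O2H} with the following choices:
\begin{itemize}
\item $O=f'$ and $O'=f$;
\item $S=\{r\}$;
\item $z=u$.
\end{itemize}
These differ only on $S$, giving $|P_{\mathsf{left}}-P_{\mathsf{right}}|\le 2q_2\sqrt{P_{\mathsf{guess}}}$.

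\textbf{Main obstacle: bounding $P_{\mathsf{guess}}$.} The extractor $\mathcal{B}$ in \Cref{lem:O2H} runs with the reprogrammed oracle $f'$, and $f'$ depends on $r$ through the reprogramming. The key observation is that $(u,f')$ is distributed exactly as a uniform string together with a uniformly random function, and conditioned on $(u,f')$ the point $r$ is still uniform on $\binset^{n(\lambda)}$. This holds because for every $r$ the map $(f,u)\mapsto (u,f')$ is uniform-preserving, and $r$ is sampled independently.

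Consequently the set $T$ produced by the extractor is independent of $r$ and has size at most polynomial. This gives $P_{\mathsf{guess}}\le \poly(\lambda)/2^{n(\lambda)}$, so the advantage is at most $2q_2\sqrt{\poly(\lambda)}\,2^{-n(\lambda)/2}$, which is negligible. The hybrid argument then completes the proof.
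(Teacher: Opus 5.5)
\textbf{Gap: the bound on $P_{\mathsf{guess}}$.} Your Step~1 and the identification of $\rho_\lambda,\sigma_\lambda$ with $\mathsf{Exp}_0,\mathsf{Exp}_2$ are fine. The bound on $P_{\mathsf{guess}}$, however, rests on a false claim. With your choice $O=f'$, the extractor in \Cref{lem:O2H} runs on the \emph{reprogrammed} oracle. You then assert two things: that $(u,f')$ is a uniform string together with an independent uniform function, and that $r$ is uniform conditioned on $(u,f')$.

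Both fail, because by construction $f'(r)=u$ always. For fixed $r$, the map $(f,u)\mapsto(u,f')$ forgets $f(r)$, and its image is $\{(v,g):v=g(r)\}$, not all pairs. So $u$ always lies in the image of $f'$, which has at most $2^{n}$ points inside $\{0,1\}^{2n}$; an independent uniform $u$ would lie there only with probability at most $2^{-n}$. Moreover, $r$ lies in $f'^{-1}(u)$, and for a random $r$ this preimage set has size $1$ except with probability below $2^{-n}$. Hence $r$ is essentially \emph{determined} by $(u,f')$.

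Consequently the set $T$ is not independent of $r$, and the step ``$P_{\mathsf{guess}}\le\mathsf{poly}(\lambda)/2^{n(\lambda)}$ by independence'' does not go through. In your configuration, $P_{\mathsf{guess}}$ is the probability that a $q_2$-query algorithm, given $f'(r)$ and oracle access to $f'$, outputs $r$. That is an inversion probability. It is small only because of a query lower bound for inverting random functions, which your argument never invokes.

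\textbf{How to fix it.} The fix is local. Since $|P_{\mathsf{left}}-P_{\mathsf{right}}|$ is symmetric, instantiate \Cref{lem:O2H} with $O=f$ (the unprogrammed oracle) and $O'=f'$. The extractor then runs $\cA^{f}(u)$, and in this joint distribution $r$ really is independent of $(f,u)$. This gives $P_{\mathsf{guess}}\le\mathsf{poly}(\lambda)/2^{n(\lambda)}$. This is exactly how the paper argues: its extractor simulates $\cA^{O_\lambda}(u)$ with the original oracle and uses that, as in $\mathsf{Exp}_2$, $r$ is independent of both $u$ and $O_\lambda$.

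Alternatively, you can keep $O=f'$ and bound $P_{\mathsf{guess}}$ via the quantum random-function inversion bound (roughly $O(q_2^2/2^{n})$). That route is also valid, but it imports an additional lemma that the symmetric choice avoids.
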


\branchcolor{black}{\begin{proof}
We first prove that there exists an algorithm ${\cal A}$ such that
the output distributions of ${\cal B}^{\fat O}(1^{\lambda},\tau_{\lambda})$
and ${\cal A}^{O_{\lambda}}(1^{\lambda},\tau_{\lambda})$ are identical
for any $\tau_{\lambda}$ that depends only on $O_{\lambda}$. Observe
that each $f_{\lambda}$ is sampled uniformly and independently from
${\cal F}_{\lambda}$.  

Now, for every algorithm ${\cal B}^{\fat O}$, one can construct such
an algorithm ${\cal A}^{O_{\lambda}}$ as follows: ${\cal A}$ runs
${\cal B}$ identically except when ${\cal B}$ makes queries. When
${\cal B}$ makes a query to $\fat O$ at $\{0,1\}^{n(\lambda)}$,
then ${\cal A}$ uses $O_{\lambda}$, and when ${\cal B}$ makes any
other query, it simulates a random oracle independently. This establishes
\Cref{eq:BO=00003DAO-1}.

To establish \Cref{eq:AO=00003Dnegl-1}, we use a hybrid argument
and the One-Way to Hiding (O2H) Lemma. We define a sequence of hybrid
experiments $\mathsf{Exp}_{0},\mathsf{Exp}_{1},\mathsf{Exp}_{2}$
as follows
\begin{itemize}
\item $\mathsf{Exp}_{0}$: The challenger samples $O_{\secp}\leftarrow\mathcal{F_{\secp}}$
and $r\leftarrow\{0,1\}^{n(\secp)}$. It outputs $\mathcal{A}^{O_{\secp}}(O_{\secp}(r))$.
\item $\mathsf{Exp}_{1}$: The challenger samples $O_{\secp}\leftarrow\mathcal{F_{\secp}}$,
$r\leftarrow\{0,1\}^{n(\secp)}$ and $u\leftarrow\binset^{2n(\secp)}$.
It defines $O_{\secp}'$ such that $O_{\secp}'(r)=u$ and $O_{\secp}'(x)=O_{\secp}(x)$
for all $x\neq r$. It outputs $\mathcal{A}^{O_{\secp}'}(u)$.
\item $\mathsf{Exp}_{2}$: The challenger samples $O_{\secp}\leftarrow\mathcal{F_{\secp}}$,
$r\leftarrow\{0,1\}^{n(\secp)}$ and $u\leftarrow\binset^{2n(\secp)}$.
It outputs $\mathcal{A}^{O_{\secp}}(u)$.
\end{itemize}
\begin{claim}
$\Pr_{O_{\secp},r}\left[1\leftarrow\Exp_{0}\right]=\Pr_{O_{\secp},r,u}\left[1\leftarrow\Exp_{1}\right]$.\label{claim:exp0=00003D1-1}

\branchcolor{black}{\begin{proof}[Proof of \Cref{claim:exp0=00003D1}]
Evaluating a random oracle at any point yields a uniformly distributed
string. Since at every point $x\neq r,O_{\secp}(x)=O_{\secp}'(x)$,
the joint distribution $(O_{\secp},O_{\secp}(r))$ in $\Exp_{0}$
is statistically identical to the joint distribution $(O_{\secp}',u)$
in $\Exp_{1}.$ Therefore, 
\[
\Pr_{O_{\secp},r}\left[1\leftarrow\Exp_{0}\right]=\Pr_{O_{\secp},r,u}\left[1\leftarrow\Exp_{1}\right].
\]
\end{proof}

}

$\left|\Pr_{O_{\secp},r,u}\left[1\leftarrow\Exp_{1}\right]-\Pr_{O_{\secp},u}\left[1\leftarrow\Exp_{2}\right]\right|\leq\negl(n)$.\label{claim:exp1=00003D2-1}

\branchcolor{black}{\begin{proof}[Proof of \Cref{claim:exp1=00003D2}]
The inputs to $\cA$ is identical in both $\Exp_{1}$ and $\Exp_{2}$,
the only difference in the experiments is the oracle the $\cA$ has
access to differs in a single point-- $r$. Intuitively, for the
outputs of $\Exp_{1}$ and $\Exp_{2}$ to be different, the algorithm
$\cA$ should query the point $r.$ But since $\cA$ is QPT and can
query in superposition, we do the following.
\begin{proof}
We construct an algorithm $\mathcal{B}^{O_{\secp}}$ that on input
$u$, simulates $\cA^{O_{\secp}}(u)$ upto the $i$-th query, where
$i\leftarrow\{1,2,\cdots,q\}$, measures the query register and returns
the measurement output $T$. From \Cref{lem:O2H}, we know that such
an algorithm $\mathcal{B}^{O_{\secp}}$ will return $r$ with probability
$P_{\mathsf{guess}}$ such that, 
\[
\dfrac{\left|\Pr_{O_{\secp},r,u}\left[1\leftarrow\Exp_{1}\right]-\Pr_{O_{\secp},u}\left[1\leftarrow\Exp_{2}\right]\right|}{2q}\leq\sqrt{P_{\mathsf{guess}}}
\]

Notice that in $\Exp_{2}$, the variable $r$ is sampled uniformly
at random from $\binset^{n}$ and is independent of both the classical
input $u$ and the oracle $O_{\secp}$. Since $u,r$ are completely
uncorrelated, from the perspective of algorithm $\mathcal{B}^{O_{\secp}}$,
upon receiving input $u$, the probability with which it returns the
random string $r$ is $\frac{1}{2^{n}}=\negl(n).$ Taking a union
bound over $\poly(n)$ many trials, we get $P_{\mathsf{guess}}\leq\negl(n)$.
Therefore, 
\begin{align*}
\dfrac{\left|\Pr_{O_{\secp},r,u}\left[1\leftarrow\Exp_{1}\right]-\Pr_{O_{\secp},u}\left[1\leftarrow\Exp_{2}\right]\right|}{2q} & \leq\negl(n)
\end{align*}
Since $\cA$ is a QPT algorithm, $q=\poly(n),$ therefore we obtain
$\left|\Pr_{O_{\secp},r,u}\left[1\leftarrow\Exp_{1}\right]-\Pr_{O_{\secp},u}\left[1\leftarrow\Exp_{2}\right]\right|\leq\negl(n)$,
proving the claim.
\end{proof}

\end{proof}

}
\end{claim}

Now using the hybrid argument, we conclude $\left|\Pr_{O_{\secp},r}\left[1\leftarrow\Exp_{0}\right]-\Pr_{O_{\secp},u}\left[1\leftarrow\Exp_{2}\right]\right|\leq\negl(n)$
proving the theorem statement.
\end{proof}

}

\bibliographystyle{alpha}
\bibliography{cry}

\end{document}